\newcommand{\ii}{\mathrm{i}}
\newcommand{\ee}{\mathrm{e}}
\newcommand{\dd}{\mathrm{d}}
\newcommand{\T}{\mathrm{T}}
\newcommand{\modify}[1]{{\color{black} #1}}
\newcommand{\lm}[1]{{\color{black} #1}}
\newcommand{\ys}[1]{{\color{black} #1}}
\newcommand{\lmn}[1]{{\color{black} #1}}
\newcommand{\cn}{\mathrm{cn}}
\newtheorem{rhp}{Riemann-Hilbert Problem}
\newtheorem{theorem}{Theorem}
\newtheorem{lemma}{Lemma}
\newtheorem{prop}{Proposition}
\newtheorem{corollary}{Corollary}
\newtheorem{remark}{Remark}
\titleformat{\section}{\centering\LARGE\bfseries}{\thesection}{1em}{}
\titleformat{\subsection}{\Large\bfseries}{\thesubsection}{1em}{}
\begin{document}

\title{Large and infinite order solitons of the coupled nonlinear Schr\"odinger equation}

\author{Liming Ling}
\address{School of Mathematics, South China University of Technology, Guangzhou, China 510641}
\email{linglm@scut.edu.cn}
\author{Xiaoen Zhang}
\address{School of Mathematics, South China University of Technology, Guangzhou, China 510641}
\email{zhangxiaoen@scut.edu.cn}

\begin{abstract}
We study the large order and infinite order soliton of the coupled nonlinear Schr\"{o}dinger equation with the Riemann-Hilbert method. By using the Riemann-Hilbert representation for the high order Darboux dressing matrix, the large order and infinite order solitons can be analyzed directly without using inverse scattering transform.  We firstly disclose the asymptotics for large order soliton, which is divided into four different regions---the elliptic function region, the non-oscillatory region, the exponential and algebraic decay region. We verify the consistence between asymptotic expression and exact solutions by the Darboux dressing method numerically. Moreover, we consider the property and dynamics for infinite order solitons---a special limitation for the larger order soliton.  It is shown that the elliptic function and exponential region will disappear
for the infinite order solitons.

{\bf Keywords:} Coupled nonlinear Schr\"odinger equation, high order soliton, infinite order soliton, asymptotic analysis, Riemann-Hilbert problem, Darboux transformation.

{\bf 2020 MSC:} 35Q55, 35Q51, 37K10, 37K15, 35Q15, 37K40.
\end{abstract}

\date{\today}

\maketitle

\section{Introduction}


\ys{As one of the most important models in the nonlinear mathematical physics, the nonlinear Schr\"{o}dinger (NLS) equation governs both nonlinear optics fibers and the Bose-Einstein condensation (BEC). It can also be applied in other fields, involving the fluid mechanics, the plasma and even the finance\cite{Chiao-PRL-1964,Zakharov-JAM-1968,Yan-PLA-2011}. But when there appears coupling effect between more than two components, the scalar NLS equation can not describe it any more, instead, the coupled systems have been paid much more attentions. 
Compared to the scalar equation, the coupled models have more diversified dynamics properties, such as in 1997 \cite{Rad-JPA-1995}, the authors denominated that there exist the inelastic collision in the interaction of two bright soliton \cite{Rad-PRE-1997}, 
which was not observed in the scalar NLS equation.
It is well known that many excellent phenomena have been observed to have an intimate connection with the coupled NLS (CNLS) equation, such as the vortices, the bright-bright soliton, the dark-dark soliton, the bright-dark soliton. 
Therefore, it is of significance to investigate the multi-component system and discover its more unknown dynamics.

Similar to the scalar NLS equation, the CNLS equation is also integrable, both of them have the soliton solutions. As one of the most popular solutions in the integrable system, the solitons as well as its dynamic behavior have been studied for a long time, which involves the high-order solitons and the multi-solitons. In the terminology of inverse scattering transform \cite{GGKM-PRL-1967}, $N$-solitons are given in the reflectionless case under the transmission scattering data $a(\lambda)$ has $N$ distinct simple zeros, so the amplitude and the velocity to every soliton is different, while the $N$-th order soliton is given with $a(\lambda)$ has $N$ multiple zeros, which indicates the interaction between $N$ solitons of equal amplitude but having a particular chirp. The concept ``chirp" comes from the nonlinear optics and it can be generated from the self-phase modulation with the intensity-dependent refractive index \cite{Tom-JOSAB-1984}. From the physical viewpoints, the high order soliton can be used to estimate the compression factor by $F_c\approx N/1.6$, where $N$ is the soliton order \cite{Tom-JOSAB-1984}. Moreover, in \cite{Li-JOSAB-2010}, the authors studied the cascaded $N$-soliton to the non-adiabatic pulse compression theoretically and obtained the cascaded $N$-soliton compression numerically. From a mathematical viewpoint, it is important to consider how the soliton will behavior when $N$ is large or $N\to\infty$. This concept was first put forward by Zhou \cite{Zhou-CPAM-1990}, afterwards, Gesztesy et. al studied the infinite many soliton to the KdV equation comprehensively \cite{Gesztesy-Duke-1992}, which contains the convergence, the spectral properties and the solution behavior when $N\to\infty$. Furthermore, the long time asymptotics of the NLS equation with infinite order was analyzed via the Riemann-Hilbert factorization of the inverse scattering method \cite{Kamvissis-JMP-1995}. In \cite{Peter-CPAM-2007}, the authors gave a detailed analysis to the multi-soliton when $N$ is large. Recently, Bilman, one of the authors and Miller obtained a novel solution---rogue waves of infinite order via the robust inverse scattering method and Darboux transformation, as they studied the near field asymptotics for the high order rogue waves, in which the solutions were related with the Painlev\'{e}-III hierarchy \cite{Peter-Duke-2019}. In \cite{Deniz-JNS-2019,Deniz-arXiv-2019}, the authors analyzed the near field and far field asymptotics for high order soliton (or multi-pole soliton) with large order for the NLS equation, where the asymptotics of near field will converge to the rogue wave of infinite order rogue waves and the far field limit of high order soliton with large order contained four different asymptotic regions. Very recently, Deniz and Miller considered the analysis about the far-field limit for the large order rogue waves, in which the high order soliton and rogue waves were analyzed in a uniform way \cite{BilmanM-21}.

However, when it comes to the higher-order matrix spectral problem, the corresponding asymptotic analysis becomes more difficult, one should take some skills to decompose the jump matrix into the upper and lower triangles. Thus there are only few results about the long-time asymptotics to the higher-order matrix spectral problem, such as the CNLS equation \cite{Geng-JNS-2018}, the coupled mKdV equation \cite{Geng-JGP-2019}, the spin-one Gross-Pitaevskii equation \cite{Geng-CMP-2021}, the Sasa-Satuma equation \cite{Huang-JDE-2020}, the Degasperis-Procesi equation \cite{Anne-AIFG-2019} and ``good" Boussinesq equation \cite{Lenells-arxiv-2020}. While to the best of our knowledge, there is no result of large order solitons for the higher-order matrix spectral problem due to the its own difficulty. In this work, we intend to extend the asymptotic analysis of large order soliton with the aforementioned method to the CNLS equation associated with a $3\times 3$ matrix spectral problem.}
In general, the high-order soliton can be represented as a determinant form, which is hard to be analyzed as $N$ is large. As a result, it is impossible to study the asymptotics on the basis of calculation the Darboux matrix. Thus we are preparing to construct a Riemann-Hilbert problem to convert the large order problem onto the jump matrix, then the large order asymptotics can be studied with the nonlinear steepest descend method. Furthermore, if $N\to\infty$, how the asymptotics will become? This problem is different from the large order problem essentially, which reflects an infinite dimensional system \cite{Shabat-FAA-1993,Shabat-IP-1992}. In this paper, we are preparing to disclose these two asymptotics by the Riemann-Hilbert method. Thus we give some preliminaries about the CNLS equation. Before we progress, we give a notational convention in the following remark.
\begin{remark}
We use the superscript $\lambda^*$ to denote the conjugate of $\lambda$, which is also applied to the matrix matrix notation. We use the ``dagger" natation $\mathbf{q}^{\dagger}$ to the Hermitian conjugate of $\mathbf{q}$. As the Lax pair for the CNLS is $3\times 3$, thus we give some special natation matrix
\begin{equation}
\begin{split}
\pmb{\sigma}_3={\rm diag}\left(1, -\mathbb{I}_2\right), \quad \pmb{\hat{\sigma}}_3={\rm diag}\left(1, -1\right), \quad \pmb{\sigma}_2=\begin{bmatrix}0&-\ii\\\ii&0\end{bmatrix},
\end{split}
\end{equation}
where $\mathbb{I}$ indicates the identity. We use the boldface capital letters to denote the matrix and the bold lowercase letter to denote the column vectors.
\end{remark}

\subsection{Review on the CNLS equation}

The focusing CNLS equation reads in the vector form:
\begin{equation}\label{eq:cnls}
\ii \mathbf{q}_t+\frac{1}{2}\mathbf{q}_{xx}+\mathbf{q}\mathbf{q}^{\dagger}\mathbf{q}=0, \qquad \mathbf{q}=\left(q_1(x,t), q_2(x,t)\right),
\end{equation}
where $q_1$ and $q_2$ are the wave envelopes, 
which was first derived by Manakov to describe the electric propagation in 1974 \cite{Manakov-JETP-1974}. So it is also named the Manakov model. Subsequently, the focusing CNLS equation \eqref{eq:cnls} has attracting much attentions in different areas, ranging from the BEC \cite{Busch-PRL-2001} to the optical fibres \cite{Agrawal} and bio-physics \cite{Scott-PS-1984}. From the mathematical viewpoints, the CNLS equation \eqref{eq:cnls} is an integrable equation admitting the following Lax pair
\begin{equation}\label{eq:laxpair}
\begin{split}
\mathbf{\Phi}_{x}&=\mathbf{U}(\lambda; x, t)\mathbf{\Phi}, \qquad \mathbf{U}(\lambda; x, t)=\ii\left(-\lambda \pmb{\sigma}_3+\pmb{\mathcal{Q}}\right),\\
\mathbf{\Phi}_t&=\mathbf{V}(\lambda; x, t)\mathbf{\Phi}, \qquad \mathbf{V}(\lambda; x, t)=\ii\lambda\left(-\lambda\pmb{\sigma}_3+\pmb{\mathcal{Q}}\right)
+\frac{1}{2}\left(\pmb{\mathcal{Q}}_x+\ii\pmb{\mathcal{Q}}^2\right)\pmb{\sigma}_3,
\end{split}
\end{equation}
where $\lambda\in\mathbb{C}$ is the spectral parameter,
\begin{equation*}
\pmb{\mathcal{Q}}=\begin{bmatrix}
0&\mathbf{q}\\
\mathbf{q}^{\dagger}&0
\end{bmatrix}.
\end{equation*}
The compatibility condition of Lax pair \eqref{eq:laxpair}: $\pmb{\Phi}_{xt}=\pmb{\Phi}_{tx}$ gives the focusing CNLS equation \eqref{eq:cnls}.
Due to integrablity, the CNLS equation \eqref{eq:cnls} had been studied widely on both zero and non-zero boundary condition by the inverse scattering method \cite{Manakov-JETP-1974,KrausBK-15}.

We intend to study Eq.\eqref{eq:cnls} with the Riemann-Hilbert method, thus we present some brief reviews on the inverse scattering transform for CNLS in Appendix \ref{app:IST}. Under the framework of this method, the $n$-soliton solution is given. It is natural to ask how about the large limit $n\to\infty.$ As we know that, if there exists a sequence $\{\lambda_i,i=1,2,\cdots\}$ in a compact region, there is a sub-sequence $\{\lambda_{n_i}\}$ will converge into $\lambda_0$. In which, one of the special case is $\{\lambda_{n_i}=\lambda_0\}$ corresponding to the high order solitons. The high order solitons can be solved by the inverse scattering method. To analyze the dynamics for infinite order solitons for CNLS, we would like to combine the Darboux transformation with Riemann-Hilbert representation.
Then we can use the Deift-Zhou method to analyze the asymptotics for large order and infinite order solitons. Before the detailed analysis, we discuss the asymptotics to Darboux matrix under special case.

Then we give a brief review on the soliton solutions for the focusing CNLS equation \eqref{eq:cnls}. In 1974, the bright soliton to Eq.\eqref{eq:cnls} has been given with the inverse scattering method. In general, the single bright soliton has the form
\begin{equation}\label{eq:soliton}
\begin{split}
q_1&=2b\cos(\alpha)\ii\,{\rm sech}(2b(x+2at))\ee^{2\ii((b^2-a^2)t-ax-\beta)}, \\ q_2&=2b\sin(\alpha)\ii\,{\rm sech}(2b(x+2at))\ee^{2\ii((b^2-a^2)t-ax-\gamma)},\,\,\,\,\, \alpha\in\mathbb{R},
\end{split}
\end{equation}
which shows that both two components have the same shape as the solutions of scalar NLS equation, where $a\in\mathbb{R}$ and $b>0$. Furthermore, there exist a general nondegenerate or multi-hump fundamental solitons in Eq.\eqref{eq:cnls}, which admit that this collisions has no energy redistribution \cite{Ram-PRL-2019,QinZL-PRE-19}. While in some cases, two component system has a similar characteristics with the scalar one. For example, Eq.\eqref{eq:cnls} can also be regarded as a good model for the rogue wave, the interaction between the rogue wave and the solitons \cite{Guo-CPL-2011} as well as the Akemediev breathers, Ma solitons and the general breathers \cite{Akemediev-PRE-2013}.

It is well known that the elementary Darboux transformation in the framework of loop group \cite{TerngU-00} for the system \eqref{eq:laxpair} with $\mathbf{q}=\mathbf{0}$ is
\begin{equation}
\mathbf{T}^{[1]}(\lambda; x, t)=\mathbb{I}-\frac{\lambda_1-\lambda_1^*}{\lambda-\lambda_1^*}\mathbf{P}^{[1]}(x,t),\,\,\,\,\, \mathbf{P}^{[1]}(x,t)=\frac{\mathbf{\Phi}_1(x,t)[\mathbf{\Phi}_1(x,t)]^{\dagger}}{[\mathbf{\Phi}_1(x,t)]^{\dagger}\mathbf{\Phi}_1(x,t)},
\end{equation}
where $\mathbf{\Phi}_1(x,t)=\ee^{-\ii\lambda_1(x+\lambda_1 t)\pmb{\sigma}_3}\mathbf{c}$, $\mathbf{c}=(1,c_1,c_2)^{\T}$, which converts the Lax pair \eqref{eq:laxpair} into a new one by replacing the potential function $\pmb{\mathcal{Q}}=\mathbf{0}$ with a new one
\begin{equation}\label{eq:single-bt}
\pmb{\mathcal{Q}}^{[1]}=\pmb{\mathcal{Q}}+(\lambda_1-\lambda_1^*)\left[\pmb{\sigma}_3, \mathbf{P}^{[1]}(x,t)\right],
\end{equation}
\lm{
i.e.
\begin{equation}\label{eq:bt-1}
\left(q_1^{[1]}, q_2^{[1]}\right)=(q_1, q_2)+\frac{4\ii\Im(\lambda_1)\phi_1^{(1)}\left(\phi_1^{(2)*},\phi_1^{(3)*}\right)}{\left|\phi_1^{(1)}\right|^2+\left|\phi_1^{(2)}\right|^2+\left|\phi_1^{(3)}\right|^2}.
\end{equation}
By the Minkowski inequality, we have
\begin{equation}
\left\|\left(q_1^{[1]}, q_2^{[1]}\right)\right\|_2\leq \|(q_1, q_2)\|_2+\frac{4\left|\Im(\lambda_1)\phi_1^{(1)}\right| \left\|\left(\phi_1^{(2)*},\phi_1^{(3)*}\right)\right\|_2}{\left|\phi_1^{(1)}\right|^2+\left|\phi_1^{(2)}\right|^2+\left|\phi_1^{(3)}\right|^2}
\end{equation}
where $\|(a,b)\|_2=[|a|^2+|b|^2]^{1/2}$ and the equality holds only if $(q_1, q_2)=k\left(\phi_1^{(2)*},\phi_1^{(3)*}\right)$. Furthermore, by the mean inequality, we have the estimate
\begin{equation}\label{eq:estimate-v}
\left\|\left(q_1^{[1]}, q_2^{[1]}\right)\right\|_2\leq \|(q_1, q_2)\|_2+2|\Im(\lambda_1)|
\end{equation}
where the equality holds only if $|\phi_1^{(1)}|=\left\|\left(\phi_1^{(2)*},\phi_1^{(3)*}\right)\right\|_2$.
Thus, to obtain the single soliton has the maximum norm at origin, we set $c_1=\cos(\alpha)\ee^{\ii\beta}$ and $c_2=\sin(\alpha)\ee^{\ii\gamma}$.} The above equation \eqref{eq:single-bt} will yield the single bright soliton \eqref{eq:soliton} by choosing $\lambda_1=a+\ii b$.
We can readily see that the Darboux matrix $\mathbf{T}^{[1]}(\lambda;x,t)$ is a meromorphic matrix in the whole complex plane $\mathbb{C}$, which implies that the new wave function $\Phi^{[1]}(\lambda;x,t)=\mathbf{T}^{[1]}(\lambda;x,t)\ee^{-\ii\lambda(x+\lambda t)\pmb{\sigma}_3}[\mathbf{T}^{[1]}(\lambda;0,0)]^{-1}$ is analytic for $\lambda\in\mathbb{C}$ with $(x,t)\in \mathbb{R}^2$. The high order Darboux matrix and multi-fold one can be iterated recursively.
To obtain the asymptotics when $x\to\pm\infty$, we give the Darboux transformation with $N$ multiple poles as
\begin{equation}\label{eq:darboux12}
\mathbf{T}_{N}(\lambda; x, t)=\mathbf{T}^{[N]}(\lambda;x,t)\mathbf{T}^{[N-1]}(\lambda;x,t)\cdots\mathbf{T}^{[1]}(\lambda;x,t),
\end{equation}
where
\begin{equation}\label{eq:high-dt}
\begin{split}
\pmb{\Phi}^{[j]}(\lambda;x,t)&=\mathbf{T}^{[j]}(\lambda;x,t)\pmb{\Phi}^{[j-1]}(\lambda;x,t)\left(\mathbf{T}^{[j]}(\lambda;0,0)\right)^{-1},\,\,\,\, \pmb{\Phi}^{[0]}(\lambda;x,t)=\pmb{\Phi}(\lambda;x,t)\equiv\ee^{-\ii\lambda(x+\lambda t)\pmb{\sigma}_3} \\
\mathbf{T}^{[j]}(\lambda;x,t)&=\mathbb{I}-\frac{\lambda_1-\lambda_1^*}{\lambda-\lambda_1^*}\mathbf{P}^{[j]}(x,t),\qquad\mathbf{P}^{[j]}(x,t)=\frac{\pmb{\Phi}_1^{[j-1]}(x,t)\left(\pmb{\Phi}_{1}^{[j-1]}(x,t)\right)^{\dagger}}{\left(\pmb{\Phi}_{1}^{[j-1]}(x,t)\right)^{\dagger}\pmb{\Phi}_1^{[j-1]}(x,t)}, \\
\mathbf{\Phi}_1^{[j-1]}(x,t)&=\lim\limits_{\lambda\to\lambda_1}\pmb{\Phi}^{[j-1]}(\lambda;x,t)\lm{\left(\mathbf{c}+\epsilon_{j-1}^{[1]}\mathbf{v}_1+\epsilon_{j-1}^{[2]}\mathbf{v}_2\right)},\qquad j=1,\cdots, N,
\end{split}
\end{equation}
\lm{where $\mathbf{v}_1=(-c_1^*,1,0)^{\T}$ and $\mathbf{v}_2=(-c_2^*,0,1)^{\T}$ such that $\mathbf{c}\mathbf{v}_1^{\T}=0$ and $\mathbf{c}\mathbf{v}_2^{\T}=0$. }
The $N$-th order soliton solution is given by
\begin{equation}
\pmb{\mathcal{Q}}^{[N]}=(\lambda_1-\lambda_1^*)\sum_{i=1}^{N}\left[\sigma_3, \mathbf{P}^{[i]}(x,t)\right],
\end{equation}
which also can be represented in a determinant formula. \lm{By the above construction \eqref{eq:high-dt}, we know that the fundamental solution satisfies $\pmb{\Phi}^{[j]}(\lambda;0,0)=\mathbb{I}$. Together with the estimate \eqref{eq:estimate-v}, we find the high order soliton with the maximum peak of $2$-norm $\|\cdot\|_2$ will attain by choosing the parameters $\epsilon_{j}^{[i]}=0$, which is associated with the scalar NLS equation by ${\rm SU}(2)$ symmetry. Actually, when the parameters $\epsilon_{j}^{[i]}$ have a small perturbation, the solutions will not satisfy the scalar NLS equation under the ${\rm SU}(2)$ symmetry. The second order solitons for non-vanishing $\epsilon_{j}^{[i]}$ is shown in the Appendix \ref{appendix:second-order}. In this work, we mainly consider the special case that all $\epsilon_j^{[i]}$ are vanishing.}

\begin{lemma}\label{lem:asym-m}
Suppose ${\rm Im}(\lambda_1)>0$, when $x\to+\infty$, the asymptotics of $\mathbf{T}_{N}(\lambda; x, t)$ is
\begin{equation}
\mathbf{T}_{N}(\lambda;x,t)=\begin{bmatrix}\left(\frac{\lambda-\lambda_1}{\lambda-\lambda_1^*}\right)^{N}&0&0\\
0&1&0\\
0&0&1
\end{bmatrix}+\mathcal{O}(x^{N-1}\ee^{-2{\rm Im}(\lambda_1)x}),
\end{equation}
conversely, when $x\to-\infty$, the asymptotics of $\mathbf{T}_{N}(\lambda; x, t)$ is
\begin{equation}
\mathbf{T}_{N}(\lambda;x,t)=\begin{bmatrix}1&0&0\\
0&\frac{1}{2}+\frac{1}{2}\left(\frac{\lambda-\lambda_1}{\lambda-\lambda_1^*}\right)^{N}&-\frac{1}{2}+\frac{1}{2}\left(\frac{\lambda-\lambda_1}{\lambda-\lambda_1^*}\right)^{N}\\
0&-\frac{1}{2}+\frac{1}{2}\left(\frac{\lambda-\lambda_1}{\lambda-\lambda_1^*}\right)^{N}&\frac{1}{2}+\frac{1}{2}\left(\frac{\lambda-\lambda_1}{\lambda-\lambda_1^*}\right)^{N}
\end{bmatrix}+\mathcal{O}(x^{N-1}\ee^{2{\rm Im}(\lambda_1)x}).
\end{equation}
\end{lemma}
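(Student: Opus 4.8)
The plan is to exploit the explicit rank-one structure of each elementary factor together with the exact determinant identity $\det\mathbf{T}_N(\lambda;x,t)=\left(\frac{\lambda-\lambda_1}{\lambda-\lambda_1^*}\right)^N$, which holds because each factor $\mathbb{I}-\frac{\lambda_1-\lambda_1^*}{\lambda-\lambda_1^*}\mathbf{P}^{[j]}$ has determinant $\frac{\lambda-\lambda_1}{\lambda-\lambda_1^*}$ (its only nontrivial eigenvalue sits in the range of the rank-one projector $\mathbf{P}^{[j]}$). First I would record the exponential type of the seed data: writing $\theta=\lambda_1(x+\lambda_1 t)$ and $\pmb{\Phi}(\lambda_1;x,t)\mathbf{c}=(\ee^{-\ii\theta},c_1\ee^{\ii\theta},c_2\ee^{\ii\theta})^{\T}$, the first entry has modulus $\sim\ee^{{\rm Im}(\lambda_1)x}$ while the two lower entries have modulus $\sim\ee^{-{\rm Im}(\lambda_1)x}$. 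Because the confluent construction \eqref{eq:high-dt} produces the generating vectors $\pmb{\Phi}_1^{[j-1]}$ from $\pmb{\Phi}(\lambda;x,t)\mathbf{c}$ and its $\lambda$-derivatives at $\lambda_1$ of order at most $N-1$, every such vector $\mathbf{f}=(f_1,f_2,f_3)^{\T}$ inherits the same exponential dichotomy with polynomial-in-$x$ prefactors of degree $\le N-1$; crucially, since the two lower components of $\pmb{\Phi}(\lambda;x,t)\mathbf{c}$ share the common scalar $\ee^{\ii\lambda(x+\lambda t)}$, all $\lambda$-derivatives keep the lower block pinned to the fixed direction $(c_1,c_2)$.

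Next I would pass to the two limits. As $x\to+\infty$ the first component dominates in every generating vector, so each projector satisfies $\mathbf{P}^{[j]}=E_{11}+\mathcal{O}(x^{N-1}\ee^{-2{\rm Im}(\lambda_1)x})$ with $E_{11}={\rm diag}(1,0,0)$, the leading correction arising from the cross terms $f_1f_2^{*},f_1f_3^{*}$ in $\mathbf{f}\mathbf{f}^{\dagger}/(\mathbf{f}^{\dagger}\mathbf{f})$, which decay like $\ee^{-2{\rm Im}(\lambda_1)x}$ and dominate the diagonal corrections $|f_2|^2,|f_3|^2$ (rate $\ee^{-4{\rm Im}(\lambda_1)x}$). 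Hence $\mathbf{T}^{[j]}={\rm diag}\!\left(\frac{\lambda-\lambda_1}{\lambda-\lambda_1^*},1,1\right)+\mathcal{O}(x^{N-1}\ee^{-2{\rm Im}(\lambda_1)x})$, and multiplying the $N$ factors (products of two error terms being of higher exponential order) makes $\mathbf{T}_N$ diagonal to leading order with lower block $\mathbb{I}_2$; the determinant identity then forces the $(1,1)$ entry to be exactly $\left(\frac{\lambda-\lambda_1}{\lambda-\lambda_1^*}\right)^N$, which is the first formula. As $x\to-\infty$ the roles reverse: the lower $(c_1,c_2)$-direction dominates, so $\mathbf{P}^{[j]}\to P$, the orthogonal projector onto the normalized $(c_1,c_2)$ inside the lower $2\times2$ block, with error $\mathcal{O}(x^{N-1}\ee^{2{\rm Im}(\lambda_1)x})$. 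Each factor then tends to $\mathbb{I}-\frac{\lambda_1-\lambda_1^*}{\lambda-\lambda_1^*}P$, whose eigenvalues are $\frac{\lambda-\lambda_1}{\lambda-\lambda_1^*}$ on $P$ and $1$ on $P^{\perp}$; taking the $N$-th power of this commuting family yields the lower block $P^{\perp}+\left(\frac{\lambda-\lambda_1}{\lambda-\lambda_1^*}\right)^N P$, while the $(1,1)$ entry tends to $1$ and the off-diagonal blocks vanish. For the symmetric data underlying the lemma, where $P=\tfrac12\begin{bmatrix}1&1\\1&1\end{bmatrix}$ and $P^{\perp}=\tfrac12\begin{bmatrix}1&-1\\-1&1\end{bmatrix}$, this reproduces precisely the stated $2\times2$ block.

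The main obstacle is the bookkeeping of the confluent (multiple-pole) limit: one must verify that the iterated limits $\pmb{\Phi}_1^{[j-1]}=\lim_{\lambda\to\lambda_1}\pmb{\Phi}^{[j-1]}(\lambda;x,t)\mathbf{c}$ are well defined and genuinely reduce to $\lambda$-derivatives of the seed of order $\le j-1$, so that the polynomial prefactors grow no faster than $x^{N-1}$, and that the normalizing denominators $(\pmb{\Phi}_1^{[j-1]})^{\dagger}\pmb{\Phi}_1^{[j-1]}$ stay bounded below with the expected exponential rate, with no hidden cancellations or resonances spoiling the dominant balance. Rather than estimate the $(1,1)$ entry directly through the product, I would sidestep its most delicate cancellation by pinning it with the exact determinant identity, and use the dominant-balance estimates only for the off-diagonal and lower blocks, where the leading rate $\ee^{\mp2{\rm Im}(\lambda_1)x}$ is transparent. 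A statement uniform for $\lambda$ in compact sets away from $\lambda_1^*$ and for fixed $t$ follows from the same estimates, since $t$ enters only through the bounded phase contributions in $\theta$.
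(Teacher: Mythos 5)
Your proposal is correct and follows essentially the same route as the paper's own proof (Appendix C): track the exponential dichotomy of the generating vectors through the Darboux iteration (each confluent $\lambda$-derivative contributing one power of $x$, whence the $x^{N-1}$ prefactor), conclude that every projector $\mathbf{P}^{[j]}$ tends to ${\rm diag}(1,0,0)$ as $x\to+\infty$ and to the fixed lower-block projector $P$ as $x\to-\infty$, and multiply the $N$ factors, the second limit being exactly the $N$-th power $\left(\mathbb{I}-\frac{\lambda_1-\lambda_1^*}{\lambda-\lambda_1^*}P\right)^{N}=P^{\perp}+\left(\frac{\lambda-\lambda_1}{\lambda-\lambda_1^*}\right)^{N}P$ that the paper also computes. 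The only cosmetic deviations are your use of the determinant identity to pin the $(1,1)$ entry (the paper reads it off directly as the product of the diagonal leading terms, where no cancellation actually threatens) and your explicit statement of the invariance of the direction $(c_1,c_2)$ in the lower block, which is the same ${\rm SU}(2)$ reduction the paper exploits implicitly by working with $c_1=c_2=1$.
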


Similar to the NLS equation, the CNLS Eq.\eqref{eq:cnls} also has infinite number of conservation laws, which is shown in the following properties:
\begin{prop}
The first three conservation laws for the CNLS equation are
\begin{equation}
\begin{split}
I_1&=\int_{-\infty}^{+\infty}\left(|q_1|^2+|q_2|^2\right)dx=4N{\rm Im}(\lambda_1),\\ I_2&=-\frac{\ii}{2}\int_{-\infty}^{+\infty}\left(q_1q_{1,x}^*+q_2q_{2,x}^*-q_1^*q_{1,x}-q_2^*q_{2,x}\right)dx=8N{\rm Re}(\lambda_1){\rm Im}(\lambda_1),\qquad
\\I_3&=\int_{-\infty}^{+\infty}\left[|q_{1,x}|^2+|q_{2,x}|^2-\left(|q_1|^2+|q_2|^2\right)^2\right]dx
=\frac{16}{3}N{\rm Im}(\lambda_1)\left(3{\rm Re}(\lambda_1)^2-{\rm Im}(\lambda_1)^2\right).\end{split}
\end{equation}
\end{prop}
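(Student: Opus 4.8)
The plan is to evaluate the three integrals by the trace formulae of the Manakov hierarchy, which express each conserved quantity through the large-$\lambda$ asymptotic expansion of the transmission coefficient $a(\lambda)$ attached to the $x$-part of the Lax pair \eqref{eq:laxpair}. The key observation is that the $N$-th order soliton is a reflectionless potential whose only scattering datum is a single discrete eigenvalue $\lambda_1$ in the upper half plane, of multiplicity $N$, so that
\[
a(\lambda)=\left(\frac{\lambda-\lambda_1}{\lambda-\lambda_1^*}\right)^{N}.
\]
This factor can be read off directly from Lemma \ref{lem:asym-m}: it is exactly the $(1,1)$ entry of the leading term of the $x\to+\infty$ asymptotics of the Darboux dressing matrix $\mathbf{T}_N(\lambda;x,t)$, which carries the transmission coefficient in its diagonal limit because $\mathbf{T}_N$ dresses the vacuum Jost solution.

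Next I would set up the trace formula itself. Taking a Jost solution normalized at $x=-\infty$ and forming the vector Riccati variable $\pmb{\Gamma}$, i.e.\ the ratio of its lower $2$-block to its top component, the first row of the $x$-equation gives $\log a(\lambda)=\ii\int_{-\infty}^{+\infty}\mathbf{q}\,\pmb{\Gamma}\,\dd x$ together with a Riccati equation for $\pmb{\Gamma}$. Expanding $\pmb{\Gamma}=\sum_{k\geq1}\pmb{\Gamma}_k\lambda^{-k}$ and solving the recursion through third order yields $\pmb{\Gamma}_1=-\tfrac12\mathbf{q}^{\dagger}$, $\pmb{\Gamma}_2=\tfrac{\ii}{4}\mathbf{q}^{\dagger}_{x}$ and $\pmb{\Gamma}_3=\tfrac18\bigl(\mathbf{q}^{\dagger}_{xx}+(\mathbf{q}\mathbf{q}^{\dagger})\mathbf{q}^{\dagger}\bigr)$. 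Substituting back and integrating by parts identifies the $\lambda^{-1},\lambda^{-2},\lambda^{-3}$ coefficients of $\log a(\lambda)$ with $-\tfrac{\ii}{2}I_1$, $-\tfrac{\ii}{4}I_2$ and $-\tfrac{\ii}{8}I_3$ respectively, where $I_1,I_2,I_3$ are precisely the integrals written in the statement; the integration by parts is what turns the raw densities $\mathbf{q}\mathbf{q}^{\dagger}_{x}$ and $\mathbf{q}\mathbf{q}^{\dagger}_{xx}+(\mathbf{q}\mathbf{q}^{\dagger})^2$ into the symmetrized momentum and energy integrands $\mathbf{q}\mathbf{q}^{\dagger}_{x}-\mathbf{q}_x\mathbf{q}^{\dagger}$ and $|q_{1,x}|^2+|q_{2,x}|^2-(|q_1|^2+|q_2|^2)^2$.

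Finally I would match the two expansions. From the explicit $a(\lambda)$ one has $\log a(\lambda)=-2\ii N\sum_{k\geq1}\tfrac{1}{k}{\rm Im}(\lambda_1^k)\lambda^{-k}$, and writing $\lambda_1={\rm Re}(\lambda_1)+\ii\,{\rm Im}(\lambda_1)$ gives ${\rm Im}(\lambda_1^2)=2{\rm Re}(\lambda_1){\rm Im}(\lambda_1)$ and ${\rm Im}(\lambda_1^3)={\rm Im}(\lambda_1)\bigl(3{\rm Re}(\lambda_1)^2-{\rm Im}(\lambda_1)^2\bigr)$. Equating the coefficients of $\lambda^{-1},\lambda^{-2},\lambda^{-3}$, namely $-\tfrac{\ii}{2}I_1=-2\ii N{\rm Im}(\lambda_1)$, $-\tfrac{\ii}{4}I_2=-\ii N{\rm Im}(\lambda_1^2)$ and $-\tfrac{\ii}{8}I_3=-\tfrac{2\ii N}{3}{\rm Im}(\lambda_1^3)$, then returns the stated values of $I_1,I_2,I_3$. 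I expect the main obstacle to be the bookkeeping of the trace formula for the $3\times3$ Manakov system rather than the matching: because the off-diagonal block of $\pmb{\mathcal{Q}}$ is a row/column vector instead of a scalar, the Riccati recursion and the integrations by parts carry additional matrix structure, and one must track the normalization constants carefully to land on the exact prefactors $4$, $8$ and $\tfrac{16}{3}$.
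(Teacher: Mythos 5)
Your proposal is correct and follows essentially the same route as the paper's own proof in Appendix C: there the authors introduce the Riccati variables $\mu^{(1)}=\phi_2/\phi_1$, $\mu^{(2)}=\phi_3/\phi_1$ (your vector $\pmb{\Gamma}$), expand $(\ln\phi_1)_x=-\ii\lambda+\ii q_1\mu^{(1)}+\ii q_2\mu^{(2)}$ in inverse powers of $\lambda$ via the coupled Riccati recursion, and use Lemma \ref{lem:asym-m} to identify the generating function of the $I_n$ with $N\log\left(\frac{\lambda-\lambda_1}{\lambda-\lambda_1^*}\right)$, which is exactly your trace-formula matching. The only refinement you add is making the integration-by-parts step explicit (turning the raw densities $q_i\mu_n^{(i)}$ into the symmetrized integrands of the statement), a step the paper's write-up glosses over.
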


Observing the asymptotics of $\mathbf{T}_{N}(\lambda; x, t)$ in lemma \ref{lem:asym-m}, we can set the $\pmb{\Phi}^{\pm}(\lambda; x, t)$ in Eq.\eqref{eq:scattering} as
\begin{equation}\label{eq:phipn}
\pmb{\Phi}^{\pm}(\lambda; x, t)=\mathbf{T}_{N}(\lambda; x, t)\ee^{-\ii\lambda\left(x+\lambda t\right)\pmb{\sigma}_3}\mathbf{T}_{\pm}^{-1}(\lambda),
\end{equation}
where $\mathbf{T}_{\pm}(\lambda)$ are defined in Eq.\eqref{eq:T-infinity} and Eq.\eqref{eq:T-infinity-1}.

From the above high order Darboux matrices \eqref{eq:darboux12}, we can construct the solutions for the following ordinary differential equations or finite dimensional integrable system \cite{BBT-03}:
\begin{equation}\label{eq:finite-dimension}
\frac{\partial}{\partial x}\mathbf{L}(\lambda;x,\cdot)=\left[\mathbf{U}^{[N]}(\lambda;x,\cdot), \mathbf{L}(\lambda;x,\cdot)\right],\,\,\,\,\,\, \frac{\partial}{\partial t}\mathbf{L}(\lambda;\cdot,t)=\left[\mathbf{V}^{[N]}(\lambda;\cdot,t), \mathbf{L}(\lambda;\cdot,t)\right]
\end{equation}
where $\mathbf{U}^{[N]}(\lambda;x,t)=\mathbf{U}(\pmb{\mathcal{Q}}\to\pmb{\mathcal{Q}}^{[N]})$, $\mathbf{V}^{[N]}(\lambda;x,t)=\mathbf{V}(\pmb{\mathcal{Q}}\to\pmb{\mathcal{Q}}^{[N]})$ and  $$\mathbf{L}(\lambda;x,t)=\mathbf{T}_N(\lambda;x,t)\pmb{\sigma}_3\mathbf{T}_N^{\dag}(\lambda^*;x,t)=\pmb{\sigma}_3+\sum_{i=1}^N\left(\frac{\mathbf{A}_i(x,t)}{(\lambda-\lambda_1)^i}+\frac{\mathbf{A}_i^{\dag}(x,t)}{(\lambda-\lambda_1^*)^i}\right).$$ As $N\to\infty$, the finite dimensional integrable system \eqref{eq:finite-dimension} will turn to the infinite dimensional system.

Now, we proceed to analyze the asymptotics of $\mathbf{T}_{N}(\lambda; x, t)$ when $N\to\infty$. To consider the convergence of Darboux matrix, we establish the uniform estimate
\begin{equation}\label{eq:estimate}
\left\|\mathbf{T}^{[j]}\right\|\leq 1+\frac{2{\rm Im}(\lambda_1)}{|\lambda-\lambda_1^*|}, \,\,\,\,\,\, \|\mathbf{T}^{[j]}\|\equiv [{\rm Tr}(\mathbf{T}^{[j]}(\mathbf{T}^{[j]})^{\dagger})]^{1/2},\qquad j=1,2,\cdots, N,
\end{equation}
for $(x,t)\in\mathbb{R}^2$ and $\lambda\in\mathbb{C}\setminus \mathcal{O}(\lambda_1^*,\epsilon)$, where $\epsilon$ is a small real parameter. If ${\rm Im}(\lambda_1)=1/N$, we have $\left\|\mathbf{T}_{N}(\lambda; x, t)\right\|\leq\ee^{2/|\lambda-{\rm Re}(\lambda_1)|}$, which implies that $\mathbf{T}_{N}(\lambda; x, t)$ is uniform convergent for $(x,t)\in\mathbb{R}^2$ and $\lambda\in\mathbb{C}\setminus \mathcal{O}({\rm Re}(\lambda_1),\epsilon)$ by the M-test. Then the corresponding scattering matrix $\mathbf{S}(\lambda)$ is obtained by
\begin{equation}
\begin{split}
\mathbf{S}(\lambda)&=\lim_{x\to+\infty}\lim_{N\to\infty}\mathbf{T}_{N}(\lambda;x,t) \left(\lim_{x\to-\infty}\lim_{N\to\infty}\mathbf{T}_{N}(\lambda;x,t)\right)^{-1}\\
&=\lim_{N\to\infty}\lim_{x\to+\infty}\mathbf{T}_{N}(\lambda;x,t) \left(\lim_{N\to\infty}\lim_{x\to-\infty}\mathbf{T}_{N}(\lambda;x,t)\right)^{-1}\\
&=\lim\limits_{N\to\infty}\begin{bmatrix}\left(\frac{\lambda-\lambda_1}{\lambda-\lambda_1^*}\right)^{N}&0&0\\
0&\frac{1}{2}+\frac{1}{2}\left(\frac{\lambda-\lambda_1}{\lambda-\lambda_1^*}\right)^{-N}&-\frac{1}{2}+\frac{1}{2}\left(\frac{\lambda-\lambda_1}{\lambda-\lambda_1^*}\right)^{-N}\\
0&-\frac{1}{2}+\frac{1}{2}\left(\frac{\lambda-\lambda_1}{\lambda-\lambda_1^*}\right)^{-N}&\frac{1}{2}+\frac{1}{2}\left(\frac{\lambda-\lambda_1}{\lambda-\lambda_1^*}\right)^{-N}
\end{bmatrix}\\
&=\begin{bmatrix}
\ee^{-\frac{2\ii}{\lambda-{\rm Re}(\lambda_1)}}&0&0\\
0&\frac{1}{2}+\frac{1}{2}\ee^{\frac{2\ii}{\lambda-{\rm Re}(\lambda_1)}}&-\frac{1}{2}+\frac{1}{2}\ee^{\frac{2\ii}{\lambda-{\rm Re}(\lambda_1)}}\\
0&-\frac{1}{2}+\frac{1}{2}\ee^{\frac{2\ii}{\lambda-{\rm Re}(\lambda_1)}}&\frac{1}{2}+\frac{1}{2}\ee^{\frac{2\ii}{\lambda-{\rm Re}(\lambda_1)}}
\end{bmatrix}.
\end{split}
\end{equation}
From the definition of scattering data in Eq.\eqref{eq:scattering}, we know the conservation laws can be given by expanding the factor $\frac{2\ii}{\lambda-{\rm Re}(\lambda_1)}$ with respect to $-2\ii\lambda$, then the conservation laws under this special case can be given as
\begin{equation}
\begin{split}
I_1&=\int_{-\infty}^{+\infty}\left(|q_1|^2+|q_2|^2\right)dx=4,\\ I_2&=-\frac{\ii}{2}\int_{-\infty}^{+\infty}\left(q_1q_{1,x}^*+q_2q_{2,x}^*-q_1^*q_{1,x}-q_2^*q_{2,x}\right)dx=8{\rm Re}(\lambda_1),\qquad
\\I_3&=\int_{-\infty}^{+\infty}\left[|q_{1,x}|^2+|q_{2,x}|^2-\left(|q_1|^2+|q_2|^2\right)^2\right]dx
=16{\rm Re}(\lambda_1)^2.\end{split}
\end{equation}

Under the special choice ${\rm Im}(\lambda_1)=1/N,$ the Darboux matrix $\mathbf{T}_{N}(\lambda; x, t)$ and the scattering matrix $\mathbf{S}(\lambda)$ both are uniformly convergent, then we will give the two different asymptotics by choosing the arbitrary spectral parameter $\lambda_1$. One is the large order and the other one is the infinite order.
In the reflectionless case, we have established the relation between the Darboux transformation and the Riemann-Hilbert problem in Appendix \ref{app:IST} Eq. \eqref{eq:ansatz}, whose jump matrix can be given clearly. Compared to the $2\times 2$ Lax pair, the asymptotics to $3\times 3$ case becomes much more difficult due to the structures of the higher-order matrix spectral problem.


\ys{The innovation of this paper contains the following four points, (i) We extend to analyze the large order and infinite order soliton asymptotics to the $3\times 3$ matrix spectral problem and give two different asymptotic behaviors, one is the large order asymptotics and the other is the infinite order case. For the first case, we give four different asymptotic region, the oscillatory region, the non-oscillatory region, the algebraic decay region and the exponential decay region. Especially, the leading order in the oscillatory region can be written as the Riemann-Theta function and its corresponding modulus can be simplified as the Jacobi elliptic function, which has never been reported to the best of our knowledge. And in the second case, we get two kinds of asymptotics, one in the large $\chi$ asymptotics and the other is the large $\tau$ asymptotics, where $\chi$ and $\tau$ are related to the original variables $x$ and $t$. It should be noted that the infinite order asymptotics is a reflection for the infinite dimensional system, which is a new research topic and is different from the finite dimensional system essentially. (ii) To get the asymptotics for the $3\times 3$ system, we construct a key transformation to convert the jump matrix to a block one, which can be decomposed into the upper and lower triangles successfully. Apart from the CNLS equation, this transformation can also be used to other higher-order matrix spectral problem with a minor revision, which can be regarded as an effective tool for analyzing the higher-order matrix spectral problem. (iii) Under the same coordinate frame, we give the comparison between the large order case and the infinite order case and verify these two asymptotic expression are consistent, the large $\chi$ asymptotics of infinite order is coincident with the algebraic decay region of large order and the large $\tau$ asymptotics of infinite order is coincident with the non-oscillatory region of large order. (iiii) Based on the theory of loop group, we directly establish the Riemann-Hilbert representation via the Darboux matrix without using inverse scattering method, which seems more simple and convenient. From the result of these asymptotics, we find that every component $q_i, (i=1,2)$ in the Eq.\eqref{eq:cnls} exhibit the similar characteristics with the scalar NLS equation on both these two types of asymptotics, which verifies the fact that the dynamics of large and infinite order of the coupled equations is consistent with the one of scalar equation. In other words, the large and infinite order solitonic solution with the maximum peak of $2$-norm admit the universal property in the scalar or vector NLS system. The results are claimed in the following theorems.}

\subsection{The main results and numeric verification}

\begin{theorem}
\label{theo:os}
(The oscillatory region and elliptic representation)  If $(X, T)$ is in the oscillatory region, then the leading order asymptotic solution of large order solitons for the CNLS Eq.\eqref{eq:cnls} can be given by the Riemann-theta function:
\small
\begin{equation}
q_i(X, T)=c_{i}^*{\rm Im}(b_o-a_o)\frac{\theta_2\left(\pi A-\frac{Nd\tau}{2\ii}\right)\theta_3(0)}{\theta_2\left(\pi A\right)\theta_3\left(\frac{Nd\tau}{2\ii}\right)}\ee^{-\frac{Nd\tau}{2}-F_1E-2F_0}\left(\mathbb{I}+O(N^{-1})\right)\qquad (i=1,2),
\end{equation}
and its modulus can be expressed with the Jacobi elliptic function:
\begin{equation}
|q_i(X, T)|^2=|c_i|^2\left(\left({\rm Im}(b_{o}-a_o)\right)^2-|a_{o}-b_o|^2\cn^2\left(u+K(m),m\right)\right)\ee^{-Nd\tau-|F_1E|^2}\left(\mathbb{I}+O(N^{-1})\right),
\end{equation}
\normalsize
where $(i=1,2)$, $m=\frac{\theta_2^4(0)}{\theta_3^4(0)}, u=\frac{Nd\tau}{2\ii}\theta_3^2(0), $ and $a_o, b_o, d, A, E,  F_0, F_1,  \tau$ all are functions with respect to $X, T$, which are defined in Eq.\eqref{RHP-G-O}, Eq.\eqref{eq:dGdvarphi}, Eq. \eqref{eq:F1}, Eq. \eqref{eq:F0}, Eq. \eqref{eq:tau}, Eq.\eqref{eq:Omega1} and Eq. \eqref{eq:A-NO}, $c_1$ and $c_2$ are constants in the vector $\mathbf{c}$.\end{theorem}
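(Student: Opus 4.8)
The plan is to run the Deift--Zhou nonlinear steepest descent analysis directly on the Riemann--Hilbert representation of the dressing matrix $\mathbf{T}_N(\lambda;x,t)$ established in Appendix~\ref{app:IST} (Eq.~\eqref{eq:ansatz}), rather than on the determinantal soliton formula. First I would rescale the independent variables as $x=NX$, $t=NT$ and absorb the oscillatory factor $\left(\frac{\lambda-\lambda_1}{\lambda-\lambda_1^*}\right)^{N}$ together with $\ee^{-\ii\lambda(x+\lambda t)\pmb{\sigma}_3}$ into a single controlling exponential $\ee^{N\vartheta(\lambda;X,T)\pmb{\sigma}_3}$, where the phase is
\begin{equation*}
\vartheta(\lambda;X,T)=\log\frac{\lambda-\lambda_1}{\lambda-\lambda_1^*}-2\ii\left(\lambda X+\lambda^2 T\right).
\end{equation*}
The critical points $\vartheta'(\lambda)=0$ solve a cubic in $\lambda$, and the oscillatory region is precisely the set of $(X,T)$ for which two of these saddle points form a complex-conjugate pair off the real axis; these two points become the branch points $a_o,b_o$ of a genus-one (elliptic) Riemann surface. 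This is the structural source of the elliptic and theta functions in the statement, and it is consistent with the appearance of ${\rm Im}(b_o-a_o)$ and $|a_o-b_o|$ in the displayed formulas.

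The central difficulty, specific to the $3\times 3$ spectral problem, is that the jump/residue matrix cannot be split into upper and lower triangular pieces in the naive way available for scalar NLS. Here I would exploit the block structure already visible in Lemma~\ref{lem:asym-m}: for the distinguished choice $\epsilon_j^{[i]}=0$ the vector soliton is an ${\rm SU}(2)$ rotation of a scalar NLS profile carried by the constant vector $\mathbf{c}=(1,c_1,c_2)^{\T}$. Concretely, I would apply the key $\lambda$-independent transformation that rotates the lower $2\times2$ block so that the dominant dynamics decouples, reducing the $3\times3$ problem to an effective scalar ($2\times2$) large-order-soliton problem whose vector content is entirely encoded in $\mathbf{c}$. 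This is exactly what produces the prefactor $c_i^*$ in both displayed formulas.

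With the problem reduced to an effective genus-one analysis, the remaining steps are standard but must be executed carefully. I would introduce a scalar $g$-function, holomorphic off the band $[a_o,b_o]$, whose differential $\dd g$ is the normalized Abelian differential making the conjugated phase $\vartheta-g$ have constant imaginary part on the band and the correct sign of real part on the two sides of each deformed contour; the endpoints $a_o,b_o$ are fixed by the vanishing-period (modulation) conditions. After the $g$-function conjugation and the opening of lenses, the global parametrix is solved on the elliptic curve in terms of $\theta_2,\theta_3$, with the Abel map $A$, the period $\tau$, and the constants $E,F_0,F_1$ arising from the endpoint and normalization data of Eqs.~\eqref{RHP-G-O}--\eqref{eq:A-NO}. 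Reading off the appropriate off-diagonal entry of the global parametrix and undoing the transformations yields the stated theta-quotient formula for $q_i$, and the residual error is controlled by matching Airy-type local parametrices at $a_o,b_o$ and solving a small-norm RHP, giving the $O(N^{-1})$ correction.

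Finally, to obtain the modulus in Jacobi-elliptic form I would apply the classical reduction of genus-one theta quotients to Jacobi functions: writing $m=\theta_2^4(0)/\theta_3^4(0)$ and $u=\tfrac{Nd\tau}{2\ii}\theta_3^2(0)$, the squared ratio $\theta_2/\theta_3$ converts to $\cn^2\left(u+K(m),m\right)$ via the standard theta-to-$\cn$ identity, collapsing $|q_i|^2=q_iq_i^*$ into the displayed elliptic expression. The hard part, I expect, will not be this last algebraic reduction but the construction and rigorous justification of the $g$-function on the genus-one surface---in particular verifying the sign inequalities for ${\rm Re}(\vartheta-g)$ that make the lens-opening contractive throughout the oscillatory region, and proving that the endpoints determined by the modulation equations remain a genuine complex-conjugate pair off the real axis precisely on that region.
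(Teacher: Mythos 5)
Your overall skeleton---building the RHP from the dressing matrix rather than the determinant formula, rescaling $x=NX$, $t=NT$, conjugating by the $\lambda$-independent rotation of Lemma \ref{lem:qc-decom} so that the $3\times 3$ jump collapses to the block matrix $\widehat{\mathbf{Q}}_d$ admitting upper-lower factorization, then a $g$-function conjugation, a theta-function outer parametrix, Airy local parametrices, and a final theta-to-$\cn$ reduction---is exactly the route the paper takes, and the prefactor $c_i^*$ does arise as you say, from undoing the $\mathbf{Q}_H$ conjugation.

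However, there is a genuine structural error at the heart of your argument. You assert that in the oscillatory region two saddle points of the phase form a complex-conjugate pair and ``become the branch points $a_o,b_o$ of a genus-one surface,'' and you then construct a $g$-function ``holomorphic off the band $[a_o,b_o]$.'' A hyperelliptic curve with only two branch points has genus zero, not one: a single band joining a conjugate pair is precisely the paper's construction for the \emph{non}-oscillatory region, where $R(\lambda)=\left((\lambda-a_n)(\lambda-a_n^*)\right)^{1/2}$ (Eq. \eqref{eq:R-no}) and the outer parametrix is algebraic, producing no theta functions at all. In the oscillatory region the paper's curve $\mathscr{R}(\lambda)=\left((\lambda-a_o)(\lambda-a_o^*)(\lambda-b_o)(\lambda-b_o^*)\right)^{1/2}$ (Eq. \eqref{eq:G-function-1}) has \emph{four} branch points: $a_o$ and $b_o$ both lie in the upper half-plane and are joined by the band $\Sigma_{\rm u}$, their conjugates are joined by $\Sigma_{\rm d}$, and the Schwarz symmetry $G(\lambda)=-G(\lambda^*)^*$ forces this two-band configuration. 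Moreover these branch points are not saddle points of $\varphi$ at all; they are unknowns fixed by modulation conditions (the $\mathcal{O}(\lambda^{-2})$ normalization of $G'$ determining $l_2,l_3,l_4$, reality of $l_1$, and the closure condition ${\rm Re}(G(c_o)-\varphi(c_o))=0$), as encoded in Eqs. \eqref{RHP-G-O} and \eqref{eq:dGdvarphi}. Finally, your single-band picture leaves no room for the third contour $\Sigma_{\rm mid}$, on which $G_+-G_-=d$ (Eq. \eqref{RHP-G-O}); this constant jump is what produces the factor ${\rm diag}\left(\ee^{-Nd},\ee^{Nd},1\right)$ in the outer model problem (RHP \ref{RHP-out-o}) and is precisely the source of the argument $\frac{Nd\tau}{2\ii}$ in the theta functions of the statement. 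Without the four-branch-point geometry and the $\Sigma_{\rm mid}$ jump, your parametrix would be genus zero and could recover neither the theta quotient nor the $\cn^2\left(u+K(m),m\right)$ formula you set out to prove.
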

\begin{figure}[!h]
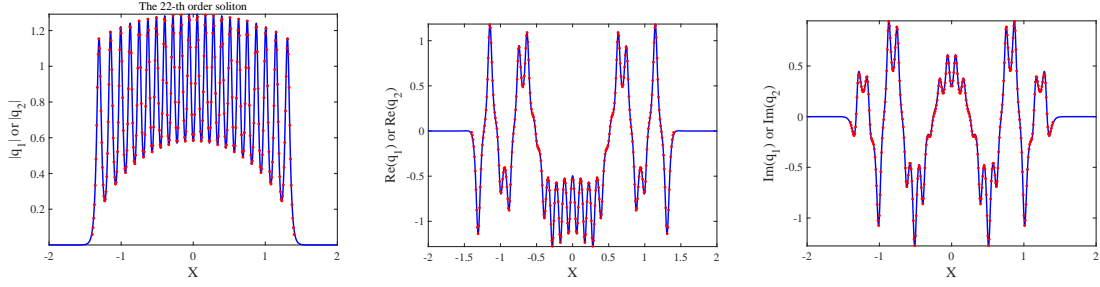

\centering{\includegraphics[width=0.3\textwidth]{q1-o.pdf}}
{\includegraphics[width=0.3\textwidth]{real-part-q1.pdf}}
{\includegraphics[width=0.3\textwidth]{imag-part-q1.pdf}}
\caption{The comparison between the exact solution and its corresponding leading order in the oscillatory region, the parameters is $T=2/3, c_1=c_2=\frac{\sqrt{2}}{2}\ii, $ the exact solution is marked with the solid blue line and asymptotic soliton is marked with the red dotted line. The right two are the comparison of the corresponding real part and the imaginary part.}
\label{o-region}
\end{figure}

The comparison for modulus as well its real part and the imaginary part of $q_1$ or $q_2$ between leading order asympotics and $22$-th order soliton solution is given in Fig. \ref{o-region}. The $22$-th order soliton solution are plotted by the iterative algorithm for Darboux matrix \eqref{eq:darboux12}.
\begin{figure}[!h]
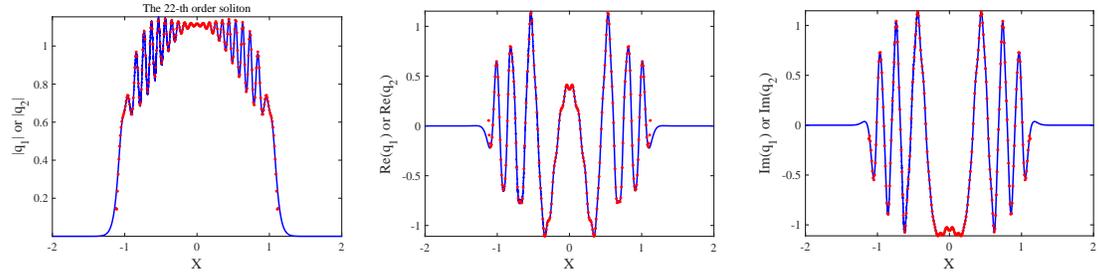

\centering{\includegraphics[width=0.3\textwidth]{q1-no.pdf}}{\includegraphics[width=0.3\textwidth]{real-part-q1-no.pdf}}
{\includegraphics[width=0.3\textwidth]{imag-part-q1-no.pdf}}
\caption{The comparison between the exact solution and its corresponding leading order in the non-oscillatory region, the parameters is $T=1/3, c_1=c_2=\frac{\sqrt{2}}{2}\ii, $ the exact solution is marked with the solid blue line and asymptotic soliton is marked with the red dotted line. The right two are the comparison of the corresponding real part and the imaginary part.}
\label{no-region}
\end{figure}

\ys{We give a detailed calculation about this leading order term in subsection \ref{sec:os}. This whole region is marked by $O$ in Fig.\ref{high-order-soliton-1}. Seeing this figure, we know the oscillatory region is adjacent to two regions, one is the non-oscillatory region and the other is the exponent decay region. In this region, the modulus of $q_1$ or $q_2$ behaves a good oscillation, whose maximum and minimum amplitude has a similar shape with the boundary line between the oscillatory region and the non-oscillatory region. As $X$ increases, the modulus of solution is slow decay. These two different dynamic behaviors confirm the fact there are two boundary lines between the oscillatory region and other regions.}
\begin{theorem}
\label{theo:nos}
(The non-oscillatory region) When $(X, T)$ is located in the non-oscillatory region, the asymptotic leading order term can be given as
\begin{equation}\label{eq:q-no}
\begin{split}
q_{i}(X, T)=&-\ii c_{i}^*\ee^{N\Omega_{n}-\ii\mu}\left(-\ii {\rm Im}(a_{n})+\frac{\sqrt{2p}}{N^{1/2}\sqrt{-\hat{h}''(\lambda_c)}}(m_{-}^{\lambda_c}\ee^{\ii\phi_{\lambda_c}}-m_{+}^{\lambda_c}\ee^{-\ii\phi_{\lambda_c}})\right)\\
&-\ii c_{i}^*\ee^{N\Omega_{n}-\ii\mu}\frac{\sqrt{2p}}{N^{1/2}\sqrt{\hat{h}''(\lambda_d)}}\left(m_{+}^{\lambda_d}\ee^{\ii\phi_{\lambda_d}}-m_{-}^{\lambda_d}\ee^{-\ii\phi_{\lambda_d}}\right)+\mathcal{O}(N^{-1}),
\end{split}
\end{equation}
where $a_{n}, \lambda_c, \lambda_d, \hat{h}, \Omega_{n}, \phi_{\lambda_c}, \phi_{\lambda_d}, m_{\pm}^{\lambda_c}, m_{\pm}^{\lambda_d}$ are functions with $X$ and $T$, which are defined in Eq.\eqref{eq:R-no}, Eq.\eqref{eq:g-no}, RHP \ref{RHP:NO} and Eq.\eqref{eq:phiaphib}. By choosing suitable parameters, we give the comparison between exact solutions and its asymptotic analysis in Fig.\ref{no-region}, which shows that they are fitting very well.

Furthermore, when $N$ is large, the modulus of $q_{i}(X, T)$ in the non-oscillatory region is
\begin{multline}
\left|q_{i}(X, T)\right|^2{=}\left|c_i\right|^2\left({\rm Im}(a_{n})^2{-}2\frac{{\rm Im}(a_n)\sqrt{2p}}{N^{1/2}}\left(\frac{\sin(\phi_{\lambda_c})}{\sqrt{{-}\hat{h}''(\lambda_c)}}{+}\frac{\sin(\phi_{\lambda_d})}{\sqrt{\hat{h}''(\lambda_d)}}\right)\right){+}\mathcal{O}(N^{-1})\end{multline}
\end{theorem}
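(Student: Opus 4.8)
The plan is to prove both formulas by the Deift--Zhou nonlinear steepest descent method applied to the Riemann--Hilbert problem associated to $\mathbf{T}_N$ through the loop-group representation established earlier. First I would start from the RHP for a matrix $\mathbf{M}(\lambda;x,t)$ whose only jump sits on a small circle encircling $\lambda_1$ (equivalently the contour inherited from Eq.~\eqref{eq:ansatz}), with jump matrix built from the scalar ratio $\left(\frac{\lambda-\lambda_1}{\lambda-\lambda_1^*}\right)^{N}$ and the plane-wave factor $\ee^{-2\ii\lambda(x+\lambda t)\pmb{\sigma}_3}$. Passing to the scaled variables $(X,T)$ collects all the $(x,t,N)$ dependence into a single large phase $N\hat{h}(\lambda;X,T)$, so that in the non-oscillatory region the exponents are controlled by the critical points of $\hat{h}$. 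The reconstruction of the potential is the standard one: $q_i$ is read off from the $1/\lambda$ coefficient of an appropriate off-diagonal entry of $\mathbf{M}$ as $\lambda\to\infty$, which is why the final answer is proportional to $c_i^*$.

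Because the spectral problem is $3\times 3$, the essential preliminary step---and the one I expect to be the main obstacle---is the block-triangularization noted in the introduction: conjugating $\mathbf{M}$ by a fixed (in $\lambda$) matrix that diagonalizes the constant part of the jump so that the $N$-dependent oscillation is confined to a $2\times 2$ block, which can then be factored into upper/lower triangular pieces exactly as in the scalar NLS analysis. Granting this reduction, I would introduce the $g$-function (here the scalar phase whose boundary values remove the growth of $N\hat{h}$) and deform the contour so that the jumps decay exponentially except near the two saddle points $\lambda_c,\lambda_d$ determined by $\hat{h}'(\lambda)=0$. The sign pattern $\hat{h}''(\lambda_c)<0$, $\hat{h}''(\lambda_d)>0$ that appears under the square roots identifies them as the two quadratic stationary points that survive the deformation in this region.

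Next I would solve the model problem in two layers. The global (outer) parametrix solves the RHP with the remaining constant jump and produces the $O(1)$ term $-\ii\,{\rm Im}(a_{n})$; the quantity $a_{n}$ is precisely the relevant coefficient of the outer solution at infinity, while $\Omega_{n}$ and $\mu$ are the accumulated phase and amplitude of the $g$-function that factor out as the prefactor $\ee^{N\Omega_{n}-\ii\mu}$. Around each saddle I would build a local parametrix from the parabolic-cylinder (Gaussian) model; the standard stationary-phase evaluation contributes at order $N^{-1/2}$, with the $\sqrt{2p}\,/\,\sqrt{\mp\hat{h}''}$ amplitudes coming from the curvature of $\hat{h}$ and the phases $\phi_{\lambda_c},\phi_{\lambda_d}$ together with the boundary values $m_{\pm}^{\lambda_c},m_{\pm}^{\lambda_d}$ coming from matching the local and outer solutions. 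A routine small-norm argument---the error matrix solves an RHP with jump $\mathbb{I}+O(N^{-1})$ once the two explicit $N^{-1/2}$ contributions are subtracted---closes the estimate and yields Eq.~\eqref{eq:q-no} with remainder $\mathcal{O}(N^{-1})$.

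Finally, for the modulus I would use that in the non-oscillatory region $\mathrm{Re}(\Omega_{n})=0$, so the prefactor $\ee^{N\Omega_{n}-\ii\mu}$ is unimodular and $|q_i|^2=|c_i|^2\,|\,\cdot\,|^2$. Expanding the square of the bracket and keeping terms through $O(N^{-1/2})$, the purely imaginary leading term $-\ii\,{\rm Im}(a_{n})$ interferes with the oscillatory saddle contributions; using the conjugation symmetry $m_{-}=\overline{m_{+}}$ of the matching data, the cross terms collapse to the two $\sin(\phi_{\lambda_c})$ and $\sin(\phi_{\lambda_d})$ pieces, giving the stated modulus. The proportionality of both $q_1,q_2$ to $c_i^*$ is exactly the ${\rm SU}(2)$ reduction to scalar NLS behavior. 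Beyond the $3\times 3$ block factorization, the only extra bookkeeping relative to the scalar case is carrying two simultaneous local parametrices and verifying that their cross contributions are genuinely $\mathcal{O}(N^{-1})$.
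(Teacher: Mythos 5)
You have reproduced the paper's own strategy essentially step for step: the Riemann--Hilbert problem from the Darboux/loop-group representation (RHP \ref{RHP0}), the conjugation by the constant unitary $\mathbf{Q}_H$ of Lemma \ref{lem:qc-decom} so that the $N$-dependent oscillation is confined to a $2\times2$ block of $\widehat{\mathbf{Q}}_d$, the genus-zero $g$-function of RHP \ref{RHP:NO} with band endpoints $a_n,a_n^*$, parabolic-cylinder parametrices at the two real saddles $\lambda_c,\lambda_d$, a small-norm estimate, and the interference computation for $|q_i|^2$. However, two of your steps would fail as written. First, the small-norm argument does not close with only the two saddle parametrices: the outer parametrix $\widehat{\mathbf{Q}}^{\rm out}_{f,n}$ of Eq.\eqref{eq:hatQout} has fourth-root singularities at the band endpoints $a_n$ and $a_n^*$, and the lens jumps do not approach $\mathbb{I}$ uniformly as $\lambda\to a_n,a_n^*$ because the exponential decay degenerates at the endpoints of the band. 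Hence, after subtracting the two explicit $N^{-1/2}$ saddle contributions, the error jump is \emph{not} $\mathbb{I}+\mathcal{O}(N^{-1})$ in neighbourhoods of $a_n,a_n^*$; the paper installs two additional Airy-type local parametrices there (following \cite{BilmanM-21}), whose mismatch with the outer solution is $\mathcal{O}(N^{-1})$ --- invisible in the leading-order formula, but indispensable for the error estimate. Your count of ``two simultaneous local parametrices'' must be four.

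Second, the symmetry you invoke to collapse the cross terms in the modulus, $m_-=\overline{m_+}$, is false and would give the wrong answer: if $m_-=\overline{m_+}$ with $m_\pm$ genuinely complex, then ${\rm Im}\left(m_-\ee^{\ii\phi}-m_+\ee^{-\ii\phi}\right)=2|m_-|\sin\left(\phi+\arg m_-\right)$, a phase-shifted sine of amplitude $2|m_-|$ rather than $\sin\phi$. What actually holds, by Eq.\eqref{eq:phiaphib}, is that $\lambda_c$ and $\lambda_d$ are real, so the ratio $(\lambda_c-a_n)/(\lambda_c-a_n^*)$ is unimodular and $\sqrt{\cdot}+(\sqrt{\cdot})^{-1}$ is real; consequently $m_{\pm}^{\lambda_c}$ and $m_{\pm}^{\lambda_d}$ are real with $m_{+}^{\lambda_c}+m_{-}^{\lambda_c}=1$ (and similarly at $\lambda_d$), whence ${\rm Im}\left(m_{-}^{\lambda_c}\ee^{\ii\phi_{\lambda_c}}-m_{+}^{\lambda_c}\ee^{-\ii\phi_{\lambda_c}}\right)=\left(m_{-}^{\lambda_c}+m_{+}^{\lambda_c}\right)\sin\phi_{\lambda_c}=\sin\phi_{\lambda_c}$. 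Combined with the purely imaginary $O(1)$ term $-\ii\,{\rm Im}(a_n)$, this produces exactly the $\sin(\phi_{\lambda_c})$ and $\sin(\phi_{\lambda_d})$ terms of the theorem. (Your observation that $\Omega_n$ is purely imaginary and $\mu$ is real, so that the prefactor $\ee^{N\Omega_n-\ii\mu}$ is unimodular and $|q_i|^2=|c_i|^2\,|\cdot|^2$, is correct and is indeed what the paper uses.)
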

Obviously, when $\sin\left(\phi_{\lambda_c}\right)=\sin\left(\phi_{\lambda_d}\right)=-1$, the modulus attains its maximum, which can be seen from the figures. These grids can be given by a union
\begin{equation}
\begin{split}
\mathcal{U}_{\lambda_c}=\left\{\left(X, T\right)\in NO, \sin(\phi_{\lambda_c})=1\right\}, \quad \mathcal{U}_{\lambda_d}=\left\{\left(X, T\right)\in NO, \sin(\phi_{\lambda_d})=1\right\}.
\end{split}
\end{equation}
To verify this fact, we give a figure to show this property:
\begin{figure}[!h]
\centering{\includegraphics[width=0.5\textwidth]{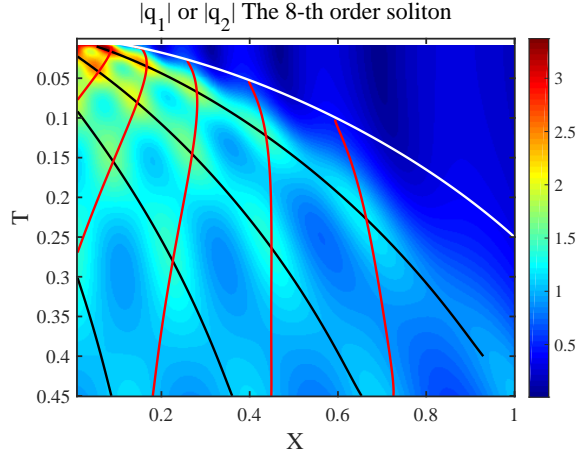}}
\caption{The density of $\mathbf{q}(X, T)$ in the non-oscillatory region, the white line is the boundary between the non-oscillatory region and the algebraic decay region, the black lines indicate the union of $\mathcal{U}_{\lambda_c}$ and the red lines represent the union of $\mathcal{U}_{\lambda_d}$. It is shown that the maximum value points lie in the intersection between red and black lines.}
\label{re-im}
\end{figure}

It can be seen that the amplitude in the cyan point is larger than the blue point, the yellow point is larger than the cyan point along the color bar direction. Following the detailed analysis in \cite{BilmanM-21}, we know this kind of plane is the modulational stability wave for the CNLS equation.

\ys{This leading order term is given in subsection \ref{sec:no} detailedly, and it is marked by $NO$ in Fig.\ref{high-order-soliton-1}. The calculation in this region is similar to the oscillatory region by replacing the $G$-function with $g$-function. But algebraic curves for  $G$-function and $g$-function have different genus, which results in two different types of leading order term.}
\begin{theorem}
\label{theo:al}
(Algebraic decay region)When $(X, T)$ is in the algebraic decay region, the soliton will be decay with $N^{-1/2}$, whose asymptotic leading order term is
\begin{equation}\label{eq:q-algebra}
\begin{split}
q_i(X, T)&{=}
-\frac{c_i^*\ii}{N^{1/2}}\sqrt{\frac{\ln(2)}{\pi}}\Bigg(\ee^{-2N\varphi(b_{A}; X, T)+\ii\phi_{f}}\left(\sqrt{-\ii \varphi''(b_A; X, T)}\right)^{2\ii p-1}\\&{+}\ee^{-2N\varphi(a_A; X, T)-\ii\phi_{f}}\left(\sqrt{\ii\varphi''(a_A; X, T)}\right)^{-2\ii p-1}\Bigg)+\mathcal{O}(N^{-1}), \qquad (i=1,2),
\end{split}
\end{equation}
where
$\phi_f(X, T)=\frac{\ln(2)^2}{2\pi}+\frac{\pi}{4}-\arg\left(\Gamma\left(\frac{\ln(2)}{2\pi}\ii\right)\right)+2p\ln\left(b_A-a_A\right)+p\ln(N), \quad p=\frac{\ln(2)}{2\pi}$, $a_A$ and $b_A$ are the critical points of $\varphi(\lambda; X, T)$, and $\varphi(\lambda; X, T)$ is defined in RHP \ref{rhp:far-field}. By choosing $T=0,$ then $\mathbf{q}(X, T)$ will be a real vector function, whose evolutional behavior is shown in Fig. \ref{algebraic-decay-region}
\begin{figure}[!h]
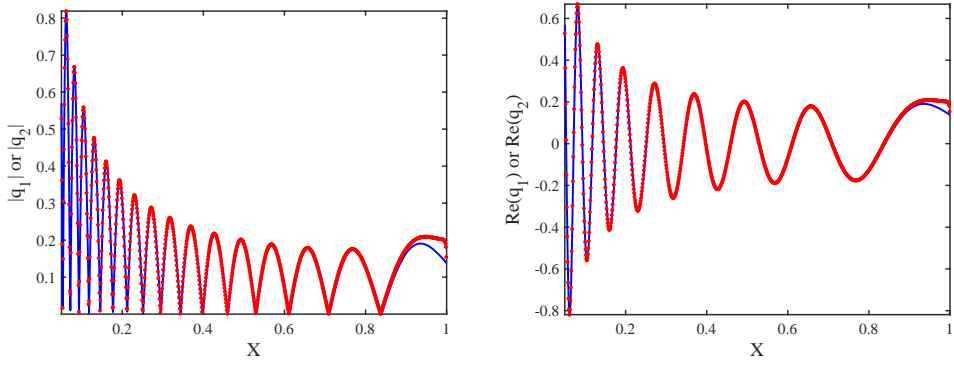

\centering{\includegraphics[width=0.4\textwidth]{q1-al.pdf}}
{\includegraphics[width=0.4\textwidth]{real-part-q1-al.pdf}}
\caption{The comparison between the exact solution and its corresponding leading order in the algebraic decay region, the parameters is $T=0, c_1=c_2=\frac{\sqrt{2}}{2}\ii,$ the exact solution is marked with the solid blue line and asymptotic soliton is marked with the red dotted line.}
\label{algebraic-decay-region}
\end{figure}
\end{theorem}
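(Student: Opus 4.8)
The plan is to derive the stated asymptotics from the far-field Riemann--Hilbert problem RHP~\ref{rhp:far-field}, whose controlling phase is $\varphi(\lambda;X,T)$, by the Deift--Zhou nonlinear steepest descent method. In the algebraic decay region the phase $\varphi$ possesses only a conjugate pair of simple critical points $a_A,b_A$ and no band of the relevant spectral curve opens; consequently, after opening the lens contours along the steepest descent paths of $\mathrm{Re}\,\varphi$ through $a_A$ and $b_A$, the transformed jump matrices converge to the identity exponentially fast in $N$ everywhere except in small fixed neighborhoods of the two critical points. First I would perform this contour deformation together with the conjugation that installs $\varphi$ as the controlling phase and fixes the signature of $\mathrm{Re}\,\varphi$ on the opened lenses, reducing the global problem to two decoupled local problems. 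Because no band opens, the outer parametrix reduces to the identity, so the \emph{entire} leading-order solution is generated by the two local parametrices --- this is precisely the mechanism behind the $N^{-1/2}$ algebraic decay.

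The decisive structural step, and the one I expect to be the main obstacle, is to tame the $3\times3$ matrix structure at the critical points. Following the block transformation announced in the introduction, I would conjugate the jump so that it becomes block diagonal --- a scalar block decoupled from a $2\times2$ block --- and then triangularize the scalar part; this reduction is what makes the higher-order spectral problem amenable to the scalar parabolic-cylinder analysis of \cite{BilmanM-21}. The relevant reflection coefficient on the decoupled band has modulus one, so the jump factor satisfies $1+|r|^2=2$ and the parabolic-cylinder index is exactly $p=\tfrac{\ln(2)}{2\pi}$; this single fact accounts for the constant $\sqrt{\ln(2)/\pi}$, the powers $(\,\cdot\,)^{\pm 2\ii p-1}$, and the $\arg\!\big(\Gamma(\tfrac{\ln(2)}{2\pi}\ii)\big)$ entering through the connection formulae of the parabolic cylinder function.

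With the local parametrices near $a_A$ and $b_A$ constructed from parabolic cylinder functions and matched to the identity outer parametrix on the disk boundaries, I would form the error matrix $\mathbf{E}$ as the ratio of the true solution to the combined parametrix. Its jump equals $\mathbb{I}+\mathcal{O}(N^{-1/2})$ on the two circles and is exponentially small elsewhere, so the standard small-norm theory yields $\mathbf{E}=\mathbb{I}+\mathcal{O}(N^{-1/2})$; since the parametrix itself already carries the factor $N^{-1/2}$, this pushes the correction to $q_i$ down to $\mathcal{O}(N^{-1})$. Extracting the $1/\lambda$ coefficient of the combined local parametrices then produces exactly two terms, one from each saddle: the quadratic expansion of $\varphi$ at the saddle furnishes the $\varphi''(a_A),\varphi''(b_A)$ factors, while the $N$-scaling of the local variable, the separation $b_A-a_A$, and the Gamma-function phase combine into the single phase $\phi_f(X,T)$ with its $\tfrac{\ln(2)^2}{2\pi}+\tfrac{\pi}{4}$, $2p\ln(b_A-a_A)$ and $p\ln(N)$ pieces.

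Finally, I would pass from the matrix solution back to the potential through the reconstruction formula, reading off $\mathbf{q}$ from the off-diagonal part of the $1/\lambda$ coefficient. The reflectionless SU$(2)$ structure fixed by $\epsilon_j^{[i]}=0$ forces each component to be the corresponding entry of $\mathbf{c}$ times a common scalar amplitude, which is exactly how the prefactor $-\ii c_i^*$ enters and why the two components share the same envelope. This yields the claimed leading-order expression with error $\mathcal{O}(N^{-1})$, and the consistency with the large-$\chi$ asymptotics of the infinite-order soliton follows by comparing the critical-point data of $\varphi$ in the two coordinate frames.
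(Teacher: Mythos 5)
Your overall framework --- lens opening along the steepest descent contours of $\mathrm{Re}\,\varphi$ through $a_A$ and $b_A$, the block reduction of $\mathbf{Q}_c$ to make the $3\times 3$ problem tractable, parabolic-cylinder local parametrices, a small-norm error argument, and reconstruction of $\mathbf{q}$ from the $1/\lambda$ coefficient --- is the same as the paper's (which, in Section \ref{sec:al}, actually just maps the problem onto the large-$\chi$ analysis of Appendix \ref{App:large-chi} by the substitution $\Lambda\chi^{1/2}\to\lambda$, $\chi^{1/4}\to N^{1/2}$). But there is a genuine structural error in your step two: it is \emph{not} true that after the deformation all jumps converge to the identity except near the two saddles, and consequently the outer parametrix is \emph{not} the identity. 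The LU factorizations $\widehat{\mathbf{Q}}_d^{-1}=\mathbf{Q}_L^{[j]}\mathbf{Q}_C^{[j]}\mathbf{Q}_R^{[j]}$ used on the upper and lower lenses have different diagonal factors $\mathbf{Q}_L^{[1]}={\rm diag}\bigl(\tfrac{\sqrt{2}}{2},\sqrt{2},1\bigr)$ and $\mathbf{Q}_L^{[2]}={\rm diag}\bigl(\sqrt{2},\tfrac{\sqrt{2}}{2},1\bigr)$, so on the arc $I$ joining $a_A$ to $b_A$ a constant, non-decaying jump ${\rm diag}\bigl(2,\tfrac{1}{2},1\bigr)$ survives (Eq.\eqref{eq:Q-jump-al}, Eq.\eqref{eq:R-jump}). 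The paper removes it with the nontrivial diagonal outer parametrix ${\rm diag}\Bigl(\bigl(\tfrac{\lambda-a_A}{\lambda-b_A}\bigr)^{\ii p},\bigl(\tfrac{\lambda-a_A}{\lambda-b_A}\bigr)^{-\ii p},1\Bigr)$, $p=\tfrac{\ln 2}{2\pi}$, i.e.\ the solution of a scalar RHP on $I$. This object is not cosmetic: the local parabolic-cylinder solutions carry the boundary behavior $\zeta^{\pm \ii p}$, and only a $\bigl(\tfrac{\lambda-a_A}{\lambda-b_A}\bigr)^{\pm\ii p}$ outer factor can absorb it; matched against the identity, the mismatch on the disk boundaries is $\mathcal{O}(1)$, not $o(1)$, and your small-norm argument collapses.

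The same omission also leaves part of the answer unexplained: the terms $2p\ln(b_A-a_A)$ and $p\ln(N)$ in $\phi_f$ arise precisely from expanding this outer parametrix at the saddles and reconciling it with the rescaled local variables $\zeta_{a}=N^{1/2}f_{a}$, $\zeta_{b}=N^{1/2}f_{b}$ (this is the role of the matrices $\mathbf{H}^{a}$, $\mathbf{H}^{b}$ in Eq.\eqref{eq:HaHb}); in a scheme with identity outer parametrix these contributions have no source. Note also that your own second paragraph is internally inconsistent with the first: invoking the band jump factor $1+|r|^2=2$ and the index $p$ presupposes exactly the non-decaying jump on $I$ that you claimed had disappeared. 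The fix is standard and is what the paper does --- either present the diagonal $\bigl(\tfrac{\lambda-a_A}{\lambda-b_A}\bigr)^{\ii p}$-type matrix as the outer parametrix, or equivalently perform the corresponding scalar ``$\delta$-function'' conjugation before opening the lenses --- after which the rest of your outline (PC local models with $r=-s^*$, $|r|=\sqrt{\ln(2)/\pi}$, error jump $\mathbb{I}+\mathcal{O}(N^{-1/2})$ on the circles, leading term from the first Neumann iterate, correction $\mathcal{O}(N^{-1})$) goes through as in Appendix \ref{App:large-chi}.
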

\ys{This region is marked with $A$ in Fig.\ref{high-order-soliton-1}. It is clear that the leading order term in this region decays with $N^{-1/2}$. Compared to the oscillatory region and the non-oscillatory region, the original contour in this region can be deformed directly. We do not need construct ``g"-function any more. Additionally, the leading order term in this region is given by the error between the Riemann-Hilbert matrix and its parametrix matrix, which is shown in subsection \ref{sec:error-analysis} in the Appendix.}
\begin{theorem}\label{theo:expo}
(Exponential decay region) When $(X, T)$ is in this region, the leading order term is simpler, which is
\begin{equation}
\begin{split}
q_i(X, T)&=2c_{i}^*\lim\limits_{\lambda\to\infty}\lambda\mathbf{\widetilde{N}}_{f,e}(\lambda; X, T)_{12}\to O(\ee^{-dN}),\qquad (i=1,2),
\end{split}
\end{equation}
where $d$ is a positive constant.
\end{theorem}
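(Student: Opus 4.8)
The plan is to recognize the exponential decay region as the case in which the associated Riemann--Hilbert problem is of \emph{small-norm} type, so that no $g$-function or elliptic parametrix is needed (in contrast to Theorems~\ref{theo:os} and~\ref{theo:nos}) and the reconstructed potential is controlled directly by the size of the jump. I would begin from the far-field problem RHP~\ref{rhp:far-field}, whose solution $\mathbf{\widetilde{N}}_{f,e}(\lambda;X,T)$ recovers the potential through the stated formula $q_i=2c_i^*\lim_{\lambda\to\infty}\lambda\,(\mathbf{\widetilde{N}}_{f,e})_{12}$, and whose jump is governed by the exponential factors $\ee^{\pm 2N\varphi(\lambda;X,T)}$ built from the phase $\varphi$ introduced in that problem, which carries the large parameter $N$.

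The first step is to characterize the exponential decay region intrinsically through the sign chart of ${\rm Re}(\varphi)$: it is precisely the set of $(X,T)$ for which the jump contour can be deformed, without crossing a critical point of $\varphi$, into a contour $\Sigma$ on which ${\rm Re}(\varphi(\lambda;X,T))\geq\delta>0$ holds uniformly for a constant $\delta=\delta(X,T)$ bounded away from $0$ on compact subsets. This is the structural feature that separates this region from the algebraic decay region of Theorem~\ref{theo:al}, where a saddle pins the contour to a point at which ${\rm Re}(\varphi)$ vanishes and only $N^{-1/2}$ decay survives. Since the remark following Theorem~\ref{theo:al} already records that the original contour can be deformed directly in these regions, I would carry out exactly such a direct deformation here, choosing $\Sigma$ to descend into the half-plane where the controlling exponential decays.

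With this contour in hand, on $\Sigma$ the jump matrix obeys $\mathbf{J}-\mathbb{I}=O(\ee^{-2N\delta})$ simultaneously in $L^1\cap L^2\cap L^\infty(\Sigma)$. Standard small-norm Riemann--Hilbert theory (Beals--Coifman resolvent estimates together with the Deift--Zhou argument) then yields existence and uniqueness of $\mathbf{\widetilde{N}}_{f,e}$ for all sufficiently large $N$, together with the uniform estimate $\mathbf{\widetilde{N}}_{f,e}(\lambda;X,T)=\mathbb{I}+O(\ee^{-dN})$ with $d=2\delta$. Expanding at $\lambda=\infty$ and extracting the residue gives
\begin{equation}
\lim_{\lambda\to\infty}\lambda\left(\mathbf{\widetilde{N}}_{f,e}(\lambda;X,T)-\mathbb{I}\right)=\frac{1}{2\pi\ii}\int_{\Sigma}\left(\mathbf{J}(\lambda)-\mathbb{I}\right)\dd\lambda+O(\ee^{-2dN})=O(\ee^{-dN}),
\end{equation}
and multiplying the $(1,2)$ entry by $2c_i^*$ produces $q_i(X,T)=O(\ee^{-dN})$, as claimed.

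I expect the genuine obstacle to be the second step: proving the uniform lower bound ${\rm Re}(\varphi)\geq\delta>0$ on an admissible deformed contour, i.e.\ verifying that the defining inequalities of the exponential region actually admit a globally sign-definite descent path. This is a potential-theoretic sign-chart analysis of $\varphi$---locating its critical points, tracking the zero-level set $\{{\rm Re}(\varphi)=0\}$ as $(X,T)$ vary, and checking that the endpoints and poles of the contour lie on the favorable side---whereas the subsequent small-norm reduction is routine once the sign is established. A secondary technical point is to keep the deformation uniformly bounded away from the pole at $\lambda=\lambda_1^*$ (cf.\ the estimate~\eqref{eq:estimate}), so that the $O(\ee^{-dN})$ control holds uniformly on compact subsets of the region.
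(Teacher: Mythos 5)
There is a genuine gap in your proposal, and it occurs exactly at the step you labeled ``routine.'' The jump matrix of RHP \ref{rhp:far-field} is $\ee^{-N\varphi(\lambda;X,T)\pmb{\sigma}_3}\widehat{\mathbf{Q}}_{d}^{-1}\ee^{N\varphi(\lambda;X,T)\pmb{\sigma}_3}$, and writing it out (recall $\pmb{\sigma}_3={\rm diag}(1,-1,-1)$) gives diagonal entries $\tfrac{\sqrt{2}}{2},\tfrac{\sqrt{2}}{2},1$ together with off-diagonal entries $\tfrac{\sqrt{2}}{2}\ee^{-2N\varphi}$ and $-\tfrac{\sqrt{2}}{2}\ee^{2N\varphi}$. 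Consequently your claim that on a suitably deformed contour $\mathbf{J}-\mathbb{I}=O(\ee^{-2N\delta})$ can never hold: the diagonal part is a constant bounded away from $\mathbb{I}$ no matter where the contour lies, and the two off-diagonal exponentials carry opposite signs in the exponent, so whenever ${\rm Re}(\varphi)\geq\delta>0$ on the contour the $(2,1)$ entry grows like $\ee^{2N\delta}$. No contour deformation alone turns this problem into a small-norm problem; this is precisely the structural difficulty of having a constant non-identity matrix $\widehat{\mathbf{Q}}_d^{-1}$ conjugated by the exponentials, which is why the scalar intuition from the NLS literature does not transfer verbatim.

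What the paper does, and what your argument is missing, consists of two ingredients. First, the triangular factorizations $\widehat{\mathbf{Q}}_{d}^{-1}=\mathbf{Q}_{L}^{[j]}\mathbf{Q}_{C}^{[j]}\mathbf{Q}_{R}^{[j]}$ (the lemma in subsection \ref{sec:os}, here with $j=5,6$) are used to split the jump on each closed loop $\partial D_{\rm u}$, $\partial D_{\rm d}$ into a lower-triangular factor, a constant factor, and an upper-triangular factor; the triangular factors are absorbed into new unknowns on opposite sides of the loops, where the sign chart of ${\rm Re}(\varphi)$ in the exponential region (two disjoint closed level curves, one in each half-plane) makes their exponentials decay. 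Second, this leaves non-decaying \emph{constant diagonal} jumps ${\rm diag}(\sqrt{2},\tfrac{\sqrt{2}}{2},1)$ on $\Sigma_{\rm u}$ and ${\rm diag}(\tfrac{\sqrt{2}}{2},\sqrt{2},1)$ on $\Sigma_{\rm d}$, which must be removed by an explicit model problem $\mathbf{P}_{f,e}$; because $\Sigma_{\rm u}$ and $\Sigma_{\rm d}$ are closed curves, this model is solved exactly by piecewise-constant diagonal matrices. Only the ratio $\mathbf{R}_{f,e}=\mathbf{Q}_{f,e}\mathbf{P}_{f,e}^{-1}$ is a small-norm problem, giving $\mathbf{R}_{f,e}=\mathbb{I}+O(\ee^{-dN})$; and since the model is diagonal, its $(1,2)$ entry vanishes, so the reconstructed potential comes entirely from the error term, yielding $q_i(X,T)=O(\ee^{-dN})$. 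Your final small-norm expansion and residue extraction are fine as written, but they apply to $\mathbf{R}_{f,e}$, not to $\widetilde{\mathbf{N}}_{f,e}$ itself, and they only become available after the factorization and the model problem that your proposal omits.
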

\ys{This region is marked with $E$ in Fig.\ref{high-order-soliton-1}. In this region, with a basic deformation to this contour, the solution about this Riemann-Hilbert problem can be given directly, which is the simplest.}
\begin{remark}
Theorem \ref{theo:os} to Theorem \ref{theo:expo} are the leading order terms for the four different regions in the large order asymptotics. Next we will give the main results about the infinite order asymptotics in theorem \ref{theo:infinity-large-chi} and \ref{theo:large-tau}.
\end{remark}
\begin{theorem}
\label{theo:infinity-large-chi}(Large $\chi$ asymptotics)
\modify{For the soliton of infinite order}, with a Galilean transformation $x=\chi-2{\rm Re}(\lambda_1) t, \lambda=\Lambda+{\rm Re}(\lambda_1), t=\tau$, when $\chi$ is large, the asymptotic leading order term is
\begin{equation}
\begin{split}
q_i(\chi, \chi^{3/2}v)&=-c_i^*\frac{\ii}{\chi^{3/4}}\sqrt{\frac{\ln(2)}{\pi}}\Bigg(\ee^{-2 \chi^{1/2}\vartheta(b(v); v)+\ii\phi_{n}}\left(\sqrt{-\ii\vartheta''(b(v); v)}\right)^{2\ii p-1}\\&+\ee^{-2\chi^{1/2}\vartheta(a(v); v)-\ii \phi_{n}}\left(\sqrt{\ii\vartheta''(a(v); v)}\right)^{-2\ii p-1}\Bigg)+O(\chi^{-1}),\qquad (i=1,2),
\end{split}
\end{equation}
where $a(v)$ and $b(v)$ are the critical points of $\vartheta(z; v)$, which are defined in Eq.\eqref{eq:near-field-S}, and $\phi_{n}=\frac{\ln(2)^2}{2\pi}+\frac{\pi}{4}-\arg\left(\Gamma\left(\frac{\ln(2)}{2\pi}\ii\right)\right)+2p\ln\left(b(v)-a(v)\right)+\frac{p}{2}\ln(\chi)$.
\end{theorem}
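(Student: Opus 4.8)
The plan is to exploit the structural parallel flagged in the introduction (innovation point (iii)): the large-$\chi$ regime of the infinite-order soliton should reduce, after rescaling, to exactly the same model problem as the algebraic-decay region of Theorem~\ref{theo:al}, with the large parameter $N$ replaced by $\chi^{1/2}$. First I would take the infinite-order Riemann--Hilbert problem, whose jump is supported on a small loop around ${\rm Re}(\lambda_1)$ and carries the essential-singularity factors $\ee^{\pm 2\ii/(\lambda-{\rm Re}(\lambda_1))}$ inherited from the limiting scattering matrix $\mathbf{S}(\lambda)$. Applying the Galilean shift $\lambda=\Lambda+{\rm Re}(\lambda_1)$, $x=\chi-2{\rm Re}(\lambda_1)t$, $t=\tau$ removes the ${\rm Re}(\lambda_1)$-dependence from the $\Lambda$-dependent part of the plane-wave phase $-\ii\lambda(x+\lambda t)$, leaving $-\ii(\Lambda\chi+\Lambda^2\tau)$ up to a constant gauge. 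Setting $\tau=\chi^{3/2}v$ and rescaling the spectral variable by $\Lambda=z\chi^{-1/2}$ then renders all three competing phases homogeneous of order $\chi^{1/2}$, so the total exponent becomes $\chi^{1/2}\vartheta(z;v)$ with $\vartheta(z;v)$ as in Eq.~\eqref{eq:near-field-S}, and $\chi^{1/2}$ emerges as the effective large parameter.

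Next I would run the Deift--Zhou steepest-descent analysis on the rescaled problem. Since the contour in this regime can be deformed directly onto steepest-descent paths without constructing a $g$-function (cf.\ the discussion following Theorem~\ref{theo:al}), the asymptotics localize at the two critical points $a(v),b(v)$ solving $\vartheta'(z;v)=0$. At each critical point the jump carries the confluent-hypergeometric structure typical of a simple saddle meeting the essential singularity, so I would install local parametrices built from parabolic-cylinder / Gamma-function models. The key quantitative input is the eigenvalue structure of $\mathbf{S}$: the lower $2\times2$ block $\tfrac12(1\pm\ee^{2\ii/\Lambda})$ has eigenvalues $\{1,\ee^{2\ii/\Lambda}\}$, which together with the $(1,1)$ entry $\ee^{-2\ii/\Lambda}$ produces a unimodular effective reflection coefficient; consequently the parametrix parameter is $p=\tfrac{1}{2\pi}\ln(1+1)=\tfrac{\ln 2}{2\pi}$, and the phase of the associated Gamma function supplies precisely the term $-\arg\Gamma\!\left(\tfrac{\ln2}{2\pi}\ii\right)$ in $\phi_{n}$, while the $\tfrac{p}{2}\ln(\chi)$ term arises from rescaling $\Lambda$ inside the parametrix.

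Finally I would recover the potential through the formula $q_i=2c_i^*\lim_{\lambda\to\infty}\lambda\,(\cdots)_{12}$ used in Theorem~\ref{theo:expo}. Each local parametrix contributes an off-diagonal error of size $(\chi^{1/2})^{-1/2}=\chi^{-1/4}$, and since $\lambda=\Lambda\sim z\chi^{-1/2}$ the recovery pulls out an additional factor $\chi^{-1/2}$, yielding the stated overall decay $\chi^{-3/4}$; summing the two saddle contributions at $a(v)$ and $b(v)$ with the conjugate phases $\pm\ii\phi_{n}$ reproduces the two-term leading order. I expect the main obstacle to be the difficulty highlighted in the introduction for the $3\times3$ spectral problem: because the jump does not factor into triangular pieces in the naive way, one must first apply the block-diagonalizing key transformation so that the coupled $2\times2$ block decouples into scalar steepest-descent problems, and then control uniformly in $v$ the interaction error between the two critical points and the essential singularity, so that the claimed $O(\chi^{-1})$ remainder is genuinely subleading throughout the large-$\chi$ regime.
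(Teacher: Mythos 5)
Your proposal follows essentially the same route as the paper's own proof: the Galilean-plus-gauge reduction that pushes the essential singularity into the scalar phase, the rescaling $\tau=\chi^{3/2}v$, $\Lambda=z\chi^{-1/2}$ producing the homogeneous phase $\chi^{1/2}\vartheta(z;v)$, direct contour deformation without a $g$-function using the block-diagonalizing decomposition $\mathbf{Q}_c=\mathbf{Q}_H\mathbf{Q}_d\mathbf{Q}_H^{\dagger}$ followed by triangular factorizations of $\widehat{\mathbf{Q}}_d^{-1}$, an outer parametrix with exponent $p=\ln 2/(2\pi)$, parabolic-cylinder local parametrices at $a(v)$ and $b(v)$ whose Gamma-function constants give $\arg\Gamma\left(\tfrac{\ln 2}{2\pi}\ii\right)$ and whose rescaled variable gives the $\tfrac{p}{2}\ln\chi$ term, and a small-norm error analysis yielding the $\chi^{-1/4}\cdot\chi^{-1/2}=\chi^{-3/4}$ leading order with $O(\chi^{-1})$ remainder (this is exactly the content of the paper's Section 5.4 and Appendix D). Your identification of $p$ via a unimodular effective reflection coefficient is a slightly different bookkeeping of the same mechanism that in the paper appears as the constant jump ${\rm diag}\left(2,\tfrac12,1\right)$ on the segment joining the two saddle points, so there is no substantive divergence.
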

\ys{It can be seen the expression on large $\chi$ asymptotics in the infinite order case is similar to the expression in the algebraic decay region. But to the large $\chi$ asymptotics, the critical points $a(v)$ and $b(v)$ is only related to the parameter $v$, while the $a_{A}$ and $b_A$ in theorem \ref{theo:al} are functions with respect to $X$ and $T$. By choosing a suit $v$, we give the comparison between these two asymptotics, which is shown in the left figure in Fig.\ref{large-t-and-no}.}
\begin{theorem}
\label{theo:large-tau}
(Large $\tau$ asymptotics) \modify{For the soliton of infinite order}, with the same Galilean transformation to Theorem \ref{theo:infinity-large-chi}, the leading order term in large $\tau$ asymptotics is
\begin{multline}
q_i(\tau^{2/3}w, \tau)=-\ii c_{i}^*\ee^{\tau^{1/3}\widetilde{\Omega}_{n}-\ii\tilde{\mu}}\left(\frac{\sqrt{2p}}{\tau^{1/2}\sqrt{-\hat{h}''(z_{c})}}m_{-}^{z_{a}}\ee^{\ii\phi_{z_{c}}}-\frac{\sqrt{2p}}{\tau^{1/2}\sqrt{-\hat{h}''(z_{c})}}m_{+}^{z_{a}}\ee^{-\ii\phi_{z_c}}\right)\\
-\ii c_{i}^*\ee^{\tau^{1/3}\widetilde{\Omega}_{n}-\ii\tilde{\mu}}\left(\frac{\sqrt{2p}}{\tau^{1/2}\sqrt{\hat{h}''(z_d)}}m_{+}^{z_d}\ee^{\ii\phi_{z_d}}-\frac{\sqrt{2p}}{\tau^{1/2}\sqrt{\hat{h}''(z_d)}}m_{-}^{z_b}\ee^{-\ii\phi_{z_b}}\right)\\
- c_{i}^*\ee^{\tau^{1/3}\widetilde{\Omega}_{n}-\ii\tilde{\mu}}\frac{\sqrt{9-w^2}}{3\tau^{1/3}}+\mathcal{O}(\tau^{2/3}),\quad i=1,2,
\end{multline}
where $0\leq w<3, \widetilde{\Omega}_n$ is a pure imaginary constant, defined by RHP \ref{RHP:large-tau}.
\end{theorem}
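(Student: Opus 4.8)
The plan is to analyze the Riemann--Hilbert problem RHP \ref{RHP:large-tau} for the infinite-order soliton by the Deift--Zhou nonlinear steepest-descent method, exploiting the fact anticipated in the introduction that the large-$\tau$ regime is the self-similar counterpart of the non-oscillatory region of Theorem \ref{theo:nos}. I would begin from the dressed solution $\pmb{\Phi}^{\pm}$ of Eq.\eqref{eq:phipn} together with the explicit infinite-order scattering matrix $\mathbf{S}(\lambda)$, whose only nontrivial dependence enters through $\ee^{\pm 2\ii/(\lambda-\mathrm{Re}(\lambda_1))}$. After the Galilean change of variables $x=\chi-2\mathrm{Re}(\lambda_1)t$, $\lambda=\Lambda+\mathrm{Re}(\lambda_1)$, $t=\tau$, the jump is supported on a small loop about $\Lambda=0$ and is controlled by the phase combining the scattering exponent $2\ii/\Lambda$ with the plane-wave exponent $-\ii\Lambda(\chi+\Lambda\tau)$. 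Imposing the self-similar scaling $\chi=\tau^{2/3}w$ and $\Lambda=\tau^{-1/3}z$ collapses the three resulting terms to a common order and yields $\tau^{1/3}\hat{h}(z)$ with $\hat{h}(z)\propto \ii(2/z-wz-z^2)$, so that $\tau^{1/3}$ is the large parameter responsible for the factor $\ee^{\tau^{1/3}\widetilde{\Omega}_{n}}$ in the statement.

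The first substantive step is to locate the relevant critical points: $\hat{h}'(z)=0$ reduces to the cubic $2z^3+wz^2+2=0$, whose roots split, for $0\le w<3$, into a pair of stationary points $z_c,z_d$ producing local Weber contributions and the endpoints of a single band. The condition $w<3$ is exactly the open condition under which the band has positive length, and I expect the band width to be governed by $\sqrt{9-w^2}$, which is the origin of the third, $\tau^{-1/3}$-order term in the stated expansion. Next I would construct a genus-zero $g$-function adapted to this band, mirroring the non-oscillatory analysis where the higher-genus $G$-function is replaced by the lower-genus $g$. The requirements are that $g-\hat{h}$ be analytic away from the band, that $\mathrm{Re}(g-\hat{h})$ carry the correct sign on the deformed contours so that all off-band jumps become exponentially close to the identity, and that the band endpoints be of soft (square-root) type; I then conjugate the RHP by the corresponding $\ee^{\tau^{1/3}g\pmb{\sigma}_3}$-type factors.

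Because the problem is $3\times 3$, the conjugated jump cannot be triangularized directly, so at this point I invoke the block-decomposition transformation highlighted as innovation (ii), which recasts the jump into block-diagonal form in which each block is effectively scalar and admits the usual upper/lower triangular splitting. With this in hand I would open lenses around the band and along the steepest-descent rays, and assemble the parametrices: a global outer parametrix on the band, whose large-$z$ expansion feeds the $c_i^*\ee^{\tau^{1/3}\widetilde{\Omega}_{n}-\ii\tilde{\mu}}\sqrt{9-w^2}/(3\tau^{1/3})$ term, together with two parabolic-cylinder (Weber) local parametrices centered at $z_c$ and $z_d$. The Weber parametrices produce the $\tau^{-1/2}$ contributions with the phases $\phi_{z_c},\phi_{z_d}$, the amplitudes $\sqrt{2p}/\sqrt{\mp\hat{h}''}$, and the constant $p=\ln(2)/2\pi$ together with the $\Gamma$-function data inherited from the jump magnitude $\ee^{\pm 2\ii/\Lambda}$.

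The final step is routine error control: the ratio of the exact solution to the composite parametrix solves a small-norm RHP whose jumps are $O(\tau^{-1/2})$ near the stationary points and exponentially small elsewhere, so its solution is $\mathbb{I}+O(\tau^{-1/3})$ with the $\mathcal{O}(\tau^{2/3})$ remainder arising from the slowest band-edge correction. Reading off $q_i=2c_i^*\lim_{\lambda\to\infty}\lambda(\cdot)_{12}$ and collecting the outer and two Weber contributions then yields the three-term formula. I expect the main obstacle to be twofold: first, constructing the genus-zero $g$-function uniformly in $w$ and verifying the global sign structure of $\mathrm{Re}(g-\hat{h})$ all the way up to the edge $w\to 3$ where the band degenerates; and second, carrying out the $3\times 3$ block triangularization so that the two Weber parametrices decouple cleanly, since it is precisely the higher-rank matrix structure that obstructs the naive scalar steepest-descent argument and forces the dedicated transformation of innovation (ii).
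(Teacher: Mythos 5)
Your plan is essentially the paper's own route: the self-similar scaling $\chi=\tau^{2/3}w$, $\Lambda=\tau^{-1/3}z$ turning the phase into $\tau^{1/3}\varpi(z;w)$ with $\varpi(z;w)=\ii(wz+z^2+z^{-1})$, a genus-zero $\widetilde{g}$-function on a single band (RHP \ref{RHP:large-tau}), lens opening made possible by the $\widehat{\mathbf{Q}}_d$ block decomposition, an outer band parametrix producing the $\tau^{-1/3}$ term, two parabolic-cylinder parametrices at the stationary points producing the $\tau^{-1/2}$ terms, and a small-norm argument. The only structural difference is that the paper, once the jumps \eqref{eq:Qfn} are in place and the analogue of \eqref{eq:hatQ-no} removes the jumps on $\Gamma_{\rm u},\Gamma_{\rm d}$, does not redo the parametrix construction at all: it recycles the non-oscillatory computation of Theorem \ref{theo:nos} (Appendix E) verbatim under the substitutions $N\to\tau^{1/3}$, $a_n\to\widetilde{a}$, $\lambda_c\to z_c$, $\lambda_d\to z_d$, $\Omega_n\to\widetilde{\Omega}_n$, $R\to\widetilde{R}$, and only computes the new data $\widetilde{a}$, $z_c$, $z_d$, $\widetilde{\mu}$, $\widetilde{h}$ explicitly. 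Your from-scratch version proves the same thing; the paper's reduction buys brevity and guarantees consistency with the large-order non-oscillatory formula.

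Two slips you should fix. First, your identification of the distinguished points is wrong: the band endpoints are \emph{not} roots of the bare-phase cubic $\hat{h}'(z)=0$ (a cubic cannot supply two stationary points \emph{and} two band endpoints). In the paper the endpoints $\widetilde{a},\widetilde{a}^*$ come from the $g$-function moment conditions \eqref{eq:wpands}, which force $\widetilde{q}=1$, $\widetilde{p}=-\tfrac{2}{3}w$, hence $\widetilde{a}=-\tfrac{w}{3}+\tfrac{\ii}{3}\sqrt{9-w^2}$ (this is where $\sqrt{9-w^2}$ and the condition $w<3$ really enter), while $z_{c}=-\tfrac{w}{12}-\tfrac{1}{12}\sqrt{w^2+72}$ and $z_d=-\tfrac{w}{12}+\tfrac{1}{12}\sqrt{w^2+72}$ are the \emph{real} zeros of $\widetilde{g}'-\varpi'$, computed only after $\widetilde{g}$ is known. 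Second, your error bookkeeping is internally inconsistent: if the small-norm solution were only $\mathbb{I}+\mathcal{O}(\tau^{-1/3})$, the retained $\tau^{-1/2}$ Weber terms would be below the error. The correct accounting (as in Appendix E, plus Airy parametrices at $\widetilde{a},\widetilde{a}^*$, which your proposal omits) is that the jump deviations are $\mathcal{O}(\tau^{-1/6})$ on the parabolic-cylinder disks and $\mathcal{O}(\tau^{-1/3})$ on the Airy disks; after multiplication by the $\tau^{-1/3}$ factor in the recovery formula $\mathbf{q}=2\lim z\,\tau^{-1/3}(\cdot)_{12}$, these give precisely the displayed $\tau^{-1/2}$ terms and an $\mathcal{O}(\tau^{-2/3})$ remainder (the exponent $\mathcal{O}(\tau^{2/3})$ in the theorem statement is evidently a sign typo).
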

\ys{The formula about this asymptotics is similar to the non-oscillatory region, both of them are calculated with a similar $g$-function. By choosing one special $w$, $|q_1|$ or $|q_2|$ is a elementary function with $\tau$, whose evolutional behavior can be seen clearly, which is shown in the right panel in Fig.\ref{large-t-and-no}. It is seen that each component of infinite order soliton for the CNLS equation is proportional to the rogue waves of infinite order for the scalar NLS equation \cite{Peter-Duke-2019}. The concept to rogue waves of infinite order was proposed in the literature \cite{Peter-Duke-2019}, which also appears in the vanishing background \cite{Deniz-JNS-2019} and can be used to model the self-focusing phenomena in nonlinear geometrical optics or unstable gas dynamics \cite{Suleimanov-17}. All of these results verify the universality for this special solution.}
\begin{remark}(The comparison between the large order and infinite order) In the large order asymptotics, we give four different regions, which is shown in Theorem \ref{theo:os} to Theorem \ref{theo:expo}. And in the infinite order case, we give two types of asymptotics about large $\chi$ and large $\tau$. From the scale transformation, we know the large $\chi$ asymptotics lies in the algebraic decay region and the large $\tau$ asymptotics lies in the non-oscillatory region, which can also be manifested from the above theorem. By choosing some parameters, we give the comparison between these asymptotics:
\begin{figure}[!h]
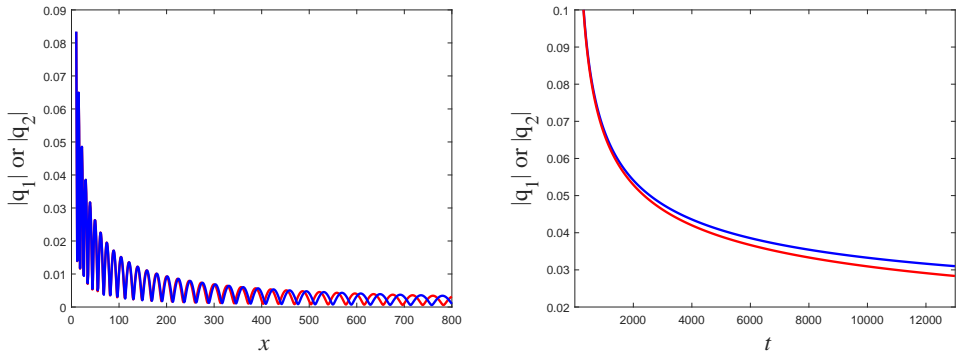

\centering{\includegraphics[width=0.4\textwidth]{largex-al.pdf}}
{\includegraphics[width=0.4\textwidth]{larget-no.pdf}}
\caption{The comparison between the asymptotics of the large order and the infinite order case. The left panel is the large $\chi$ asymptotics and the algebraic decay region, the corresponding parameters is $v=\frac{1}{12}, N=100$. The right one is the main leading order in large $\tau$ asymptotics and the non-oscillatory region by choosing $w=1, N=30$. The leading order in Theorem \ref{theo:infinity-large-chi}, Theorem \ref{theo:large-tau} is marked with the solid blue line and the leading order in Theorem \ref{theo:al}, Theorem \ref{theo:nos} is marked with the red dotted line.}
\label{large-t-and-no}
\end{figure}
\end{remark}

\subsection{Outline of this work}

The outline of this paper is organized as follows: In section \ref{sec:RH}, we give the Darboux transformation of the high order soliton to CNLS Eq.\eqref{eq:cnls}, then the corresponding Riemann-Hilbert representation can be constructed, which is the foundation for the later asymptotic analysis. Afterwards, in section \ref{sec:large-order}, we give the large order asymptotics by using a suit scale transformation to the variables $x$ and $t$. According to the nature of the critical points, the large order contains four different asymptotic region, the oscillatory region, the non-oscillatory region, the algebraic decay region and the exponential decay region, which is shown in four different subsections. Furthermore, in section \ref{sec:infinity-order}, we give the infinite order asymptotics under the same framework of Riemann-Hilbert problem but with a different transformation about the variables $x$ and $t$. This section involves two subsections, one is about how to derive the ordinary differential equation with the compatibility condition of the new Lax pair and the other one is the detailed calculation to the asymptotics. For this section, we give two kinds of asymptotics in total, one is the large $\chi$ asymptotics and the other one is the large $\tau$ asymptotics. In the last part, there are five appendices as the complement materials. In Appendix \ref{app:IST}, we give some preliminaries about the inverse scattering method; in Appendix \ref{appendix:second-order} we give the asymptotic analysis for the general second order soliton solutions, which shows the high order soliton solution with vanishing parameters $\epsilon_j^{[i]}$, $i=1,2$ is not only an important type of limiting solution but also associated with the scalar NLS equation; in Appendix \ref{app:CLS}, we present the proofs to some lemmas and in Appendix \ref{App:large-chi}, we mainly provide the ways to construct the parametrix matrix and analyze the error for the large $\chi$ asymptotics or the algebraic decay region, which is crucial to the asymptotic analysis; in the last Appendix \ref{app:no}, we give the detailed calculation about the leading order in the non-oscillatory region or the large $\tau$ analysis.

\section{Riemann-Hilbert representation of high order solitons}
\label{sec:RH}
In the Appendix \ref{app:IST}, we have given the relation between the Darboux matrix and the Riemann-Hilbert representation in the non-reflection cases. To analyze the large order and infinite order asymptotics, we try to establish the RHP and convert the properties of large order soliton into the jump matrix. Then some complicated exact solutions  will be tackled with the Riemann-Hilbert approach to analyze the dynamics for the solutions.

Based on the idea in \cite{peter-CPAM-2018}, we can construct two sectional analytic matrix
\begin{equation}\label{eq:Mmatrix}
\mathbf{M}^{[N]}(\lambda; x, t)=\left\{\begin{split}\mathbf{M}_{+}^{[N]}(\lambda; x, t)&=\left(\frac{\lambda-\lambda_1}{\lambda-\lambda_1^*}\right)^{-N/3}
\mathbf{T}_{N}(\lambda; x, t),\\
\mathbf{M}_{-}^{[N]}(\lambda; x, t)&=
\mathbf{T}_{N}(\lambda; x, t)\ee^{-\ii\lambda(x+\lambda t)\pmb{\sigma}_3}\mathbf{T}_{N}^{-1}(\lambda; 0, 0)\ee^{\ii\lambda(x+\lambda t)\pmb{\sigma}_3},
\end{split}\right.
\end{equation}
where $\mathbf{T}_N(\lambda;x,t)$ can be rewritten in a compact form:
\begin{equation*}
\begin{split}
\mathbf{T}_{N}(\lambda; x, t)&=\mathbb{I}+\mathbf{Y}_N\mathbf{M}^{-1}\mathbf{D}\mathbf{Y}_{N}^{\dagger}, \qquad \mathbf{M}=\mathbf{X}^{\dagger}\mathbf{S}\mathbf{X}\\
\mathbf{q}^{[N]}&=-2\mathbf{X}_1\mathbf{M}^{-1}(\mathbf{X}_2^{\dag},\mathbf{X}_3^{\dag})
\end{split}
\end{equation*}
and
\begin{equation*}
\begin{split}
\mathbf{Y}_{N}&=\left[\mathbf{\Phi}_{1}^{[0]},\mathbf{\Phi}_{1}^{[1]}, \cdots, \mathbf{\Phi}_{1}^{[N-1]}\right],\\
\mathbf{D}&=\begin{bmatrix}\frac{1}{\lambda-\lambda_1^*}&0&\cdots&0\\
\frac{1}{(\lambda-\lambda_1^*)^2}&\frac{1}{\lambda-\lambda_1^*}&\cdots&0\\
\vdots&\vdots&\ddots&\vdots\\
\frac{1}{\left(\lambda-\lambda_1^*\right)^{N}}&\frac{1}{\left(\lambda-\lambda_1^*\right)^{N-1}}&\cdots&\frac{1}{\lambda-\lambda_1^*}
\end{bmatrix},\qquad
\mathbf{X}=\begin{bmatrix}\mathbf{\Phi}_{1}^{[0]}&\mathbf{\Phi}_{1}^{[1]}&\cdots&\mathbf{\Phi}_{1}^{[N-1]}\\
0&\mathbf{\Phi}_{1}^{[0]}&\cdots&\mathbf{\Phi}_{1}^{[N-2]}\\
\vdots&\vdots&\ddots&\vdots\\
0&0&\cdots&\mathbf{\Phi}_{1}^{[0]}
\end{bmatrix},\\
\mathbf{S}&=\begin{bmatrix}
\binom{0}{0}\frac{\mathbb{I}_3}{\lambda_1^*-\lambda_1}&\binom{1}{0}\frac{\mathbb{I}_3}{\left(\lambda_1^*-\lambda_1\right)^2}&\cdots&\binom{N-1}{0}\frac{\mathbb{I}_3}{\left(\lambda_1^*-\lambda_1\right)^{N}}\\
\binom{1}{1}\frac{(-1)\mathbb{I}_3}{\left(\lambda_1^*-\lambda_1\right)^2}&\binom{2}{1}\frac{(-1)\mathbb{I}_3}{\left(\lambda_1^*-\lambda_1\right)^3}&\cdots&\binom{N}{1}\frac{(-1)\mathbb{I}_3}{\left(\lambda_1^*-\lambda_1\right)^{N+1}}\\
\vdots&\vdots&\ddots&\vdots\\
\binom{N-1}{N-1}\frac{(-1)^{N-1}\mathbb{I}_3}{\left(\lambda_1^*-\lambda_1\right)^{N}}&\binom{N}{N-1}\frac{(-1)^{N-1}\mathbb{I}_3}{\left(\lambda_1^*-\lambda_1\right)^{N+1}}&\cdots&\binom{2N-2}{N-1}\frac{(-1)^{N-1}\mathbb{I}_3}{\left(\lambda_1^*-\lambda_1\right)^{2N-1}}
\end{bmatrix},
\end{split}
\end{equation*}
with $\pmb{\Phi}_1^{[k]}=\frac{1}{k!}\left(\frac{\rm d}{{\rm d}\lambda}\right)^{k}\pmb{\Phi}_1(\lambda;x,t)|_{\lambda=\lambda_1}$, $\pmb{\Phi}_1(\lambda_1;x,t)$ belongs to the spanning space: ~$$\mathrm{span}\{\ee^{-\ii\lambda_1(x+\lambda_1 t)\pmb{\sigma}_3}\mathbf{c}\}.$$
Especially, $\pmb{\Phi}_1(\lambda_1;x,t)$ can be choosen as $\left[1,c_1\ee^{2\ii\lambda_1(x+\lambda_1 t)},c_2\ee^{2\ii\lambda_1(x+\lambda_1 t)}\right]$, the $N$-th order soliton solution can be constructed directly. Actually, the parameters $c_1$ and $c_2$ can involve the spectral parameter $\lambda_1$, in which the dynamics of high order solitons could be regulated by these parameters. In this work, we merely consider the special case in which the parameters $c_1$ and $c_2$ are independent with $\lambda_1$.
In general, the formula of Darboux matrix $\mathbf{T}_N(\lambda;x,t)$ is hard to analyze since it involves the complicated determinant formula. However, the Darboux matrix $\mathbf{T}_N(\lambda;0,0)$ can be rewritten in a compact form:
\begin{prop}\label{prop:Tn}
When $(x, t)=(0, 0)$, the Darboux matrix $\mathbf{T}_{N}(\lambda;x,t)$ in Eq.\eqref{eq:Mmatrix} can be simplified to
\begin{equation}\label{eq:tn00}
\mathbf{T}_{N}(\lambda; 0, 0)=\mathbf{Q}_c{\rm diag}\left(\left(\frac{\lambda-\lambda_1}{\lambda-\lambda_1^*}\right)^N, 1, 1\right)
\mathbf{Q}_c^{-1},\qquad \mathbf{Q}_c=\begin{bmatrix}1&-c_1^*&-c_2^*\\ c_1&1&0\\
c_2&0&1
\end{bmatrix}.
\end{equation}
\end{prop}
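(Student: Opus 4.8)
The plan is to exploit the fact that at the origin the dressing iteration \eqref{eq:high-dt} degenerates into a pure similarity transformation, which forces every seed vector to collapse onto the single vector $\mathbf{c}$. First I would note that $\pmb{\Phi}^{[0]}(\lambda;0,0)=\ee^{-\ii\lambda(0+\lambda\cdot 0)\pmb{\sigma}_3}=\mathbb{I}$, so that \eqref{eq:high-dt} evaluated at $(x,t)=(0,0)$ reads $\pmb{\Phi}^{[j]}(\lambda;0,0)=\mathbf{T}^{[j]}(\lambda;0,0)\,\pmb{\Phi}^{[j-1]}(\lambda;0,0)\,(\mathbf{T}^{[j]}(\lambda;0,0))^{-1}$, i.e. a conjugation. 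Since conjugation fixes the identity matrix, a one-line induction gives $\pmb{\Phi}^{[j]}(\lambda;0,0)=\mathbb{I}$ for every $j=0,\dots,N-1$ and all $\lambda$ at which the factors are invertible (the resulting identity of rational matrix functions then extends to all $\lambda$).

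Next I would feed this into the definition of the seed vectors. In the special case under consideration all $\epsilon_{j}^{[i]}=0$, so $\pmb{\Phi}_1^{[j-1]}(0,0)=\lim_{\lambda\to\lambda_1}\pmb{\Phi}^{[j-1]}(\lambda;0,0)\,\mathbf{c}=\mathbb{I}\,\mathbf{c}=\mathbf{c}$, independently of $j$. Consequently all the rank-one projectors coincide, $\mathbf{P}^{[j]}(0,0)=\mathbf{P}:=\mathbf{c}\mathbf{c}^{\dagger}/(\mathbf{c}^{\dagger}\mathbf{c})$, and the $N$-fold product \eqref{eq:darboux12} reduces to a power of a single matrix, $\mathbf{T}_N(\lambda;0,0)=(\mathbb{I}-a\mathbf{P})^N$ with $a=(\lambda_1-\lambda_1^*)/(\lambda-\lambda_1^*)$. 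This collapse is the crux of the argument; everything after it is elementary linear algebra and is essentially forced.

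To evaluate the power I would use idempotency $\mathbf{P}^2=\mathbf{P}$, which yields $(\mathbb{I}-a\mathbf{P})^N=\mathbb{I}+\bigl((1-a)^N-1\bigr)\mathbf{P}$ since $\mathbb{I}-a\mathbf{P}$ acts as $1-a$ on $\mathrm{ran}\,\mathbf{P}$ and as $1$ on $\ker\mathbf{P}$. Setting $r:=(\lambda-\lambda_1)/(\lambda-\lambda_1^*)$ and noting the identity $1-a=r$, this becomes $\mathbf{T}_N(\lambda;0,0)=\mathbb{I}+(r^N-1)\mathbf{P}$.

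Finally I would diagonalize this operator in the basis provided by the columns of $\mathbf{Q}_c$, namely $\mathbf{c}=(1,c_1,c_2)^{\T}$, $\mathbf{v}_1=(-c_1^*,1,0)^{\T}$ and $\mathbf{v}_2=(-c_2^*,0,1)^{\T}$. A direct computation gives $\mathbf{c}^{\dagger}\mathbf{v}_1=\mathbf{c}^{\dagger}\mathbf{v}_2=0$, so $\mathbf{v}_1,\mathbf{v}_2$ span $\ker\mathbf{P}=\mathbf{c}^{\perp}$ while $\mathbf{c}$ spans $\mathrm{ran}\,\mathbf{P}$. Hence $\mathbf{T}_N(\lambda;0,0)$ acts as multiplication by $r^N$ on $\mathbf{c}$ and as the identity on $\mathbf{v}_1,\mathbf{v}_2$, which is precisely $\mathbf{Q}_c\,{\rm diag}(r^N,1,1)\,\mathbf{Q}_c^{-1}$, as claimed. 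The only step demanding genuine care is the conjugation collapse establishing $\pmb{\Phi}^{[j]}(\lambda;0,0)=\mathbb{I}$ and hence the coincidence of all projectors $\mathbf{P}^{[j]}(0,0)$; once that is in hand the result follows from the functional calculus for a single idempotent.
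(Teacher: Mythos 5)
Your proposal is correct, and it reaches the result by a genuinely different route than the paper. The paper works with the compact representation $\mathbf{T}_{N}=\mathbb{I}+\mathbf{Y}_N\mathbf{M}^{-1}\mathbf{D}\mathbf{Y}_{N}^{\dagger}$ quoted below Eq.\eqref{eq:Mmatrix}: after specializing $\mathbf{Y}_N$ and $\mathbf{X}$ at the origin, it decomposes $\mathbf{M}$ through Pascal matrices $\mathbf{S}_N$ of binomial coefficients, multiplies the factors out explicitly, and sums the resulting geometric series $\sum_{k=0}^{N-1}(\lambda-\lambda_1)^k/(\lambda-\lambda_1^*)^{k+1}$ to arrive at the rank-one update $\mathbf{T}_N(\lambda;0,0)=\mathbb{I}+\bigl(r^N-1\bigr)\mathbf{c}\mathbf{c}^{\dagger}/(1+|\mathbf{c}|^2)$, which it then recognizes as the stated diagonalization. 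You instead work with the iterated product \eqref{eq:darboux12}: the conjugation structure of \eqref{eq:high-dt} at $(x,t)=(0,0)$ forces $\pmb{\Phi}^{[j]}(\lambda;0,0)=\mathbb{I}$ by induction (a fact the paper itself records separately, just after \eqref{eq:high-dt}), hence every seed vector equals $\mathbf{c}$, every projector equals $\mathbf{P}=\mathbf{c}\mathbf{c}^{\dagger}/(\mathbf{c}^{\dagger}\mathbf{c})$, and $\mathbf{T}_N(\lambda;0,0)=(\mathbb{I}-a\mathbf{P})^N=\mathbb{I}+\bigl((1-a)^N-1\bigr)\mathbf{P}$ by idempotency — the same rank-one formula, obtained without any Pascal-matrix computation. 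Your route is shorter and more conceptual, and your orthogonality check $\mathbf{c}^{\dagger}\mathbf{v}_i=0$ is actually the correct form of what the paper sloppily writes as $\mathbf{c}\mathbf{v}_i^{\T}=0$. What the paper's computation buys in exchange is a direct verification through the compact determinant-type formula, which is the form actually plugged into the Riemann--Hilbert construction \eqref{eq:Mmatrix}; your argument instead proves the statement for the product form \eqref{eq:darboux12} and implicitly leans on the paper's (unproved) assertion that the two representations of $\mathbf{T}_N$ coincide. Within the paper's framework this is harmless, but it is worth being aware that the two proofs certify the identity for what are, a priori, two different formulas for the same matrix.
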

\begin{proof}
Choosing the parameters $(x, t)=(0, 0)$, then the matrices $\mathbf{Y}_{N}$ and $\mathbf{X}$ will be reduced into a simple formula
\begin{equation}
\mathbf{Y}_{N}=\begin{bmatrix}\mathbf{c},&\mathbf{0},&\cdots, \mathbf{0}\end{bmatrix},\quad  \mathbf{c}=\begin{bmatrix}1\\c_1\\c_2
\end{bmatrix}, \quad \mathbf{X}=\begin{bmatrix}\mathbf{c}&\mathbf{0}&\cdots&\mathbf{0}\\
0&\mathbf{c}&\cdots&\mathbf{0}\\
\vdots&\vdots&\ddots&\vdots\\
0&0&\cdots&\mathbf{c}
\end{bmatrix}.
\end{equation}
With this simple decomposition, the matrix $\mathbf{M}:=\mathbf{X}^{\dagger}\mathbf{S}\mathbf{X}$ can be written as
\begin{equation}
\mathbf{M}=\frac{1+|\mathbf{c}|^2}{\lambda_1^*-\lambda_1}\mathbf{D}_1\mathbf{S}_{N}^{\dagger}\mathbf{S}_N\mathbf{D}_2,
\end{equation}
where $|\mathbf{c}|^2=|c_1|^2+|c_2|^2,$ $ \mathbf{S}_N=\left(\binom{j-1}{i-1}\right)_{1\leq i\leq j\leq N}$ and
\begin{equation}
\mathbf{D}_1={\rm diag}\left(1,\frac{-1}{\lambda_1^*-\lambda_1}, \cdots, \left(\frac{-1}{\lambda_1^*-\lambda_1}\right)^{N-1}\right),\,\,\,\,\, \mathbf{D}_2={\rm diag}\left(1,\frac{1}{\lambda_1^*-\lambda_1}, \cdots, \left(\frac{1}{\lambda_1^*-\lambda_1}\right)^{N-1}\right).
\end{equation}
Based on the previous formulas and the decomposition of Pascal matrix, the matrix $\mathbf{Y}_N\mathbf{M}^{-1}\mathbf{D}\mathbf{Y}_N^{\dagger}$ can be simplified into
\begin{equation}\label{eq:T-I}
\mathbf{Y}_N\mathbf{M}^{-1}\mathbf{D}\mathbf{Y}_N^{\dagger}=\frac{\lambda_1^*-\lambda_1}{1+|\mathbf{c}|^2}\mathbf{Y}_{N}\mathbf{D}_2^{-1}\mathbf{S}_N^{-1}\left(\mathbf{S}_{N}^{\dagger}\right)^{-1}\mathbf{D}_1^{-1}\mathbf{D}\mathbf{Y}_N^{\dagger}.
\end{equation}
It is clear that the rank of $\mathbf{Y}_N\mathbf{M}^{-1}\mathbf{D}\mathbf{Y}_N^{\dagger}$ equals to one, which motivates us to write it as a product of two matrices with the rank of one. Following this idea, we first calculate the product of the first three matrices in Eq.\eqref{eq:T-I} $\mathbf{Y}_{N}\mathbf{D}_{2}^{-1}\mathbf{S_{N}}^{-1}$, which equals to
\begin{equation}\label{eq:first-three}
\begin{aligned}
\mathbf{Y}_{N}\mathbf{D}_{2}^{-1}\mathbf{S}_{N}^{-1}&=\begin{bmatrix}\mathbf{c},&\mathbf{0},&\cdots,&\mathbf{0}
\end{bmatrix}
{\rm diag}\left(1, \lambda_1^*-\lambda_1, \cdots, \left(\lambda_1^*-\lambda_1\right)^{N-1}\right)\left((-1)^{i+j}\binom{j-1}{i-1}\right)_{1\leq i\leq j\leq N}\\
&=\begin{bmatrix}\mathbf{c},&\mathbf{0},&\cdots,&\mathbf{0}
\end{bmatrix}\left((-1)^{i+j}\binom{j-1}{i-1}\right)_{1\leq i\leq j\leq N}\\
&=\mathbf{c}\begin{bmatrix}1,&-1,&\cdots,&(-1)^{N-1}
\end{bmatrix}.
\end{aligned}
\end{equation}
Secondly, we give the products of last four matrices $\left(\mathbf{S}_{N}^{\dagger}\right)^{-1}\mathbf{D_1}^{-1}\mathbf{D}\mathbf{Y}_{N}^{\dagger}$, i.e.
\begin{equation}
\begin{aligned}
&\left((-1)^{i+j}\binom{j-1}{i-1}\right)_{1\leq i\leq j\leq N}^{\dagger}\begin{bmatrix}\frac{1}{\lambda-\lambda_1^*}&0&\cdots&0\\
\frac{\left(\lambda_1-\lambda_1^*\right)}{(\lambda-\lambda_1^*)^2}&\frac{\left(\lambda_1-\lambda_1^*\right)}{\lambda-\lambda_1^*}&\cdots&0\\
\vdots&\vdots&\ddots&\vdots\\
\frac{\left(\lambda_1-\lambda_1^*\right)^{N-1}}{\left(\lambda-\lambda_1^*\right)^{N}}&\frac{\left(\lambda_1-\lambda_1^*\right)^{N-1}}{\left(\lambda-\lambda_1^*\right)^{N-1}}&\cdots&\frac{\left(\lambda_1-\lambda_1^*\right)^{N-1}}{\lambda-\lambda_1^*}
\end{bmatrix}\begin{bmatrix}\mathbf{c}^{\dagger}\\
\mathbf{0}\\
\vdots\\
\mathbf{0}
\end{bmatrix}\\
=&\begin{bmatrix}\frac{1}{\lambda-\lambda_1^*}&0&\cdots&0\\
\frac{\lambda_1-\lambda}{\left(\lambda-\lambda_1^*\right)^2}&*&\cdots&0\\
\vdots&\vdots&\ddots&\vdots\\
\frac{\left(\lambda_1-\lambda\right)^{N-1}}{\left(\lambda-\lambda_1^*\right)^{N}}&*&\cdots&*
\end{bmatrix}\begin{bmatrix}\mathbf{c}^{\dagger}\\
\mathbf{0}\\
\vdots\\
\mathbf{0}
\end{bmatrix}
=\begin{bmatrix}\frac{1}{\lambda-\lambda_1^*}\\
\frac{\lambda_1-\lambda}{\left(\lambda-\lambda_1^*\right)^2}\\
\vdots\\
\frac{\left(\lambda_1-\lambda\right)^{N-1}}{\left(\lambda-\lambda_1^*\right)^{N}}
\end{bmatrix}\mathbf{c}^{\dagger}.
\end{aligned}
\end{equation}
Then the part of Darboux matrix $\mathbf{T}_{N}(\lambda; 0, 0)$ in Eq.\eqref{eq:T-I} can be simplified into
\begin{equation}
\begin{aligned}
\mathbf{Y}_N\mathbf{M}^{-1}\mathbf{D}\mathbf{Y}_N^{\dagger}&=\frac{\lambda_1^*-\lambda_1}{1+|\mathbf{c}|^2}\mathbf{c}\begin{bmatrix}1,&-1,&\cdots,&(-1)^{N-1}\end{bmatrix}\begin{bmatrix}\frac{1}{\lambda-\lambda_1^*}\\
(-1)\frac{\lambda-\lambda_1}{\left(\lambda-\lambda_1^*\right)^2}\\
\vdots\\
(-1)^{N-1}\frac{\left(\lambda-\lambda_1\right)^{N-1}}{\left(\lambda-\lambda_1^*\right)^{N}}
\end{bmatrix}\mathbf{c}^{\dagger}\\
&=\frac{\left(\frac{\lambda-\lambda_1}{\lambda-\lambda_1^*}\right)^N-1}{1+|\mathbf{c}|^2}\mathbf{c}\mathbf{c}^{\dagger}
\\&=\mathbf{Q}_c{\rm diag}\left(\left(\frac{\lambda-\lambda_1}{\lambda-\lambda_1^*}\right)^N-1, 0, 0\right)\mathbf{Q}_c^{-1},
\end{aligned}
\end{equation}
which implies that the Darboux matrix $\mathbf{T}_{N}(\lambda; 0, 0)$ can be given as the final result \eqref{eq:tn00}. This completes the proof.
\end{proof}

Due to the above proposition \ref{prop:Tn}, we have the following corollary immediately
\begin{corollary}
$|\mathbf{q}(0,0)|=2N|\Im(\lambda_1)|.$
\end{corollary}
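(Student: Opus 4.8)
The plan is to avoid any determinant manipulation and instead read off $\mathbf{q}(0,0)$ directly from the large-$\lambda$ expansion of $\mathbf{T}_N(\lambda;0,0)$. The crucial point is that the computation carried out in the proof of Proposition \ref{prop:Tn} has already produced the rank-one representation
\[
\mathbf{T}_N(\lambda;0,0)=\mathbb{I}+\frac{\left(\frac{\lambda-\lambda_1}{\lambda-\lambda_1^*}\right)^N-1}{1+|\mathbf{c}|^2}\,\mathbf{c}\mathbf{c}^{\dagger},
\]
so that everything reduces to expanding the single scalar factor at $\lambda=\infty$. First I would expand $\frac{\lambda-\lambda_1}{\lambda-\lambda_1^*}=1+\frac{\lambda_1^*-\lambda_1}{\lambda}+O(\lambda^{-2})$, giving $\left(\frac{\lambda-\lambda_1}{\lambda-\lambda_1^*}\right)^N-1=\frac{N(\lambda_1^*-\lambda_1)}{\lambda}+O(\lambda^{-2})$, and hence write $\mathbf{T}_N(\lambda;0,0)=\mathbb{I}+\lambda^{-1}\mathbf{T}_N^{(1)}+O(\lambda^{-2})$ with $\mathbf{T}_N^{(1)}=\frac{N(\lambda_1^*-\lambda_1)}{1+|\mathbf{c}|^2}\mathbf{c}\mathbf{c}^{\dagger}$.

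Next I would recover the potential from this coefficient using the standard loop-group reconstruction. Substituting $\mathbf{T}_N=\mathbb{I}+\lambda^{-1}\mathbf{T}_N^{(1)}+\cdots$ into the dressing identity $\mathbf{U}^{[N]}=\mathbf{T}_N\mathbf{U}\mathbf{T}_N^{-1}+(\partial_x\mathbf{T}_N)\mathbf{T}_N^{-1}$ with the vacuum seed $\pmb{\mathcal{Q}}=\mathbf{0}$ and matching the coefficient of $\lambda^0$ yields $\pmb{\mathcal{Q}}^{[N]}=[\pmb{\sigma}_3,\mathbf{T}_N^{(1)}]$, since the term $(\partial_x\mathbf{T}_N)\mathbf{T}_N^{-1}$ is $O(\lambda^{-1})$ and does not contribute at this order. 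Reading off the top-right block and using $(\pmb{\sigma}_3)_{11}-(\pmb{\sigma}_3)_{22}=2$ together with $(\mathbf{c}\mathbf{c}^{\dagger})_{12}=c_1^*$, $(\mathbf{c}\mathbf{c}^{\dagger})_{13}=c_2^*$ gives
\[
\mathbf{q}(0,0)=\frac{2N(\lambda_1^*-\lambda_1)}{1+|\mathbf{c}|^2}\left(c_1^*,c_2^*\right).
\]

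Finally I would take the norm. With $\lambda_1^*-\lambda_1=-2\ii\,\Im(\lambda_1)$ and $\left|(c_1^*,c_2^*)\right|^2=|c_1|^2+|c_2|^2=|\mathbf{c}|^2$, this gives $|\mathbf{q}(0,0)|=\frac{4N|\Im(\lambda_1)|\sqrt{|\mathbf{c}|^2}}{1+|\mathbf{c}|^2}$. The one ingredient that must be invoked is the normalization $|\mathbf{c}|^2=|c_1|^2+|c_2|^2=1$ fixed earlier (the maximal-peak choice $c_1=\cos(\alpha)\ee^{\ii\beta}$, $c_2=\sin(\alpha)\ee^{\ii\gamma}$ that saturates estimate \eqref{eq:estimate-v} at the origin), under which $\frac{2\sqrt{|\mathbf{c}|^2}}{1+|\mathbf{c}|^2}=1$ and the claimed $|\mathbf{q}(0,0)|=2N|\Im(\lambda_1)|$ follows at once. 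There is no real technical obstacle here; the only genuine subtlety is recognizing that the stated identity is tied to this normalization—for general $\mathbf{c}$ the peak value is $\frac{4N|\Im(\lambda_1)|\sqrt{|\mathbf{c}|^2}}{1+|\mathbf{c}|^2}$, a quantity that is maximized precisely when $|\mathbf{c}|^2=1$, which is exactly the case under consideration.
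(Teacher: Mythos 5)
Your proof is correct and is essentially the route the paper intends: the corollary is stated as an immediate consequence of Proposition \ref{prop:Tn}, and your large-$\lambda$ expansion of the rank-one form $\mathbf{T}_N(\lambda;0,0)=\mathbb{I}+\frac{(\frac{\lambda-\lambda_1}{\lambda-\lambda_1^*})^N-1}{1+|\mathbf{c}|^2}\mathbf{c}\mathbf{c}^{\dagger}$, combined with the dressing reconstruction $\pmb{\mathcal{Q}}^{[N]}=[\pmb{\sigma}_3,\mathbf{T}_N^{(1)}]$, is exactly the computation the paper leaves implicit. Your closing remark is also accurate and worth keeping: the stated identity holds only under the normalization $|\mathbf{c}|^2=|c_1|^2+|c_2|^2=1$ (the maximal-peak parametrization $c_1=\cos(\alpha)\ee^{\ii\beta}$, $c_2=\sin(\alpha)\ee^{\ii\gamma}$, later fixed in the paper as $|c|=1$), while for general $\mathbf{c}$ the peak value is $4N|\Im(\lambda_1)|\,|\mathbf{c}|/(1+|\mathbf{c}|^2)$.
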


\begin{lemma}\label{lem:qc-decom}
For the matrix $\mathbf{Q}_c$, we have the following matrix decomposition:
\begin{equation}
\mathbf{Q}_c=\mathbf{Q}_H \mathbf{Q}_d\mathbf{Q}_H^{\dagger},
\end{equation}
where
\begin{equation}
\mathbf{Q}_H=\begin{bmatrix}
1&0&0\\
0&\cos(\alpha)\ee^{\ii\beta} &-\sin(\alpha)\ee^{\ii\beta}  \\
0&\sin(\alpha)\ee^{\ii\gamma} & \cos(\alpha)\ee^{\ii\gamma}   \\
\end{bmatrix},\,\,\,\,\,\,\, \mathbf{Q}_d=\begin{bmatrix}
1&-|c|&0\\
|c|&1&0  \\
0&0 & 1  \\
\end{bmatrix},
\end{equation}
and $c_1=|c|\cos(\alpha)\ee^{\ii\beta}$, $c_2=|c|\sin(\alpha)\ee^{\ii\gamma}$, $|c|=\sqrt{|c_1|^2+|c_2|^2}$.
\end{lemma}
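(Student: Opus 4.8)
The statement is a pure identity between $3\times 3$ matrices, so the plan is to verify it by direct multiplication, exploiting the block structure of $\mathbf{Q}_H$ together with the polar parametrization of $c_1,c_2$. First I would record the Hermitian conjugate
\[
\mathbf{Q}_H^{\dagger}=\begin{bmatrix}1&0&0\\ 0&\cos(\alpha)\ee^{-\ii\beta}&\sin(\alpha)\ee^{-\ii\gamma}\\ 0&-\sin(\alpha)\ee^{-\ii\beta}&\cos(\alpha)\ee^{-\ii\gamma}\end{bmatrix},
\]
being careful that the dagger transposes and conjugates at the same time, so that each phase $\ee^{\ii\beta},\ee^{\ii\gamma}$ flips the sign of its exponent and the off-diagonal entries of the lower $2\times 2$ block are swapped with a sign change. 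The whole verification reduces to keeping these four phase factors $\ee^{\pm\ii\beta},\ee^{\pm\ii\gamma}$ under control.

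Next I would form the triple product inward, in two stages. Multiplying the two rightmost factors first gives
\[
\mathbf{Q}_d\mathbf{Q}_H^{\dagger}=\begin{bmatrix}1&-|c|\cos(\alpha)\ee^{-\ii\beta}&-|c|\sin(\alpha)\ee^{-\ii\gamma}\\ |c|&\cos(\alpha)\ee^{-\ii\beta}&\sin(\alpha)\ee^{-\ii\gamma}\\ 0&-\sin(\alpha)\ee^{-\ii\beta}&\cos(\alpha)\ee^{-\ii\gamma}\end{bmatrix},
\]
after which I would left-multiply by $\mathbf{Q}_H$. Because $\mathbf{Q}_H$ is block diagonal (a $1$ in the corner and a $2\times 2$ block), the first row of the product is read off immediately from the first row of $\mathbf{Q}_d\mathbf{Q}_H^{\dagger}$, and the remaining entries come from multiplying the lower $2\times 2$ blocks.

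Finally I would match the outcome against $\mathbf{Q}_c$ entry by entry using the substitution $c_1=|c|\cos(\alpha)\ee^{\ii\beta}$, $c_2=|c|\sin(\alpha)\ee^{\ii\gamma}$. The first-row off-diagonal entries $-|c|\cos(\alpha)\ee^{-\ii\beta}$ and $-|c|\sin(\alpha)\ee^{-\ii\gamma}$ are exactly $-c_1^{*}$ and $-c_2^{*}$, while the first-column entries $|c|\cos(\alpha)\ee^{\ii\beta}$ and $|c|\sin(\alpha)\ee^{\ii\gamma}$ reproduce $c_1$ and $c_2$. The essential simplification occurs in the lower-right $2\times 2$ block: its diagonal entries collapse to $\cos^2(\alpha)+\sin^2(\alpha)=1$, and its off-diagonal entries become $\cos(\alpha)\sin(\alpha)\ee^{\ii(\beta-\gamma)}-\sin(\alpha)\cos(\alpha)\ee^{\ii(\beta-\gamma)}=0$, so the Pythagorean identity produces precisely the identity block of $\mathbf{Q}_c$ and cancels all spurious cross terms.

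There is no genuine obstacle here: the computation is routine and self-contained, and the only points demanding care are bookkeeping ones, namely getting the conjugation inside $\mathbf{Q}_H^{\dagger}$ correct and tracking the phases so that the cross terms cancel. It is worth noting the conceptual reason the factorization succeeds: the lower block of $\mathbf{Q}_H$ equals the diagonal phase matrix $\mathrm{diag}(\ee^{\ii\beta},\ee^{\ii\gamma})$ times a real rotation by $\alpha$, hence is unitary, so $\mathbf{Q}_H\mathbf{Q}_d\mathbf{Q}_H^{\dagger}$ is merely a unitary conjugation of $\mathbf{Q}_d$ that rotates the vector $(|c|,0)^{\T}$ into $(c_1,c_2)^{\T}$ while leaving the identity part untouched.
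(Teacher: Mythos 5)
Your proposal is correct: the direct multiplication, the intermediate product $\mathbf{Q}_d\mathbf{Q}_H^{\dagger}$, and the entry-by-entry matching (with the Pythagorean cancellation in the lower $2\times 2$ block) all check out, and your closing observation that the lower block of $\mathbf{Q}_H$ is unitary, so the identity is just a unitary conjugation carrying $(|c|,0)^{\T}$ to $(c_1,c_2)^{\T}$, is exactly the right way to see why it works. The paper states this lemma without any proof, treating it as a routine verification, and your computation is precisely that verification.
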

\begin{remark}
Based on the result in \cite{Peter-Duke-2019}, \cite{Deniz-JNS-2019} and \cite{Deniz-arXiv-2019}, we know the key point to study the NLS asymptotics is how to decompose the $\mathbf{Q}^{-1}$ into upper and lower matrix. But to a general $3\times 3$ matrix, this decomposition becomes more difficult. To deal with this problem, we give a decomposition to $\mathbf{Q}_c$ in lemma \ref{lem:qc-decom} for the purpose of changing the important information to $\mathbf{Q}_d$, which is a good block matrix and can be regarded as the $2\times 2$ matrix in essence. With the aid of this decomposition, the asymptotic analysis to Eq.\eqref{eq:cnls} becomes available.
\end{remark}
By the theory of classic Darboux transformation, the parameter $|c|$ just depends the location of solitons. For convenience, we let $|c|=1.$
From Eq.\eqref{eq:Mmatrix}, we know that the jump about $\mathbf{M}^{[N]}(\lambda; x, t)$ is only related to the matrix $\mathbf{T}_{N}(\lambda; 0,0)$ and $\pmb{\Phi}(\lambda;x,t)$, in {\bf proposition} \ref{prop:Tn}, the matrix $\mathbf{T}_{N}(\lambda; 0,0)$ can be diagonalized in a well-formed one. As a matter of course, we can give the Riemann-Hilbert problem about $\mathbf{M}^{[N]}(\lambda; x, t)$:
\begin{rhp}\label{RHP0}
Let $(x,t)\in\mathbb{R}^2$ be arbitrary parameters, and $N\in\mathbb{Z}_{>0}$. Find a $3\times 3$ matrix function $\mathbf{M}^{[N]}(\lambda; x, t)$ satisfying the following properties:
\begin{itemize}
\item \textbf{Analyticity}: $\mathbf{M}^{[N]}(\lambda; x, t)$ is analytic for $\lambda\in\mathbb{C}\setminus \partial D_0$, it takes the continuous boundary values from the interior and exterior of $\partial D_0$.
\item \textbf{Jump condition}: The boundary values on the jump contour $\partial D_0$ are related as
    \begin{multline}
    \mathbf{M}^{[N]}_{+}(\lambda; x, t){=}\mathbf{M}^{[N]}_{-}(\lambda; x, t)\ee^{{-}\ii\left(\lambda x{+}\lambda^2 t\right)\pmb{\sigma}_3}\mathbf{Q}_{c}\\\cdot {\rm diag}\left(\left(\frac{\lambda{-}\lambda_1}{\lambda{-}\lambda_1^*}\right)^{\frac{2}{3}N}, \left(\frac{\lambda{-}\lambda_1^*}{\lambda{-}\lambda_1}\right)^{\frac{1}{3}N}, \left(\frac{\lambda{-}\lambda_1^*}{\lambda{-}\lambda_1}\right)^{\frac{1}{3}N} \right)\mathbf{Q}_{c}^{-1}\ee^{\ii\left(\lambda x{+}\lambda^2t\right)\pmb{\sigma}_3},
    \end{multline}
\item \textbf{Normalization}: $\mathbf{M}^{[N]}(\lambda; x, t)=\mathbb{I}+O(\lambda^{-1}),$
\end{itemize}
where $D_0$ is a disk centered at the origin in which the spectral point $\lambda=\lambda_1$ is involved inside.
\end{rhp}
Given the above Riemann-Hilbert problem, the potential $\mathbf{q}(x,t)$ can be recovered by the formula:
\begin{equation}
\mathbf{q}(x, t)=2\lim\limits_{\lambda\to\infty}(\lambda\mathbf{M}_{1,2}(\lambda; x, t)),
\end{equation}
where the subscript $_{1,2}$ denotes the elements $(1,2)$ and $(1,3)$ of a matrix.

Even though the lower order soliton solutions can be obtained by the determinant formula, the complicated expressions for these high order solutions with large $N$ are hard to analyze. To avoid this problem, we would like to replace the determinant formulas with the Riemann-Hilbert representation.

Now we proceed to consider the case of infinite order, i.e. $N\to\infty$. As $N\to \infty$, by the estimate \eqref{eq:estimate}, the sectional analytic matrices $\mathbf{M}^{[N]}(\lambda; x, t)$ in Eq. \eqref{eq:Mmatrix} are uniformly convergence by choosing ${\rm Im}(\lambda_1)=1/N$ in the corresponding region. Then the limit of matrix $\mathbf{M}^{[\infty]}(\lambda; x, t)$ satisfies the following new RHP:
\begin{rhp}\label{RHP2-0}
	(Soliton of infinite order). Let $(x, t)\in \mathbb{R}^2.$ Find a $3\times 3$ matrix $\mathbf{M}^{[\infty]}(\lambda; x, t)$ with the following properties
	\begin{itemize}
		\item \textbf{Analyticity}: $\mathbf{M}^{[\infty]}(\lambda; x, t)$ is analytic in $\lambda\in \mathbb{C}\setminus \partial D_0,$ and it takes the continuous boundary condition from the interior and exterior of $D_0$.
		\item \textbf{Jump Condition}: The boundary condition on $D_0$ (clockwise orientation) are related with the following jump condition
		\begin{equation*}
		{\mathbf{M}^{[\infty]}}_{+}(\lambda; x, t)={\mathbf{M}^{[\infty]}}_{-}(\lambda; x, t)\ee^{-\ii\left[\left(\lambda x+\lambda^2t\right)\right]\pmb{\sigma}_3}\mathbf{Q}_{c}\mathbf{{J}}(\lambda)\mathbf{Q}_{c}^{-1}\ee^{\ii\left[\left(\lambda x+\lambda^2t\right)\right]\pmb{\sigma}_3},
		\end{equation*}
		where $\mathbf{{J}}(\lambda)={\rm diag}\left(\ee^{-\frac{4}{3}\ii (\lambda-{\rm Re}(\lambda_1))^{-1}}, \ee^{\frac{2}{3}\ii(\lambda-{\rm Re}(\lambda_1))^{-1}}, \ee^{\frac{2}{3}\ii(\lambda-{\rm Re}(\lambda_1))^{-1}} \right),$
		\item \textbf{Normalization}: ${\mathbf{M}^{[\infty]}}(\lambda; x, t)\to \mathbb{I}$ as $\lambda\to\infty.$
	\end{itemize}
\end{rhp}
Then the potential $\mathbf{q}(x,t)$ can be recovered by the formula:
\begin{equation}
\mathbf{q}(x, t)=2\lim\limits_{\lambda\to\infty}(\lambda\mathbf{M}^{[\infty]}_{1,2}(\lambda; x, t)),
\end{equation}
where the subscript $_{1,2}$ denotes the elements $(1,2)$ and $(1,3)$ of a matrix.


To reduce the above RHP in a simper form, we would like to use the Lie symmetry to reduce the above RHP. As we know that the solution of CNLS equation has a Galilean transformation $\mathbf{q}(x,t)\to\mathbf{q}(x-2\alpha t,t)\ee^{2\ii\alpha(x-\alpha t)}$, which can be used to absorb the parameter ${\rm Re}(\lambda_1)$ in the RHP. Under the Galilean transformation:
\begin{equation}\label{eq:xandX}
x=\chi-2{\rm Re}(\lambda_1) t,\quad \lambda=\Lambda+{\rm Re}(\lambda_1),\quad t=\tau,
\end{equation}
together with the gauge transformation $\mathbf{N}(\Lambda; \chi, \tau)=\ee^{{\rm ad_{\pmb{\sigma}_3}}\ii{\rm Re}(\lambda_1)(\chi-{\rm Re}(\lambda_1)\tau)}\mathbf{M}^{[\infty]}(\Lambda+{\rm Re}(\lambda_1); \chi-2{\rm Re}(\lambda_1)\tau, \tau)$, where $\ee^{{\rm ad_{\pmb{\sigma}_3}}}\cdot=\ee^{\pmb{\sigma}_3}\cdot\ee^{-\pmb{\sigma}_3}$, the RHP \ref{RHP2-0} will turn into following new RHP:
\begin{rhp}\label{RHP2}
(Soliton of infinite order). Let $(\chi, \tau)\in \mathbb{R}^2.$ Find a $3\times 3$ matrix $\mathbf{N}(\Lambda; \chi, \tau)$ with the following properties
\begin{itemize}
\item \textbf{Analyticity}: $\mathbf{N}(\Lambda; \chi, \tau)$ is analytic in $\Lambda\in \mathbb{C}\setminus \partial D_0,$ and it takes the continuous boundary condition from the interior and exterior of $D_0$.
\item \textbf{Jump Condition}: The boundary condition on $D_0$ (clockwise orientation) are related with the following jump condition
    \begin{equation*}
    \mathbf{N}_{+}(\Lambda; \chi, \tau)=\mathbf{N}_{-}(\Lambda; \chi, \tau)\ee^{-\ii\left[\left(\Lambda \chi+\Lambda^2\tau\right)\right]\pmb{\sigma}_3}\mathbf{Q}_{c}\mathbf{\widetilde{J}}(\Lambda)\mathbf{Q}_{c}^{-1}\ee^{\ii\left[\left(\Lambda \chi+\Lambda^2\tau\right)\right]\pmb{\sigma}_3},
    \end{equation*}
    where $$\mathbf{\widetilde{J}}(\Lambda)={\rm diag}\left(\ee^{-\frac{4}{3}\ii\Lambda^{-1}}, \ee^{\frac{2}{3}\ii\Lambda^{-1}}, \ee^{\frac{2}{3}\ii\Lambda^{-1}} \right),$$
\item \textbf{Normalization}: $\mathbf{N}(\Lambda; \chi, \tau)\to \mathbb{I}$ as $\Lambda\to\infty.$
\end{itemize}
\end{rhp}
Then the potential function defined by $\mathbf{N}(\lambda; \chi, \tau)$ with the form
\begin{equation}\label{eq:new-potential}
\mathbf{q}(\chi, \tau):=2\lim\limits_{\Lambda\to\infty}\Lambda\mathbf{N}_{1,2}(\Lambda; \chi, \tau)
\end{equation}
also satisfies the coupled nonlinear Schr\"{o}dinger equation \eqref{eq:cnls} with the replacement $x\to \chi, t\to \tau$.

In what follows, we would like to analyze dynamics for the high order soliton with large order and infinite order by the aid of above RHP \ref{RHP0} and RHP \ref{RHP2} respectively. The main tools to analyze them are the Deift-Zhou nonlinear steepest phase method \cite{Zhou-Annals-1993}.

\section{Asymptotic behavior for the large order solitons}
\label{sec:large-order}

Last section, we have constructed the Riemann-Hilbert problem corresponding to the high order solitons with the aid of Darboux matrix. In this section, we utilize the RHP \ref{RHP0} to analyze the high order solitons with large order. Following the way \cite{Deniz-arXiv-2019}, we define

\begin{equation}\label{eq:xandXII}
X:=\frac{x}{N}, \qquad T:=\frac{t}{N},
\end{equation}
then the jump matrix in RHP \ref{RHP0} changes into
\begin{equation}
\ee^{-\ii\lambda N\left(X+\lambda T\right)\pmb{\sigma}_3}\mathbf{Q}_{c}\mathbf{J}_{f,d}(\lambda)
\mathbf{Q}^{-1}_{c}\ee^{\ii\lambda N\left(x+\lambda T\right)\pmb{\sigma}_3},
\end{equation}
where $$\mathbf{J}_{f,d}(\lambda):={\rm diag}\left(\left(\frac{\lambda-\lambda_1}{\lambda-\lambda_1^*}\right)^{\frac{2}{3}N}, \left(\frac{\lambda-\lambda_1^*}{\lambda-\lambda_1}\right)^{\frac{1}{3}N}, \left(\frac{\lambda-\lambda_1^*}{\lambda-\lambda_1}\right)^{\frac{1}{3}N}\right).$$

In {\bf lemma} \ref{lem:qc-decom}, we give a decomposition to $\mathbf{Q}_{c}$, in which the block matrix $\mathbf{Q}_{d}$ can carry out the upper-lower triangle decomposition readily. Moreover, we utilize the gauge transformation $\mathbf{Q}_H$ to redefine the sectional analytic matrix
\begin{equation}\label{eq:N_f}
\widetilde{\mathbf{N}}_{f}(\lambda; X, T):=\left\{\begin{aligned}
&\mathbf{Q}_{H}^{\dagger}\mathbf{M}^{[N]}(\lambda; NX, NT)\mathbf{Q}_{H}\ee^{-\ii\lambda N\left(X+\lambda T\right)\pmb{\sigma}_3}\mathbf{Q}_{d}\mathbf{C}_{d}\ee^{\ii\lambda N\left(X+\lambda T\right)\pmb{\sigma}_3}, &\quad \lambda\in D_0,\\
&\mathbf{Q}_{H}^{\dagger}\mathbf{M}^{[N]}(\lambda; NX, NT)\mathbf{Q}_{H}\mathbf{J}_{f, d}^{-1}(\lambda),&\quad \lambda\notin D_0,
\end{aligned}\right.
\end{equation}
where 
$\mathbf{C}_{d}:={\rm diag}\left(\frac{1}{\sqrt{|c|^2+1}}, \frac{1}{\sqrt{|c|^2+1}}, 1\right)$, which satisfies the following Riemann-Hilbert problem by choosing $c=1$:
\begin{rhp}\label{rhp:far-field}
\begin{itemize}
Let $(X, T)\in \mathbb{R}^2$, find a $3\times 3$ matrix $\widetilde{\mathbf{N}}_{f}(\lambda; X, T)$ satisfying the following conditions
\item {\bf Analyticity}: $\widetilde{\mathbf{N}}_{f}$ is analytic for $\lambda\in\mathbb{C}\setminus\partial D_0$, and it takes continuous boundary condition from the interior and the exterior of $\partial D_0$.
\item {\bf Jump Condition}: The boundary condition on the jump contour $\partial D_0$ are related with the following jump
\begin{equation}\label{eq:n-jump}
\begin{split}
\widetilde{\mathbf{N}}_{f,+}(\lambda; X, T)&
=\widetilde{\mathbf{N}}_{f,-}(\lambda; X, T)\ee^{-N\varphi(\lambda; X, T)\pmb{\sigma}_3}\widehat{\mathbf{Q}}_{d}^{-1}\ee^{N\varphi(\lambda; X, T)\pmb{\sigma}_3},
\end{split}
\end{equation}
where $\varphi(\lambda; X, T)=\ii \left(\lambda X+\lambda^2T\right)-\frac{1}{2}\log\left(\frac{\lambda-\lambda_1}{\lambda-\lambda_1^*}\right),\,\,\,\, \widehat{\mathbf{Q}}_d=\begin{bmatrix}\frac{\sqrt{2}}{2}&-\frac{\sqrt{2}}{2}&0\\\frac{\sqrt{2}}{2}&\frac{\sqrt{2}}{2}&0\\0&0&1
\end{bmatrix}.$
\item {\bf Normalization}: When $\lambda\to\infty$, we have $\widetilde{\mathbf{N}}_{f}(\lambda; X, T)\to\mathbb{I}$.
\end{itemize}
\end{rhp}
Then the potential $\mathbf{q}(X, T)$ can be recovered by
\begin{equation}
\mathbf{q}(X,T)=2\lim\limits_{\lambda\to\infty}\lambda\left(\mathbf{Q}_{H}\mathbf{\widetilde{N}}_{f}(\lambda; X, T)\mathbf{Q}_{H}^{-1}\right)_{12}.
\end{equation}
Now we proceed to analyze the asymptotics of high order solitons by the above RHP \ref{rhp:far-field}. We firstly need to discuss the sign structure of ${\rm Re}[\varphi(\lambda; X, T)]$ in Eq.\eqref{eq:n-jump}. To achieve this aim,  we utilize the critical points of $\varphi(\lambda; X, T)$ satisfy the following cubic equation with respect to $\lambda$:
\begin{multline}
2\lambda^3T{+}\left(X{-}4{\rm Re}(\lambda_1) T\right)\lambda^2{+}\left(2{\rm Re}(\lambda_1)^2T{+}2{\rm Im}(\lambda_1)^2T{-}2{\rm Re}(\lambda_1) X\right)\lambda \\{+}X{\rm Im}(\lambda_1)^2{+}X{\rm Re}(\lambda_1)^2{-}{\rm Im}(\lambda_1){=}0,
\end{multline}
whose discriminant is
\begin{equation}\label{eq:discrim}
\begin{split}
-4\Bigg[&8{\rm Re}(\lambda_1){\rm Im}(\lambda_1)\left(4{\rm Re}(\lambda_1)^2{\rm Im}(\lambda_1) X{+}4{\rm Im}(\lambda_1)^3 X{-}{\rm Re}(\lambda_1)^2{-}9{\rm Im}(\lambda_1)^2\right)T^3\\&+{\rm Im}(\lambda_1) X^3 \left({\rm Im}(\lambda_1) X-1\right)\\&+{\rm Im}(\lambda_1)\left(24{\rm Re}(\lambda_1)^2{\rm Im}(\lambda_1) X^2{+}8{\rm Im}(\lambda_1)^3X^2{-}12{\rm Re}(\lambda_1)^2X{-}36{\rm Im}(\lambda_1)^2X{+}27{\rm Im}(\lambda_1)\right)T^2\\&+2{\rm Re}(\lambda_1){\rm Im}(\lambda_1) X^2\left(4{\rm Im}(\lambda_1) X-3\right)T+16{\rm Im}(\lambda_1)^2\left({\rm Re}(\lambda_1)^2{+}{\rm Im}(\lambda_1)^2\right)^2T^4\Bigg].
\end{split}
\end{equation}
Then we will show the boundary lines of different asymptotic regions for the high order soliton.
\begin{figure}[!h]
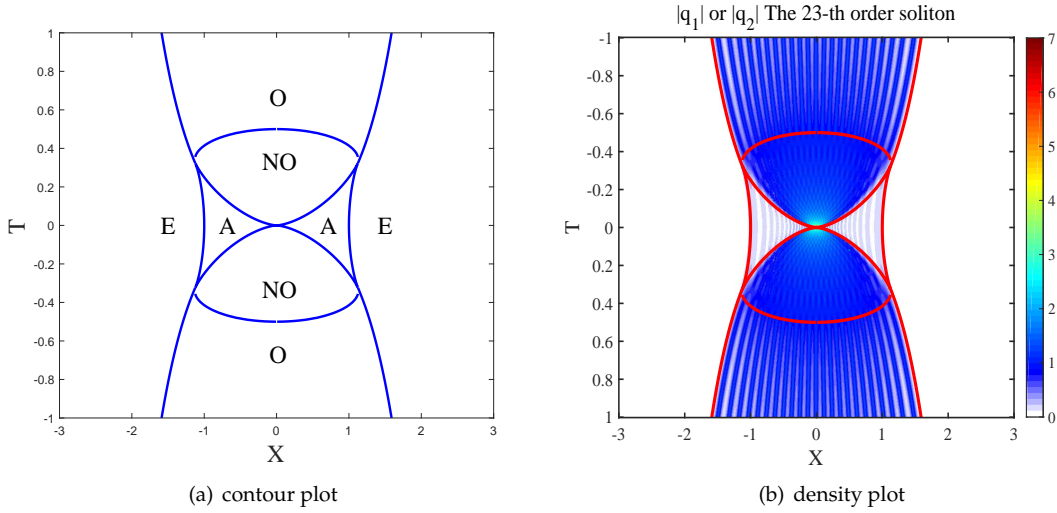

\centering
\subfigure[contour plot]{
\includegraphics[height=0.28\textheight,width=0.45\textwidth]{boundary.pdf}}
\subfigure[density plot]{\includegraphics[height=0.28\textheight,width=0.45\textwidth]{q1-density.pdf}}
\caption{The 23-th order soliton solution and the asymptotic region.}
\label{high-order-soliton-1}
\end{figure}

From the Fig.\ref{high-order-soliton-1}, we know there exist four different asymptotic regions when $N$ is large, which is directly related by the discriminant \eqref{eq:discrim}. Thus we give a detailed analysis about Eq.\eqref{eq:discrim}.

Case I: If \lm{Eq.\eqref{eq:discrim}$>0$}, then the three critical points are all real roots, set $\lambda_1<\lambda_2<\lambda_3$. Under this condition, we choose some special $X, T$ and give contour of ${\rm \Re}[\varphi(\lambda; X, T)]=0$, which is shown in Fig.\ref{contour-algebraic}. And the third critical point locates on the left of \lm{$a$}. As the variables $X$ and $T$ changes continuously until Eq.\eqref{eq:discrim}$=0$, there will appear a double roots, which is the boundary between the algebraic decay region to other different regions.

Case II: After the constraint Eq.\eqref{eq:discrim}$=0$, the next stage is Eq.\eqref{eq:discrim}$<0$. With a slight change to the variable $X$ and $T$, the contour of ${\rm \Re}(\varphi(\lambda; X, t))$ will change essentially. There will appear two different types of contours, one is the closed contour in Case I will split into two closed contours, locating in the upper and lower half plane respectively, which represent the exponential decay region, as shown in Fig. \ref{contour-exponent}. The other one is the original closed contour will not be closed any more, we can not deform the contour with the original contour any more, to analyze the asymptotics in the this region, we need construct a suit $g$-function, whose contour is shown in Fig.\ref{contour-non-os}.

Case III: If the contour is no longer closed, the shape of the contour is variable. Fix $X$ and variation of $T$, if $T$ is large, the critical points approximate to $\lambda=\lambda_1, \lambda=\lambda_1^*$ and $\lambda=0$. These three critical points lie on the imaginary axis, then the $g$ function constructed in Case II is not available any more, we must reconstruct a new $g$-function to match this region, whose contour is shown in Fig.\ref{contour-os}.

Then we give the boundary lines for these four regions. The boundary lines of the algebra-decay region is the discriminant Eq.\eqref{eq:discrim} equals to zero, which contains three contours, two of which are the boundaries between the algebra-decay region and the oscillatory region, the left one is the boundary between the algebra-decay region and the exponential-decay one. The second boundary line is separated by the exponential-decay region and the oscillatory one. In \cite{Deniz-arXiv-2019}, the authors give a detailed description about the boundary contour between different asymptotic regions. Especially, the boundary between the non-oscillatory and the oscillatory region is determined by a $g$ function, based on the rule of constructing $g$ function, when $\lambda$ is located in the non-oscillatory region, we set
\begin{equation}
\begin{split}
R(\lambda; X, T)&=\left(\left(\lambda-a_{n}\right)\left(\lambda-a_{n}^*\right)\right)^{1/2}, \\
g'(\lambda; X, T)&=\frac{R(\lambda; X, T)}{2\pi \ii}\int_{\Sigma}\frac{2\ii X+4\ii sT+\frac{1}{\lambda-\lambda_1^*}-\frac{1}{\lambda-\lambda_1}}{R_{+}(s; X, T)\left(s-\lambda\right)}ds,
\end{split}
\end{equation}
where the subscript $_+$ indicates the left side of the positive direction, the subscript $_n$ stands the non-oscillatory region. Thus we have
\begin{equation}\label{eq:g-function-1}
\varphi'(\lambda; X, T)-g'(\lambda; X, T)=R(\lambda; X, T)\left(2\ii T-\frac{1}{2R(\lambda_1^*; X, T)\left(\lambda_1^*-\lambda\right)}+\frac{1}{2R(\lambda_1; X, T)\left(\lambda_1-\lambda\right)}\right).
\end{equation}
Obviously, $\varphi'(\lambda)-g'(\lambda)$ has four roots.

When $\lambda$ is in the oscillatory region, we set
\begin{equation}\label{eq:G-function-1}
\begin{split}
\mathscr{R}(\lambda; X, T):&=\left(\left(\lambda-a_{o}\right)\left(\lambda-a_{o}^*\right)\left(\lambda-b_{o}\right)\left(\lambda-b_{o}^*\right)\right)^{1/2},\\
\varphi'(\lambda; X, T)-G'(\lambda; X, T)&=\mathscr{R}(\lambda; X, T)\left(\frac{1}{2\mathscr{R}(\lambda_1; X, T)
\left(\lambda_1-\lambda\right)}-\frac{1}{2\mathscr{R}(\lambda_1^*; X, T)\left(\lambda_1^*-\lambda\right)}\right),
\end{split}
\end{equation}
where the subscript $_o$ stands the oscillatory region, then $\varphi'(\lambda; X, T)-G'(\lambda; X, T)$ has five roots. Compared to the $g$-function and $G$-function in Eq.\eqref{eq:g-function-1} and Eq.\eqref{eq:G-function-1}, we find that Eq.\eqref{eq:G-function-1} is the generalization of Eq.\eqref{eq:g-function-1}, which implies that Eq.\eqref{eq:g-function-1} should have a double real root with respect to $\lambda$ on the boundary line. Thus the discriminant of the factor
$$2\ii T-\frac{1}{2R(\lambda_1^*)\left(\lambda_1^*-\lambda\right)}+\frac{1}{2R(\lambda_1)\left(\lambda_1-\lambda\right)}$$ with respect to $\lambda$ should equal to zero, namely
\begin{equation}
\begin{split}
&16R(\lambda_1^*; X, T)^2R(\lambda_1; X, T)^2\left(\lambda_1-\lambda_1^*\right)^2T^2+\left(R(\lambda_1; X, T)-R(\lambda_1^*; X, T)\right)^2\\&-8\ii\left(R(\lambda_1^*; X, T)+R(\lambda_1; X, T)\right)R(\lambda_1; X, T)R(\lambda_1^*; X, T)\left(\lambda_1-\lambda_1^*\right)T=0.
\end{split}
\end{equation}
Given all the boundary lines about these four asymptotic regions, we proceed to analyze the asymptotic behaviors in order.

\subsection{The Oscillatory region}
\label{sec:os}
We first consider the asymptotics in the oscillatory region, which has the most complicated dynamics among these four regions. For the further study, we give a RHP about the $G(\lambda)$ function for this region.
\begin{rhp}
(The G-function in the oscillatory region). In the oscillatory region, there must exist three unique contours $\Sigma_{\rm u}, \Sigma_{\rm d}$ and $\Sigma_{\rm mid}$, which are determined by the unique $G(\lambda)$-function completely, and $G(\lambda)$ satisfies the following condition
\begin{itemize}
\item {\bf Analyticity}: $G(\lambda; X, T)$ is analytic except on $\Sigma_{\rm u}\cup\Sigma_{\rm d}\cup\Sigma_{\rm mid}$, where it can arrive the boundary values.
\item{\bf Jump condition}: The boundary value given by $G(\lambda; X, T)$ are related by
\begin{equation}\label{RHP-G-O}
\begin{aligned}
G_{+}(\lambda; X, T)+G_-(\lambda; X, T)&=2\varphi(\lambda; X, T)+\Omega_{o}, \qquad &\lambda\in\Sigma_{\rm u},\\
G_{+}(\lambda; X, T)+G_-(\lambda; X, T)&=2\varphi(\lambda; X, T)-\Omega^*_{o}, \qquad &\lambda\in\Sigma_{\rm d},\\
G_{+}(\lambda; X, T)-G_-(\lambda; X, T)&=d, \qquad &\lambda\in\Sigma_{\rm mid}.\\
\end{aligned}
\end{equation}
\item {\bf Normalizations}: When $\lambda\to \infty$, $G(\lambda; X, T)\to O(\lambda^{-1})$.
\item {\bf Symmetry}: $G(\lambda; X, T)$ satisfies the symmetry condition
\begin{equation*}
G(\lambda; X, T)=-G(\lambda^*; X, T)^*.
\end{equation*}
\end{itemize}
where the subscript $_u$ and $_d$ indicates the up and down in the whole complex plane.
\end{rhp}
Generally speaking, to solve $G(\lambda; X, T)$-function, we must look for a suitable function with a branch cut in the $\Sigma_{\rm u}$ and $\Sigma_{\rm d}$. In the previous analysis, we construct a genus one algebraic curve $\mathscr{R}(\lambda; X, T)$ in Eq.\eqref{eq:G-function-1}, if the unknown branch cut $\Sigma_{\rm u}$ and $\Sigma_{\rm d}$ is determined by $a_o, a_o^*$ and $b_o, b_{o}^*$, then the new function $\frac{G'(\lambda; X, T)}{\mathscr{R}(\lambda; X, T)}$ will satisfy
\begin{equation}
\begin{aligned}
\frac{G'_+(\lambda; X, T)}{\mathscr{R}_+(\lambda; X, T)}-\frac{G'_-(\lambda; X, T)}{\mathscr{R}_-(\lambda; X, T)}&=2\frac{\varphi'(\lambda; X, T)}{\mathscr{R}(\lambda; X, T)},\qquad\lambda\in \Sigma_{\rm u}\cup\Sigma_{\rm d}.
\end{aligned}
\end{equation}
Therefore, $G'(\lambda; X, T)$ can be given by the Plemelj formula
\begin{equation}
G'(\lambda; X, T)=\frac{\mathscr{R}(\lambda; X, T)}{2\pi\ii}\int_{\Sigma_{\rm u}\cup \Sigma_{\rm d}}\frac{2\ii X+4\ii s T+\frac{1}{s-\lambda_1^*}-\frac{1}{s-\lambda_1}}{\mathscr{R}_{+}(s; X, T)(s-\lambda)}ds.
\end{equation}
Furthermore, with the generalized residue theorem, $G'(\lambda; X, T)$ can be given as
\begin{equation}
G'(\lambda; X, T)=\mathscr{R}(\lambda; X, T)\left(\mathop{\rm Res}\limits_{s=\lambda}+\mathop{\rm Res}\limits_{s=\lambda_1}+\mathop{\rm Res}\limits_{s=\lambda_1^*}+\mathop{\rm Res}\limits_{s=\infty}\right)\left(\frac{\ii X+2\ii sT+\frac{1}{2(s-\lambda_1^*)}-\frac{1}{2(s-\lambda_1)}}{\mathscr{R}(s; X, T)\left(s-\lambda\right)}\right).
\end{equation}
With a simple calculation, $G'(\lambda; X, T)-\varphi'(\lambda; X, T)$ equals to
\begin{equation}\label{eq:dGdvarphi}
G'(\lambda; X, T)-\varphi'(\lambda; X, T)=\frac{1}{2}\frac{\mathscr{R}(\lambda; X, T)}{\mathscr{R}(\lambda_1^*)(\lambda_1^*-\lambda)}-\frac{1}{2}\frac{\mathscr{R}(\lambda; X, T)}{\mathscr{R}(\lambda_1)(\lambda_1-\lambda)}.
\end{equation}
Obviously, there are five roots to $G'(\lambda; X, T)-\varphi'(\lambda; X, T)$, i.e. $$ \lambda=\frac{\mathscr{R}(\lambda_1^*; X, T)\lambda_1^*-\mathscr{R}(\lambda_1; X, T)\lambda_1}{\mathscr{R}(\lambda_1^*; X, T)-\mathscr{R}(\lambda_1; X, T)}:=c_o, \,\lambda=b_o,\, \lambda=b_o^*, \,\lambda=a_o,\, \lambda=a_o^*.$$  Based on the standard existence theory of the ordinary differential equations, the point $\lambda$ that is not the pole or the zero of $G'(\lambda; X, T)-\varphi'(\lambda; X, T)$ will lie on a unique trajectory. With the aid of local analysis for \eqref{eq:dGdvarphi}, there are three trajectories emanating from $a_o, a_o^*, b_o, b_o^*$, and there are four trajectories emanating from the point $c_o$.

\begin{figure}[!h]


\centering
\includegraphics[width=0.45\textwidth]{oscillatory-0.pdf}
\centering
\includegraphics[width=0.45\textwidth]{oscillatory.pdf}
\caption{The contour of ${\rm Re}(\varphi(\lambda; X, T)-G(\lambda; X, T))$ in the oscillatory region with the parameters $X=1, T=\frac{1}{2}.$ The left one gives the original contour about the $\widetilde{\mathbf{N}}_{f}(\lambda; X, T)$ and the sign of ${\rm Re}(\varphi(\lambda; X, T)-G(\lambda; X, T))$. The right one is the corresponding contour deformation.}
\label{contour-os}
\end{figure}

Next, we would calculate $\mathscr{R}(\lambda; X, T)$ with the properties of $G(\lambda; X, T)$ and trajectories.
For convenience, we set $\lambda_1=\ii$ and rewrite $\mathscr{R}(\lambda; X, T)$ to
\begin{equation}
\mathscr{R}(\lambda; X, T)=\sqrt{\lambda^4-l_1\lambda^3+l_2\lambda^2-l_3\lambda+l_4},
\end{equation}
where
\begin{equation}\label{eq:sub-l-a-b}
\begin{split}
l_1&=a_o+a_o^*+b_o+b_o^*, \quad l_2=a_ob_o+a_oa_o^*+a_ob_o^*+a_o^*b+b_ob_o^*+a_o^*b_o^*,\\
l_3&=a_oa_o^*\left(b_o+b_o^*\right)+b_ob_o^*\left(a_o+a_o^*\right), \qquad l_4=a_oa_o^*b_ob_o^*,
\end{split}
\end{equation}
are all functions with respect to $X$ and $T$. And the Jacobi determinant between the variables $(l_1,l_2,l_3,l_4)$ and $(a_o,a_o^*,b_o,b_o^*)$ is given by
\begin{equation}
J(a_o, a_o^*, b_o, b_o^*)=\frac{\partial (l_1, l_2, l_3, l_4)}{\partial (a_o, a_o^*, b_o, b_o^*)}=\left(a_o-a_o^*\right)\left(b_o-b_o^*\right)\left|a_o-b_o\right|^2\left|a_o-b_o^*\right|^2\neq0,
\end{equation}
which guarantee the above substitution \eqref{eq:sub-l-a-b} is valid by the inverse function theorem. When $\lambda\to\infty$, we have $G'(\lambda; X, T)=O(\lambda^{-2})$, which indicates the following three equations:
\begin{equation}
\begin{aligned}
&4\ii T+\frac{1}{\mathscr{R}(-\ii; X, T)}-\frac{1}{\mathscr{R}(\ii; X, T)}=0,\\
&\ii X+\frac{1}{4}\frac{l_1-2\ii}{\mathscr{R}(\ii; X, T)}-\frac{1}{4}\frac{l_1+2\ii}{\mathscr{R}(-\ii; X, T)}=0,\\
&\left(l_1^2-4l_2+4\ii l_1+8\right)\frac{1}{\mathscr{R}(\ii; X, T)}-\left(l_1^2-4\ii l_1-4l_2+8\right)\frac{1}{\mathscr{R}(-\ii; X, T)}=0.
\end{aligned}
\end{equation}
By solving the above equations, the parameters $l_2, l_3, l_4$ can be given as:
\begin{equation}
\begin{aligned}
l_2&=\frac{3}{4}l_1^2+\frac{X}{2T} l_1+2,\\
l_3&=l_1+\frac{4\left(l_1 T^2+XT\right)}{\left(4T^2+l_1^2T^2+2l_1TX+X^2\right)^2},\\
l_4&=\frac{3l_1^2}{4}+\frac{l_1 X}{2T}+1+\frac{-4T^2+l_1^2T^2+2l_1TX+X^2}{\left(4T^2+l_1^2T^2+2l_1TX+X^2\right)^2}.
\end{aligned}
\end{equation}
If $l_1$ is a real number, then $l_2, l_3, l_4$ are all real. Next, we should determine the last parameter $l_1$. It is clear that the real part of $G(\lambda; X, T)-\varphi(\lambda; X, T)$ is a function with respect to $l_1$. By choosing one special $X$ and $T$ in this region, we imposes that ${\rm \Re}\left(G(c_0)-\varphi(c_0)\right)=0$, which can be used to determine $l_1$ uniquely (the sign of the real part $G(\lambda)-\varphi(\lambda)$ in Fig.\ref{contour-os}). Actually, for every fixed $X$ and $T$, the function $G(\lambda)-\varphi(\lambda)$ is an elliptic integral, that is
\begin{equation}
\begin{aligned}
&G(\lambda; X, T){-}\varphi(\lambda; X, T)\\
&=\frac{1}{2}\frac{\mathscr{R}(-\ii; X, T)-\mathscr{R}(\ii; X, T)}{\mathscr{R}(\ii; X, T)\mathscr{R}(-\ii; X, T)}\int_{0}^{\lambda}\frac{s\mathscr{R}(s; X, T)} {s^2+1}ds\\
&+\frac{\ii}{2}\frac{\mathscr{R}(-\ii; X, T)+\mathscr{R}(\ii; X, T)}{\mathscr{R}(\ii; X, T)\mathscr{R}(-\ii; X, T)}\int_{0}^{\lambda}\frac{\mathscr{R}(s; X, T)} {s^2+1}ds-C_0
\end{aligned}
\end{equation}
where $C_0=G(0)-\varphi(0)$. With a complicated calculation, $G(\lambda){-}\varphi(\lambda)$ can be converted into the standard three types of elliptic integral. The explicit expression about this integral is very complicated, so we do not give it anymore. According to the contour in Fig.\ref{contour-os}, we can continue to deform it with the nonlinear steepest descent method. By the generalized residue theorem, we know that $\Re(G(a)-\varphi(a))=\Re(G(b)-\varphi(b))=0$ and $\Re(G(a^*)-\varphi(a^*))=\Re(G(b^*)-\varphi(b^*))=0$. Together with the local analysis, the signature chart can be constructed in the left panel of Fig.\ref{contour-os}. Before the deformation, we give a lemma to insure this definition about $G(\lambda; X, T)$ is reasonable.
\begin{lemma}
The two points $\lambda=\lambda_1$ and $\lambda=\lambda_1^*$ in the denominator of Eq.\eqref{eq:dGdvarphi} are in a closed contour $\Re(G(\lambda; X, T)-\varphi(\lambda; X, T))=0$  connecting the point $a$ and $b$ in the upper plane or $a^*$ and $b^*$ in lower half plane respectively, as shown in Fig.\ref{contour-os}.
\end{lemma}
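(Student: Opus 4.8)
The plan is to read $\Gamma:=\{\lambda:\Re(G(\lambda;X,T)-\varphi(\lambda;X,T))=0\}$ as the separatrix configuration of the harmonic function $h(\lambda):=\Re(G(\lambda;X,T)-\varphi(\lambda;X,T))$, and to show that the component of $\{h<0\}$ containing $\lambda_1$ is a bounded region whose boundary is a Jordan curve through $a_o$ and $b_o$ (the conjugate statement following by reflection). First I would record the symmetry: from the symmetry of $G$ in \eqref{RHP-G-O} and the elementary identity $\varphi(\lambda)=-[\varphi(\lambda^*)]^*$ one gets $(G-\varphi)(\lambda)=-[(G-\varphi)(\lambda^*)]^*$, hence $h(\lambda)=-h(\lambda^*)$. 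In particular $h\equiv 0$ on the real axis, so it suffices to analyze the upper half-plane and then reflect across $\mathbb{R}$ to produce the loop through $a_o^*,b_o^*$ enclosing $\lambda_1^*$. This already reduces the claim about both $\lambda_1$ and $\lambda_1^*$ to a single half-plane statement.

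Next I would catalogue the local structure of $\Gamma$. From \eqref{eq:dGdvarphi} the derivative $G'-\varphi'$ vanishes simply at $a_o,a_o^*,b_o,b_o^*,c_o$ and has simple poles only at $\lambda_1,\lambda_1^*$, so $h$ is harmonic away from $\lambda_1,\lambda_1^*$ and the band $\Sigma_{\rm u}$, and its critical points are exactly those five zeros; all of them lie on $\Gamma$, since we are given $\Re(G-\varphi)=0$ at $a_o,b_o$ (and conjugates) and the normalization that fixes $l_1$ imposes $\Re(G(c_o)-\varphi(c_o))=0$. Local expansion gives $G-\varphi-(G-\varphi)(a_o)\sim(\lambda-a_o)^{3/2}$ at each branch point, so three arcs of $\Gamma$ emanate from $a_o$ and from $b_o$, while $c_o$ is an ordinary saddle with four arcs; because $c_o=\cot(\arg\mathscr{R}(\lambda_1))\in\mathbb{R}$ (with $\lambda_1=\ii$), two of these run along $\mathbb{R}$ and the remaining two form a conjugate pair. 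At the logarithmic singularities the term $-\tfrac12\log\frac{\lambda-\lambda_1}{\lambda-\lambda_1^*}$ forces $h\to-\infty$ as $\lambda\to\lambda_1$ and $h\to+\infty$ as $\lambda\to\lambda_1^*$, so no arc of $\Gamma$ reaches $\lambda_1$ or $\lambda_1^*$ and the small equipotentials $\{h=-\varepsilon\}$ are Jordan curves encircling $\lambda_1$. Finally, $h\sim{\rm Im}(\lambda)\big(X+2T\,{\rm Re}(\lambda)\big)$ as $\lambda\to\infty$ records the asymptotic directions, showing that in the upper half-plane $\{h<0\}$ is unbounded only in the sector ${\rm Re}(\lambda)<-X/(2T)$.

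With this data I would run the enclosure skeleton. Let $U$ be the component of $\{h<0\}$ containing $\lambda_1$; since $h\equiv 0$ on $\mathbb{R}$ and $h(\lambda^*)=-h(\lambda)$, $U$ lies in the open upper half-plane. Letting $\varepsilon\downarrow 0$, the equipotentials $\{h=-\varepsilon\}$ expand and $\partial U\subset\Gamma$ is reached at the critical level $h=0$, so $\partial U$ must pass through saddle points; the only candidates in $\overline{\text{UHP}}$ are $a_o,b_o$ (interior) and $c_o$ (on $\mathbb{R}$). The claim is that the three-pronged separatrices at $a_o$ and $b_o$ link into a single closed loop bounding $U$, with $c_o$ excluded: the unique upper half-plane arc from $c_o$, together with the real-axis arcs, should separate the unbounded upper-left region of $\{h<0\}$ from the island around $\lambda_1$. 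Establishing this simultaneously shows that $U$ is bounded and that the band $\Sigma_{\rm u}$ joining $a_o$ to $b_o$ sits as an interior chord of $\partial U$, which is the content of the lemma and is what guarantees the poles $\lambda_1,\lambda_1^*$ in \eqref{eq:dGdvarphi} are correctly placed relative to the bands in the ensuing steepest-descent deformation.

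The hard part will be precisely this global topological identification: certifying that the six separatrices from $a_o$ and $b_o$ connect into one bounded Jordan curve enclosing $\lambda_1$, rather than into any of the other configurations consistent with the local ray counts, especially since $h$ jumps across $\Sigma_{\rm u}$ and so the equipotential argument must be carried out in $U\setminus\Sigma_{\rm u}$. I expect the cleanest route is to fix one reference point $(X,T)$ in the oscillatory region where the configuration of Fig.\ref{contour-os} is verified directly, and then propagate it by continuity: as $(X,T)$ ranges over the oscillatory region the five critical points $a_o,a_o^*,b_o,b_o^*,c_o$ vary continuously and remain pairwise distinct, since this region is exactly where $\mathscr{R}$ in \eqref{eq:G-function-1} stays a genuine genus-one curve, so no saddle collision or saddle-connection reconfiguration of $\Gamma$ can occur in the interior. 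Hence the topology of $\Gamma$, and in particular the enclosure of $\lambda_1$ by the loop through $a_o,b_o$, is constant throughout the region. Making this bifurcation-exclusion rigorous—ruling out an interior saddle connection—is the subtle step, and I would handle it by monitoring $\Re(G(c_o)-\varphi(c_o))$ and the imaginary periods already used to pin down $l_1$, whose non-vanishing variation in the interior prevents any change of connectivity.
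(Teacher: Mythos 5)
Your route---reading $\Gamma=\{\Re(G-\varphi)=0\}$ as a separatrix configuration and pinning down its topology by Morse theory plus continuation in $(X,T)$---breaks down at exactly the two places you flag as hard, and neither of your proposed fixes works. First, the step ``$\partial U$ must pass through saddle points'' is not a valid principle for a harmonic function with a logarithmic sink: for $h=\log|\lambda|$ the component of $\{h<0\}$ containing the singularity is the punctured unit disk, whose boundary is a smooth level curve containing no critical point at all. So nothing in your argument forces $\partial U$ to contain $a_o$ or $b_o$; the identification of $\partial U$ with the loop through $a_o,b_o$ is precisely the content of the lemma, and in your write-up it is assumed rather than derived. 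Second, your bifurcation-exclusion monitor is vacuous, and you in fact note the reason yourself: the paper determines the last free parameter $l_1$ by \emph{imposing} $\Re\left(G(c_o;X,T)-\varphi(c_o;X,T)\right)=0$, so this quantity is identically zero throughout the oscillatory region and has no ``non-vanishing variation'' to monitor. Since $a_o$, $b_o$ and $c_o$ all sit on the same zero level, a saddle connection among them can form or break without any quantity you track changing sign, and pairwise distinctness of the five critical points does not prevent such reconfigurations of $\Gamma$. Finally, the ``reference point verified directly'' is a numerical check, not a proof, so the continuation scheme has nothing rigorous to continue from.

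The idea you are missing is that the enclosure statement follows from the maximum principle alone, with no global classification of $\Gamma$ needed; this is the paper's entire proof. The function $\Re(G-\varphi)$ is harmonic off the jump contours of $G$, and its only singularities in the upper and lower half planes are the logarithmic points $\lambda_1$ and $\lambda_1^*$, where it tends to $-\infty$ and $+\infty$ respectively. If the closed zero-level contour through $a_o$ and $b_o$ failed to enclose $\lambda_1$, it would bound a domain on which $\Re(G-\varphi)$ is harmonic with vanishing boundary values (the band $\Sigma_{\rm u}$, whose endpoints $a_o,b_o$ lie on the contour, carries the value zero as well by continuity, so it does not obstruct applying the extremum principle on the slit domain); hence $\Re(G-\varphi)\equiv0$ there, and then everywhere by real-analytic continuation, contradicting the blow-up at $\lambda_1$. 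The same argument in the lower half plane encloses $\lambda_1^*$. Your preliminary observations---the symmetry $h(\lambda)=-h(\lambda^*)$ reducing everything to one half plane, the three arcs emanating from each of $a_o,b_o$, the four from the real saddle $c_o$, and the sign of the blow-up at $\lambda_1,\lambda_1^*$---are correct and consistent with the paper's signature chart in Fig.~\ref{contour-os}, but they are scaffolding around the wrong mechanism: $\lambda_1$ is trapped inside the loop by harmonicity and the extremum principle, not by sub-level-set surgery or a deformation argument in $(X,T)$.
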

\begin{proof}
From the definition of $G(\lambda; X, T)$, we can see $G(\lambda; X, T)$ is analytic when $\lambda\in\mathbb{C}\setminus\left(\Sigma_{\rm u}\cap\Sigma_{\rm d}\cap\Sigma_{\rm mid}\right)$, and $\varphi(\lambda; X, T)$ has the logarithm singularity at $\lambda=\lambda_1$ and $\lambda=\lambda_1^*$. 
We know that the real part of $G(\lambda;X,T)-\varphi(\lambda;X,T)$ is harmonic function.
So these two logarithm points $\lambda=\lambda_1$ and $\lambda=\lambda_1^*$ should be in the internal of closed contour, otherwise the harmonic function is zero by the extreme principle, which is impossible.
\end{proof}

Now, we begin to deform the contour in RHP \ref{rhp:far-field}. Define
\begin{equation}\label{eq:O-far-field}
\mathbf{O}_{f,o}(\lambda; X, T):=\left\{\begin{aligned}&\widetilde{\mathbf{N}}_{f}(\lambda; X, T)\ee^{-N\varphi(\lambda; X, T)\pmb{\sigma}_3}\widehat{\mathbf{Q}}_{d}^{-1}\ee^{N\varphi(\lambda; X, T)\pmb{\sigma}_3},\qquad &\lambda\in D_0\cap \left(D_{\rm u}\cup D_{\rm d}\right)^{c},\\
&\widetilde{\mathbf{N}}_{f}(\lambda; X, T)\qquad &{\rm otherwise}.\end{aligned}\right.
\end{equation}
where $D_{\rm u}=D_{\rm u_1}^{\rm i}\cup D_{\rm u_{2}}^{\rm i}\cup K_{\rm u}^{\rm i}$, $D_{\rm d}=D_{\rm d_1}^{\rm i}\cup D_{\rm d_{2}}^{\rm i}\cup K_{\rm d}^{\rm i}$.
Then the jump about $\mathbf{O}_{f,o}(\lambda; X, T)$ lies onto $\partial D_{\rm u}$ and $\partial D_{\rm d}$. With the traditional decomposition to the jump matrix, the right hand of $\partial D_{\rm u}$ and $\partial D_{\rm d}$ is good, but the left one is bad due to the sign of both sides of $\sum_{\rm u}$ and $\sum_{\rm d}$ is same. To deal with this problem, we should redefine a new matrix $\mathbf{P}_{f,o}(\lambda; X, T)$ together with the $G(\lambda; X, T)$-function,
\begin{equation}\label{P-matrix}
\mathbf{P}_{f,o}(\lambda; X, T):=\mathbf{O}_{f,o}(\lambda; X, T){\rm diag}\left(\ee^{-NG(\lambda; X, T)}, \ee^{NG(\lambda; X, T)}, 1 \right).
\end{equation}
Then the jump condition about the $\mathbf{P}_{f,o}(\lambda; X, T)$ is
\begin{equation}\label{eq:P-o-jump}
\begin{aligned}
\mathbf{P}_{f,o,+}&=\mathbf{P}_{f,o,-}\begin{bmatrix}\frac{\sqrt{2}}{2}\ee^{N\left(G_-(\lambda; X, T)-G_+(\lambda; X, T)\right)}&\frac{\sqrt{2}}{2}\ee^{N\left(-2\varphi(\lambda; X, T)+G_{+}(\lambda; X, T)+G_{-}(\lambda; X, T)\right)}&0\\
-\frac{\sqrt{2}}{2}\ee^{N\left(2\varphi(\lambda; X, T)-G_{+}(\lambda; X, T)-G_{-}(\lambda; X, T)\right)}&\frac{\sqrt{2}}{2}\ee^{N\left(G_+(\lambda; X, T)-G_-(\lambda; X, T)\right)}&0\\0&0&1
\end{bmatrix},\\&\lambda\in\Sigma_{\rm u}\cup\Sigma_{\rm d}\\
\mathbf{P}_{f,o,+}&=\mathbf{P}_{f,o,-}\begin{bmatrix}\frac{\sqrt{2}}{2}&\frac{\sqrt{2}}{2}\ee^{N\left(-2\varphi(\lambda; X, T)+2G_{-}(\lambda; X, T)\right)}&0\\
-\frac{\sqrt{2}}{2}\ee^{N\left(2\varphi(\lambda; X, T)-2G_{-}(\lambda; X, T)\right)}&\frac{\sqrt{2}}{2}&0\\0&0&1
\end{bmatrix},\lambda\in\Gamma_{\rm u}\cup\Gamma_{\rm d}\\
\mathbf{P}_{f,o,+}&=\mathbf{P}_{f,o,-}{\rm diag}\left(\ee^{N\left(G_-(\lambda; X, T)-G_{+}(\lambda; X, T)\right)}, \ee^{N\left(G_+(\lambda; X, T)-G_{-}(\lambda; X, T)\right)}, 1\right),\lambda\in\Sigma_{\rm mid}
\end{aligned}
\end{equation}
With the sign in Fig.\ref{contour-os}, we plan to study the asymptotic behavior with the nonlinear steepest descent method, before analyzing it, we give the decomposition about $\widehat{\mathbf{Q}}_{d}^{-1}$ in the following lemma:
\begin{lemma}
\begin{equation}
\begin{split}
\widehat{\mathbf{Q}}_d^{-1}&=\begin{bmatrix}\frac{\sqrt{2}}{2}&0&0\\
0&\sqrt{2}&0\\
0&0&1
\end{bmatrix}\begin{bmatrix}1&0&0\\
-\frac{1}{2}&1&0\\
0&0&1
\end{bmatrix}
\begin{bmatrix}
1&1&0\\0&1&0\\
0&0&1
\end{bmatrix}:=\mathbf{Q}_{L}^{[1]}\mathbf{Q}_{C}^{[1]}\mathbf{Q}_{R}^{[1]}\\
\widehat{\mathbf{Q}}_d^{-1}&=\begin{bmatrix}\sqrt{2}&0&0\\
0&\frac{\sqrt{2}}{2}&0\\
0&0&1
\end{bmatrix}\begin{bmatrix}1&\frac{1}{2}&0\\
0&1&0\\
0&0&1
\end{bmatrix}
\begin{bmatrix}
1&0&0\\-1&1&0\\
0&0&1
\end{bmatrix}:=\mathbf{Q}_{L}^{[2]}\mathbf{Q}_{C}^{[2]}\mathbf{Q}_{R}^{[2]}\\
\widehat{\mathbf{Q}}_{d}^{-1}&=\begin{bmatrix}1&-1&0\\
0&1&0\\
0&0&1
\end{bmatrix}\begin{bmatrix}0&\sqrt{2}&0\\
-\frac{\sqrt{2}}{2}&0&0\\
0&0&1
\end{bmatrix}\begin{bmatrix}
1&-1&0\\
0&1&0\\
0&0&1
\end{bmatrix}:=\mathbf{Q}_{L}^{[3]}\mathbf{Q}_{C}^{[3]}\mathbf{Q}_{R}^{[3]},\\
\widehat{\mathbf{Q}}_{d}^{-1}&=\begin{bmatrix}1&0&0\\
1&1&0\\
0&0&1
\end{bmatrix}\begin{bmatrix}0&\frac{\sqrt{2}}{2}&0\\
-\sqrt{2}&0&0\\
0&0&1
\end{bmatrix}\begin{bmatrix}
1&0&0\\
1&1&0\\
0&0&1
\end{bmatrix}:=\mathbf{Q}_{L}^{[4]}\mathbf{Q}_{C}^{[4]}\mathbf{Q}_{R}^{[4]},\\
\widehat{\mathbf{Q}}_{d}^{-1}&=\begin{bmatrix}1&1&0\\
0&1&0\\
0&0&1
\end{bmatrix}\begin{bmatrix}\sqrt{2}&0&0\\
0&\frac{\sqrt{2}}{2}&0\\
0&0&1
\end{bmatrix}\begin{bmatrix}
1&0&0\\
-1&1&0\\
0&0&1
\end{bmatrix}:=\mathbf{Q}_{L}^{[5]}\mathbf{Q}_{C}^{[5]}\mathbf{Q}_{R}^{[5]},\\
\widehat{\mathbf{Q}}_{d}^{-1}&=\begin{bmatrix}1&0&0\\
-1&1&0\\
0&0&1
\end{bmatrix}\begin{bmatrix}\frac{\sqrt{2}}{2}&0&0\\
0&\sqrt{2}&0\\
0&0&1
\end{bmatrix}\begin{bmatrix}
1&1&0\\
0&1&0\\
0&0&1
\end{bmatrix}:=\mathbf{Q}_{L}^{[6]}\mathbf{Q}_{C}^{[6]}\mathbf{Q}_{R}^{[6]},
\end{split}
\end{equation}
\end{lemma}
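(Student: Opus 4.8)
The plan is to reduce everything to a single $2\times2$ computation. Observe that $\widehat{\mathbf{Q}}_d={\rm diag}(\mathbf{W},1)$, where $\mathbf{W}=\begin{bmatrix}\frac{\sqrt2}{2}&-\frac{\sqrt2}{2}\\\frac{\sqrt2}{2}&\frac{\sqrt2}{2}\end{bmatrix}$ is a rotation, so that $\widehat{\mathbf{Q}}_d^{-1}={\rm diag}(\mathbf{W}^{-1},1)$ with $\mathbf{W}^{-1}=\begin{bmatrix}\frac{\sqrt2}{2}&\frac{\sqrt2}{2}\\-\frac{\sqrt2}{2}&\frac{\sqrt2}{2}\end{bmatrix}$. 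Every factor $\mathbf{Q}_L^{[k]},\mathbf{Q}_C^{[k]},\mathbf{Q}_R^{[k]}$ appearing in the statement is block diagonal of the same shape, acting as the identity on the third coordinate. Hence each of the six claimed identities is equivalent to a factorization of the single matrix $\mathbf{W}^{-1}$, and the proof amounts to verifying these six two-by-two identities together with the trivial third row and column.

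To organize the verification I would sort the six factorizations by type. Two of them, $[1]$ and $[2]$, write $\mathbf{W}^{-1}$ in a diagonal--lower--upper form and a diagonal--upper--lower form; these are obtained by ordinary Gaussian elimination, clearing the $(2,1)$ or $(1,2)$ entry and collecting the pivots into the diagonal factor $\mathbf{Q}_L^{[k]}$. Two others, $[5]$ and $[6]$, are the mirror arrangements upper--diagonal--lower and lower--diagonal--upper, produced the same way but placing the diagonal factor in the center. The remaining two, $[3]$ and $[4]$, route through an anti-diagonal central factor: here I would first strip off a single unipotent factor on one side and then recognize the remainder as an anti-diagonal matrix times a second unipotent factor, giving upper--antidiagonal--upper and lower--antidiagonal--lower forms. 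These last two are precisely the factorizations needed on a contour where the sign of ${\rm Re}(\varphi-G)$ agrees on both sides, as flagged in the discussion immediately preceding the lemma. In every case the check is a direct multiplication of the $2\times2$ blocks.

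There is no genuine analytic difficulty here; the content is bookkeeping, and each identity follows from multiplying three explicit constant matrices. Accordingly, the only thing that requires care is not the algebra but the cataloguing: producing exactly the six inequivalent arrangements (two diagonal-first, two diagonal-center, two anti-diagonal-center) so that, region by region in the subsequent steepest-descent deformation, one may select the factorization whose off-diagonal entries carry the exponentially small factors $\ee^{\pm N(\cdots)}$ into the correct half-plane. I therefore expect the main ``obstacle,'' such as it is, to be ensuring that this library of factorizations is complete enough to match every sign configuration of ${\rm Re}(\varphi-G)$ occurring along $\Sigma_{\rm u},\Sigma_{\rm d},\Gamma_{\rm u},\Gamma_{\rm d}$ and $\Sigma_{\rm mid}$, rather than the verification of any single identity.
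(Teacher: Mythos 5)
Your proposal is correct and is essentially the paper's own (implicit) argument: the paper states this lemma without proof because its content is precisely the direct multiplication of explicit constant block matrices, which your reduction to the $2\times 2$ rotation block $\mathbf{W}^{-1}$ and the six case-by-case checks carry out. Your cataloguing of the six factorizations (two with the diagonal factor on the left, two with it in the center, two with an anti-diagonal center) and the remark that this library must cover every sign configuration of ${\rm Re}(\varphi-G)$ along $\Sigma_{\rm u},\Sigma_{\rm d},\Gamma_{\rm u},\Gamma_{\rm d},\Sigma_{\rm mid}$ accurately reflect the role the lemma plays in the subsequent steepest-descent deformations.
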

By the above lemma, define a new matrix $\mathbf{Q}_{f,o}$ by
\begin{equation}\label{eq:Q-definition}
\begin{aligned}
\mathbf{Q}_{f,o}:&=\mathbf{P}_{f,o}\begin{bmatrix}1&-\ee^{NH(\lambda; X, T)}&0\\
0&1&0\\0&0&1
\end{bmatrix},\,\,\lambda\in D^{\rm i}_{\rm u_1},\,\,
\mathbf{Q}_{f,o}:=\mathbf{P}_{f,o}\begin{bmatrix}1&\ee^{NH(\lambda; X, T)}&0\\
0&1&0\\0&0&1
\end{bmatrix},\,\,\lambda\in D^{\rm o}_{\rm u},\\
\mathbf{Q}_{f,o}:&=\mathbf{P}_{f,o}\begin{bmatrix}1&0&0\\
\ee^{-NH(\lambda; X, T)}&1&0\\0&0&1
\end{bmatrix},\,\,\lambda\in D^{\rm i}_{\rm d_1},\,\,
\mathbf{Q}_{f,o}:=\mathbf{P}_{f,o}\begin{bmatrix}1&0&0\\
-\ee^{-NH(\lambda; X, T)}&1&0\\0&0&1
\end{bmatrix},\,\,\lambda\in D^{\rm o}_{\rm d},\\
\mathbf{Q}_{f,o}:&=\mathbf{P}_{f,o}\begin{bmatrix}1&\ee^{NH(\lambda; X, T)}&0\\
0&1&0\\0&0&1
\end{bmatrix},\,\,\lambda\in K^{\rm i}_{\rm u},\,\,
\mathbf{Q}_{f,o}:=\mathbf{P}_{f,o}\begin{bmatrix}1&0&0\\
\ee^{-NH(\lambda; X, T)}&1&0\\0&0&1
\end{bmatrix},\,\,\lambda\in K^{\rm o}_{\rm u},\\
\mathbf{Q}_{f,o}:&=\mathbf{P}_{f,o}\begin{bmatrix}1&0&0\\
-\ee^{-NH(\lambda; X, T)}&1&0\\0&0&1
\end{bmatrix},\,\,\lambda\in K^{\rm i}_{\rm d},\,\,
\mathbf{Q}_{f,o}:=\mathbf{P}_{f,o}\begin{bmatrix}1&-\ee^{NH(\lambda; X, T)}&0\\
0&1&0\\0&0&1
\end{bmatrix},\,\,\lambda\in K^{\rm o}_{\rm d},\\
\mathbf{Q}_{f,o}:&=\mathbf{P}_{f,o},\quad{\rm otherwise}
\end{aligned}
\end{equation}
where $H(\lambda; X, T)=2G(\lambda; X, T)-2\varphi(\lambda; X, T)$.
Then the jump matrices about the $\mathbf{Q}_{f,o}(\lambda; X, T)$ change into
\begin{equation}\label{eq:jump-Qf}
\mathbf{Q}_{f,o,+}{=}\mathbf{Q}_{f,o,-}\begin{bmatrix}1&-\ee^{NH(\lambda; X, T)}&0\\
0&1&0\\0&0&1
\end{bmatrix},\,\,\lambda\in\Sigma^{\rm o}_{\rm u},\,\,
\mathbf{Q}_{f,o,+}{=}\mathbf{Q}_{f,o,-}\begin{bmatrix}0&\sqrt{2}\ee^{N\Omega_{o}}&0\\
-\frac{\sqrt{2}}{2}\ee^{-N\Omega_{o}}&0&0\\
0&0&1
\end{bmatrix},\,\lambda\in\Sigma_{\rm u},
\end{equation}
and
\begin{equation}\label{eq:jump-Qf1}
\begin{aligned}
\mathbf{Q}_{f,o,+}&{=}\mathbf{Q}_{f,o,-}\begin{bmatrix}1&-\ee^{NH(\lambda; X, T)}&0\\
0&1&0\\0&0&1
\end{bmatrix},\,\,\lambda\in\Sigma_{\rm u}^{\rm i},\quad
\mathbf{Q}_{f,o,+}{=}\mathbf{Q}_{f,o,-}\begin{bmatrix}1&0&0\\
\ee^{-NH(\lambda; X, T)}&1&0\\0&0&1
\end{bmatrix},\,\,\lambda\in\Sigma_{\rm d}^{\rm o},\\
\mathbf{Q}_{f,o,+}&{=}\mathbf{Q}_{f,o,-}\begin{bmatrix}0&\frac{\sqrt{2}}{2}\ee^{N\Omega_o}&0\\
-\sqrt{2}\ee^{-N\Omega_o}&0&0\\
0&0&1
\end{bmatrix},\,\,\lambda\in\Sigma_{\rm d},\quad
\mathbf{Q}_{f,o,+}{=}\mathbf{Q}_{f,o,-}\begin{bmatrix}1&0&0\\
\ee^{-NH(\lambda; X, T)}&1&0\\0&0&1
\end{bmatrix},\,\,\lambda\in\Sigma_{\rm d}^{\rm i},\\
\mathbf{Q}_{f,o,+}&{=}\mathbf{Q}_{f,o,-}\begin{bmatrix}1&0&0\\
-\ee^{-NH(\lambda; X, T)}&1&0\\0&0&1
\end{bmatrix},\,\,\lambda\in \Gamma^{\rm o}_{\rm u},\quad
\mathbf{Q}_{f,o,+}{=}\mathbf{Q}_{f,o,-}{\rm diag}\left(\sqrt{2}, \frac{\sqrt{2}}{2}, 1\right),\quad\lambda\in \Gamma_{\rm u},\\
\mathbf{Q}_{f,o,+}&{=}\mathbf{Q}_{f,o,-}\begin{bmatrix}1&\ee^{NH(\lambda; X, T)}&0\\
0&1&0\\0&0&1
\end{bmatrix},\,\,\lambda\in \Gamma_{\rm u}^{\rm i},\quad
\mathbf{Q}_{f,o,+}{=}\mathbf{Q}_{f,o,-}\begin{bmatrix}1&\ee^{NH(\lambda; X, T)}&0\\
0&1&0\\0&0&1
\end{bmatrix},\,\,\lambda\in \Gamma_{\rm d}^{\rm o},\\
\mathbf{Q}_{f,o,+}&=\mathbf{Q}_{f,o,-}{\rm diag}\left(\frac{\sqrt{2}}{2}, \sqrt{2}, 1\right),\quad\lambda\in \Gamma_{\rm d},\quad
\mathbf{Q}_{f,o,+}{=}\mathbf{Q}_{f,o,-}\begin{bmatrix}1&0&0\\
-\ee^{-NH(\lambda; X, T)}&1&0\\0&0&1
\end{bmatrix},\,\,\lambda\in \Gamma_{\rm d}^{\rm i},\\
\mathbf{Q}_{f,o,+}&{=}\mathbf{Q}_{f,o,-}{\rm diag}\left(\ee^{-Nd}, \ee^{Nd}, 1\right),\,\,\lambda\in\Sigma_{\rm mid}.
\end{aligned}
\end{equation}
Observing the jump in Eq.\eqref{eq:jump-Qf} and \eqref{eq:jump-Qf1}, only when $\lambda\in\Sigma_{\rm u}, \Sigma_{\rm d}, \Gamma_{\rm u}, \Gamma_{\rm d}$ and $\Sigma_{\rm mid}$, the jump matrices are not approaching to identity as $N\to\infty$. Thus we can define the outer model problem:
\begin{rhp}\label{RHP-out-o}
(The outer problem in the oscillatory region) Find a $3\times 3$ matrix $\mathbf{Q}^{\rm out}_{f,o}(\lambda; X, T)$ satisfying the following condition
\begin{itemize}
\item
\textbf{Analyticity}: The function $\mathbf{Q}^{\rm out}_{f,o}$ is analytic in $\lambda\in\mathbb{C}\setminus\left(\Sigma_{u}\cup\Sigma_{d}\cup\Gamma_{u}\cup\Gamma_{d}\cup\Sigma_{\rm mid}\right)$, and it can achieve the continuous boundary condition values from the left and right side of every arc.
\item\textbf{Jump Condition}: The boundary values taken by $\mathbf{Q}_{f,o}^{\rm out}(\lambda; X, T)$ satisfy the jump conditions $\mathbf{Q}^{\rm out}_{f,o,+}=\mathbf{Q}^{\rm out}_{f,o,-}\mathbf{V}_{f,o}^{\rm out}$, where
    \begin{equation}
    \mathbf{V}_{f,o}^{\rm out}:=\left\{\begin{split}&\begin{bmatrix}0&\sqrt{2}\ee^{N\Omega_o}&0\\
-\frac{\sqrt{2}}{2}\ee^{-N\Omega_o}&0&0\\
0&0&1
\end{bmatrix},\qquad&\lambda\in\Sigma_{\rm u},\\
&\begin{bmatrix}0&\frac{\sqrt{2}}{2}\ee^{N\Omega_o}&0\\
-\sqrt{2}\ee^{-N\Omega_o}&0&0\\
0&0&1
\end{bmatrix},\qquad&\lambda\in\Sigma_{\rm d},\\
&{\rm diag}\left(\sqrt{2}, \frac{\sqrt{2}}{2}, 1\right),\qquad&\lambda\in \Gamma_{\rm u},\\
&{\rm diag}\left(\frac{\sqrt{2}}{2}, \sqrt{2}, 1\right),\qquad&\lambda\in \Gamma_{\rm d},\\
&{\rm diag}\left(\ee^{-Nd}, \ee^{Nd}, 1\right),\qquad&\lambda\in\Sigma_{\rm mid}.
\end{split}\right.
    \end{equation}
\item
\textbf{Normalization}: As $\lambda\to\infty$, $\mathbf{Q}_{f,o}^{\rm out}(\lambda; X, T)\to\mathbb{I}$.
\end{itemize}
\end{rhp}
To solve the RHP \ref{RHP-out-o}, we define a new function $F(\lambda)$ by
\begin{equation}
\begin{aligned}
F(\lambda):&=\frac{\mathscr{R}(\lambda; X, T)}{2\pi\ii}\Bigg(\int_{\Sigma_{\rm u}}\frac{-N\Omega_o-\log\left(\sqrt{2}\right)}{\mathscr{R}_{+}(s; X, T)\left(s-\lambda\right)}ds+\int_{\Sigma_{\rm d}}\frac{-N\Omega_o-\log\left(\frac{\sqrt{2}}{2}\right)}{\mathscr{R}_{+}(s; X, T)(s-\lambda)}ds\\+&\int_{\Gamma_{\rm u}}\frac{\log\left(\sqrt{2}\right)}{\mathscr{R}(s; X, T)(s-\lambda)}ds+\int_{\Gamma_{\rm d}}\frac{\log\left(\frac{\sqrt{2}}{2}\right)}{\mathscr{R}(s; X, T)(s-\lambda)}ds+\int_{\Sigma_{\rm mid}}\frac{-Nd}{\mathscr{R}(s; X, T)(s-\lambda)}ds
\Bigg)
\end{aligned}
\end{equation}
which satisfy the jump condition
\begin{equation}
\begin{aligned}
F_{+}(\lambda)+F_{-}(\lambda)&=-N\Omega_o-\log\left(\sqrt{2}\right),\qquad &\lambda\in\Sigma_{\rm u},\\
F_{+}(\lambda)+F_{-}(\lambda)&=-N\Omega_o-\log\left(\frac{\sqrt{2}}{2}\right),\qquad &\lambda\in\Sigma_{\rm d},\\
F_{+}(\lambda)-F_{-}(\lambda)&=\log\left(\sqrt{2}\right),\qquad &\lambda\in \Gamma_{\rm u},\\
F_{+}(\lambda)-F_{-}(\lambda)&=\log\left(\frac{\sqrt{2}}{2}\right),\qquad &\lambda\in \Gamma_{\rm d},\\
F_{+}(\lambda)-F_{-}(\lambda)&=-Nd,\qquad &\lambda\in\Sigma_{\rm mid}.
\end{aligned}
\end{equation}
It should be noted that the contour $\Gamma_{\rm u}$ and $\Gamma_{\rm d}$ are located on the right of branch cut $\Sigma_{\rm u}, \Sigma_{\rm d}$, so $\mathscr{R}(s; X, T)$ in the denominator should be chosen the negative sign.

Since the potential $\mathbf{q}(X,T)$ is related to the RHP when $\lambda\to\infty$, then we expand $F(\lambda)$ at $\lambda\to\infty$,
\begin{equation}
F(\lambda)=F_1\lambda+F_0+\mathcal{O}(\lambda^{-1}),
\end{equation}
where
\begin{equation}
\begin{aligned}
\label{eq:F1}
F_1:=-\frac{1}{2\pi\ii}\Bigg(&\int_{\Sigma_{\rm u}}\frac{-N\Omega_o-\log\left(\sqrt{2}\right)}{\mathscr{R}_{+}(s; X, T)}ds+\int_{\Sigma_{\rm d}}\frac{-N\Omega_o-\log\left(\frac{\sqrt{2}}{2}\right)}{\mathscr{R}_{+}(s; X, T)}ds\\+&\int_{\Gamma_{\rm u}}\frac{\log\left(\sqrt{2}\right)}{\mathscr{R}(s; X, T)}ds+\int_{\Gamma_{\rm d}}\frac{\log\left(\frac{\sqrt{2}}{2}\right)}{\mathscr{R}(s; X, T)}ds+\int_{\Sigma_{\rm mid}}\frac{-Nd}{\mathscr{R}(s; X, T)}ds
\Bigg),
\end{aligned}
\end{equation}
and
\begin{equation}
\begin{aligned}
\label{eq:F0}
F_0:=-\frac{l_1}{2}F_1-\frac{1}{2\pi\ii}\Bigg(&\int_{\Sigma_{\rm u}}\frac{-N\Omega_o-\log\left(\sqrt{2}\right)}{\mathscr{R}_{+}(s; X, T)}sds+\int_{\Sigma_{\rm d}}\frac{-N\Omega_o-\log\left(\frac{\sqrt{2}}{2}\right)}{\mathscr{R}_{+}(s; X, T)}sds\\+&\int_{\Gamma_{\rm u}}\frac{\log\left(\sqrt{2}\right)}{\mathscr{R}(s; X, T)}sds+\int_{\Gamma_{\rm d}}\frac{\log\left(\frac{\sqrt{2}}{2}\right)}{\mathscr{R}(s; X, T)}sds+\int_{\Sigma_{\rm mid}}\frac{-Nd}{\mathscr{R}(s; X, T)}sds
\Bigg).
\end{aligned}
\end{equation}
Based on the definition of $F(\lambda)$, we can redefine a new matrix
\begin{equation}\label{eq:Qout-o}
\mathbf{S}_{f,o}(\lambda; X, T):={\rm diag}\left(\ee^{F_0}, \ee^{-F_0}, 1\right)\mathbf{Q}_{f,o}^{\rm out}(\lambda; X, T){\rm diag}\left(\ee^{-F(\lambda)}, \ee^{F(\lambda)}, 1\right),
\end{equation}
whose jump matrix converts into a constant:
\begin{equation}\label{eq:jumpS}
\mathbf{S}_{f,o,+}(\lambda; X, T)=\mathbf{S}_{f,o,-}(\lambda; X, T)\begin{bmatrix}0&1&0\\-1&0&0\\
0&0&1
\end{bmatrix},\qquad\lambda\in\Sigma_{\rm u}\cup\Sigma_{\rm d},
\end{equation}
and when $\lambda\to\infty$, we have
\begin{equation}\label{eq:normal-S-o}
\mathbf{S}_{f, o}(\lambda; X, T){\rm diag}\left(\ee^{F_1\lambda}, \ee^{-F_1\lambda}, 1\right)=\mathbb{I}+\mathcal{O}(\lambda^{-1}).
\end{equation}
Obviously, $\mathscr{R}(\lambda)$ gives a genus one Riemann-surface $\Sigma$ with two sheets $\Sigma_1$, $ \pmb{\sigma}_2$ and a basis $\{\alpha, \beta\}$ cycles, which is shown in Fig. \ref{cycles}. The contour $\alpha$ is closed, anticlockwise in the first sheet. And the $\beta$ cycle starts from the right side of branch cut $[a,b]$ and arrives at the branch cut $[a^*, b^*]$, then enters into the second sheet and return to the right of branch cut $[a, b]$.
\begin{figure}[!h]
\centering
\includegraphics[width=0.45\textwidth]{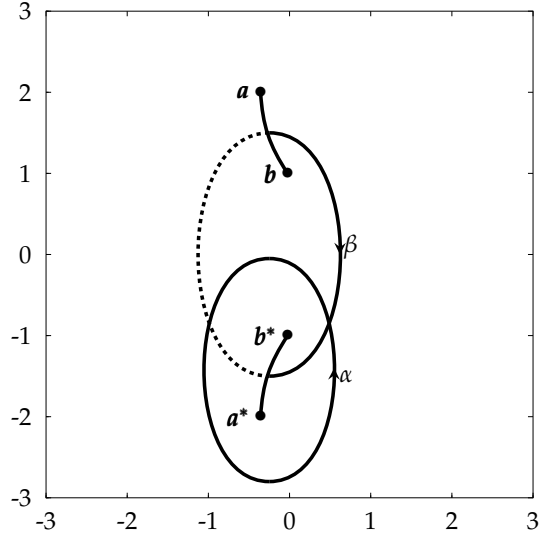}
\caption{The homology cycles $\alpha$ and $\beta$ for the branch cut $\mathscr{R}(\lambda)$, where solid line is the first sheet and the dotted line is the second sheet.}
\label{cycles}
\end{figure}

In order to solve Eq.\eqref{eq:jumpS} with the boundary condition Eq.\eqref{eq:normal-S-o}, we introduce the Abel map $A(\lambda)$ and the quantity $B$ as
\begin{equation}\label{eq:AandB}
A(\lambda):=\frac{2\pi\ii}{\oint_{\alpha}\frac{ds}{\mathscr{R}(s; X, T)}}\int_{a_o^*}^{\lambda}\frac{ds}{\mathscr{R}(s; X, T)},\qquad B:=\frac{2\pi\ii}{\oint_{\alpha}\frac{ds}{\mathscr{R}(s; X, T)}}\oint_{\beta}\frac{ds}{\mathscr{R}(s; X, T)},
\end{equation}
both of which have the normalization
\begin{equation}
A_+(b^*_o)=-\pi\ii, \qquad B=2\pi\ii\tau,
\end{equation}
where $\tau$ can be calculated as
\begin{equation}\label{eq:tau}
\tau=\frac{\ii K\left(1-m\right)}{K\left(m\right)},
\end{equation}
where $K(m)$ is the first kind of elliptic integral
\begin{equation}
K(m)=\int_{0}^{\frac{\pi}{2}}\frac{ds}{\sqrt{1-m\sin^2(s)}}.
\end{equation}
The Able mape $A(\lambda)$ has the following properties:
\begin{equation}\label{eq:Alambda}
\begin{aligned}
A_{+}(\lambda)+A_{-}(\lambda)&=-B \mod\Lambda_{o},\qquad \lambda\in\Sigma_{\rm u}\\
A_{+}(\lambda)-A_{-}(\lambda)&=-2\pi\ii\, \mod\Lambda_{o},\qquad \lambda\in\Sigma_{\rm mid},\\
A_{+}(\lambda)+A_{-}(\lambda)&=0\, \mod\Lambda_{o},\qquad\lambda\in\Sigma_{\rm {d}},
\end{aligned}
\end{equation}
where $\Lambda_{o}:=2\pi\ii j+B k,$ $ j,k\in\mathbb{Z}$.

To deal with the asymptotic behavior as $\lambda\to+\infty$, we define a new function $\Omega_1(s)$ as
\begin{equation}
d\Omega_1=\frac{s^2-\frac{1}{2}l_1s-l_0}{\mathscr{R}(s; X, T)}ds,
\end{equation}
where $l_0=\frac{\oint_{\alpha}\frac{s^2-\frac{1}{2}l_1s}{\mathscr{R}(s; X, T)}ds}{\oint_{\alpha}\frac{1}{\mathscr{R}(s; X, T)}ds}$, which can be normalized $\Omega_1(X, T)$ to
\begin{equation}
\oint_{\alpha}d\Omega_1=0.
\end{equation}

Following the theory in \cite{Calini-JNS-2005}, when $s\to\infty$, we set
\begin{equation}\label{eq:Omega1}
E=2\lim\limits_{s\to\infty}\left(s-\int_{a_o^*}^{s}\Omega_1\right),
\end{equation}
and define a new variable $V$ as
\begin{equation}\label{eq:V}
V:=\oint_{\beta}\Omega_1,
\end{equation}
which is useful to construct the Theta function solution. Spontaneously, we give some properties about the Theta function.
\begin{prop}\label{prop:theta}
\begin{equation}
\begin{aligned}
\Theta(\lambda)&\equiv\Theta(\lambda; B):=\sum\limits_{k\in\mathbb{Z}}\ee^{k\lambda+\frac{1}{2}Bk^2}=\theta_{3}\left(\frac{\lambda}{2\ii}, \ee^{\frac{B}{2}}\right),\\
\Theta(-\lambda)&=\Theta(\lambda),\qquad \Theta(\lambda+2\pi\ii)=\Theta(\lambda),\qquad \Theta(\lambda+B)=\ee^{-\frac{1}{2}B}\ee^{-\lambda}\Theta(\lambda)
\end{aligned}
\end{equation}
where $\theta_3(\lambda; \wp)$ is the third Jacobi theta function, defined as
\begin{equation}
\theta_3(\lambda; \wp)=\sum\limits_{k\in\mathbb{Z}}\ee^{2\ii k\lambda}\wp^{k^2}.
\end{equation}
The Jacobi Theta functions have the following shift formulas:
\begin{equation}
\begin{aligned}
\theta_{1}\left(u+\frac{1}{2}\right)&=\theta_2(u), \qquad \theta_{1}\left(u+\frac{\tau}{2}\right)=\ii\ee^{-\pi\ii\left(u+\tau/4\right)}\theta_4(u),\\
\theta_{2}\left(u+\frac{1}{2}\right)&=-\theta_1(u), \qquad \theta_{2}\left(u+\frac{\tau}{2}\right)=\ee^{-\pi\ii\left(u+\tau/4\right)}\theta_3(u),\\
\theta_{3}\left(u+\frac{1}{2}\right)&=\theta_4(u), \qquad \theta_{3}\left(u+\frac{\tau}{2}\right)=\ee^{-\pi\ii\left(u+\tau/4\right)}\theta_2(u),\\
\theta_{4}\left(u+\frac{1}{2}\right)&=\theta_3(u), \qquad \theta_{4}\left(u+\frac{\tau}{2}\right)=\ii\ee^{-\pi\ii\left(u+\tau/4\right)}\theta_1(u).
\end{aligned}
\end{equation}
The zeros of these theta functions are:
\begin{equation}
\begin{aligned}
\theta_1(u)&=0:\qquad u=n+m\tau,\\
\theta_2(u)&=0:\qquad u=n+\frac{1}{2}+m\tau,\\
\theta_3(u)&=0:\qquad u=n+\frac{1}{2}+\left(m+\frac{1}{2}\right)\tau,\\
\theta_4(u)&=0:\qquad u=n+\left(m+\frac{1}{2}\right)\tau,
\end{aligned}
\end{equation}
where $m, n\in\mathbb{Z}$.
\end{prop}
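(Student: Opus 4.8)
The plan is to read this proposition as a compendium of classical theta-function identities and to verify each one by direct manipulation of the defining series, organising the argument so that the three quasi-periodicity relations for $\Theta$ come first, the Jacobi shift table follows by the same technique, and the zero sets are extracted last as a consequence of the shifts. No deep machinery is needed; the only genuine labour is bookkeeping with exponential prefactors and with the two kinds of half-period translations.

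First I would establish the identification $\Theta(\lambda;B)=\theta_3(\lambda/(2\ii);\ee^{B/2})$ by substituting the argument $\lambda/(2\ii)$ and the nome $\wp=\ee^{B/2}$ into $\theta_3(\lambda;\wp)=\sum_k\ee^{2\ii k\lambda}\wp^{k^2}$, since then $\ee^{2\ii k\cdot\lambda/(2\ii)}(\ee^{B/2})^{k^2}=\ee^{k\lambda+\frac12 Bk^2}$ termwise. The reflection symmetry $\Theta(-\lambda)=\Theta(\lambda)$ is immediate from the index substitution $k\mapsto-k$, because the quadratic exponent $\frac12 Bk^2$ is even; and $\Theta(\lambda+2\pi\ii)=\Theta(\lambda)$ follows from $\ee^{2\pi\ii k}=1$ for $k\in\mathbb{Z}$. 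The only relation requiring computation is the $B$-shift: writing $k(\lambda+B)+\frac12 Bk^2=k\lambda+\frac12 B(k+1)^2-\frac12 B$ and re-indexing $k\mapsto k-1$ produces exactly the prefactor $\ee^{-\frac12 B}\ee^{-\lambda}$, giving $\Theta(\lambda+B)=\ee^{-\frac12 B}\ee^{-\lambda}\Theta(\lambda)$.

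Next I would treat the twelve shift formulas for $\theta_1,\dots,\theta_4$, here understood in the standard period-normalised convention in which the argument $u$ has unit period and quasi-period $\tau$, with nome $\wp=\ee^{\ii\pi\tau}$ (so that $\theta_j(u)$ corresponds to rescaling $\lambda=\pi u$ in the series convention displayed above). These are the classical Jacobi relations: translation by $\tfrac12$ permutes the four theta functions up to signs, while translation by $\tfrac{\tau}{2}$ permutes them and introduces the Gaussian factor $\ee^{-\pi\ii(u+\tau/4)}$ (with an extra $\ii$ in two cases). I would verify one representative entry of each type directly from the series, for instance $\theta_3(u+\tfrac12)=\theta_4(u)$ from $\ee^{2\pi\ii k(u+1/2)}=(-1)^k\ee^{2\pi\ii ku}$, and $\theta_3(u+\tfrac{\tau}{2})=\ee^{-\pi\ii(u+\tau/4)}\theta_2(u)$ by the complete-the-square/index-shift argument already used for the $B$-shift, and then note that the remaining entries follow identically, citing a standard reference for the full table.

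Finally, the zero sets follow from the shifts together with a counting argument. I would first locate a zero of $\Theta$ (equivalently $\theta_3$) by combining evenness with quasi-periodicity: for $\lambda_0=\pi\ii+\tfrac12 B$ one has $\Theta(\lambda_0)=\Theta(-\lambda_0)$, and applying the $2\pi\ii$-periodicity and the $B$-shift to $-\lambda_0$ yields $\Theta(-\lambda_0)=\ee^{\pi\ii}\Theta(\lambda_0)=-\Theta(\lambda_0)$, forcing $\Theta(\lambda_0)=0$; the argument principle on a fundamental period parallelogram then shows there is exactly one simple zero per cell, so no spurious zeros arise. Translated to the normalised variable this gives the lattice $u=n+\tfrac12+(m+\tfrac12)\tau$ for $\theta_3$, and the zero sets of $\theta_1,\theta_2,\theta_4$ are read off by applying the half-period shift formulas, which displace this lattice by $\tfrac12$, $\tfrac{\tau}{2}$, or $\tfrac12+\tfrac{\tau}{2}$. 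I expect the main point demanding care to be purely clerical: keeping the convention between the two normalisations consistent and tracking the exponential prefactors so that the period-parallelogram count yields precisely one simple zero.
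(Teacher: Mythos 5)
Your proposal is correct, and it is worth noting that the paper itself supplies no proof of Proposition \ref{prop:theta} at all: the statement is presented as a compendium of classical Jacobi theta-function facts, implicitly deferred to standard references (the bibliography contains Armitage--Eberlein and Whittaker--Watson, which are invoked elsewhere for exactly this material). Your series-manipulation argument is the standard textbook proof and fills that gap cleanly: the identification $\Theta(\lambda;B)=\theta_3(\lambda/(2\ii);\ee^{B/2})$ and the three quasi-periodicity relations follow termwise exactly as you say (the $B$-shift by completing the square and re-indexing $k\mapsto k-1$); your verification of representative shift entries plus citation for the full table is appropriate; and your location of the zero of $\Theta$ at $\lambda_0=\pi\ii+\tfrac12 B$ by playing evenness against the two quasi-periods, followed by the argument-principle count of one simple zero per period parallelogram, is the classical route to the zero lattices. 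Two points you handled that deserve emphasis, since the paper glosses over both: first, the paper's displayed definition $\theta_3(\lambda;\wp)=\sum_k\ee^{2\ii k\lambda}\wp^{k^2}$ has period $\pi$ in $\lambda$, whereas the shift table is written in the unit-period normalisation $u$ with nome $\ee^{\ii\pi\tau}$; your explicit rescaling $\lambda=\pi u$ is needed to make the proposition internally consistent. Second, the argument-principle step is what rules out spurious zeros and so is genuinely necessary for the ``zeros are exactly this lattice'' claim, not merely for the inclusion one way; omitting it would leave the zero sets only half-proved.
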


Next we redefine a new matrix $\mathscr{T}(\lambda; X, T)$ as
\begin{equation}
\mathscr{T}(\lambda; X, T)=\begin{bmatrix}\frac{\Theta\left(A(\lambda)+A(Q)+\ii\pi+\frac{B}{2}-F_1V\right)}{\Theta\left(A(\lambda)+A(Q)+\ii\pi+\frac{B}{2}\right)}\ee^{-F_1\int_{a^*}^{\lambda}\Omega_1ds}&\frac{\Theta\left(A(\lambda)-A(Q)-\ii\pi-\frac{B}{2}+F_1V\right)}{\Theta\left(A(\lambda)-A(Q)-\ii\pi-\frac{B}{2}\right)}\ee^{F_1\int_{a^*}^{\lambda}\Omega_1ds}&0\\
\frac{\Theta\left(A(\lambda)-A(Q)-\ii\pi-\frac{B}{2}-F_1V\right)}{\Theta\left(A(\lambda)-A(Q)-\ii\pi-\frac{B}{2}\right)}\ee^{-F_1\int_{a^*}^{\lambda}\Omega_1ds}&\frac{\Theta\left(A(\lambda)+A(Q)+\ii\pi+\frac{B}{2}+F_1V\right)}{\Theta\left(A(\lambda)+A(Q)+\ii\pi+\frac{B}{2}\right)}\ee^{F_1\int_{a^*}^{\lambda}\Omega_1ds}&0\\
0&0&1
\end{bmatrix},
\end{equation}
where $Q$ will be determined in the following. With the properties of $\Theta$ function and $A(\lambda; X, T)$, the function $\mathscr{T}(\lambda; X, T)$ satisfies
\begin{equation}
\mathscr{T}_{+}(\lambda; X, T)=\mathscr{T}_{-}(\lambda; X, T)\begin{bmatrix}0&1&0\\
1&0&0\\
0&0&1
\end{bmatrix}.
\end{equation}
Moreover, $\mathbf{S}_{f, o}(\lambda; X, T)$ can be given with the aid of the elements of $\mathscr{T}(\lambda; X, T)$. Suppose
\begin{equation}
\mathbf{S}_{f,o}(\lambda; X, T)={\rm diag}\left(s_{11}(\lambda; X, T), s_{22}(\lambda; X, T), 1\right)\begin{bmatrix}\frac{p(\lambda)+p^{-1}(\lambda)}{2}\left[\mathscr{T}(\lambda)\right]_{11}&\frac{p(\lambda)-p^{-1}(\lambda)}{2}\ii
\left[\mathscr{T}(\lambda)\right]_{12}&0\\
\frac{p^{-1}(\lambda)-p(\lambda)}{2}\ii
\left[\mathscr{T}(\lambda)\right]_{21}&\frac{p(\lambda)+p^{-1}(\lambda)}{2}
\left[\mathscr{T}(\lambda)\right]_{22}&0\\0&0&1
\end{bmatrix},
\end{equation}
where $p(\lambda):=\left(\frac{\left(\lambda-b_o\right)\left(\lambda-a_o^*\right)}{\left(\lambda-a_o\right)\left(\lambda-b_o^*\right)}\right)^{1/4}$. For $\lambda\in\Sigma_{\rm u}\cup\Sigma_{\rm d}$, we have $p_{+}(\lambda)=-\ii p_{-}(\lambda)$. With a simple calculation, we know the function $p^{-1}(\lambda)-p(\lambda)$ has a simple zero at
$$\lambda_0=\frac{a_ob_o^*-a_o^*b_o}{a_o-a_o^*-b_o+b^*_o},$$ and $[\mathscr{T}(\lambda)]_{12}$ has the singularity at $\lambda=Q$.  Then $Q$ can be unique determined by the zero of $p(\lambda)-p^{-1}(\lambda)$, i.e. $Q=\lambda_0$. Furthermore, the unknown parameters $s_{11}(\lambda; X, T), s_{22}(\lambda; X, T)$ can be determined from the normalization condition \eqref{eq:normal-S-o}:
\begin{equation}
\begin{aligned}
s_{11}(\lambda; X, T)&=\frac{\Theta\left(A(\infty)+A(Q)+\ii\pi+\frac{B}{2}\right)}{\Theta\left(A(\infty)+A(Q)+\ii\pi+\frac{B}{2}-F_1V\right)}\ee^{-\frac{F_1E}{2}},\\
s_{22}(\lambda; X, T)&=\frac{\Theta\left(A(\infty)+A(Q)+\ii\pi+\frac{B}{2}\right)}{\Theta\left(A(\infty)+A(Q)+\ii\pi+\frac{B}{2}+F_1V\right)}\ee^{\frac{F_1E}{2}}.
\end{aligned}
\end{equation}
Based on this solution, we get the outer parametrix matrix $\mathbf{Q}^{\rm out}_{f, o}$ through Eq.\eqref{eq:Qout-o}. Due to the matrix has singularities at $\lambda=a_o, b_o, a_o^*, b_o^*$, we should define the local parametrix matrices $\mathbf{Q}^{a_o}_{f,o}(\lambda; X, T), \mathbf{Q}^{b_o}_{f,o}(\lambda; X, T)$ and $ \mathbf{Q}^{a_o^*}_{f,o}(\lambda; X, T), \mathbf{Q}^{b_o^*}_{f,o}(\lambda; X, T)$ at the the disk of $a_o, b_o, a_o^*$ and $b_o^*$ respectively. Each of these local parametrix can be constructed by the Airy function. Then the global parametrices can be written as
\begin{equation}
\dot{\mathbf{Q}}_{f, o}(\lambda; X, T):=\left\{\begin{split}&\mathbf{Q}^{a_o}_{f,o}(\lambda; X, T), \quad \lambda\in\mathbb{D}^{a_o},\\
&\mathbf{Q}^{b_o}_{f,o}(\lambda; X, T), \quad \lambda\in\mathbb{D}^{b_o},\\
&\mathbf{Q}^{a_o^*}_{f,o}(\lambda; X, T), \quad \lambda\in\mathbb{D}^{a_o^*},\\
&\mathbf{Q}^{b_o^*}_{f,o}(\lambda; X, T), \quad \lambda\in\mathbb{D}^{b_o^*},\\
&\mathbf{Q}^{\rm out}_{f,o}(\lambda; X, T), \quad {\rm otherwise},\\
\end{split}\right.
\end{equation}
where the disks $\mathbb{D}^{a_o}$, $\mathbb{D}^{b_o}$, $\mathbb{D}^{a_o^*}$ and $\mathbb{D}^{b_o^*}$ are centered at $a_0$, $b_o$, $a_o^*$ and $b_o^*$ respectively.
\ys{Then the error between $\dot{\mathbf{Q}}_{f, o}(\lambda; X, T)$ and $\mathbf{Q}_{f, o}(\lambda; X, T)$ can be written as
\begin{equation}
\mathbf{Q}_{f, o}(\lambda; X, T)=\mathbf{Q}_{f,o}^{\rm err}(\lambda; X, T)\dot{\mathbf{Q}}_{f, o}(\lambda; X, T)
\end{equation}
Similar to the error analysis in the Sec. \ref{sec:error-analysis}, we can estimate the error when $N\to\infty$. Based on the reference \cite{Robert-CPAM-2007} in section 4.7 to construct the parametrices, when $N\to\infty$, we have $|\mathbf{Q}^{\rm err}_{f,o}(\lambda; X, T)|=\mathcal{O}(N^{-1})$.}

Thus the potential $\mathbf{q}(X, T)$ can be recovered as
\begin{multline}\label{eq:q-last}
q_i(X, T)=\lim\limits_{\lambda\to\infty}c_{i}^*\lambda\ii s_{11}\left(p(\lambda)-p^{-1}(\lambda)\right)\mathscr{T}(\lambda)_{12}\ee^{-2F(\lambda)-NG}\left(\mathbb{I}+O(N^{-1})\right)\\
=c_{i}^*{\rm Im}(b_o-a_o)\frac{\Theta\left(A(\infty)+A(Q)+\ii\pi+\frac{B}{2}\right)}{\Theta\left(A(\infty)+A(Q)+\ii\pi+\frac{B}{2}-F_1V\right)}
\frac{\Theta\left(A(\infty)-A(Q)-\ii\pi-\frac{B}{2}+F_1V\right)}{\Theta\left(A(\infty)-A(Q)-\ii\pi-\frac{B}{2}\right)}\ee^{-F_1E-2F_0}\\
+O(N^{-1}), \qquad (i=1,2).
\end{multline}
Following the method in\cite{Biondini-CPAM-2017}, we can simplify the modulus of $\mathbf{q}(X, T)$ further. Consider a new function $f(k)$
\begin{equation}
f(k):=1-\frac{\left(k-b_o\right)\left(k-a_o^*\right)}{\mathscr{R}(k)}=p(k)\left(p^{-1}(k)-p(k)\right)
\end{equation}
as a function on the Riemann surface sheet one, such that $\mathscr{R}(\lambda)\to\lambda^2$ as $\lambda\to\infty$. Obviously, $f(k)$ has the singularities at $k=a_o$ and $k=b_o^*$, and has the zeros at $k=\lambda_0$ and $k=\infty$. Thus $f(k)$ is a meromorphic function on the Riemann surface, and its divisor $(f)$ is
\begin{equation}
(f)=\lambda_0+\infty-a_o-b_o^*.
\end{equation}
By Abel theorem \cite{Its-book-1994}, we have $A((f))=0$, which is equivalent to
\begin{equation}\label{eq:AQ}
A(Q)=A(a_o)+A(b_o^*)-A(\infty).
\end{equation}
Substituting $A(Q)$ into the solution $\mathbf{q}(X, T)$ in Eq.\eqref{eq:q-last}, by a simple calculation, we have
\begin{equation}
\begin{aligned}
q_i(X, T)&=c_{i}^*{\rm Im}(b_o-a_o)\frac{\Theta\left(2A(\infty)-Nd\tau\right)\Theta(0)}{\Theta(2A(\infty)\Theta(Nd\tau))}\ee^{-F_1E-2F_0}\left(\mathbb{I}+\mathcal{O}(N^{-1})\right), \quad (i=1,2).
\end{aligned}
\end{equation}
With the aid of the Proposition \ref{prop:theta}, $\mathbf{q}(X, T)$ can be reduced into
\begin{equation}
\begin{split}
q_i(X, T)&=c_{i}^*{\rm Im}(b_o-a_o)\frac{\theta_2\left(\pi A-\frac{Nd\tau}{2\ii}\right)\theta_3(0)}{\theta_2\left(\pi A\right)\theta_3\left(\frac{Nd\tau}{2\ii}\right)}\ee^{-\frac{Nd\tau}{2}-F_1E-2F_0}\left(\mathbb{I}+\mathcal{O}(N^{-1})\right), (i=1,2),
\end{split}
\end{equation}
where
\begin{equation}\label{eq:A-NO}
A:=\frac{1}{\oint_{\alpha}\frac{1}{\mathscr{R}(s)}ds}\left(\int_{a^*}^{\infty}\frac{1}{\mathscr{R}(s)}ds+\int_{a}^{\infty}\frac{1}{\mathscr{R}(s)}ds\right)
\end{equation}
is a real constant.

Furthermore, $|q_i(X, T)|^2$ can be rewritten as
\begin{equation}\label{eq:q-last-1}
\begin{aligned}
|q_i(X, T)|^2&=|c_i|^2{\rm Im}(b_o-a_o)^2\frac{\theta_2\left(\pi A-\frac{Nd\tau}{2\ii}\right)\theta_2\left(\pi A+\frac{Nd\tau}{2\ii}\right)\theta_3^2(0)}{\theta_2^2\left(\pi A\right)\theta_3^2\left(\frac{Nd\tau}{2\ii}\right)}\ee^{-Nd\tau-|F_1E|^2}\left(\mathbb{I}+\mathcal{O}(N^{-1})\right)\\
&=|c_i|^2{\rm Im}(b_o-a_o)^2\left(1-\frac{
\theta_4^2(\pi A)}{\theta_2^2(\pi A)}\frac{\theta_1^2\left(\frac{Nd\tau}{2\ii}\right)}{\theta_3^2\left(\frac{Nd\tau}{2\ii}\right)}\right)\ee^{-Nd\tau-|F_1E|^2}\left(\mathbb{I}+\mathcal{O}(N^{-1})\right)\\
&=|c_i|^2{\rm Im}(b_o-a_o)^2\left(1-\frac{
\theta_4^2(\pi A)}{\theta_2^2(\pi A)}\frac{\theta_2^2(0)\theta_4^2(0)}{\theta_3^4(0)}{\rm sd}^2(u,m)\right)\ee^{-Nd\tau-|F_1E|^2}\left(\mathbb{I}+\mathcal{O}(N^{-1})\right),
\end{aligned}
\end{equation}
where $u:=\frac{Nd\tau}{2\ii}\theta_3^2(0)$, $(i=1,2)$.

Next, we should reduce the formula $\frac{
\theta_4^2(\pi A)}{\theta_2^2(\pi A)}$ such that the leading order term of $|q_1(X, T)|^2, |q_{2}(X, T)|^2$ can be represented into the Jacobi $\cn$ function.
\begin{lemma}
\begin{equation}
\frac{
\theta_4^2(\pi A)}{\theta_2^2(\pi A)}=-4\frac{\theta_2^4(0)}{\theta_2^2(0)}\frac{\frac{a_o^*-b_o^*}{a_o-b_o}}{\left(\frac{a_o^*-b_o^*}{a_o-b_o}-1\right)^2}.
\end{equation}
\begin{proof}
With the Proposition \ref{prop:theta} and the relation between $A$ and $A(\infty)$: $\pi A=-\ii A(\infty)+\frac{\pi\tau}{2}$, we have
\begin{equation}\label{eq:Ainfty}
\frac{\theta_4^2(\pi A)}{\theta_{2}^2(\pi A)}=-\frac{\theta_1^2\left(-\ii A(\infty)\right)}{\theta_3^2\left(-\ii A(\infty)\right)}=\frac{\theta_1^2\left(2\pi \tilde{A}\right)}{\theta_3^2\left(2\pi\tilde{A}\right)},
\end{equation}
where $\tilde{A}:=\frac{1}{\oint_{\alpha}\frac{1}{\mathscr{R}(s)}ds}\int_{a^*_o}^{\infty}\frac{1}{\mathscr{R}(s; X, T)}ds$.
Based on the formulas between $\theta(2A)$ and $\theta(A)$ in \cite{Armitage-book-2006}, Eq.\eqref{eq:Ainfty} turns into
\begin{equation}
\frac{\theta_4^2(\pi A)}{\theta_{2}^2(\pi A)}=-4\frac{\theta_4^2(0)}{\theta_2^2(0)}\left(\frac{\varrho}{\varrho^2-1}\right)^2,\,\,\,\, \varrho:=\frac{\theta_3(\pi\tilde{A})\theta_4(\pi\tilde{A})}{\theta_1(\pi\tilde{A})\theta_2(\pi\tilde{A})}.
\end{equation}
Then we just need to determine the constant $\varrho.$ To get it, we construct a new function $\varrho(k)$ on Riemann surface $\mathcal{S}_1$ by
\begin{equation}
\varrho(k):=\frac{\theta_3\left(\frac{\pi}{\oint_{\alpha}\frac{1}{\mathscr{R}(s)}ds}\int_{a_o^*}^{k}\frac{1}{\mathscr{R}(s)}ds\right)\theta_4\left(\frac{\pi}{\oint_{\alpha}\frac{1}{\mathscr{R}(s)}ds}\int_{a_o^*}^{k}\frac{1}{\mathscr{R}(s)}ds\right)}
{\theta_1\left(\frac{\pi}{\oint_{\alpha}\frac{1}{\mathscr{R}(s)}ds}\int_{a_o^*}^{k}\frac{1}{\mathscr{R}(s)}ds\right)\theta_2\left(\frac{\pi}{\oint_{\alpha}\frac{1}{\mathscr{R}(s)}ds}\int_{a_o^*}^{k}\frac{1}{\mathscr{R}(s)}ds\right)},
\end{equation}
which infers that $\varrho=\varrho(\infty)$. From the properties of zeros of Theta functions in Proposition \ref{prop:theta}, we know $k=a_o$ and $k=b_o$ are the zeros of $\varrho^2(k)$, $k=a_o^*$ and $k=b_o^*$ are the poles of $\varrho^2(k)$. Note that $\varrho^2(k)$ is meromorphic on $\mathcal{S}_1$. Hence $\varrho^2(k)$ can be expressed in the form
\begin{equation}
\varrho^2(k)=p^2\frac{\left(k-a_o\right)\left(k-b_o\right)}{\left(k-a_o^*\right)\left(k-b_o^*\right)},
\end{equation}
where $p^2$ is a constant to be determined by the residue at $k=a_o^*$, i.e.
\begin{equation}
p^2=\frac{(a_o^*-b_o^*)}{(a_o^*-a_o)(a_o^*-b_o)}\underset{k=a_o^*}{\rm Res}\left(\varrho^2(k)\right)=\frac{\theta_3^2(0)\theta_4^2(0)\left(\oint_{\alpha}\frac{1}{\mathscr{R}(s)}ds\right)^2\left(a_o^*-b_o^*\right)^2}{4\pi^2\left(\theta_1'(0)\right)^2\theta_2^2(0)}.
\end{equation}
Moreover, since $\theta_1'(0)=\theta_2(0)\theta_3(0)\theta_4(0)$ and the integral $\oint_{\alpha}\frac{1}{\mathscr{R}(s)}ds$ can be converted into the first type of complete elliptic integral. By a direct calculation, $p^2$ can be reduced into
\begin{equation}
p^2=\frac{a_o^*-b_o^*}{a_o-b_o}.
\end{equation}
Thus $\varrho=\varrho(\infty)=\left(\frac{a_o^*-b_o^*}{a_o-b_o}\right)^{1/2}$, which completes the proof.
\end{proof}
\end{lemma}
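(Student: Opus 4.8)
The plan is to reduce the ratio $\theta_4^2(\pi A)/\theta_2^2(\pi A)$ to a purely algebraic expression in the branch points $a_o,b_o,a_o^*,b_o^*$ by passing through the Abel map and exploiting the quasi-periodicity of the Jacobi theta functions collected in Proposition \ref{prop:theta}. First I would relate the real constant $A$ of Eq.~\eqref{eq:A-NO} to the Abel-map value $A(\infty)$. Since $A$ is built from the symmetric combination $\int_{a^*}^{\infty}+\int_{a}^{\infty}$ of the holomorphic differential $ds/\mathscr{R}$, while $A(\infty)$ is the single path integral based at $a_o^*$, the two should differ by a half-period; I expect a relation of the form $\pi A=-\ii A(\infty)+\tfrac{\pi\tau}{2}$. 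Substituting this into the half-period shift formulas for $\theta_2$ and $\theta_4$ in Proposition \ref{prop:theta} converts the quotient into $\theta_1^2(2\pi\tilde A)/\theta_3^2(2\pi\tilde A)$, where $\tilde A:=\frac{1}{\oint_{\alpha}ds/\mathscr{R}}\int_{a_o^*}^{\infty}ds/\mathscr{R}$ is the unsymmetrized half of $A$.

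Next I would apply a theta doubling (duplication) identity to rewrite $\theta_1^2(2\pi\tilde A)/\theta_3^2(2\pi\tilde A)$ in terms of theta values at the single argument $\pi\tilde A$. This naturally produces the combination $\varrho:=\frac{\theta_3(\pi\tilde A)\theta_4(\pi\tilde A)}{\theta_1(\pi\tilde A)\theta_2(\pi\tilde A)}$ together with an overall theta-constant factor $-4\theta_4^2(0)/\theta_2^2(0)$ and the rational expression $\varrho/(\varrho^2-1)$, so that the quotient already has exactly the shape of the claimed formula provided $\varrho^2$ can be identified with $(a_o^*-b_o^*)/(a_o-b_o)$.

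The decisive and most delicate step is therefore the explicit evaluation of the constant $\varrho=\varrho(\infty)$, and this is where I expect the real work to lie. The plan is to promote $\varrho$ to a function $\varrho(k)$ on the genus-one surface $\mathcal{S}_1$ by replacing $\pi\tilde A$ with the Abel map based at $a_o^*$ evaluated at the variable point $k$. Reading off the zero divisors of $\theta_1,\dots,\theta_4$ from Proposition \ref{prop:theta}, I would argue that $\varrho^2(k)$ is meromorphic with simple zeros at $k=a_o,b_o$ and simple poles at $k=a_o^*,b_o^*$, and hence is forced to be the rational function $p^2\,(k-a_o)(k-b_o)/((k-a_o^*)(k-b_o^*))$ for some constant $p^2$. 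The normalization $p^2$ is then pinned down by matching the residue of $\varrho^2(k)$ at $k=a_o^*$. The main obstacle is carrying out this residue computation cleanly: it requires recognizing the period $\oint_{\alpha}ds/\mathscr{R}$ as a complete elliptic integral and invoking the Jacobi identity $\theta_1'(0)=\theta_2(0)\theta_3(0)\theta_4(0)$ to cancel every theta-constant factor, after which the dust settles to the clean value $p^2=(a_o^*-b_o^*)/(a_o-b_o)$. Evaluating at $k=\infty$ then gives $\varrho=\varrho(\infty)=\big((a_o^*-b_o^*)/(a_o-b_o)\big)^{1/2}$, and substituting back into the expression from the doubling step produces the stated identity.
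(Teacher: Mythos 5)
Your proposal follows the paper's proof essentially step for step: the same relation $\pi A=-\ii A(\infty)+\tfrac{\pi\tau}{2}$ and half-period shifts, the same duplication identity producing $\varrho=\frac{\theta_3(\pi\tilde A)\theta_4(\pi\tilde A)}{\theta_1(\pi\tilde A)\theta_2(\pi\tilde A)}$, the same promotion of $\varrho$ to a meromorphic function on the genus-one surface with divisor argument, and the same residue computation at $k=a_o^*$ using $\theta_1'(0)=\theta_2(0)\theta_3(0)\theta_4(0)$ to obtain $p^2=\frac{a_o^*-b_o^*}{a_o-b_o}$. The approach and all key intermediate formulas coincide with the paper's, so the proposal is correct.
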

Finally, Eq.\eqref{eq:q-last-1} can be simplified into
\begin{equation}
\begin{split}
|q_i(X, T)|^2&=|c_i|^2\left(\left({\rm Im}(b_o-a_o)\right)^2-|a_o-b_o|^2\cn^2\left(u+K(m),m\right)\right)\ee^{-Nd\tau-|F_1E|^2}\left(\mathbb{I}+\mathcal{O}(N^{-1})\right),
\end{split}
\end{equation}
where $m=\frac{\theta_2^4(0)}{\theta_3^4(0)}$, $i=1,2$ and $u$ is given in equation \eqref{eq:q-last-1}. \ys{Compared to the leading order term in \cite{Deniz-arXiv-2019}, our result is reduced to the Jacobi elliptic function, which is a new formula and has never been reported before. The periodicity of this formula can be seen through the properties of the $cn$ function, which seems more simple and clear.}
\subsection{Non-Oscillatory Region}
\label{sec:no}
In the last subsection, we have obtained the leading order term in the oscillatory region, whose modulus is related to a Jacobi elliptic function $\cn$. In this subsection, we want to discuss the leading order term in the non-oscillatory region. Similar to the study in the oscillatory region, the original contour is not closed either, we still need to construct the $g$-function. However, this $g$-function is different from the $G$-function. To the capital $G$-function, it is related to the algebraic curve with genus one. But to the $g$-function, the corresponding genus of algebraic curve is zero. Following the method in the last subsection, we give the Riemann-Hilbert problem about the $g$-function.
\begin{rhp}\label{RHP:NO}
For the variables $X, T$, there exists a unique $g$-function, satisfying the following conditions:
\begin{itemize}
\item
{\bf Analyticity}: $g(\lambda; X, T)$ is analytic except on $\Sigma$ (to be determined), and it takes the continuous boundary condition from the left and right side of $\Sigma$.
\item
{\bf Jump Condition}: For $\lambda\in\Sigma$, $g(\lambda; X, T)$ satisfies the following jump condition:
\begin{equation}
g_{+}(\lambda; X, T)+g_{-}(\lambda; X, T)-2\varphi(\lambda; X, T)=\Omega_{n},\qquad\lambda\in\Sigma,
\end{equation}
where the subscript $_n$ indicates the non-oscillatory region,
\item
{\bf Normalization}: As $\lambda\to\infty$, $g(\lambda; X, T)$ satisfies
\begin{equation}
g(\lambda; X, T)\to \mathcal{O}(\lambda^{-1}),
\end{equation}
\item
{\bf Symmetry}: $g(\lambda;X,T)$ satisfies the Schwartz symmetric condition:
\begin{equation}
g(\lambda; X, T)=-g(\lambda^*; X, T)^*.
\end{equation}
\end{itemize}
\end{rhp}
According to the above Riemann-Hilbert problem, we can determine the $g(\lambda)$ and the contour $\Sigma$. With the standard method, we usual analyze the derivative of $g$-function to describe the properties of $g$-function. For $\lambda\in \Sigma$, we have
\begin{equation}\label{eq:g-function}
g'_{+}(\lambda)+g'_{-}(\lambda)=2\ii X+4\ii\lambda T+\frac{1}{\lambda-\lambda_1^*}-\frac{1}{\lambda-\lambda_1},\qquad \lambda\in\Sigma.
\end{equation}
To solve Eq.\eqref{eq:g-function}, we introduce a new function $R(\lambda)$:
\begin{equation}\label{eq:R-no}
R(\lambda):=\left((\lambda-a_n)(\lambda-a_n^*)\right)^{1/2},
\end{equation}
then $\Sigma$ can be a determined curve with the endpoints of $a_n$ and $a_n^*$.  Therefore, the function $\frac{g'(\lambda)}{R(\lambda)}$ can be solved with the Plemelj formula,
\begin{equation}
g'(\lambda)=\frac{R(\lambda)}{2\pi\ii}\int_{\Sigma}\frac{2\ii X+4\ii sT+\frac{1}{s-\lambda_1^*}-\frac{1}{s-\lambda_1}}{R(s)\left(s-\lambda\right)}ds.
\end{equation}
Still, by the generalized residue theorem, $g'(\lambda)$ can be given as
\begin{equation}
\begin{split}
g'(\lambda)&=R(\lambda)\left(\mathop{\rm Res}\limits_{s=\lambda}+\mathop{\rm Res}\limits_{s=\lambda_1}+\mathop{\rm Res}\limits_{s=\lambda_1^*}\ys{+}\mathop{\rm Res}\limits_{s=\infty}\right)\left(\frac{\ii X+2\ii sT+\frac{1}{2(s-\lambda_1^*)}-\frac{1}{2(s-\lambda_1)}}{R(s)\left(s-\lambda\right)}\right)\\
&=\frac{R(\lambda)}{2R(\lambda_1^*)\left(\lambda_1^*-\lambda\right)}-\frac{R(\lambda)}{2R(\lambda_1)(\lambda_1-\lambda)}-2\ii TR(\lambda)+\ii X+2\ii T\lambda+\frac{1}{2\left(\lambda-\lambda_1^*\right)}-\frac{1}{2\left(\lambda-\lambda_1\right)}.
\end{split}
\end{equation}
Then we have
\begin{equation}\label{eq:g-no}
g'(\lambda)-\varphi'(\lambda; X, T)=R(\lambda)\left(\frac{1}{2R(\lambda_1^*)\left(\lambda_1^*-\lambda\right)}-\frac{1}{2R(\lambda_1)(\lambda_1-\lambda)}-2\ii T\right):=\frac{h'(\lambda; X, T)}{2},
\end{equation}
whose roots are $\lambda_c, \lambda_d, a_n, a_n^*$. Similar to the oscillatory region, we set the parameter $\lambda_1=\ii,$ and introduce two parameters $q=a_n+a_n^*, p=a_na_n^*$. Through the definition of $g(\lambda), $ as $\lambda\to\infty, g'(\lambda)\to O(\lambda^{-2})$, we can give two equations about the parameters $p$ and $q$,
\begin{equation}\label{eq:pands}
\begin{aligned}
&\mathcal{O}(1): \frac{1}{2\sqrt{-1+p-\ii q}}-\frac{1}{2\sqrt{-1+p+\ii q}}+\ii X+\ii qT=0,\\
&\mathcal{O}(\lambda^{-1}): \frac{2\ii-q}{4\sqrt{-1+p-\ii q}}+\frac{2\ii+q}{4\sqrt{-1+p+\ii q}}-\ii pT+\frac{1}{4}\ii q^2T=0.
\end{aligned}
\end{equation}
From the first equation in Eq.\eqref{eq:pands}, we have
\begin{equation}
\frac{1}{\sqrt{-1+p+\ii q}}=\frac{1}{\sqrt{-1+p-\ii q}}+2\ii qT+2\ii X.
\end{equation}
Plugging the above equation into the second equation of Eq.\eqref{eq:pands} and eliminating the factor $\frac{1}{\sqrt{-1+p-\ii q}}$, we have
\begin{equation}\label{eq:pands-1}
\frac{1}{\sqrt{-1+p-\ii q}}=-\ii X-\frac{Xq}{2}+pT-\ii qT-\frac{3q^2T}{4}.
\end{equation}
It can be seen for the fixed $X$ and $T$, Eq.\eqref{eq:pands-1} is a complex equation with respect to $q$ and $p$, thus we need to discuss both the real and imaginary part.
From the imaginary part of Eq.\eqref{eq:pands-1}, we get a relation between $q$ and $p$, i.e.
\begin{equation}
p=\frac{q^2}{4}, \qquad p=\frac{4X^2q+8XT+12XTq^2+8qT^2+9q^3T^2}{4T\left(2X+3qT\right)},
\end{equation}
if $p=\frac{q^2}{4}$, then the zero of $R(\lambda)$ will be $a_n=a^*_n=\frac{q}{2}$, which is unreasonable. Otherwise, substituting the second solution into the real part of equation \eqref{eq:pands-1}, we get a septic equation with respect to $q$, which is rigorous proved to have one real root by the Sturm theorem in the appendix of \cite{BilmanM-21}. Define
\begin{equation}
g(\lambda):=\int_{\infty}^{\lambda}g'(s)ds
\end{equation}
for each given $X$ and $T$. By choosing one special $X$ and $T$, we give the contour for ${\rm Re}(g(\lambda)-\varphi(\lambda; X, T))$, which is shown in Fig.\ref{contour-non-os}. Moreover, we can use the steepest descent method to deform this contour and obtain the leading order term as $N\to\infty$.

Similar to the analysis in the oscillatory region in the last subsection, we define a new matrix $\mathbf{O}_{f,n}(\lambda; X, T)$ and $\mathbf{P}_{f,n}(\lambda; X, T)$ as
\begin{equation}\label{eq:O-far-field-n}
\begin{aligned}
&\mathbf{O}_{f,n}(\lambda; X, T):=\left\{\begin{aligned}&\widetilde{\mathbf{N}}_{f}\ee^{-N\varphi(\lambda; X, T)\pmb{\sigma}_3}\widehat{\mathbf{Q}}_{d}^{-1}\ee^{N\varphi(\lambda; X, T)\sigma_3},\qquad &\lambda\in D_0\cap \left(D_{\rm u}\cup D_{\rm d}\right)^{c},\\
&\widetilde{\mathbf{N}}_{f}\qquad &{\rm otherwise}.\end{aligned}\right.\\
&\mathbf{P}_{f,n}(\lambda; X, T):=\mathbf{O}_{f,n}(\lambda; X, T){\rm diag}\left(\ee^{-Ng(\lambda)}, \ee^{Ng(\lambda)}, 1\right)
\end{aligned}
\end{equation}
where $D_{\rm u}=D_{\rm u_1}^{\rm i}\cup D_{\rm u_2}^{\rm i}\cup K_{\rm u}^{\rm i}$ and $D_{\rm d}=D_{\rm d_1}^{\rm i}\cup D_{\rm d_2}^{\rm i}\cup K_{\rm d}^{\rm i}$ are shown in Fig.\ref{contour-non-os}.
\begin{figure}[!h]
\centering
\includegraphics[width=0.45\textwidth]{non-oscillatory-0.pdf}
\centering
\includegraphics[width=0.45\textwidth]{non-oscillatory.pdf}
\caption{The contour of ${\rm Re}(\varphi(\lambda; X, T)-g(\lambda))$ in the non-oscillatory region with the parameters $X=\frac{9}{10}, T=\frac{1}{3}.$ The left one gives the original contour about the $\widetilde{\mathbf{N}}_{f}(\lambda)$ and the sign of ${\rm Re}(\varphi(\lambda; X, T)-g(\lambda))$. The right one is the corresponding contour deformation.}
\label{contour-non-os}
\end{figure}
For $\lambda\in\Sigma_{\rm u}\cup\Sigma_{\rm d}\cup\Gamma_{\rm u}\cup\Gamma_{\rm d}$, the jump matrices about the new matrix $\mathbf{P}_{f,n}(\lambda; X, T)$ is the same as Eq.\eqref{eq:P-o-jump} with the capital $G(\lambda)$ replaced by $g(\lambda)$. With a similar definition of $\mathbf{Q}_{f,n}(\lambda; X, T)$ in Eq.\eqref{eq:Q-definition}, as $N\to\infty$, there only left four jump matrices that do not tend to identity matrix, that is
\begin{equation}\label{eq:Qfn-0}
\begin{aligned}
\mathbf{Q}_{f,n,+}&=\mathbf{Q}_{f,n,-}\begin{bmatrix}0&\sqrt{2} \ee^{N\Omega_{n}}&0\\
-\frac{\sqrt{2}}{2} \ee^{-N\Omega_{n}}&0&0\\
0&0&1
\end{bmatrix},\qquad &\lambda\in \Sigma_{\rm u},\\
\mathbf{Q}_{f,n,+}&=\mathbf{Q}_{f,n,-}\begin{bmatrix}0&\frac{\sqrt{2}}{2} \ee^{N\Omega_{n}}&0\\
-\sqrt{2}\ee^{-N\Omega_{n}}&0&0\\
0&0&1
\end{bmatrix},\qquad &\lambda\in \Sigma_{\rm d},\\
\mathbf{Q}_{f,n,+}&=\mathbf{Q}_{f,n,-}{\rm diag}\left(\sqrt{2}, \frac{\sqrt{2}}{2}, 1\right), \qquad &\lambda\in \Gamma_{\rm u},\\
\mathbf{Q}_{f,n,+}&=\mathbf{Q}_{f,n,-}{\rm diag}\left(\frac{\sqrt{2}}{2}, \sqrt{2}, 1\right), \qquad &\lambda\in \Gamma_{\rm d}.
\end{aligned}
\end{equation}
Moreover, we set a new sectionally analytic matrix to eliminate the jump on $\Gamma_{\rm u}$ and $\Gamma_{\rm d}$, which is
\begin{equation}\label{eq:hatQ-no}
\widehat{\mathbf{Q}}_{f, n}(\lambda; X, T)=\left\{\begin{split}&\mathbf{Q}_{f,n}{\rm diag}\left(\sqrt{2},\frac{\sqrt{2}}{2}, 1\right),\quad &\lambda\in K_{u}^{\ii}\cup D_{u_2}^{\ii}\cup D_{u_1}^{\ii},\\
&\mathbf{Q}_{f,n}{\rm diag}\left(\frac{\sqrt{2}}{2}, \sqrt{2}, 1\right),\quad &\lambda\in K_{d}^{\ii}\cup D_{d_2}^{\ii}\cup D_{d_1}^{\ii},\\
&\mathbf{Q}_{f,n}, \quad &{\rm otherwise}.
\end{split}\right.
\end{equation}
Then the nonzero jump about the $\widehat{\mathbf{Q}}_{f,n}(\lambda; X, T)$ when $N$ is large changes into
\begin{equation}\label{eq:jump-hatQ}
\begin{aligned}
\widehat{\mathbf{Q}}_{f, n, +}&=\widehat{\mathbf{Q}}_{f, n, -}\begin{bmatrix}0&\ee^{N\Omega_{n}}&0\\-\ee^{-N\Omega_{n}}&0&0\\0&0&1
\end{bmatrix},\quad&\lambda\in \Sigma_{\rm u}\cup \Sigma_{\rm d},\\
\widehat{\mathbf{Q}}_{f, n, +}&=\widehat{\mathbf{Q}}_{f, n, -}{\rm diag}\left(2, \frac{1}{2}, 1\right), \quad &\lambda\in I.
\end{aligned}
\end{equation}
By the corresponding detailed calculation shown in Appendix \ref{app:no}, the leading order term is given by equation \eqref{eq:q-no},
where
\begin{equation}\label{eq:phiaphib}
\begin{split}
\phi_{\lambda_c}&{=}\frac{\pi}{4}{+}\frac{\log(2)^2}{2\pi}{-}\arg\left(\Gamma\left(\ii\frac{\log(2)}{2\pi}\right)\right){-}2\ii k_{-}(\lambda_c){+}2N\hat{h}_{{-},\lambda_c}{+}p\log({-}Nh''_{\lambda_c}(\lambda_d{-}\lambda_c)^2){-}\ii N\Omega_{n}{-}\mu\\
\phi_{\lambda_d}&{=}\frac{\pi}{4}{+}\frac{\log(2)^2}{2\pi}{-}\arg\left(\Gamma\left(\ii\frac{\log(2)}{2\pi}\right)\right){+}2\ii k(\lambda_d){-}2N\hat{h}_{\lambda_d}{+}p\log(Nh''_{\lambda_d}(\lambda_d{-}\lambda_c)^2){+}\ii N\Omega_{n}{+}\mu\\
m_{+}^{\lambda_c}&{=}\frac{1}{2}{+}\frac{1}{4}\left(\sqrt{\frac{\lambda_c{-}a_n}{\lambda_c{-}a_n^*}}{+}\left(\sqrt{\frac{\lambda_c{-}a_n}{\lambda_c{-}a_n^*}}\right)^{-1}\right),\quad
m_{-}^{\lambda_c}{=}\frac{1}{2}{-}\frac{1}{4}\left(\sqrt{\frac{\lambda_c{-}a_n}{\lambda_c{-}a_n^*}}{+}\left(\sqrt{\frac{\lambda_c{-}a_n}{\lambda_c{-}a_n^*}}\right)^{-1}\right),\\
m_{+}^{\lambda_d}&{=}\frac{1}{2}{+}\frac{1}{4}\left(\sqrt{\frac{\lambda_d{-}a_n}{\lambda_d{-}a_n^*}}{+}\left(\sqrt{\frac{\lambda_d{-}a_n}{\lambda_d{-}a_n^*}}\right)^{-1}\right),\quad
m_{-}^{\lambda_d}{=}\frac{1}{2}{-}\frac{1}{4}\left(\sqrt{\frac{\lambda_d{-}a_n}{\lambda_d{-}a_n^*}}{+}\left(\sqrt{\frac{\lambda_d{-}a_n}{\lambda_d{-}a_n^*}}\right)^{-1}\right).
\end{split}
\end{equation}
\subsection{The algebraic decay region}
\label{sec:al}
In the previous subsection, we discuss the asymptotics as $(X,T)$ is in the oscillatory and non-oscillatory region. During the study for the asymptotics, we construct two different types of ``g"-function. The reason for the ``g"-function is that the original contour of ${\rm \Re}(\varphi(\lambda; X, T))$ is not efficient for further deformation. While in the algebraic region, there exist a closed contour connecting two critical points $a_A$ and $b_A$, which is shown in Fig.\ref{contour-algebraic}.
\begin{figure}[!h]
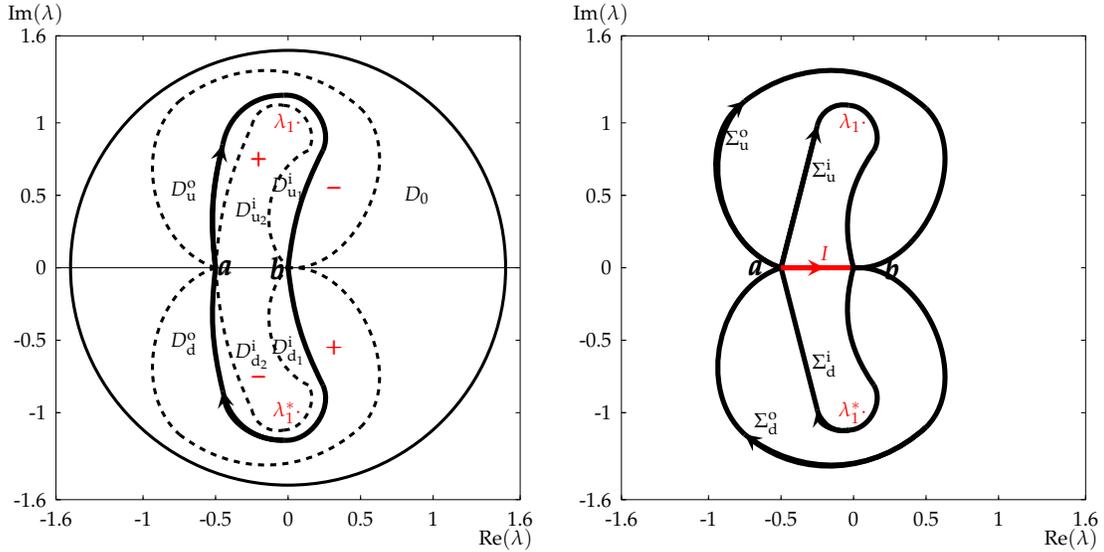

\centering
\includegraphics[width=0.45\textwidth]{algebraic-0.pdf}
\centering
\includegraphics[width=0.45\textwidth]{algebraic.pdf}
\caption{The contour of ${\rm Re}(\varphi(\lambda; X, T))$ in the algebraic decay region with the parameters $X=1, T=\frac{1}{6}.$ The left one gives the original contour about the $\widetilde{\mathbf{N}}_{f}(\lambda; X, T)$ and the sign of ${\rm Re}(\varphi(\lambda; X, T))$. The right one is the corresponding contour deformation.}
\label{contour-algebraic}
\end{figure}
Similar to the construction of $\mathbf{O}_{f}$ matrix in the non-oscillatory region \eqref{eq:O-far-field} and the oscillatory region\eqref{eq:O-far-field}, we give the corresponding $\mathbf{O}_{f,a}$ matrix in the algebraic decay region
\begin{equation}\label{eq:O-far-field-a}
\mathbf{O}_{f,a}(\lambda; X, T):=\left\{\begin{aligned}&\widetilde{\mathbf{N}}_{f}\ee^{-N\varphi(\lambda; X, T)\pmb{\sigma}_3}\widehat{\mathbf{Q}}_{d}^{-1}\ee^{N\varphi(\lambda; X, T)\pmb{\sigma}_3},\qquad &\lambda\in D_0\cap \left(D_{\rm u}\cup D_{\rm d}\right)^{c},\\
&\widetilde{\mathbf{N}}_{f}\qquad &{\rm otherwise},\end{aligned}\right.
\end{equation}
where $D_{\rm u}=D_{\rm u_1}^{\rm i}\cup D_{\rm u_2}^{\rm i}, D_{\rm d}=D_{\rm d_1}^{\rm i}\cup D_{\rm d_2}^{\rm i}$. It is not difficult for find the contour shape in this case is similar to the large $\chi$ asymptotics in the infinite order. In the appendix \ref{App:large-chi}, we give a detailed calculation about this leading order term. Set a similar deformation with the large $\chi$ asymptotics:
\begin{equation}
\mathbf{Q}_{f,a}(\lambda; X, T):=\left\{\begin{aligned}
&\mathbf{O}_{f,a}(\lambda; X, T)\ee^{-{\rm ad}_{\pmb{\sigma}_{3}}N\varphi(\lambda; X, T)}\left(\mathbf{Q}_{R}^{[2]}\right)^{-1},\qquad&\lambda\in D_{\rm u}^{\rm o},\\
&\mathbf{O}_{f,a}(\lambda; X, T)\mathbf{Q}_{L}^{[2]}\ee^{-{\rm ad}_{\pmb{\sigma}_{3}}N\varphi(\lambda; X, T)}\mathbf{Q}_{C}^{[2]},\qquad&\lambda\in D_{\rm u_1}^{\rm i},\\
&\mathbf{O}_{f,a}(\lambda; X, T)\mathbf{Q}_{L}^{[2]},\qquad&\lambda\in D_{\rm u_2}^{\rm i},\\
&\mathbf{O}_{f,a}(\lambda; X, T)\mathbf{Q}_{L}^{[1]},\qquad&\lambda\in D_{\rm d_2}^{\rm i},\\
&\mathbf{O}_{f,a}(\lambda; X, T)\mathbf{Q}_{L}^{[1]}\ee^{-{\rm ad}_{\pmb{\sigma}_{3}}N\varphi(\lambda; X, T)}\mathbf{Q}_{C}^{[1]},\qquad&\lambda\in D_{\rm d_1}^{\rm i},\\
&\mathbf{O}_{f,a}(\lambda; X, T)\ee^{-{\rm ad}_{\pmb{\sigma}_{3}}N\varphi(\lambda; X, T)}\left(\mathbf{Q}_{R}^{[1]}\right)^{-1},\qquad&\lambda\in D_{\rm d}^{\rm o},\\
&\mathbf{O}_{f,a}(\lambda; X, T),\qquad&{\rm otherwise.}
\end{aligned}\right.
\end{equation}
As $N\to\infty$, there only left one jump which does not tend the identity, which is the same as with Eq.\eqref{eq:R-jump}, that is
\begin{equation}\label{eq:Q-jump-al}
\mathbf{Q}_{f,a,+}=\mathbf{Q}_{f,a,-}{\rm diag}\left(2, \frac{1}{2}, 1\right).
\end{equation}
Then the leading order term in the algebraic decay region can be given by replacing the variable $\Lambda \chi^{1/2}$ to $\lambda$, and $\chi^{1/4}$ to $N^{1/2}$ to Eq.\eqref{eq:q-near}, which becomes
\begin{equation}\label{eq:q-algebra}
\begin{split}
q_i(X, T)&{=}
-\frac{c_1^*\ii}{N^{1/2}}\sqrt{\frac{\ln(2)}{\pi}}\Bigg(\ee^{-2N\varphi(b_A; X, T)}\left(\sqrt{-\ii \varphi''(b_A; X, T)}\right)^{2\ii p-1}\ee^{\ii\phi_{f}}\\&{+}\ee^{-2N\varphi(a_A; X, T)}\left(\sqrt{\ii \varphi''(a_A; X, T)}\right)^{-2\ii p-1}\Bigg)+\mathcal{O}(N^{-1}),\quad (i=1,2).
\end{split}
\end{equation}
\ys{This leading order formula is similar to the asymptotics in \cite{Peter-Duke-2019} and \cite{Deniz-arXiv-2019}. We only give a minor adjustment on it.}
\subsection{Exponential decay region}
\label{sec:ex}
The last asymptotic region for the large order solitons is the exponential decay region. Compared with the asymptotic analysis in other regions, the analysis in this region is slightly easier. The corresponding signature chart for ${\rm{Re}}(\varphi(\lambda; X, T))=0$ is shown in the Fig.\ref{contour-exponent}.
\begin{figure}[!h]
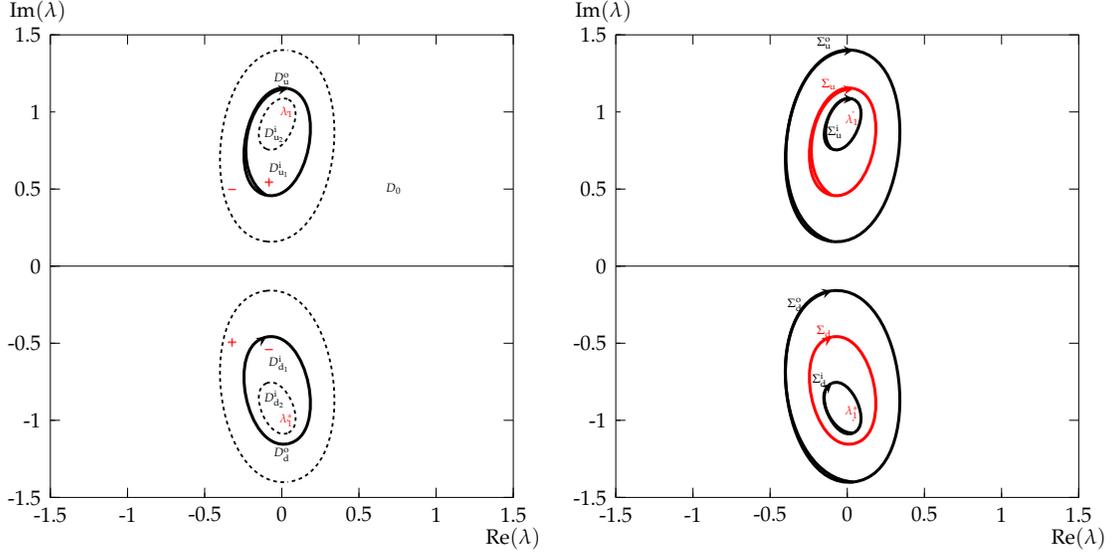

\centering
\includegraphics[width=0.45\textwidth]{exponent-0.pdf}
\centering
\includegraphics[width=0.45\textwidth]{exponent.pdf}
\caption{The contour of ${\rm Re}(\varphi(\lambda; X, T))$ in the exponential decay region with the parameters $X=\frac{11}{10}, T=\frac{1}{6}.$ The left one gives the original contour about the $\widetilde{\mathbf{N}}_{f}(\lambda; X, T)$ and the sign of ${\rm Re}(\varphi(\lambda; X, T))$. The right one is the corresponding contour deformation.}
\label{contour-exponent}
\end{figure}

Firstly, define the matrix $\mathbf{O}_{f,e}$ as
\begin{equation}\label{eq:O-far-field-e}
\mathbf{O}_{f,e}(\lambda; X, T):=\left\{\begin{aligned}&\widetilde{\mathbf{N}}_{f}\ee^{-N\varphi(\lambda; X, T)\pmb{\sigma}_3}\widehat{\mathbf{Q}}_{d}^{-1}\ee^{N\varphi(\lambda; X, T)\pmb{\sigma}_3},\qquad &\lambda\in D_0\cap \left(D_{\rm u}\cup D_{\rm d}\right)^{c},\\
&\widetilde{\mathbf{N}}_{f}\qquad &{\rm otherwise}.\end{aligned}\right.
\end{equation}
where $D_{\rm u}=D_{\rm u_1}^{\rm i}\cup D_{\rm u_2}^{\rm i}, D_{d}=D_{\rm d_1}^{\rm i}\cup D_{\rm d_2}^{\rm i}$. Then the function $\mathbf{O}_{f,e}(\lambda; X, T)$ satisfies
\begin{equation}
\mathbf{O}_{f,e,+}=\mathbf{O}_{f,e,-}\ee^{-N\varphi(\lambda; X, T)\pmb{\sigma}_3}\widehat{\mathbf{Q}}_{d}^{-1}\ee^{N\varphi(\lambda; X, T)\pmb{\sigma}_3},\qquad \lambda\in\partial D_{\rm u}\cup\partial D_{\rm d}.
\end{equation}
Similarly, define $\mathbf{Q}_{f,e}$ as
\begin{equation}
\begin{aligned}
\mathbf{Q}_{f,e}:&=\mathbf{O}_{f,e}\ee^{-{\rm ad}\pmb{\sigma}_{3}N\varphi(\lambda; X, T)}\mathbf{Q}_{L}^{[5]},\qquad&\lambda\in D^{\rm i}_{\rm u_1},\\
\mathbf{Q}_{f,e}:&=\mathbf{O}_{f,e}\ee^{-{\rm ad}\pmb{\sigma}_{3}N\varphi(\lambda; X, T)}\left(\mathbf{Q}_{R}^{[5]}\right)^{-1},\qquad&\lambda\in D^{\rm o}_{\rm u},\\
\mathbf{Q}_{f,e}:&=\mathbf{O}_{f,e}\ee^{-{\rm ad}\pmb{\sigma}_{3}N\varphi(\lambda; X, T)}\mathbf{Q}_{L}^{[6]},\qquad&\lambda\in D^{\rm i}_{\rm d_1},\\
\mathbf{Q}_{f,e}:&=\mathbf{O}_{f,e}\ee^{-{\rm ad}\pmb{\sigma}_{3}N\varphi(\lambda; X, T)}\left(\mathbf{Q}_{R}^{[6]}\right)^{-1},\qquad&\lambda\in D^{\rm o}_{\rm d},\\
\mathbf{Q}_{f,e}:&=\mathbf{O}_{f,e},\qquad&{\rm otherwise.}
\end{aligned}
\end{equation}
Then the jump about $\mathbf{Q}_{f,e}$ changes into
\begin{equation}
\begin{aligned}
\mathbf{Q}_{f,e,+}&=\mathbf{Q}_{f,e,-}\ee^{-{\rm ad}\pmb{\sigma}_{3}N\varphi(\lambda; X, T)}\mathbf{Q}_{R}^{[5]},\qquad&\lambda\in \Sigma_{\rm u}^{\rm o},\\
\mathbf{Q}_{f,e,+}&=\mathbf{Q}_{f,e,-}{\rm diag}\left(\sqrt{2}, \frac{\sqrt{2}}{2}, 1\right),\qquad&\lambda\in\Sigma_{\rm u},\\
\mathbf{Q}_{f,e,+}&=\mathbf{Q}_{f,e,-}\ee^{-{\rm ad}\pmb{\sigma}_{3}N\varphi(\lambda; X, T)}\mathbf{Q}_{L}^{[5]},\qquad&\lambda\in \Sigma_{\rm u}^{\rm i},\\
\mathbf{Q}_{f,e,+}&=\mathbf{Q}_{f,e,-}\ee^{-{\rm ad}\pmb{\sigma}_{3}N\varphi(\lambda; X, T)}\mathbf{Q}_{L}^{[6]},\qquad&\lambda\in \Sigma_{\rm d}^{\rm i},\\
\mathbf{Q}_{f,e,+}&=\mathbf{Q}_{f,e,-}{\rm diag}\left(\frac{\sqrt{2}}{2}, \sqrt{2}, 1\right),\qquad &\lambda\in \Sigma_{\rm d},\\
\mathbf{Q}_{f,e,+}&=\mathbf{Q}_{f,e,-}\ee^{-{\rm ad}\pmb{\sigma}_{3}N\varphi(\lambda; X, T)}\mathbf{Q}_{R}^{[6]},\qquad&\lambda\in \Sigma_{\rm d}^{\rm o}.
\end{aligned}
\end{equation}
With this jump, we can define a model Riemann-Hilbert problem as follows:
\begin{rhp}
(The model problem in the exponential decay region) Find a unique $3\times 3$ matrix $\mathbf{P}_{f,e}(\lambda; X, T)$ satisfying the following conditions.
\item
{\bf Analyticity}: $\mathbf{P}_{f,e}(\lambda; X, T)$ is analytic for $\lambda\in\mathbb{C}\setminus \left(\Sigma_{\rm u}\cup\Sigma_{\rm d}\right)$,
\item {\bf Jump condition}: When $\lambda\in\Sigma_{\rm u}\cup\Sigma_{\rm d}$, $\mathbf{P}_{f,e}(\lambda; X, T)$ satisfy
    \begin{equation}
    \begin{split}
    \mathbf{P}_{f,e,+}(\lambda; X, T)&=\mathbf{P}_{f,e,-}(\lambda; X, T){\rm diag}\left(\sqrt{2},\frac{\sqrt{2}}{2}, 1 \right),\qquad \lambda\in \Sigma_{\rm u},\\
    \mathbf{P}_{f,e,+}(\lambda; X, T)&=\mathbf{P}_{f,e,-}(\lambda; X, T){\rm diag}\left(\frac{\sqrt{2}}{2},\sqrt{2}, 1 \right),\qquad \lambda\in \Sigma_{\rm d}.
    \end{split}
    \end{equation}
    \item
    {\bf Normalization}: When $\lambda\to\infty$, $\mathbf{P}_{f,e}\to\mathbb{I}$.
\end{rhp}
This Riemann-Hilbert problem can be solved by the Plemelj formula, whose form is
\begin{equation}
\begin{aligned}
\mathbf{P}_{f,e}&=\mathbb{I}, \qquad \lambda\in \left(D_{\rm u}\right)^{c}, \qquad
\mathbf{P}_{f,e}={\rm diag}\left(\frac{\sqrt{2}}{2}, \sqrt{2}, 1\right), \qquad \lambda\in D_{\rm u},\\
\mathbf{P}_{f,e}&=\mathbb{I}, \qquad \lambda\in \left(D_{\rm d}\right)^{c},\qquad
\mathbf{P}_{f,e}={\rm diag}\left(\sqrt{2}, \frac{\sqrt{2}}{2}, 1\right), \qquad \lambda\in D_{\rm d}.\\
\end{aligned}
\end{equation}
Then the error matrix can be constructed as
\begin{equation}
\begin{aligned}
\mathbf{R}_{f,e}(\lambda; X, T):=\mathbf{Q}_{f,e}(\lambda; X, T)\mathbf{P}_{f,e}(\lambda; X, T)^{-1}.
\end{aligned}
\end{equation}
Observing the jump matrix satisfied by $\mathbf{Q}_{f,e}(\lambda; X, T)$ and $\mathbf{P}_{f,e}(\lambda; X, T)$, we know $\mathbf{R}_{f,e}(\lambda; X, T)$ are exponentially to zero, thus we have $\mathbf{R}_{f,e}\to\mathbb{I}+O(\ee^{-dN})$, where $d$ is a constant. The potential function $\mathbf{q}(X, T)$ can be recovered as
\begin{equation}
q_i(X, T)=2c_{i}^*\lim\limits_{\lambda\to\infty}\lambda\mathbf{\widetilde{N}}_{f,e}(\lambda; X, T)_{12}\to \mathcal{O}(\ee^{-dN}),\qquad i=1,2.
\end{equation}

\section{The dynamics for the infinite order solitons}
\label{sec:infinity-order}
In the last section, we give the leading order asymptotics in four different regions, the oscillatory region, the non-oscillatory region, the algebraic decay region and the exponential decay region. All of which are studied as $N$ is large enough. However, as $N\to\infty$, the above asymptotic analysis will be error because the maximum amplitude about the high order soliton equals to $2N{\rm Im}(\lambda_1)$, which is unbounded when $N\to\infty$. To analyze the dynamics of infinite order solitons, we should tackle with this problem so as to the amplitude becomes bounded. One simple way to achieve it is setting ${\rm Im}(\lambda_1)=\frac{1}{N}$, under this condition, we can construct a RHP for the infinite order soliton, in which the factor ${\rm Re}(\lambda_1)$ can be also removed with the aid of Galilean transformation Eq.\eqref{eq:xandX}. Afterwards, we can give some new properties under the framework of the infinite order RHP. 

\subsection{Derivation the Lax pair of infinite order soliton}
To derive the Lax pair from RHP \ref{RHP2}, we still use the classical dressing technique. It is clear that $\mathbf{N}(\Lambda; \chi , \tau)$ is analytic for $\Lambda$ exterior to $D_0$, so it has a Laurent expansion as $\Lambda\to\infty$,
\begin{equation}\label{eq:Laurent-series}
\mathbf{N}(\Lambda; \chi , \tau)=\mathbb{I}+\sum\limits_{l=1}^{\infty}\mathbf{N}^{[l]}(\chi , \tau)\Lambda^{-l}, \qquad |\Lambda|\to\infty.
\end{equation}
The analytic Fredholm theory infers that the coefficients $\mathbf{N}^{[l]}$s are real analytic on $\mathbb{R}^2$ and the series \eqref{eq:Laurent-series} are differentiable with respect to $\chi $ and $\tau$.

We will prove the potential function defined by Eq.\eqref{eq:new-potential} also satisfies the coupled NLS equation with $x\to\chi, t\to\tau, \lambda\to \Lambda$. To get it, we set
\begin{equation}\label{eq:P-expression}
\mathbf{P}(\Lambda; \chi , \tau)=\mathbf{N}(\Lambda; \chi , \tau)\ee^{-\ii\left(\Lambda \chi +\Lambda^2\tau\right)\pmb{\sigma}_3}.
\end{equation}
Since $\mathbf{P}_\chi (\Lambda; \chi , \tau)$ and $\mathbf{P}_\tau(\Lambda; \chi , \tau)$ satisfy the same jump condition as $\mathbf{P}(\Lambda; \chi , \tau)$,  then the new defined matrices $$\mathbf{U}(\Lambda; \chi , \tau)=\mathbf{P}_{\chi }(\Lambda; \chi , \tau)\mathbf{P}(\Lambda; \chi , \tau)^{-1} \quad \text{and} \quad \mathbf{V}(\Lambda; \chi , \tau)=\mathbf{P}_{\tau}(\Lambda; \chi , \tau)\mathbf{P}(\Lambda; \chi , \tau)^{-1},$$
are the entire functions in the whole complex plane $\mathbb{C}$. By the Laurent series in Eq. \eqref{eq:Laurent-series}, and the Liouville theorem, we have
\begin{equation}\label{eq:new-laxpair}
\begin{split}
\mathbf{U}(\Lambda; \chi , \tau)&=-\ii\Lambda\pmb{\sigma}_3+\ii[\pmb{\sigma}_3, \mathbf{N}^{[1]}], \\
\mathbf{V}(\Lambda; \chi , \tau)&=-\ii\Lambda^2\pmb{\sigma}_3+\ii\Lambda[\pmb{\sigma}_3, \mathbf{N}^{[1]}]-\mathbf{N}_{\chi }^{[1]}. \\
\end{split}
\end{equation}

Then we consider the symmetric properties for $\mathbf{U}(\Lambda; \chi , \tau)$ and $\mathbf{V}(\Lambda; \chi , \tau)$. It is readily to verify that $[\mathbf{P}^{\dag}(\lambda^*;\chi ,\tau)]^{-1}$ satisfies the same jump condition on the contour $\partial D_0$ and the normalization when $\Lambda\to\infty$ as $\mathbf{P}(\lambda;\chi ,\tau)$. Thus by the uniqueness, we have $\mathbf{P}(\lambda;\chi ,\tau)=[\mathbf{P}^{\dag}(\lambda^*;\chi ,\tau)]^{-1}$, which implies that $\mathbf{U}(\Lambda; \chi , \tau)$ satisfy the symmetry $\mathbf{U}^{\dagger}(\Lambda^*; \chi , \tau)=-\mathbf{U}(\Lambda; \chi , \tau),$ and $\mathbf{V}^{\dagger}(\Lambda^*; \chi , \tau)=-\mathbf{V}(\Lambda; \chi , \tau)$. Finally,  the zero curvature condition $\mathbf{U}_\tau-\mathbf{V}_\chi +[\mathbf{U},\mathbf{V}]=0$ infers that $q_1(\chi ,\tau)$ and $q_2(\chi ,\tau)$ satisfy the coupled NLS equation.  


\subsection{Differential equations}
To simplify the RHP \ref{RHP2} further, it is necessary to reformulate the RHP \ref{RHP2} to a new one. 
Introduce the new matrix $\widetilde{\mathbf{N}}(\Lambda; \chi , \tau)$ as
\begin{equation}
\widetilde{\mathbf{N}}(\Lambda; \chi , \tau):=\left\{\begin{split}&\mathbf{Q}_{H}^{\dagger}\mathbf{N}(\Lambda; \chi , \tau)\mathbf{Q}_{H}\ee^{-\ii\left(\Lambda \chi +\Lambda^2\tau\right)\pmb{\sigma}_3}\mathbf{Q}_d\mathbf{C}_{d}\ee^{\ii\left(\Lambda \chi +\Lambda^2\tau\right)\pmb{\sigma}_3}, \qquad& \Lambda \, \text{in the interior of $D_0$},\\
&\mathbf{Q}_{H}^{\dagger}\mathbf{N}(\Lambda; \chi , \tau)\mathbf{Q}_{H}\widetilde{\mathbf{J}}^{-1}(\Lambda),\qquad & \Lambda \, \text{exterior of $D_0$}.
\end{split}\right.
\end{equation}

Obviously, the new matrix $\widetilde{\mathbf{N}}(\Lambda; \chi , \tau)$ is analytic in the interior and exterior of $D_0$, 
which satisfies the following RHP:
\begin{rhp}\label{RHP3}
	Let $(\chi , \tau)\in \mathbb{R}^2,$ 
	find a $3\times 3$ matrix $\widetilde{\mathbf{N}}(\Lambda; \chi , \tau)$ with the following properties.
	\begin{itemize}
		\item \textbf{Analyticity}: $\widetilde{\mathbf{N}}(\Lambda; \chi , \tau)$ is analytic when $\Lambda\in \mathbb{C}\setminus \partial D_0$, it takes the continuous boundary condition from the interior and the exterior of $D_0$.
		\item \textbf{Jump condition}: The boundary condition on $\partial D_0$ (clockwise orientation) are related with the following jump condition
		\begin{equation}
		\begin{split}
		\widetilde{\mathbf{N}}_{+}(\Lambda; \chi , \tau)=\widetilde{\mathbf{N}}_{-}(\Lambda; \chi , \tau)\ee^{-\ii\left(\Lambda \chi +\Lambda^2\tau+\Lambda^{-1}\right)\pmb{\sigma}_3}\widehat{\mathbf{Q}}_d^{-1}\ee^{\ii\left(\Lambda \chi +\Lambda^2\tau+\Lambda^{-1}\right)\pmb{\sigma}_3},
		\end{split}
		\end{equation}
		\item \textbf{Normalization}: $\widetilde{\mathbf{N}}(\Lambda; \chi , \tau)\to \mathbb{I}$ as $\Lambda\to \infty.$
	\end{itemize}
\end{rhp}
Combing with the fact $\widetilde{\mathbf{J}}(\lambda)\to \mathbb{I}$ as $\Lambda\to \infty$, the potential function can be recovered in a similar formula as Eq.\eqref{eq:new-potential}:
\begin{equation}\label{eq:new-potential-1}
\mathbf{q}(\chi , \tau):=2\lim\limits_{\Lambda\to\infty}\Lambda\left(\mathbf{Q}_{H}\widetilde{\mathbf{N}}\mathbf{Q}_{H}^{-1}\right)_{1,2}(\Lambda; \chi , \tau).
\end{equation}

Now we proceed to consider the differential equations derived from the RHP \ref{RHP3}. Similar to the formula in Eq.\eqref{eq:P-expression}, we set
\begin{equation}
\widetilde{\mathbf{P}}(\Lambda; \chi , \tau):=\widetilde{\mathbf{N}}(\Lambda; \chi , \tau)\ee^{-\ii\left(\Lambda \chi +\Lambda ^2\tau+\Lambda^{-1}\right)\pmb{\sigma}_3},
\end{equation}
then the jump matrix satisfied by $\widetilde{\mathbf{P}}(\Lambda; \chi , \tau)$ is independent on the variable $\chi $ and $\tau$,
\begin{equation*}
\widetilde{\mathbf{P}}_+(\Lambda; \chi , \tau)=\widetilde{\mathbf{P}}_-(\Lambda; \chi , \tau)\widehat{\mathbf{Q}}^{-1}_d.
\end{equation*}
Take $\widetilde{\mathbf{N}}(\Lambda; \chi , \tau)$ has the Laurent series expansion as
\begin{equation}\label{eq:Laurent-series-p}
\widetilde{\mathbf{N}}(\Lambda; \chi , \tau)=\mathbb{I}+\sum\limits_{l=1}^{\infty}\widetilde{\mathbf{N}}^{[l]}\Lambda^{-l}, \qquad \Lambda\to\infty.
\end{equation}
Based on the idea to derive the new Lax pair in Eq.\eqref{eq:new-laxpair}, the new matrix $\widetilde{\mathbf{P}}(\Lambda; \chi , \tau)$ satisfy
\begin{equation}\label{eq:new-laxpair-1}
\begin{split}
\frac{\partial \widetilde{\mathbf{P}}}{\partial \chi }(\Lambda; \chi , \tau)
&=\widetilde{\mathbf{U}}(\Lambda; \chi , \tau)\widetilde{\mathbf{P}}(\Lambda; \chi , \tau),\\
\frac{\partial \widetilde{\mathbf{P}}}{\partial \tau}(\Lambda; \chi , \tau)
&=\widetilde{\mathbf{V}}(\Lambda; \chi , \tau)\widetilde{\mathbf{P}}(\Lambda; \chi , \tau),\\
\end{split}
\end{equation}
where
\begin{equation}
\begin{split}
\widetilde{\mathbf{U}}(\Lambda; \chi , \tau)&=\widetilde{\mathbf{U}}^{[1]}\Lambda+\widetilde{\mathbf{U}}^{[0]}, \quad \widetilde{\mathbf{U}}^{[1]}=-\ii\pmb{\sigma}_3, \quad \widetilde{\mathbf{U}}^{[0]}=\begin{bmatrix}
0&\ii q_1(\chi,\tau)&0\\
\ii q_{1}^{*}(\chi,\tau)&0&0\\
0&0&0
\end{bmatrix}\\
\widetilde{\mathbf{V}}(\Lambda; \chi , \tau)&=\widetilde{\mathbf{V}}^{[2]}\Lambda^2+\widetilde{\mathbf{V}}^{[1]}\Lambda
+\widetilde{\mathbf{V}}^{[0]}, \quad \widetilde{\mathbf{V}}^{[2]}=-\ii\pmb{\sigma}_3, \\
\widetilde{\mathbf{V}}^{[1]}&=\widetilde{\mathbf{U}}^{[0]}, \quad \widetilde{\mathbf{V}}^{[0]}=\begin{bmatrix}
\ii\frac{|q_1(\chi,\tau)|^2}{2}&-\frac{p_{1}(\chi,\tau)}{2}&0\\
\frac{p_{1}^*(\chi,\tau)}{2}&-\ii\frac{|q_1(\chi,\tau)|^2}{2}&0\\
0&0&0
\end{bmatrix}
\end{split}
\end{equation}
and $\mathbf{q}(\chi , t)=2\widetilde{\mathbf{N}}^{[1]}_{1,2}$ are the functions with respect to $\chi $ and $\tau$.

It can be seen the jump for $\widetilde{\mathbf{N}}(\Lambda; \chi , \tau)$ is independent on the spectral parameter $\Lambda$, thus the matrix defined by
\begin{equation}\label{eq:eq:L-laxpair}
\mathbf{L}(\Lambda; \chi , \tau):=\frac{\partial\widetilde{\mathbf{P}}(\Lambda; \chi , \tau)}{\partial \Lambda}\widetilde{\mathbf{P}}^{-1}(\Lambda; \chi , \tau)
\end{equation}
is also analytic when $\Lambda\in\partial D_0$, while it has an isolated singular at $\Lambda=0$ because of the factor $\ee^{\pm\ii\Lambda^{-1}\pmb{\sigma}_3}$ appeared the phase term. So we can expand $\mathbf{L}(\Lambda; \chi, \tau)$ at $\Lambda\to\infty$ and $\Lambda=0$ respectively.

By the Laurent series Eq.\eqref{eq:Laurent-series-p}, we know the $\mathbf{L}(\Lambda; \chi , \tau)$ can be written as
\begin{equation}\label{eq:L-expression}
\mathbf{L}(\Lambda; \chi , \tau)=\mathbf{L}^{[1]}\Lambda+\mathbf{L}^{[0]}+\mathbf{L}^{[-1]}\Lambda^{-1}+\mathbf{L}^{[-2]}\Lambda^{-2}+\mathcal{O}(\Lambda^{-3}), \qquad \Lambda\to \infty,
\end{equation}
where \begin{equation*}
\mathbf{L}^{[1]}=-2\ii \tau\pmb{\sigma}_3, \quad \mathbf{L}^{[0]}=-\ii \chi \pmb{\sigma}_3+2\tau\widetilde{\mathbf{U}}^{[0]}, \quad \mathbf{L}^{[-1]}=2\tau\widetilde{\mathbf{V}}^{[0]}+\chi \widetilde{\mathbf{U}}^{[0]}.
\end{equation*}
Similarly, $\mathbf{L}(\Lambda; \chi , \tau)$ can also be expanded at $\Lambda=0$:
\begin{equation}\label{eq:L-expression-0}
\mathbf{L}(\Lambda; \chi , \tau)=\ii \widetilde{\mathbf{P}}(0; \chi , T)\pmb{\sigma}_3\widetilde{\mathbf{P}}(0; \chi , T)^{-1}\Lambda^{-2}+O(\Lambda^{-1}):=\mathbf{L}^{[-2]}\Lambda^{-2}+O(\Lambda^{-1}),
\end{equation}
together with the symmetry $\widetilde{\mathbf{P}}^{\dagger}(\Lambda^*; \chi , T)=\widetilde{\mathbf{P}}^{-1}(\Lambda; \chi , \tau)$, so $\mathbf{L}^{[-2]}$ can be rewritten in the form:
\begin{equation}
\mathbf{L}^{[-2]}(\chi , T)=\begin{bmatrix}\ii a(\chi,\tau)-\ii&\ii b(\chi,\tau)&0\\\ii b^*(\chi,\tau)&-\ii a(\chi,\tau)-\ii&0\\
0&0&-\ii
\end{bmatrix}
\end{equation}
and $\det(\mathbf{L}^{[-2]}(\chi , \tau))=-\ii, {\rm tr}(\mathbf{L}^{[-2]}(\chi , \tau))=-\ii, $ which means $|a(\chi,\tau)|^2+|b(\chi,\tau)|^2=2.$

Combining \eqref{eq:L-expression} with \eqref{eq:L-expression-0} shows that $\mathbf{L}(\Lambda;\chi,\tau)$ can be represented into the Laurent polynomial
\begin{equation}
\mathbf{L}(\Lambda;\chi,\tau)=\mathbf{L}^{[1]}\Lambda+\mathbf{L}^{[0]}+\mathbf{L}^{[-1]}\Lambda^{-1}+\mathbf{L}^{[-2]}\Lambda^{-2}
\end{equation}
and the equation \eqref{eq:eq:L-laxpair} can be reinterpreted as the Lax system:
\begin{equation}\label{eq:L-Lmabda}
\frac{\partial\widetilde{\mathbf{P}}(\Lambda; \chi , \tau)}{\partial \Lambda}=\mathbf{L}(\Lambda;\chi,\tau)\widetilde{\mathbf{P}}(\Lambda; \chi , \tau).
\end{equation}

\subsection{Ordinary differential equation in $\chi$ and $\tau$ and Painlev\'e-III hierarchy}
Since $\widetilde{\mathbf{P}}(\Lambda; \chi , \tau)$ satisfies the Lax pair Eq.\eqref{eq:new-laxpair-1} and Eq.\eqref{eq:L-Lmabda} simultaneously, so the coefficient $\widetilde{\mathbf{U}}(\Lambda; \chi , \tau)$ and $\mathbf{L}(\Lambda; \chi , \tau)$ will give a zero-curvature condition
$$\mathbf{L}_{\chi }-\widetilde{\mathbf{U}}_{\Lambda}
+\left[\mathbf{L},\widetilde{\mathbf{U}}\right]=0.$$ The coefficients
of $\Lambda^{j}$ $(j=1,2,0,-1,-2)$ in the left-hand side should be zero to match the one in the right hand side. By calculation, we know the first nonzero term appears in the off-diagonal of the coefficient $\Lambda^{0}$, which equals to
\begin{equation}
\mathbf{L}^{[0]}_{\chi }+\left[\mathbf{L}^{[0]},
\widetilde{\mathbf{U}}^{[0]}\right]+\left[\mathbf{L}^{[-1]}, -\ii\pmb{\sigma}_3\right]+\ii\pmb{\sigma}_3=0,
\end{equation}
the diagonal of which gives no information, and the off-diagonal part gives
\begin{equation}
2\ii \tau q_{1,\chi}-2\ii\tau p_1=0,\qquad 2\ii \tau q_{1, \chi}^*-2\ii\tau p_1^*=0.
\end{equation}
The coefficient of $\Lambda^{-1}$ gives the equation
\begin{equation}\label{eq:AL0}
\mathbf{L}^{[-1]}_{\chi }+\left[\mathbf{L}^{[-1]},
\widetilde{\mathbf{U}}^{[0]}\right]+\left[\mathbf{L}^{[-2]}, -\ii\pmb{\sigma}_3\right]=0.
\end{equation}
The diagonal elements produce the same identity as Eq.\eqref{eq:AL0}, and the off-diagonal elements give the following equation:
\begin{equation}\label{eq:AL1}
\begin{split}
&\ii \chi q_{1, \chi }-\tau p_{1,\chi }-2\tau |q_1|^2q_1+\ii q_1-2b=0,\\
&\ii \chi q^*_{1, \chi }+\tau p^{*}_{1, \chi}+2\tau |q_1|^2q_1^{*}+\ii q_1^*+2b^*=0.\\
\end{split}
\end{equation}
Similarly, the coefficient of $\Lambda^{-2}$ gives the following three equations:
\begin{equation}\label{eq:AL2}
\begin{split}
&a_{\chi }+\ii q_1^* b-\ii q_1b^*=0,\qquad b_{\chi }+2\ii aq_1=0, \qquad b^*_{\chi }-2\ii aq_1^*=0.
\end{split}
\end{equation}
If $\tau=0,$ Eq.\eqref{eq:AL1} reduces to
\begin{equation*}
\begin{split}
\chi  q_{1, \chi }&=-2\ii b-q_1
\end{split}
\end{equation*}
as well as its conjugate equation. Moreover, the phase or exponent term in the RHP \ref{RHP3} is $\ii\left(\Lambda \chi +\Lambda^{-1}\right),$ which is related to the Painlev\'{e} equation by a rescaled transformation. Based on method in \cite{Joshi-JMP-2007} and \cite{Peter-Duke-2019}, we introduce the rescaled transformation
\begin{equation*}
\begin{split}
\chi &=-\frac{x^2}{4}, \quad \Lambda=\frac{2}{x}\lambda, \quad q_1(\chi )=\frac{-\ii Q_1(x)}{x}, \quad
a(\chi )=A(x), \quad B(\chi )=B(x),
\end{split}
\end{equation*}
it follows that the new Lax pair depending on $x, \lambda, Q_1$ is given by
\begin{equation}\label{eq:laxpairhatAL}
\frac{\partial \widetilde{\mathbf{P}}}{\partial\lambda}
=\widehat{\mathbf{L}}(\lambda; x)\widetilde{\mathbf{P}}, \qquad \text{and} \qquad \frac{\partial \widetilde{\mathbf{P}}}{\partial x}
=\widehat{\mathbf{A}}(\lambda; x)\widetilde{\mathbf{P}},
\end{equation}
where
\begin{equation*}
\begin{split}
\widehat{L}(\lambda; x)&=\frac{\ii x}{2}\pmb{\sigma}_3+\frac{1}{\lambda}\begin{bmatrix}0&Q_1&0\\
-Q_1^*&0&0\\
0&0&0
\end{bmatrix}+\frac{x}{\lambda^2}\begin{bmatrix}\frac{\ii \left(A-1\right)}{2}&\frac{\ii B}{2}&0\\
\frac{\ii B^*}{2}&\frac{-\ii \left(A+1\right)}{2}&0\\0&0&-\frac{\ii}{2}
\end{bmatrix}\\
\widehat{A}(\lambda; x)&=\frac{\ii\lambda}{2}\pmb{\sigma}_3+\frac{1}{x}
\begin{bmatrix}
0&Q_1&0\\
-Q_1^*&0&0
\\0&0&0
\end{bmatrix}-\frac{1}{\lambda}
\begin{bmatrix}\frac{\ii \left(A-1\right)}{2}&\frac{\ii B}{2}&0\\
\frac{\ii B^*}{2}&\frac{-\ii \left(A+1\right)}{2}&0\\0&0&-\frac{\ii}{2}
\end{bmatrix}.
\end{split}
\end{equation*}
Then the compatibility condition of Eq.\eqref{eq:laxpairhatAL} gives the following equations
\begin{equation}
\begin{split}
Q_{1, x}&=-B x,\qquad x A_{x}=2\left(Q_1B^*+Q_1^*B\right),\qquad x B_{x}=-4A Q_1,
\end{split}
\end{equation}
which induce two integral equations,
\begin{equation}
\begin{split}
|A|^2+|B|^2=C_1=2,\qquad BQ_1^*-B^*Q_1=C_2,
\end{split}
\end{equation}
where the constants $C_i$ $(i=1,2)$ are the integrals of motion. Introducing the new variable $y_1$ as $y_1=Q_1/(xB)$,
then the function $y_1$ will satisfy the Painlev\'{e}-III equation
\begin{equation}
y_{1, xx}-\frac{y_{1,x}^2}{y_1}+\frac{1}{x}y_{1, x}-\frac{1}{x}\left(\alpha y_1^2+\beta\right)-\gamma y_1^3-\frac{\delta}{y_1}=0,
\end{equation}
where$$\delta=-1, \beta=-2, \alpha=8C_2, \gamma=8C_1^2.$$
\ys{This kind of Painlev\'{e} is similar to the one in \cite{Peter-Duke-2019}, but the parameters $\alpha, \beta, \gamma$ and $\delta$ have a different choice.}
Furthermore, we can derive an ordinary equation with respect to $\tau$. The corresponding compatibility condition for the second equation of Eq.\eqref{eq:new-laxpair-1} and Eq.\eqref{eq:L-Lmabda} gives $\mathbf{L}_{\tau}-\widetilde{\mathbf{V}}_{\Lambda}+[\mathbf{L},\widetilde{\mathbf{V}}]=0.$ It can be seen the first nonzero term appears in the coefficient of $\Lambda^{0}$, which gives the following equations:
\begin{equation}\label{eq:lambda0}
\begin{split}
&2 \tau q_{1, \tau}+2\ii b+\chi p_1+q_1=0,\\
&2 \tau q_{1, \tau}^*-2\ii b^*+\chi p_1^*+q_1^*=0.
\end{split}
\end{equation}
Moreover, the coefficient of $\Lambda^{-1}$ gives the following three equations:
\begin{equation}\label{eq:lambdan1}
\begin{split}
&\frac{1}{2}\ii \chi \left(q_1p_1^*{+}q_1^*p_1\right){+}\ii \tau|q_1|^2_{\tau}{+}\ii|q_1|^2
{+}q_1b^*{-}q_1^*b=0,\\
&\ii \chi  q_{1, \tau}-\tau p_{1, \tau}-p_{1}-2aq_1+\chi q_1|q_1|^2=0,\\
&\ii \chi  q_{1, \tau}^*+\tau p_{1, \tau}^*+p_{1}^*+2aq_{1}^*-\chi q_{1}^*|q_1|^2=0.
\end{split}
\end{equation}
The terms for the coefficient of $\Lambda^{-2}$ give the equations
\begin{equation}\label{eq:lambdan2}
\begin{split}
&p_{1}b^*+p^*_{1}b+2a_{\tau}=0,\\
&-\ii a p_{1}+\ii b_{\tau}+b|q_1|^2=0,\\
&-\ii a p_{1}^*+\ii b^*_{\tau}-b^*|q_1|^2=0.
\end{split}
\end{equation}
From the Eq.\eqref{eq:lambda0}, we know
\begin{equation}\label{eq:w1w2w3}
b=\ii \tau q_{1, \tau}+\frac{\ii \chi }{2}p_{1}+\frac{\ii}{2}q_1, \quad b^*=-\ii\tau q^*_{1,\tau}-\frac{1}{2}\ii\chi p_1^*-\frac{1}{2}\ii q_1^*.
\end{equation}
Substituting Eq.\eqref{eq:w1w2w3} into Eq.\eqref{eq:lambdan2}, then the ordinary equations \eqref{eq:lambdan2} change into
\begin{equation}\label{eq:lambdan1-new}
\begin{split}
&a_{\tau}+\frac{1}{2}\ii\left(p_1^*q_{1,\tau}-p_1q_{1,\tau}^*\right)\tau
+\frac{1}{4}\ii\left(q_1p_1^*-q_1^*p_1\right)=0,\\
&2\ii \tau q_{1,\tau\tau}+3\ii q_{1,\tau}+2\tau q_{1,\tau}|q_1|^2+\chi\left(\ii p_{1,\tau}+p_1|q_1|^2\right)+q_1|q_1|^2-2ap_1=0,\\
&2\ii \tau q_{1,\tau\tau}^*+3\ii q_{1,\tau}^*-2\tau q_{1,\tau}^*|q_1|^2+\chi\left(\ii p_{1,\tau}^*-p_1^*|q_1|^2\right)-q_1^*|q_1|^2+2ap_1^*=0,\\
\end{split}
\end{equation}
which is consistent with the identity $a^2+|b|^2=2$. \ys{This ordinary equation can be reduced to the Eq.(119) in \cite{Peter-Duke-2019} with $q_1\to\ii \Psi, p_1\to 2\Phi, \tau\to T$. }

\subsection{Asymptotic behavior of infinite order soliton for large $\chi$}
\label{sec:large-chi}
In the previous subsection, we have derived the RHP \ref{RHP3} for the infinite order soliton, which can be used to analyze the asymptotic behavior for the solutions as $\chi$ and $\tau$ is large. We firstly begin to study the asymptotics for large $\chi$. Taking the rescaled transformation between $\chi, \Lambda$ and $\chi$:
\begin{equation}\label{eq:scaleX}
\chi=\sigma |\chi|, \tau=v|\chi|^{3/2}, \Lambda=\chi^{-1/2}z,
\end{equation}
then the phase term  changes into
\begin{equation}\label{eq:phase-term}
\Lambda \chi+\Lambda^2\tau+\Lambda^{-1}=|\chi|^{1/2}\left(\sigma z+vz^2+z^{-1}\right),
\end{equation}
where $\sigma=\pm1$, the positive sign indicates $\chi>0$ and the negative one is $\chi<0$. In view of the symmetric property for the high order soliton, we only consider the case of $\sigma=1.$  The case of $\sigma=-1$ can be analyzed similarly. Defining $\mathbf{S}(z; \chi, v)=\widetilde{\mathbf{N}}(\chi^{-1/2}z; \chi, \chi^{3/2}v)$, then the jump condition in the RHP \ref{RHP3} changes into
\begin{equation}\label{eq:near-field-S}
\mathbf{S}_{+}(z; \chi, v)=\mathbf{S} _-(z; \chi, v)\ee^{-\chi^{1/2}\vartheta(z; v)\pmb{\sigma}_3}\widehat{\mathbf{Q}}_d^{-1}\ee^{\chi^{1/2}\vartheta(z; v)\pmb{\sigma}_3}
\end{equation}
where $\vartheta(z; v)=\ii\left(z+vz^2+z^{-1}\right)$,
and the potential function $\mathbf{q}(\chi, \chi^{3/2}v)$ can be recovered as
\begin{equation}
\mathbf{q}(\chi, \chi^{3/2}v)=2\lim\limits_{z\to \infty}z\chi^{-1/2}\left(\mathbf{Q}_{H}\mathbf{S}(z; \chi, v)\mathbf{Q}_{H}^{\dagger}\right) _{1, 2}(z; \chi, v).
\end{equation}
With the aid of a simple calculation, we find if $|v|<\sqrt{3}/{9}$, the critical points of $\vartheta(z; v)$ have three distinct real numbers; and if $|v|>\sqrt{3}/{9}$, the critical points of $\vartheta(z; v)$ involve one real number and a pair of conjugate roots. Specially, when $|v|=\sqrt{3}/{9},$ it has a double root. Now we begin to discuss the asymptotic behavior under $|v|<\sqrt{3}/{9}$.
\begin{figure}[!h]
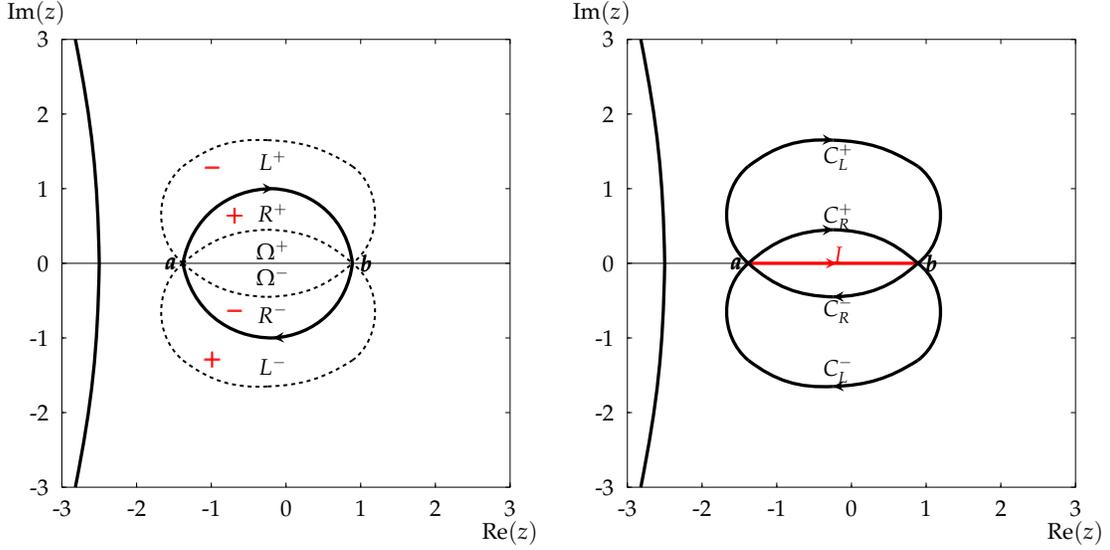

\centering
\includegraphics[width=0.45\textwidth]{X-large-0.pdf}
\centering
\includegraphics[width=0.45\textwidth]{X-large.pdf}
\caption{The sign of ${\rm Re}(\vartheta(z, v))$ with the choice of $v=\frac{1}{6}$. The left is the original contour about the matrix $\mathbf{S}(z, v)$. The right one is the deformation contour, which is the jump for $\mathbf{R}(z; v)$.}
\label{contour-near}
\end{figure}
From the contour plot for signature chart of $\Re(\vartheta(z; v))$, we can choose the original contour with the circle involved the origin. To study the asymptotics when $\chi$ is large, we still use the classical steepest descent method. One of the key step of this method is to decompose the jump matrix into upper and lower triangular matrix. In lemma\ref{lem:qc-decom}, the matrix $\mathbf{Q}_c$ has been decomposed into three block matrices, which converts $\mathbf{Q}_c$ into the diagonal block $\mathbf{Q}_{d}$ and further $\widehat{\mathbf{Q}}_{d}$. In this way, the $3\times 3$ diagonal block matrix $\widehat{\mathbf{Q}}_d$ can be readily performed the LU decomposition, which can be summarized in the following lemma:
Based on the sign of ${\rm Re}(\vartheta(z; v))=0$ in Fig.\ref{contour-near}, we can give the first contour deformation.
\begin{equation}\label{eq:R-near-field}
\begin{aligned}
\mathbf{R}(z; \chi, v):&=\mathbf{S} (z; \chi, v)\ee^{-\text{ad}_{\pmb{\sigma}_3}\vartheta(z; v)}\left(\mathbf{Q}_{R}^{[2]}\right)^{-1} ,\quad &z\in L^+,\\
\mathbf{R}(z; \chi, v):&=\mathbf{S} (z; \chi, v)\mathbf{Q}_{L}^{[2]}\ee^{- \text{ad}_{\pmb{\sigma}_3}\vartheta(z; v)}\mathbf{Q}_{C}^{[2]}, \quad &z\in R^+,\\
\mathbf{R}(z; \chi, v):&=\mathbf{S}(z; \chi, v)\mathbf{Q}_{L}^{[2]}\quad &z\in \Omega^+,\\
\mathbf{R}(z; \chi, v):&=\mathbf{S} (z; \chi, v)\mathbf{Q}_{L}^{[1]}, \quad &z\in \Omega^-,\\
\mathbf{R}(z; \chi, v):&=\mathbf{S} (z; \chi, v)\mathbf{Q}_{L}^{[1]}\ee^{- \text{ad}_{\pmb{\sigma}_3}\vartheta(z; v)}\mathbf{Q}_{C}^{[1]},\quad &z\in R^-,\\
\mathbf{R}(z; \chi, v):&=\mathbf{S} (z; \chi, v)\ee^{- \text{ad}_{\pmb{\sigma}_3}\vartheta(z; v)}\left(\mathbf{Q}_{R}^{[1]}\right)^{-1},\quad &z\in L^-,
\end{aligned}
\end{equation}
in other domain, we set $\mathbf{R}(z; \chi, v):=\mathbf{S}(z; \chi, v).$ Then the jump matrix about $\mathbf{R}(z; \chi, v)$ changes into
\begin{equation}\label{eq:R-jump}
\begin{split}
\mathbf{R}_+(z; \chi, v)&=\mathbf{R}_{-}(z; \chi, v)\begin{bmatrix}1&0&0\\-\ee^{2 \chi^{1/2}\vartheta(z; v)}&1&0\\
0&0&1
\end{bmatrix}, z\in C_{L}^+,\\
\mathbf{R}_+(z; \chi, v)&=\mathbf{R}_{-}(z; \chi, v)\begin{bmatrix}1&\frac{1}{2}\ee^{-2 \chi^{1/2}\vartheta(z; v)}&0\\0&1&0\\0&0&1
\end{bmatrix}, z\in C_{R}^+,\\
\mathbf{R}_+(z; \chi, v)&=\mathbf{R}_-(z; \chi, v){\rm diag}\left(2, \frac{1}{2}, 1\right), z\in I,\\
\mathbf{R}_+(z; \chi, v)&=\mathbf{R}_{-}(z; \chi, v)\begin{bmatrix}1&0&0\\-\frac{1}{2}\ee^{2 \chi^{1/2}\vartheta(z; v)}&1&0\\0&0&1
\end{bmatrix}, z\in C_{R}^-,\\
\mathbf{R}_+(z; \chi, v)&=\mathbf{R}_{-}(z; \chi, v)\begin{bmatrix}1&\ee^{-2 \chi^{1/2}\vartheta(z; v)}&0\\0&1&0\\0&0&1
\end{bmatrix}, z\in C_{L}^-.
\end{split}
\end{equation}
When $\chi\to\infty$, the jump matrices all approach to the identity except the jump matrix in the contour $I$, which is marked in red in Fig.\ref{contour-near}. Then we can construct the parametrix matrix to match $\mathbf{R}(z; \chi, v)$, which is shown in Appendix \ref{App:large-chi}.

Then the asymptotics about $\mathbf{q}(\chi; \chi^{3/2}v)$ can be given as
\begin{equation}\label{eq:q-near}
\begin{split}
q_i(\chi, \chi^{3/2}v){=}&\frac{{-}\ii c_i^*}{\chi^{3/4}}\sqrt{\frac{\ln(2)}{\pi}}\Bigg(\ee^{{-}2\ii \chi^{1/2}\vartheta(b(v); v){+}\ii\phi_{n}}\left(\sqrt{{-}\ii \vartheta''(b(v); v)}\right)^{2\ii p{-}1}\\&{+}\ee^{{-}2\ii \chi^{1/2}\vartheta(a(v); v){-}\ii\phi_{n}}\left(\sqrt{\ii\vartheta''(a(v); v)}\right)^{{-}2\ii p{-}1}\Bigg)+\mathcal{O}(\chi^{-1}), \,\,\,\,\,(i=1,2).
\end{split}
\end{equation}
\subsection{Asymptotic behavior as $\tau$ is large}
\label{sec:large-tau}
In the last subsection, we give the asymptotics when $\chi$ is large under the transformation Eq.\eqref{eq:scaleX}. If $|v|<{\sqrt{3}}/{9}$, then $\vartheta(z; v)$ has three real critical points, whose contour plot about ${\rm Re}(\vartheta(z; v))=0$ is shown in Fig.\ref{contour-near}. In this subsection, we continue this analysis on the region $|v|>{\sqrt{3}}/{9}$, namely, the asymptotics for large $\tau$. In this case, we still suppose $\chi>0$ and consider large $\tau>0$. For convenience, we introduce rescaled transformation:
\begin{equation}
\chi=w\tau^{2/3}, \qquad \Lambda=\tau^{-1/3}z
\end{equation}
where $w=v^{-2/3}$, then the constraint about $w$ should be $0\leq w<3$. Under this transformation, the phase term changes into
\begin{equation}
\Lambda \chi+\Lambda^2\tau+\Lambda^{-1}=\tau^{1/3}\left(wz+z^2+z^{-1}\right).
\end{equation}
Similarly, defining $\mathbf{S}(z; \tau, w)=\widetilde{\mathbf{N}}(\tau^{-1/3}z; \tau^{2/3}w, \tau)$, then the jump condition in the RHP \ref{RHP3} changes into
\begin{equation}
\mathbf{S}_{+}(z; \tau, w)=\mathbf{S} _-(z; \tau, w)\ee^{-\tau^{1/3}\varpi(z; w)\pmb{\sigma}_3}\widehat{\mathbf{Q}}_d^{-1}\ee^{\tau^{1/3}\varpi(z; w)\pmb{\sigma}_3}
\end{equation}
where $\varpi(z; w)=\ii\left(wz+z^2+z^{-1}\right)$,
and the potential function $\mathbf{q}(\tau^{2/3}w, \tau)$ can be recovered as
\begin{equation}
\mathbf{q}(\tau^{2/3}w, \tau)=2\lim\limits_{z\to \infty}z\tau^{-1/3}\left(\mathbf{Q}_{H}\mathbf{S}(z; \tau, w)\mathbf{Q}_{H}^{\dagger}\right) _{1, 2}(z; \tau, w).
\end{equation}

Compared to the asymptotics for large $\chi$, the asymptotics for large $\tau$ is different, we can not deform the original contour of $\Re(\varpi)=0$ any more. To study further, we shall give a new $\widetilde{g}(z)$ function satisfying the following RHP.
\begin{rhp}\label{RHP:large-tau}
For a fixed $0\leq w<3$, there exists a unique $\widetilde{g}(z)$ function satisfying the following conditions:
\begin{itemize}
\item
{\bf Analyticity}: $\widetilde{g}(z)$-function is analytic except on $\Sigma_{f}$(to be determined), and it takes the continuous boundary condition from the left and right hand side of $\Sigma_f$.
\item
{\bf Jump Condition}: For $\lambda\in\Sigma_f$, the $\widetilde{g}(z)$-function satisfies the following jump condition:
\begin{equation}\label{eq:jump-g-tilde}
\widetilde{g}_{+}(z)+\widetilde{g}_{-}(z)-2\varpi(z)=\widetilde{\Omega}_{n},\qquad z\in\Sigma_{f}.
\end{equation}
\item
{\bf Normalization}: As $z\to\infty$, $\widetilde{g}(z)$ satisfies
\begin{equation}
\widetilde{g}(z)\to \mathcal{O}(z^{-1}).
\end{equation}
{\bf Symmetry}: The $\widetilde{g}(z)$ has the symmetry
\begin{equation*}
\widetilde{g}(z)=-\widetilde{g}(z^*)^*.
\end{equation*}
\end{itemize}
\end{rhp}
For the above Riemann-Hilbert problem, we can determine the $\widetilde{g}(z)$-function and the contour $\Sigma_{f}$. Similar to the formula in Eq.\eqref{eq:g-function}, for $z\in\Sigma_{f}$, by taking the derivative of the jump condition \eqref{eq:jump-g-tilde}, we have
\begin{equation}\label{eq:wg-function}
\widetilde{g}'_{+}(z)+\widetilde{g}'_{-}(z)=2\ii w+4\ii z-\frac{2\ii}{z^2},\qquad z\in\Sigma_{f}.
\end{equation}
Then the function $\widetilde{g}'(z)$ can be solved by the Plemelj formula:
\begin{equation}
\widetilde{g}'(z)=\frac{\widetilde{R}(z)}{2\pi\ii}\int_{\Sigma_{f}}\frac{2\ii w+4\ii s- \frac{2\ii}{s^2}}{\widetilde{R}(s)(s-z)}ds,
\end{equation}
where $\widetilde{R}(z):=\left[\left(z-\widetilde{a}\right)\left(z-\widetilde{a}^*\right)\right]^{1/2}=\left[z^2-\widetilde{p}z+\modify{\widetilde{q}}\right]^{1/2}$, with $\widetilde{p}=\widetilde{a}+\widetilde{a}^*, \widetilde{q}=|\widetilde{a}|^2$. With the generalized residue theorem, $\widetilde{g}'(z)$ can be obtained as
\begin{equation}
\begin{split}
\widetilde{g}'(z)&=\modify{\widetilde{R}(z)}\left(\mathop{\rm Res}\limits_{s=z}+\mathop{\rm Res}\limits_{s=0}\ys{+}\mathop{\rm Res}\limits_{s=\infty}\right)\left(\frac{\ii w+2\ii s-\ii \frac{1}{s^2}}{\widetilde{R}(s)\left(s-z\right)}\right)\\
&=\left(\frac{\ii}{\widetilde{q}^{1/2}z^2}+\frac{\ii \widetilde{p}}{2\widetilde{q}^{3/2}z}\right)\widetilde{R}(z)-2\ii\widetilde{R}(z)+\ii w+2\ii z-\ii \frac{1}{z^2},
\end{split}
\end{equation}
then we have
\begin{equation}
\widetilde{g}'(z)-\varpi'(z)=\ii \left[\frac{1}{\widetilde{q}^{1/2}z^2}+\frac{\widetilde{p}}{2\widetilde{q}^{3/2}z}-2\right]\widetilde{R}(z).
\end{equation}
From the definition of $\widetilde{g}(z)$, we have $g'(z)=\mathcal{O}(z^{-2})$ when $z\to\infty$, which derive two equations about $\widetilde{p}$ and $\widetilde{q}$:
\begin{equation}\label{eq:wpands}
\begin{split}
&\mathcal{O}(1): \ii\left(\frac{\widetilde{p}}{2\widetilde{q}^{3/2}}+\widetilde{p}+w\right)\\
&\mathcal{O}(z^{-1}): \ii \frac{\left(\sqrt{\widetilde{q}}-1\right)\left(\widetilde{q}+\sqrt{\widetilde{q}}+1\right)\left(\widetilde{p}^2-4\widetilde{q}\right)}{4\widetilde{q}^{3/2}},
\end{split}
\end{equation}
from the second equation in Eq.\eqref{eq:wpands}, we can easily get $\widetilde{q}=1$ or $\widetilde{p}^2=4\widetilde{q}$. If $\widetilde{p}^2=4\widetilde{q}$, then $\widetilde{a}=\widetilde{a}^*$, which is not we wanted, thus $\widetilde{q}=1$. Substituting it into the first equation in Eq.\eqref{eq:wpands}, we have $\widetilde{p}=-\frac{2}{3}w$, then $\widetilde{a}$ and $\widetilde{a}^*$ can be given as
\begin{equation}
\widetilde{a}=-\frac{1}{3}w+\frac{\ii}{3}\sqrt{9-w^2}, \qquad \widetilde{a}^*=-\frac{1}{3}w-\frac{\ii}{3}\sqrt{9-w^2}.
\end{equation}
With this result, we can also calculate the other two roots of $g'(z)-\varpi'(z)$ as $z_{c}=-\frac{1}{12}w-\frac{1}{12}\sqrt{w^2+72}, \,\, z_{d}=-\frac{1}{12}w+\frac{1}{12}\sqrt{w^2+72}.$
Under this condition, we have
\begin{equation}
-\ii \frac{d}{dz}\frac{(3+2wz+3z^2)^{3/2}}{3\sqrt{3}z}=g'(z)-\varpi'(z).
\end{equation}
By choosing one special $w$, we give the contour plot of ${\rm Re}(\varpi(z)-g(z))=0$ in Fig.\ref{contour-large-t}.
\begin{figure}[!h]
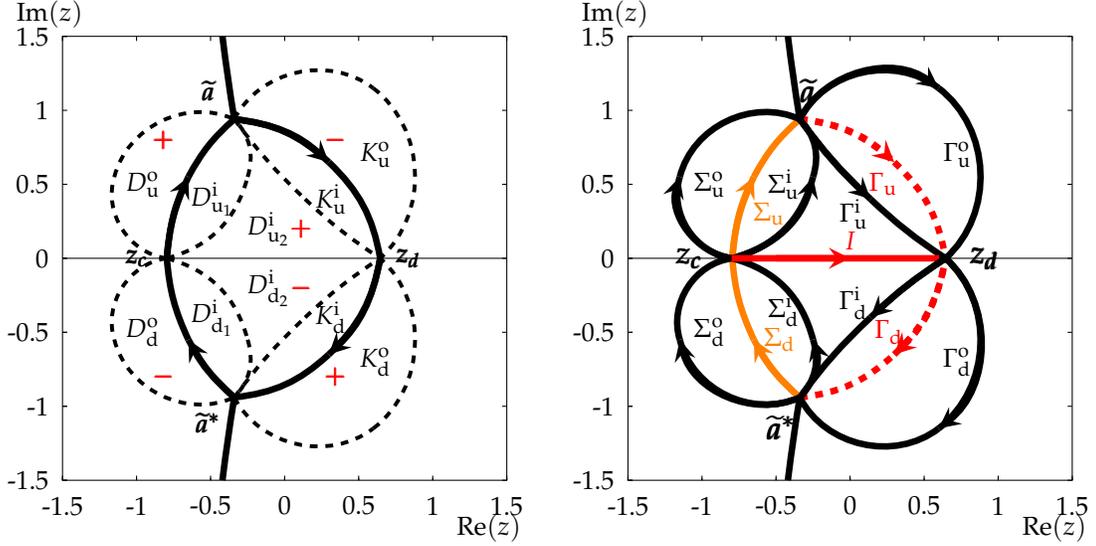

\centering
\includegraphics[width=0.45\textwidth]{T-large-0.pdf}
\centering
\includegraphics[width=0.45\textwidth]{T-large.pdf}
\caption{The contour of ${\rm Re}(\varpi(z)-\widetilde{g}(z))$ by choosing $w=1$. The left one gives the original contour, and the right one the deformation contour.}
\label{contour-large-t}
\end{figure}

With the aid of $\widetilde{g}(z)$ function, we can construct a new matrix $$\mathbf{P}_{n}(z; \tau, w):=\mathbf{S}(z; \tau; w){\rm diag}\left(\ee^{-\tau^{1/3}\widetilde{g}(z)}, \ee^{ \tau^{1/3}\widetilde{g}(z)}, 1\right).$$ Furthermore, with a similar definition $\mathbf{Q}_{n}(z; \tau, w)$ in Eq.\eqref{eq:Q-definition}, then $\mathbf{Q}_{n}(z; \tau, w)$ satisfies
 \begin{equation}\label{eq:Qfn}
\begin{aligned}
\mathbf{Q}_{n,+}(z; \tau, w)&=\mathbf{Q}_{n,-}(z; \tau, w)\begin{bmatrix}0&\sqrt{2} \ee^{ \tau^{1/3}\widetilde{\Omega}_{n}}&0\\
-\frac{\sqrt{2}}{2} \ee^{-\tau^{1/3}\widetilde{\Omega}_{n}}&0&0\\
0&0&1
\end{bmatrix},\qquad &\lambda\in \Sigma_{\rm u}\\
\mathbf{Q}_{n,+}(z; \tau, w)&=\mathbf{Q}_{n,-}(z; \tau, w)\begin{bmatrix}0&\frac{\sqrt{2}}{2} \ee^{ \tau^{1/3}\widetilde{\Omega}_{n}}&0\\
-\sqrt{2}\ee^{-\tau^{1/3}\widetilde{\Omega}_{n}}&0&0\\
0&0&1
\end{bmatrix},\qquad &\lambda\in \Sigma_{\rm d}\\
\mathbf{Q}_{n,+}(z; \tau, w)&=\mathbf{Q}_{n,-}(z; \tau, w){\rm diag}\left(\sqrt{2}, \frac{\sqrt{2}}{2}, 1\right), \qquad &\lambda\in \Gamma_{\rm u}\\
\mathbf{Q}_{n,+}(z; \tau, w)&=\mathbf{Q}_{n,-}(z; \tau, w){\rm diag}\left(\frac{\sqrt{2}}{2}, \sqrt{2}, 1\right), \qquad &\lambda\in \Gamma_{\rm d}.
\end{aligned}
\end{equation}
Furthermore, we can set a similar matrix $\widehat{\pmb{Q}}_{n}$ with Eq.\eqref{eq:hatQ-no}, then the last non-identity converts into the $\Sigma_{\rm u}, \Sigma_{\rm d}$ and $I$. Then the leading order term in this case is similar to the asymptotics of non-oscillatory region for the large order soliton, we only need to replace the variable
\begin{equation*}
\lambda\to \Lambda\tau^{-1/3}, \quad N\to \tau^{1/3}, \quad a_{n}(X, T)\to \widetilde{a}, \quad \lambda_c\to z_c, \quad \lambda_d\to z_d, \quad \Omega_n\to \widetilde{\Omega}_n,\quad R(\lambda; X, T) \to \widetilde{R}(z)
\end{equation*}
in Eq.\eqref{eq:q-no}, thus the final result is
\begin{multline}
q_i(\tau^{2/3}w, \tau)=-\ii c_{i}^*\ee^{\tau^{1/3}\widetilde{\Omega}_{n}-\ii\tilde{\mu}}\left(-\ii \frac{\sqrt{9-w^2}}{3\tau^{1/3}}+\frac{\sqrt{2p}}{\tau^{1/2}\sqrt{-\widetilde{h}''(z_{c})}}m_{-}^{z_{a}}\ee^{\ii\phi_{z_{c}}}-\frac{\sqrt{2p}}{\tau^{1/2}\sqrt{-\widetilde{h}''(z_{c})}}m_{+}^{z_{a}}\ee^{-\ii\phi_{z_c}}\right)\\
-\ii c_{i}^*\ee^{\tau^{1/3}\widetilde{\Omega}_{n}-\ii\tilde{\mu}}\left(\frac{\sqrt{2p}}{\tau^{1/2}\sqrt{\widetilde{h}''(z_d)}}m_{+}^{z_d}\ee^{\ii\phi_{z_d}}-\frac{\sqrt{2p}}{\tau^{1/2}\sqrt{\widetilde{h}''(z_d)}}m_{-}^{z_b}\ee^{-\ii\phi_{z_b}}\right)+\mathcal{O}(\tau^{2/3}),\quad i=1,2,
\end{multline}
where $\widetilde{\mu}, \widetilde{h}$ is defined as
\begin{equation}
\widetilde{\mu}=2p\int_{z_c}^{z_d}\frac{1}{\widetilde{R}(s)}ds,\quad  \widetilde{h}(z):=\frac{\ii}{2}\left(\widetilde{g}(z)-\varpi(z)\right).
\end{equation}

\ys{In this case, based on the idea to construct the $g$ function in the non-oscillatory region, we use a different way to obtain the leading order term when $\tau$ is large, which seems simpler than the complicated polynomial calculation using in \cite{Peter-Duke-2019}.}

\appendix

\renewcommand{\appendixname}{Appendix \ \Alph{section} }

\setcounter{equation}{0}
\renewcommand\theequation{\Alph{section}.\arabic{equation}}
\setcounter{definition}{0}
\renewcommand\thedefinition{\Alph{section}.\arabic{definition}}

\renewcommand\thelemma{\Alph{section}.\arabic{lemma}}

\section{\appendixname: Some preliminaries about the inverse scattering method}
\label{app:IST}
We give a brief review on the inverse scattering transform for the CNLS equation. The boundary conditions are given by
\begin{equation}
\mathbf{q}(x,t)\to 0,\qquad \mathbf{q}_{x}(x,t)\to 0,\,\,\,\,\text{as  }x\to\pm\infty.
\end{equation}
Consider the Jost solutions for the spectral problem:
\begin{equation}
\pmb{\Phi}^{\pm}(\lambda;x,t)\to\ee^{-\ii\lambda(x+\lambda t)\pmb{\sigma}_3},\,\,\,\, x\to\pm\infty.
\end{equation}

If the function $\mathbf{q}(x,t)$ satisfies the focusing CNLS equation, which is equivalent to the compatibility condition $\pmb{\Phi}_{xt}(\lambda;x,t)=\pmb{\Phi}_{tx}(\lambda;x,t)$, then the matrix function $\mathbf{\Phi}(\lambda;x,t)$ is existence and can be determined uniquely by the curve integral.
Inserting the ansatz
\begin{equation}
\pmb{\Phi}^{\pm}(\lambda;x,t)=\pmb{\mu}^{\pm}(\lambda;x,t)\ee^{-\ii\lambda (x+\lambda t)\pmb{\sigma}_3}
\end{equation}
into the spectral problem, we will obtain the Jost solution
\begin{equation}\label{eq:volterra}
\begin{split}
\pmb{\mu}^{\pm}_x&=-\ii\lambda[\pmb{\sigma}_3,\pmb{\mu}^{\pm}]+\ii\pmb{\mathcal{Q}}\pmb{\mu}^{\pm},\\
\pmb{\mu}^{\pm}_t&=-\ii\lambda^2[\pmb{\sigma}_3,\pmb{\mu}^{\pm}]+\left[\ii\lambda\pmb{\mathcal{Q}}+\frac{1}{2}\left(\pmb{\mathcal{Q}}_x+\ii\pmb{\mathcal{Q}}^2\right)\pmb{\sigma}_3\right]\pmb{\mu}^{\pm},
\end{split}
\end{equation}
with the normalization conditions
\begin{equation}\label{eq:bound}
\lim_{x\to\pm\infty}\pmb{\mu}^{\pm}(\lambda;x,t)=\mathbb{I}.
\end{equation}
These solutions can be expressed as the Volterra type integrals
\begin{multline}
\pmb{\mu}^{\pm}(\lambda;x,t)=\mathbb{I}+\ii\int_{(\pm\infty,t)}^{(x,t)}\ee^{-\ii\lambda(x-y)\pmb{\sigma}_3}\pmb{\mathcal{Q}}
(y,s)\pmb{\mu}^{\pm}(\lambda;y,s)\ee^{\ii\lambda(x-y)\pmb{\sigma}_3}\dd y\\+\int_{(\pm\infty,0)}^{(\pm\infty,t)}\ee^{-\ii\lambda^2(t-s)\pmb{\sigma}_3}\left[\ii\lambda\pmb{\mathcal{Q}}(y,s)
+\frac{1}{2}\left(\pmb{\mathcal{Q}}_y(y,s)+\ii\pmb{\mathcal{Q}}^2(y,s)\right)\pmb{\sigma}_3\right]\pmb{\mu}^{\pm}(\lambda;y,s)\ee^{\ii\lambda^2(t-s)\pmb{\sigma}_3}\dd s.
\end{multline}
By the conditions $\mathbf{q}(x,t)\to0$ and $\mathbf{q}_{x}(x,t)\to0$ as $x\to\pm\infty$ for arbitrary time $t$, the above curve integral can be reduced to
\begin{equation}
\pmb{\mu}^{\pm}(\lambda;x,t)=\mathbb{I}+\ii\int_{\pm\infty}^{x}\ee^{-\ii\lambda(x-y)\pmb{\sigma}_3}\pmb{\mathcal{Q}}
(y,t)\pmb{\mu}^{\pm}(\lambda;y,t)\ee^{\ii\lambda(x-y)\pmb{\sigma}_3}\dd y.
\end{equation}

Define the following analytic matrix
\begin{equation}
\pmb{\mu}_+\equiv \left(\pmb{\mu}^{-}_1,\pmb{\mu}_2^{+}\right),\,\,\,\,\,\, \pmb{\mu}_-\equiv \left(\pmb{\mu}^{+}_1,\pmb{\mu}_2^{-}\right).
\end{equation}
Through the analytic property of Volterra integral equations, we obtain that the matrix $\pmb{\mu}_+$ is analytic in the upper half plane and the matrix $\pmb{\mu}_-$ is analytic in the lower half plane under the condition $q_1(x,t),q_2(x,t)\in L^1(\mathbb{R})$ for the fixed time $t$.

The symmetry for the Jost solutions is
\begin{equation}
\pmb{\Phi}^{\pm}(\lambda;x,t)\left[\pmb{\Phi}^{\pm}(\lambda^*;x,t)\right]^{\dag}=\mathbb{I}.
\end{equation}
Define the scattering matrix
\begin{equation}\label{eq:scattering}
\pmb{\Phi}^{-}(\lambda;x,t)=\pmb{\Phi}^{+}(\lambda;x,t)\mathbf{S}(\lambda),\,\,\,\,\lambda\in\mathbb{R}
\end{equation}
where
\begin{equation}
\mathbf{S}(\lambda)=\begin{bmatrix}
\bar{a}(\lambda)& \mathbf{b}(\lambda) \\
\bar{\mathbf{b}}(\lambda) & \mathbf{a}(\lambda) \\
\end{bmatrix},\,\,\,\,\, \det(\mathbf{S}(\lambda))=1.
\end{equation}
Then we obtain the symmetric relation for the scattering matrix
\begin{equation}
\mathbf{S}(\lambda)\mathbf{S}^{\dag}(\lambda^*)=\mathbb{I},
\end{equation}
which implies that
\begin{equation}
\bar{a}(\lambda)=\det(\mathbf{a}^{\dag}(\lambda^*)),\,\,\,\,\, \bar{\mathbf{b}}(\lambda)=-{\rm adj}(\mathbf{a}^{\dag}(\lambda^*))\mathbf{b}^{\dag}(\lambda^*).
\end{equation}
From the integral representation, we know that $\mathbf{a}(\lambda)$ is analytic in the lower half plane $\mathbb{C}_-.$ Moreover, we have $\det(\pmb{\mu_-})=\det(\mathbf{a}(\lambda))$. Assume the form $\det(\mathbf{a}(\lambda))=\prod_{i=1}^n\frac{\lambda-\lambda_i^*}{\lambda-\lambda_i}a_0(\lambda)$, where $a_0(\lambda)$ has no zero. On the other hand, since $\det(\pmb{\mu^{\pm}})=1$, then the matrix $\pmb{\mu_+}$ has the one dimensional kernel at $\lambda=\lambda_i$.
By the symmetric relation, we know that
\begin{equation}\label{eq:sym}
\begin{split}
&\mathbf{m}_{\pm}(\lambda;x,t)\left[\mathbf{m}_{\mp}(\lambda^*;x,t)\right]^{\dag}=\mathbb{I},\\ &\mathbf{m}_{+}=\pmb{\mu}_{+}{\rm diag}\left(\frac{1}{\det(\mathbf{a}^{\dag}(\lambda^*))},\mathbb{I}_2\right),\,\,\,\,\, \mathbf{m}_{-}=\pmb{\mu}_{-}{\rm diag}\left(1,\mathbf{a}^{-1}(\lambda)\right).
\end{split}
\end{equation}
Introducing the notation $\mathbf{R}(\lambda)=\mathbf{b}(\lambda)\mathbf{a}^{-1}(\lambda)$, we obtain that
\begin{equation}
\begin{split}
\mathbf{m}_+&=\pmb{\mu}^+\ee^{-\ii\lambda (x+\lambda t)\pmb{\sigma}_3}\begin{bmatrix}
1& 0 \\
-\mathbf{R}^{\dag}(\lambda^*) & \mathbb{I}_2\\
\end{bmatrix}\ee^{\ii\lambda (x+\lambda t)\pmb{\sigma}_3},\\
\mathbf{m}_-&=\pmb{\mu}^+\ee^{-\ii\lambda (x+\lambda t)\pmb{\sigma}_3}\begin{bmatrix}
1& \mathbf{R}(\lambda) \\
0 & \mathbb{I}_2\\
\end{bmatrix}\ee^{\ii\lambda (x+\lambda t)\pmb{\sigma}_3},
\end{split}
\end{equation}
which implies the jump condition
\begin{equation}
\mathbf{m}_+=\mathbf{m}_-\ee^{-\ii\lambda (x+\lambda t)\pmb{\sigma}_3}\begin{bmatrix}
1+\mathbf{R}(\lambda)\mathbf{R}^{\dag}(\lambda)& -\mathbf{R}(\lambda) \\
-\mathbf{R}^{\dag}(\lambda) & \mathbb{I}_2\\
\end{bmatrix}\ee^{\ii\lambda (x+\lambda t)\pmb{\sigma}_3},\,\,\,\,\,\, \lambda\in\mathbb{R}.
\end{equation}
For the simple discrete spectrum $\lambda=\lambda_i\in\mathbb{C}_+$ and $\lambda=\lambda_i^*\in\mathbb{C}_-$,  we have the following residue conditions for $\mathbf{m}_{\pm}$:
\begin{equation}\label{eq:residue+}
\underset{\lambda=\lambda_i}{{\rm Res}}\mathbf{m}_+(\lambda;x,t)=\lim_{\lambda\to\lambda_i}\mathbf{m}_+(\lambda;x,t)\begin{bmatrix}
0&0_{1\times 2} \\
-\mathbf{c}_i\ee^{\tilde{\theta}_i}&0_{2\times 2} \\
\end{bmatrix},
\end{equation}
and $\tilde{\theta}_i=2\ii\lambda_i(x+\lambda_i t),$
\begin{equation}\label{eq:residue-}
\underset{\lambda=\lambda_i^*}{{\rm Res}}\mathbf{m}_-(\lambda;x,t)=\lim_{\lambda\to\lambda_i^*}\mathbf{m}_-(\lambda;x,t)\begin{bmatrix}
0&\mathbf{c}_i^{\dag}\ee^{-\tilde{\theta}_i^*} \\
0_{2\times 1} &0_{2\times 2} \\
\end{bmatrix}.
\end{equation}
By solving the above Riemann-Hilbert problem, the solutions for CNLS equation can be constructed by:
\begin{equation}
\mathbf{q}(x,t)=2\lim_{\lambda\to\infty}\lambda\mathbf{m}_{12}(\lambda;x,t),
\end{equation}
where the subscript $_{12}$ represents the $(1,2)$ and $(1,3)$ elements of the corresponding matrix.
As for the non-reflection coefficient potential $\mathbf{q}(x,t)$ with $\mathbf{R}(\lambda)=0$ or $\bar{a}(\lambda)=\prod_{i=1}^n\frac{\lambda-\lambda_i}{\lambda-\lambda_i^*}$, the multi-soliton solutions can be constructed by
\begin{equation}\label{eq:ansatz}
\mathbf{m}(\lambda;x,t)=\mathbb{I}+\sum_{i=1}^n\left(\frac{\mathbf{m}_{1,i}(x,t)}{\lambda-\lambda_i},\frac{\mathbf{m}_{2,i}(x,t)}{\lambda-\lambda_i^*}\right).
\end{equation}
By the residue conditions, we have the linear system
\begin{equation}
\begin{split}
\mathbf{m}_{1,j}(x,t)=&\begin{bmatrix}
0 \\
-\mathbf{c}_j\ee^{\tilde{\theta}_j} \\
\end{bmatrix}-\sum_{i=1}^n\frac{\mathbf{m}_{2,i}(x,t) \mathbf{c}_j\ee^{\tilde{\theta}_j}}{\lambda_j-\lambda_i^*},\,\,\,\, j=1,2,\cdots,n \\
\mathbf{m}_{2,j}(x,t)=&\begin{bmatrix}
\mathbf{c}_j^{\dag}\ee^{-\tilde{\theta}_j^*} \\
0_{2\times 2} \\
\end{bmatrix}+\sum_{i=1}^n\frac{\mathbf{m}_{1,i}(x,t)\mathbf{c}_j^{\dag}\ee^{-\tilde{\theta}_j^*}}{\lambda_j^*-\lambda_i},
\end{split}
\end{equation}
which will determine the elements $\mathbf{m}_{1,j}(x,t)$ and $\mathbf{m}_{2,j}(x,t)$ uniquely.
Then the $n$-soliton solution can be constructed by
\begin{equation}
\mathbf{q}(x,t)=2\sum_{i=1}^n\mathbf{m}_{2,i}^{(1)}(x,t),
\end{equation}
where the notation $\mathbf{m}_{2,i}^{(1)}(x,t)$ represents the first row of the matrix $\mathbf{m}_{2,i}(x,t)$.

To obtain a better formula for the soliton solutions, we define the following new matrix function:
\begin{equation}\label{eq:tildem}
\widetilde{\mathbf{m}}(\lambda;x,t)=\mathbf{m}(\lambda;x,t){\rm diag}\left(\prod_{i=1}^n\frac{\lambda-\lambda_i}{\lambda-\lambda_i^*},1,1\right).
\end{equation}
Then the residue conditions \eqref{eq:residue+} and \eqref{eq:residue-} can be converted into the following kernel and residue condition respectively:
\begin{equation}
\widetilde{\mathbf{m}}(\lambda;x,t)|_{\lambda=\lambda_i}\begin{bmatrix}
1 \\
\mathbf{c}_i\gamma_i\ee^{\tilde{\theta}_i} \\
\end{bmatrix}=0,\,\,\,\,\underset{\lambda=\lambda_i^*}{\rm Res}\widetilde{\mathbf{m}}(\lambda;x,t)\begin{bmatrix}
-\mathbf{c}_i^{\dag}\gamma_i^*\ee^{-\tilde{\theta}_i^*} \\
\mathbb{I}_2 \\
\end{bmatrix}=0,\,\,\,\,\,  \gamma_i=\frac{1}{\lambda_i-\lambda_i^*}\prod_{j=1,j\neq i}^n\frac{\lambda_i-\lambda_j}{\lambda_i-\lambda_j^*}.
\end{equation}
Finally, the matrix function $\widetilde{\mathbf{m}}(\lambda;x,t)$ can be written in the form:
\begin{equation}
\widetilde{\mathbf{m}}(\lambda;x,t)=\mathbb{I}-\mathbf{X}_n\mathbf{M}_n^{-1}\left(\lambda \mathbb{I}-\mathbf{D}_n\right)^{-1}\mathbf{X}_n^{\dag}
\end{equation}
where $\mathbf{D}_n={\rm diag}\left(\lambda_1^*,\lambda_2^*,\cdots,\lambda_n^*\right)$ and
\begin{equation}
\mathbf{X}_n=\begin{bmatrix}
1 &1 &\cdots &1 \\
\mathbf{c}_1\gamma_1\ee^{\tilde{\theta}_1} &\mathbf{c}_2\gamma_2\ee^{\tilde{\theta}_2} &\cdots &\mathbf{c}_n\gamma_n\ee^{\tilde{\theta}_n} \\
\end{bmatrix},\,\,\,\,\mathbf{M}_n=\left(\frac{1+\mathbf{c}_i^{\dag}\mathbf{c}_j\gamma_i^*\gamma_j\ee^{\tilde{\theta}_i^*+\tilde{\theta}_j}}{\lambda_j-\lambda_i^*}\right)_{1\leq i,j\leq n}.
\end{equation}
It is readily to see that the matrix function $\widetilde{\mathbf{m}}(\lambda;x,t)$ is a special Darboux matrix.
Then the reconstruction of potential function can be performed by
\begin{equation}
\mathbf{q}(x,t)=2\lim_{\lambda\to\infty}\lambda\widetilde{\mathbf{m}}_{12}(\lambda;x,t),=-2\mathbf{X}_n^{(1)}\mathbf{M}_n^{-1}\left(\mathbf{X}_n^{(2)}\right)^{\dag},\,\,\,\,|\mathbf{q}(x,t)|^2=\ln_{xx}(\det(\mathbf{M}_n))
\end{equation}
where $\mathbf{X}_n^{(1)}$ represents the first row of $\mathbf{X}_n$ and $\mathbf{X}_n^{(2)}$ represents the second and third row of $\mathbf{X}_n$.

\section{\appendixname : Dynamic behavior to the general second order solitons $q_1^{[2]}$ and $q_2^{[2]}$}\label{appendix:second-order}
\ys{If the perturbation parameters $\epsilon_{j}^{[1]}, \epsilon_{j}^{[2]}, j=1,2,\cdots, N$ are zero, then the two components $q_1$ and $q_2$ have a similar behavior on both the large order and infinite order asymptotics, which can be seen from the theorem \ref{theo:os} to \ref{theo:large-tau}. While $\epsilon_{j}^{[1]}$ and $\epsilon_{j}^{[2]}$ are not zero, then they have a different asymptotics. Up to now, we do not know how to study the asymptotics by the Riemann-Hilbert problems. In the following, we merely give a simple analysis to the second order soliton when $\epsilon_{1}^{[1]}$ and $\epsilon_{1}^{[2]}$ are a small perturbation and $\lambda_1=\ii$, then the second order $q_1^{[2]}$ and $q_2^{[2]}$ can be given as
\begin{equation}
\begin{split}
q_1^{[2]}&=\frac{-8\cos(\theta)\left[\left(4t+2\ii x-\ii\right)\ee^{6x+2\ii t}+\left(4t-2\ii x-\ii\right)\ee^{2x+2\ii t}\right]+\delta^{[p_1]}_{1}\ee^{6x+2\ii t}+\delta^{[p_1]}_{2}\ee^{2x+2\ii t}}{\left[\left(64t^2+16x^2+2\right)\ee^{4x}+\ee^{8x}+1\right]+\delta_{3}\ee^{4x}+\delta_4\ee^{8x}}\\
q_2^{[2]}&=\frac{-8\sin(\theta)\left[\left(4t+2\ii x-\ii\right)\ee^{6x+2\ii t}+\left(4t-2\ii x-\ii\right)\ee^{2x+2\ii t}\right]+\delta^{[p_2]}_{1}\ee^{6x+2\ii t}+\delta^{[p_2]}_{2}\ee^{2x+2\ii t}}{\left[\left(64t^2+16x^2+2\right)\ee^{4x}+\ee^{8x}+1\right]+\delta_{3}\ee^{4x}+\delta_4\ee^{8x}}
\end{split}
\end{equation}
where $\delta^{[p_i]}_{j}, (i=1,2, j=1,2,3,4)$ are small parameters with respect to $\epsilon^{[1]}_{1}$ and $\epsilon^{[2]}_1$, which equal to
\begin{equation}
\begin{split}
\delta^{[p_1]}_1&=4\ii\left[(\epsilon^{[1]}_1)^2-(\epsilon^{[2]}_1)^2\right]\cos(\theta)^3+4\ii\epsilon^{[1]}_1\left(2\sin(\theta)\epsilon^{[2]}_1-4\ii x-8t-\ii\right)\cos(\theta)^2\\
&-2\ii\epsilon^{[2]}_1\left(4\ii x+8t+\ii\right)\sin(2\theta)+4\ii\left(2(\epsilon^{[2]}_1)^2-(\epsilon^{[1]}_1)^2\right)\cos(\theta)\\&-4\ii\epsilon^{[1]}_1\left(2\sin(\theta)\epsilon^{[2]}_1-4\ii x-8t+\ii\right),\\
\delta^{[p_1]}_{2}&=2\epsilon^{[2]}_1\sin(2\theta)+4\epsilon^{[1]}_1\cos(\theta)^2+4\epsilon^{[2]}_1,\\
\delta^{[p_2]}_1&=-8\ii\cos(\theta)^3\epsilon^{[1]}_1\epsilon^{[2]}_1+\left(4\ii\left((\epsilon^{[1]}_1)^2-(\epsilon^{[2]}_1)^2\right)\sin(\theta)+4\ii\epsilon^{[2]}_1\left(4\ii x+8t+\ii\right)\right)\cos(\theta)^2\\
&-2\ii\epsilon^{[1]}_1\left(4\ii x+8t+\ii\right)\sin(2\theta)+4\ii\sin(\theta)(\epsilon^{[1]}_1)^2+8\epsilon^{[2]}_1,\\
\delta^{[p_2]}_{2}&=2\epsilon^{[1]}_1\sin(2\theta)-4\epsilon^{[2]}_1\cos(\theta)^2+8\epsilon^{[2]}_1,\\
\delta_3&=3\left((\epsilon^{[1]}_1)^2-(\epsilon^{[2]}_1)^2\right)\cos(\theta)^2+2\epsilon^{[1]}_1\left(3\epsilon^{[2]}_1\sin(\theta)-16t\right)\cos(\theta)-32\epsilon^{[2]}_1\sin(\theta)t\\&
+(\epsilon^{[1]}_1)^2+4(\epsilon^{[2]}_1)^2,\\
\delta_4&=\left((\epsilon^{[2]}_1)^2-(\epsilon^{[1]}_1)^2\right)\cos(\theta)^2-\epsilon^{[1]}_1\epsilon^{[2]}_1\sin(2\theta)+(\epsilon^{[1]}_1)^2.
\end{split}
\end{equation}
Especially, when $t\to\pm\infty$, these two components have the following asymptotic expression
\begin{equation}
\begin{split}
q_1^{[2]}&\to\Bigg[{-}\frac{\ii\epsilon^{[2]}_1\sin(2\theta){-}2\ii\epsilon^{[1]}_1\sin(\theta)^2{+}2\cos(\theta)}{1{+}\sqrt{\sin(\theta)^2(\epsilon^{[1]}_1)^2{-}\cos(\theta)^2(\epsilon^{[2]}_1)^2{-}\epsilon^{[1]}_1\epsilon^{[2]}_1\sin(2\theta)}}
\\&\cdot{\rm sech}\left(2x{-}\log(t){+}\log\left(\frac{\sqrt{1{+}\cos(\theta)^2(\epsilon^{[2]}_1)^2{+}\sin(\theta)^2(\epsilon^{[1]}_1)^2{-}\epsilon^{[1]}_1\epsilon^{[2]}_1\sin(2\theta)}}{8}\right)\right)\\&-2\cos(\theta){\rm sech}\left(2x+\log(t)+3\log(2)\right)\Bigg]\ee^{2\ii t},\\
\end{split}
\end{equation}
\begin{equation}
\begin{split}
q_2^{[2]}&\to\Bigg[{-}\frac{\ii\epsilon^{[1]}_1\sin(2\theta){-}2\ii\epsilon^{[2]}_1\cos(\theta)^2{+}2\sin(\theta)}{1{+}\sqrt{\sin(\theta)^2(\epsilon^{[1]}_1)^2{-}\cos(\theta)^2(\epsilon^{[2]}_1)^2{-}\epsilon^{[1]}_1\epsilon^{[2]}_1\sin(2\theta)}}
\\&\cdot{\rm sech}\left(2x{-}\log(t){+}\log\left(\frac{\sqrt{1{+}\cos(\theta)^2(\epsilon^{[1]}_1)^2{+}\sin(\theta)^2(\epsilon^{[2]}_1)^2{-}\epsilon^{[1]}_1\epsilon^{[2]}_1\sin(2\theta)}}{8}\right)\right)\\&-2\sin(\theta){\rm sech}\left(2x+\log(t)+3\log(2)\right)\Bigg]\ee^{2\ii t},\\
\end{split}
\end{equation}
Under this perturbation, the parameter $\epsilon^{[1]}_{1}$ and $\epsilon^{[2]}_{1}$ only appears in one of the asymptotics soliton, which can been seen from the Fig. \ref{fig:two-diff-soliton}.
\begin{figure}[!h]
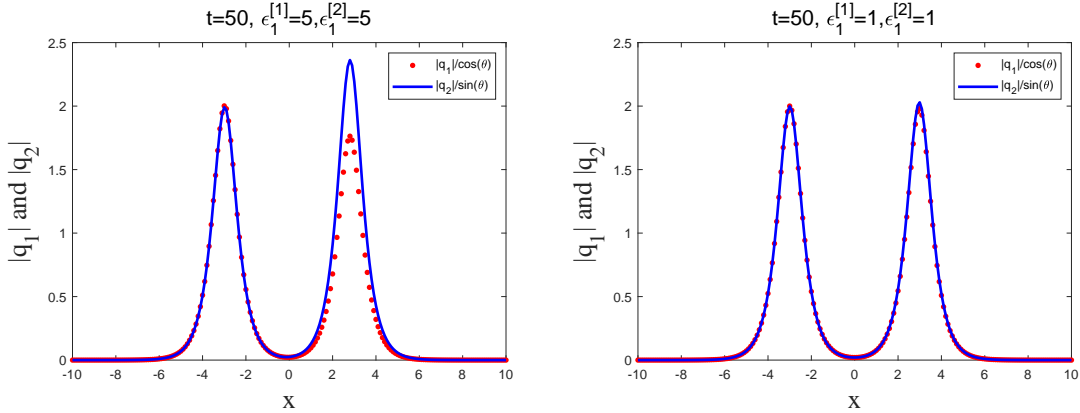

\centering
\includegraphics[width=0.45\textwidth]{q1andq2.pdf}
\centering
\includegraphics[width=0.45\textwidth]{q1andq2-2.pdf}
\caption{Two different dynamical behavior of $q_1$ and $q_2$ by choosing different $\epsilon^{[1]}_{1}$ and $\epsilon^{[2]}_{1}$.}
\label{fig:two-diff-soliton}
\end{figure}
It can be seen when $\epsilon^{[1]}_1, \epsilon^{[2]}_{1}$ are small, the amplitudes to $\frac{|q_1|}{\cos(\theta)}$ and $\frac{|q_2|}{\sin(\theta)}$ have a slight difference. As the term $-4\ii\frac{\sin(\theta)\epsilon^{[1]}_{1}-\cos(\theta)\epsilon^{[2]}_{1}}{\sin(2\theta)}$ increases,  $\frac{|q_1|}{\cos(\theta)}$ and $\frac{|q_2|}{\sin(\theta)}$ will exhibit the distinct behaviors. In this similar manner, we could consider the other finite order cases. As the parameters $\epsilon_j^{[1]}$ and $\epsilon_j^{[2]}$ tends to zero, these solutions will converge to the solutions with the maximum $2$-norm $\|\cdot\|_2$ that we mainly consider in this work.}

\section{\appendixname : The proof of Lemma \ref{lem:asym-m}}
\label{app:CLS}
In this appendix, we will give the proof to lemma \ref{lem:asym-m}.
\begin{proof}
Based on the idea \cite{Faddeev-Book}, when $x\to+\infty$, we have
\begin{equation}\label{eq:phi20}
\begin{split}
\mathbf{\Phi}_1^{[0]}&=\begin{bmatrix}1\\\ee^{2\tilde{\theta}_1}\\\ee^{2\tilde{\theta}_1}
\end{bmatrix}=\begin{bmatrix}1\\0\\0
\end{bmatrix}+\mathcal{O}(\ee^{-2{\rm Im}(\lambda_1)x}),\quad x\to+\infty,\\
\end{split}
\end{equation}
Correspondingly, the asymptotics for $\mathbf{P}^{[1]}$ and $\mathbf{T}^{[1]}$ can be represented as
\begin{equation}\label{eq:p20t20}
\begin{split}
\mathbf{P}^{[1]}&=\frac{\Phi_{1}^{[0]}\left(\Phi_{1}^{[0]}\right)^{\dagger}}
{\left(\Phi_{1}^{[0]}\right)^{\dagger}\Phi_{1}^{[0]}}=\begin{bmatrix}1&0&0\\0&0&0\\
0&0&0
\end{bmatrix}+\mathcal{O}(\ee^{-2{\rm Im}(\lambda_1)x}),\quad x\to+\infty,\\ \mathbf{T}^{[1]}&=\left(\mathbb{I}-\frac{\lambda_1-\lambda_1^*}{\lambda-\lambda_1^*}\mathbf{P}^{[1]}\right)=\begin{bmatrix}\left(\frac{\lambda-\lambda_1}{\lambda-\lambda_1^*}\right)&0&0\\0&1&0\\0&0&1
\end{bmatrix}+\mathcal{O}(\ee^{-2{\rm Im}(\lambda_1)x}), \quad x\to+\infty.\\
\end{split}
\end{equation}
Furthermore, with Eq.\eqref{eq:high-dt}, Eq.\eqref{eq:phi20} and Eq.\eqref{eq:p20t20}, we have
\begin{equation}
\begin{split}
\mathbf{\Phi}_1^{[1]}&=\left(\mathbb{I}-\mathbf{P}^{[1]}\right)\begin{bmatrix}0\\ \frac{d}{d\lambda}\ee^{2\tilde{\theta}}\Big|_{\lambda=\lambda_1}\\\frac{d}{d\lambda}\ee^{2\tilde{\theta}}\Big|_{\lambda=\lambda_1}\end{bmatrix}
+\frac{\mathbf{P}^{[1]}}{\lambda_1-\lambda_1^*}\begin{bmatrix}1\\\ee^{2\tilde{\theta}_1}\\\ee^{2\tilde{\theta}_1}\end{bmatrix}
=\frac{1}{\lambda_1-\lambda_1^*}\begin{bmatrix}1\\0\\0
\end{bmatrix}+\mathcal{O}\left(x\ee^{-2{\rm Im}(\lambda_1)x}\right),\, x\to+\infty,\\
\mathbf{P}^{[2]}&=\frac{\mathbf{\Phi}_{1}^{[1]}\left(\mathbf{\Phi}^{[1]}_1\right)^{\dagger}}
{\left(\mathbf{\Phi}_{1}^{[1]}\right)^{\dagger}\mathbf{\Phi}^{[1]}_1}=\begin{bmatrix}1&0&0\\0&0&0\\0&0&0
\end{bmatrix}+\mathcal{O}\left[x\ee^{-2{\rm Im}(\lambda_1)x}\right],\quad x\to+\infty,\\ \mathbf{T}^{[2]}&=\left(\mathbb{I}-\frac{\lambda_1-\lambda_1^*}{\lambda-\lambda_1^*}
\mathbf{P}^{[2]}\right)=\begin{bmatrix}\left(\frac{\lambda-\lambda_1}{\lambda-\lambda_1^*}\right)&0&0\\0&1&0\\0&0&1
\end{bmatrix}+\mathcal{O}\left[x\ee^{-2{\rm Im}(\lambda_1)x}\right], \quad x\to+\infty.\\
\end{split}
\end{equation}
where $\tilde{\theta}=\ii\lambda\left(x+\lambda t\right)$.
In succession, after $N$-fold iteration, we have
\begin{equation}\label{eq:T-infinity}
\begin{split}
\mathbf{T}_{N}&=\begin{bmatrix}\left(\frac{\lambda-\lambda_1}{\lambda-\lambda_1^*}\right)^{N}&0&0\\0&1&0\\0&0&1
\end{bmatrix}+\mathcal{O}\left(x^{N-1}\ee^{-2{\rm Im}(\lambda_1)x}\right), \quad x\to+\infty,\\
:&=\mathbf{T}_{+}(\lambda)+\mathcal{O}\left(x^{N-1}\ee^{-2{\rm Im}(\lambda_1)x}\right), \quad x\to+\infty.
\end{split}
\end{equation}
Conversely, when $x\to-\infty$, we have
\begin{equation}\label{eq:phi20}
\begin{split}
\mathbf{\Phi}_1^{[0]}&=\begin{bmatrix}0\\1\\1
\end{bmatrix}+\mathcal{O}(\ee^{2{\rm Im}(\lambda_1)x}),\quad x\to-\infty,\\
\mathbf{P}^{[1]}&=\begin{bmatrix}0&0&0\\0&\frac{1}{2}&\frac{1}{2}\\
0&\frac{1}{2}&\frac{1}{2}
\end{bmatrix}+\mathcal{O}(\ee^{2{\rm Im}(\lambda_1)x}),\quad x\to-\infty,\\ \mathbf{T}^{[1]}&=\begin{bmatrix}1&0&0\\0&1-\frac{\lambda_1-\lambda_1^*}{2\left(\lambda-\lambda_1^*\right)}&-\frac{\lambda_1-\lambda_1^*}{2\left(\lambda-\lambda_1^*\right)}\\
0&-\frac{\lambda_1-\lambda_1^*}{2\left(\lambda-\lambda_1^*\right)}&1-\frac{\lambda_1-\lambda_1^*}{2\left(\lambda-\lambda_1^*\right)}
\end{bmatrix}+\mathcal{O}(\ee^{2{\rm Im}(\lambda_1)x}), \quad x\to-\infty.\\
\end{split}
\end{equation}

With $N$-fold iteration, we have
\begin{equation}\label{eq:T-infinity-1}
\begin{split}
\mathbf{T}_{N}(\lambda; x, t)&=\begin{bmatrix}1&0&0\\0&1-\frac{\lambda_1-\lambda_1^*}{2\left(\lambda-\lambda_1^*\right)}&-\frac{\lambda_1-\lambda_1^*}{2\left(\lambda-\lambda_1^*\right)}\\
0&-\frac{\lambda_1-\lambda_1^*}{2\left(\lambda-\lambda_1^*\right)}&1-\frac{\lambda_1-\lambda_1^*}{2\left(\lambda-\lambda_1^*\right)}
\end{bmatrix}^{N}+\mathcal{O}(x^{N-1}\ee^{2{\rm Im}(\lambda_1)x})\\
&=\begin{bmatrix}1&0&0\\
0&\frac{1}{2}+\frac{1}{2}\left(\frac{\lambda-\lambda_1}{\lambda-\lambda_1^*}\right)^{N}&-\frac{1}{2}+\frac{1}{2}\left(\frac{\lambda-\lambda_1}{\lambda-\lambda_1^*}\right)^{N}\\
0&-\frac{1}{2}+\frac{1}{2}\left(\frac{\lambda-\lambda_1}{\lambda-\lambda_1^*}\right)^{N}&\frac{1}{2}+\frac{1}{2}\left(\frac{\lambda-\lambda_1}{\lambda-\lambda_1^*}\right)^{N}
\end{bmatrix}+\mathcal{O}(x^{N-1}\ee^{2{\rm Im}(\lambda_1)x}), \quad x\to-\infty,\\
:&=\mathbf{T}_{-}(\lambda)+\mathcal{O}(x^{N-1}\ee^{2{\rm Im}(\lambda_1)x}), \quad x\to-\infty,
\end{split}
\end{equation}
it completes the proof.
\end{proof}
Furthermore, based on the proof of the lemma \ref{lem:asym-m} and the theory in \cite{Yang-book-2010}, we can give the conservations for CNLS equation.
\begin{proof}
Set $\pmb{\Phi}(\lambda; x, t)=\left(\phi_1(\lambda; x, t), \phi_2(\lambda; x, t), \phi_3(\lambda; x, t)\right)^T$ be the solution of the Lax pair. Define two new variables $\mu^{(1)}(\lambda; x, t)$ and $\mu^{(2)}(\lambda; x, t)$ as
\begin{equation}
\mu^{(1)}(\lambda; x, t)=\phi_2(\lambda; x, t)/\phi_1(\lambda; x, t),\qquad \mu^{(2)}(\lambda; x, t)=\phi_3(\lambda; x, t)/\phi_1(\lambda; x, t),
\end{equation}
then we have
\begin{equation}\label{eq:laws}
\left(\ln \phi_1(\lambda; x, t)\right)_x=-\ii\lambda+\ii q_1\mu^{(1)}(\lambda; x, t)+\ii q_2\mu^{(2)}(\lambda; x, t).
\end{equation}
From the Eq.\eqref{eq:tildem}, we know $\phi_1(\lambda; x, t)$ equals to $[\widetilde{\mathbf{m}}(\lambda;x,t)\ee^{-\ii\lambda\left(x+2\lambda t\right)\pmb{\sigma}_3}\left(1, 0, 0\right)^{T}]_{1,1}.$
To get the conservation laws, set the $\mu^{(1)}(\lambda; x, t)$, $\mu^{(2)}(\lambda; x, t)$  and $N\log\left(\frac{\lambda-\lambda_1}{\lambda-\lambda_1^*}\right)$ as an expansion about $\lambda$, that is
\begin{equation}
\begin{split}
\mu^{(i)}(\lambda; x, t):=-\sum_{n=1}^{\infty}\frac{\mu_{n}^{(i)}(x, t)}{\left(-2\ii\lambda\right)^n}, \quad i=1,2,\quad N\log\left(\frac{\lambda-\lambda_1}{\lambda-\lambda_1^*}\right)=-N\sum_{n=1}^{\infty}\frac{(-2\ii)^n\left(\lambda_1^n-\lambda_1^{*,n}\right)}{n\left(-2\ii\lambda\right)^n}
\end{split}
\end{equation}
Then $\mu^{(1)}(\lambda; x, t)$ and $\mu^{(2)}(\lambda; x, t)$ satisfy the coupled Riccati equations:
\begin{equation}
\begin{split}
\mu^{(1)}_{x}&=2\ii\lambda\mu^{(1)}+\ii q_1^*-\ii\mu^{(1)}\left( q_1\mu^{(1)}+q_2\mu^{(2)}\right),\\
\mu^{(2)}_{x}&=2\ii\lambda\mu^{(2)}+\ii q_2^*-\ii\mu^{(2)}\left( q_1\mu^{(1)}+q_2\mu^{(2)}\right).
\end{split}
\end{equation}
Comparing the coefficients of $\lambda$, we can get the following recurrence relation
\begin{equation}
\begin{split}
\mu^{(1)}_{n+1}&=-\mu^{(1)}_{n,x}+\ii\sum_{k=1}^{n-1}\mu^{(1)}_{k}\left(q_1\mu^{(1)}_{n-k}+q_{2}\mu^{(2)}_{n-k}\right),\\
\mu^{(2)}_{n+1}&=-\mu^{(2)}_{n,x}+\ii\sum_{k=1}^{n-1}\mu^{(2)}_{k}\left(q_1\mu^{(1)}_{n-k}+q_{2}\mu^{(2)}_{n-k}\right).
\end{split}
\end{equation}
With a simple calculation, the first three items can be given as
\begin{equation}
\begin{split}
\mu_{1}^{(1)}&=-\ii q_1^*,\qquad \mu^{(1)}_2=\ii q_{1,x}^*,\qquad \mu^{(1)}_{3}=-\ii\left[q_{1,xx}^{*}+q^*_{1}\left(|q_1|^2+|q_2|^2\right)\right], \\
\mu_{1}^{(2)}&=-\ii q_2^*,\qquad \mu^{(1)}_2=\ii q_{2,x}^*, \qquad \mu^{(2)}_{3}=-\ii\left[q_{2,xx}^{*}+q^*_{2}\left(|q_1|^2+|q_2|^2\right)\right].
\end{split}
\end{equation}
Then the conservation laws are
\begin{equation}
I_n=\int_{-\infty}^{+\infty}\left(\ii q_{1}\mu^{(1)}_{n}+\ii q_{2}\mu^{(2)}_{n}\right) dx,
\end{equation}
With the asymptotics of $\widetilde{\mathbf{m}}(\lambda; x, t)$ when $x\to\pm\infty$ in lemma \ref{lem:asym-m}, we know the conservation laws $I_{n}$ equal to
\begin{equation}
I_{n}=N\frac{\left(-2\ii\right)^n\left(\lambda_1^n-\lambda_1^{*,n}\right)}{n}.
\end{equation}
\end{proof}

\section{\appendixname : The asymptotic analysis to $\mathbf{Q}_{f,a}(\lambda; X, T)$ and $\mathbf{R}(\Lambda; \chi; \tau)$}
\label{App:large-chi}
From the jump matrix about the $\mathbf{Q}_{f,a}(\lambda; X, T)$ in Eq.\eqref{eq:Q-jump-al} and $\mathbf{R}(z; \chi, v)$ in Eq.\eqref{eq:R-jump}, we know
\begin{equation}
\begin{split}
\mathbf{Q}_{f, a, +}(\lambda; X, T)&=\mathbf{Q}_{f, a, -}(\lambda; X, T){\rm diag}\left(2, \frac{1}{2}, 1\right), \lambda\in I,\\
\mathbf{R}_+(z; \chi, v)&=\mathbf{R}_-(z; \chi, v){\rm diag}\left(2, \frac{1}{2}, 1\right), z\in I,
\end{split}
\end{equation}
and the phase term in the asymptotics for the large order soliton is $\ee^{\pm N\phi(\lambda; X, T)}$; while for the case of infinite order soliton, the phase term is $\ee^{\pm\chi^{1/2}\vartheta(z; v)}$. Thus the asymptotics in both two cases is similar, we just give a detailed analysis to $\mathbf{R}(z; \chi, v)$. And to the large order case, we only need a transformation between $\chi$ and $N$, $z$ and $\lambda$.
\subsection{Parametrix construction}
To deal with the jump condition on contour $I$ for the infinite order soliton, set the outer parametrix matrix $\dot{\mathbf{R}}^{\rm out}(z; \chi, v)$ as
\begin{equation}
\dot{\mathbf{R}}^{\rm out}(z; \chi, v)={\rm diag}\left(\left(\frac{z-a(v)}{z-b(v)}\right)^{\ii p}, \left(\frac{z-a(v)}{z-b(v)}\right)^{-\ii p},1 \right),
\end{equation}
where $p=\ln 2/(2\pi)$.
We might hope $\dot{\mathbf{R}}^{\rm out}(z; \chi, v)$ is the global solution of the function $\mathbf{R}(z; \chi, v)$, but unfortunately, $\dot{\mathbf{R}}^{\rm out}(z; \chi, v)$ has two critical points at $z=a(v)$ and $z=b(v)$. To tackle with this problem, we should reconstruct two better matrices in the neighbourhood of $a(v)$ and $b(v)$ that not only satisfy the jump condition in the local neighbourhood but also match well onto the outer parametrix when $\zeta\to\infty$. Before constructing the local matrix, we first give a proposition about some local properties at $a(v)$ and $b(v)$.
\begin{prop}\label{prop1}
Suppose $a(v)<b(v)$, then we have $\widetilde{\vartheta}''(a(v); v)<0, \widetilde{\vartheta}''(b(v); v)>0$, where $\widetilde{\vartheta}(z; v):=-\ii \vartheta(z; v)$.
\end{prop}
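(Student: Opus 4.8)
The plan is to reduce the sign statement to a single algebraic identity evaluated at the critical points, and then to locate those points on the real axis. Since $\widetilde{\vartheta}(z;v)=-\ii\vartheta(z;v)=z+vz^2+z^{-1}$ is real-valued on the real axis, I would first record $\widetilde{\vartheta}'(z)=1+2vz-z^{-2}$ and $\widetilde{\vartheta}''(z)=2v+2z^{-3}$. The critical points $z_0$ (which include $a(v)$ and $b(v)$) are exactly the real roots of the cubic $f(z):=2vz^3+z^2-1=0$, i.e. they satisfy $2vz_0^3=1-z_0^2$. Substituting this into the second derivative yields the key identity
\begin{equation}
\widetilde{\vartheta}''(z_0)=\frac{2vz_0^3+2}{z_0^3}=\frac{(1-z_0^2)+2}{z_0^3}=\frac{3-z_0^2}{z_0^3},
\end{equation}
valid at every critical point. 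This turns the proposition into the purely geometric claim that $a(v)\in(-\sqrt{3},0)$ and $b(v)\in(0,\sqrt{3})$.

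The second step is to pin down the locations of $a(v)$ and $b(v)$ using the regime $|v|<\sqrt{3}/9$ in which $\vartheta$ has three distinct real critical points (established just before the statement). Taking $\sigma=1$ and $v>0$ as in Fig.~\ref{contour-near}, I would analyse $f$ through $f'(z)=2z(3vz+1)$, whose zeros are $z=0$ and $z=-1/(3v)$, with $-1/(3v)<-\sqrt{3}$ precisely because $v<\sqrt{3}/9$. Since $f$ is strictly increasing on $(0,\infty)$ with $f(0)=-1<0$ and $f(\sqrt{3})=6\sqrt{3}\,v+2>0$, the unique positive root lies in $(0,\sqrt{3})$; this is $b(v)$. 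Likewise $f$ is strictly decreasing on $(-1/(3v),0)$, and from $f(-\sqrt{3})=2-6\sqrt{3}\,v>0$ (again using $v<\sqrt{3}/9$) together with $f(0)=-1<0$, the intermediate negative root lies in $(-\sqrt{3},0)$; this is $a(v)$, the larger of the two negative roots, which together with $b(v)$ brackets the origin in the deformation of Fig.~\ref{contour-near}.

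With the localisation in hand the conclusion is immediate from the identity: for $b(v)\in(0,\sqrt{3})$ both $3-b(v)^2>0$ and $b(v)^3>0$, so $\widetilde{\vartheta}''(b(v))>0$; for $a(v)\in(-\sqrt{3},0)$ one has $3-a(v)^2>0$ but $a(v)^3<0$, so $\widetilde{\vartheta}''(a(v))<0$. The case $v<0$ follows from the symmetry $\widetilde{\vartheta}(-z;v)=-\widetilde{\vartheta}(z;-v)$, which gives $\widetilde{\vartheta}''(-z;v)=-\widetilde{\vartheta}''(z;-v)$ and hence maps the critical points for $-v$ to those for $v$ with reversed order and opposite second-derivative sign, so the two inequalities are preserved; the case $v=0$ is elementary. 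I expect the only genuine obstacle to be bookkeeping: one must confirm that $a(v)$ and $b(v)$, as they arise in the contour deformation, are indeed the intermediate negative root and the positive root straddling the origin, rather than the most negative root $z_{--}<-\sqrt{3}$ (at which the same identity would instead give $\widetilde{\vartheta}''>0$). This identification—forced both by the required signs and by the geometry of the signature chart in Fig.~\ref{contour-near}—is the point that deserves care, while the remaining inequalities are routine evaluations of $f$ at $0,\pm\sqrt{3}$.
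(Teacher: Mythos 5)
Your proof is correct, and while it shares the paper's basic strategy---both arguments reduce the statement to locating the real roots of the cubic obtained from $z^2\widetilde{\vartheta}'(z;v)=0$---the finishing mechanism is genuinely different. The paper applies Vieta's formulas to the monic cubic $f(z)=z^2\widetilde{\vartheta}'(z;v)/(2v)$ (sum of roots $-1/(2v)$, product $1/(2v)$) to obtain the sign pattern $z_1<z_2<0<z_3$; positivity at $b(v)=z_3$ is then immediate from $\widetilde{\vartheta}''(z)=2v+2z^{-3}>0$ for $z>0$, and negativity at $a(v)=z_2$ is obtained by showing $z_2\in(-v^{-1/3},0)$---precisely the interval on which $\widetilde{\vartheta}''<0$---via the single evaluation $f(-v^{-1/3})>0$. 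You instead prove the identity $\widetilde{\vartheta}''(z_0)=(3-z_0^2)/z_0^3$ valid at every critical point, which trades the $v$-dependent threshold $-v^{-1/3}$ for the clean constants $\pm\sqrt{3}$, and then localize the roots by monotonicity of $f'(z)=2z(3vz+1)$ together with sign checks at $0$ and $\pm\sqrt{3}$. Your identity buys three things: the sign criterion becomes $v$-independent; it makes explicit that the extreme negative root $z_1<-\sqrt{3}$ would give $\widetilde{\vartheta}''>0$, so the labeling of $a(v)$ as the \emph{intermediate} negative root genuinely matters (a point the paper buries silently in its factorization convention $f(z)=(z-z_1)(z-a(v))(z-b(v))$); and your symmetry argument $\widetilde{\vartheta}''(-z;v)=-\widetilde{\vartheta}''(z;-v)$ covers $v\le 0$, which the paper's proof, restricted to $0<v<\sqrt{3}/9$, does not address. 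The small cost is at $b(v)$: your route through the identity needs the extra check $f(\sqrt{3})>0$ to place $b(v)<\sqrt{3}$, whereas the paper's one-line observation $2v+2z^{-3}>0$ for all $z>0$ requires no localization at all.
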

\begin{proof}
Under the condition $0<v<\frac{\sqrt{3}}{9}, $ we set $$f(z):=\frac{z^2\widetilde{\vartheta}'(z; v)}{2v}=\left(z-z_1\right)\left(z-z_2\right)\left(z-z_3\right):=\left(z-z_1\right)\left(z-a(v)\right)\left(z-b(v)\right),$$ where $z_1<z_2<z_3$. With the  aid of properties of cubic equation, we have
\begin{equation}
z_1+z_2+z_3=-\frac{1}{2v}, \qquad z_1z_2z_3=\frac{1}{2v},
\end{equation}
which indicates $z_1<z_2<0<z_3$. The second derivative about $\widetilde{\vartheta}(z; v)$ is $\widetilde{\vartheta}''(z; v)=2v+2z^{-3}$, it is obvious that $\widetilde{\vartheta}''(b(v); v)>0$. If $z\in\left\{z|-v^{-1/3}<z<0\right\}, $ then $\widetilde{\vartheta}''(z; v)<0$. Substituting $z=-v^{-1/3}$ into $f(z)$, we have $f(-v^{-1/3})>0$, and $f(z_2)=0$. Based on the monotonicity of $f(z)$, we get $(-v^{-1/3})<z_2<0$, thus $\widetilde{\vartheta}''(a(v); v)<0$.
\end{proof}
To study the solutions in the neighbourhood of $z=a(v)$ and $z=b(v)$, we set a conformal mapping $f_a(z; v)$ and $f_b(z; v)$ in the near $z=a(v)$ and $z=b(v)$ respectively,
\begin{equation}\label{eq:con-map}
f_a(z; v)^2=2\left(\widetilde{\vartheta}(a(v); v)-\widetilde{\vartheta}(z; v)\right), \qquad f_b(z; v)^2=2\left(\widetilde{\vartheta}(z; v)-\widetilde{\vartheta}(b(v); v)\right).
\end{equation}
By the proposition \ref{prop1}, we have $f'_a(a(v); v)<0, f'_{b}(b(v); v)>0,$ so the mapping is reasonable. Set $\zeta_a:=\chi^{1/4}f_a, \zeta_b=\chi^{1/4} f_b$, then we define two new matrices as
\begin{multline}\label{eq:uaub}
\mathbf{U}_a(\zeta)=\begin{bmatrix}\ii \pmb{\sigma}_2&0\\
\mathbf{0}&1
\end{bmatrix}^{-1}{\rm diag}\left(\ee^{\ii \chi^{1/2}\widetilde{\vartheta}(a; v)\hat{\pmb{\sigma}}_3}, 1\right)\mathbf{R}(z; \chi, v){\rm diag}\left(\ee^{-\ii \chi^{1/2}\widetilde{\vartheta}(a; v)\hat{\pmb{\sigma}}_3}, 1\right)\begin{bmatrix}\ii\pmb{\sigma}_2&0\\
\mathbf{0}&1
\end{bmatrix}, \\~\text{near} ~z=a,\\
\mathbf{U}_b(\zeta)={\rm diag}\left(\ee^{\ii \chi^{1/2}\widetilde{\vartheta}(b; v)\hat{\pmb{\sigma}}_3}, 1\right)\mathbf{R}(z; \chi, v){\rm diag}\left(\ee^{-\ii \chi^{1/2}\widetilde{\vartheta}(b; v)\hat{\pmb{\sigma}}_3}, 1\right), ~ \text{near}~z=b.\\
\end{multline}
Then the jump contour satisfied by the $\mathbf{U}_a$ and $\mathbf{U}_b$ has five rays, see Fig. \ref{contour-1}.
\begin{figure}[!h]
\centering
\includegraphics[width=0.6\textwidth]{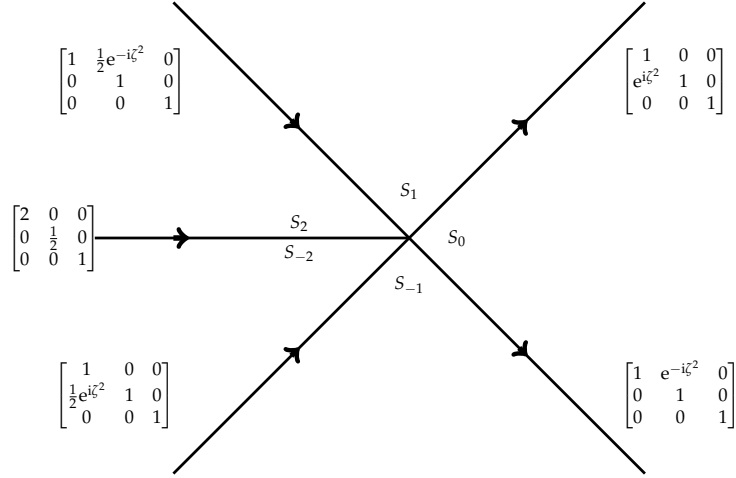}
\caption{The jump satisfied by $\mathbf{U}_a(\zeta)$ near the point $z=a(v)$ and $\mathbf{U}_b(\zeta)$ near $z=b(v)$, that is $\mathbf{U}_+(\zeta)=\mathbf{U}_-(\zeta)\mathbf{V}(\zeta)$, where $\mathbf{V}(\zeta)$ is in the five rays.}
\label{contour-1}
\end{figure}
With the definition of $\mathbf{U}_a(\zeta), \mathbf{U}_b(\zeta)$ in Eq.\eqref{eq:uaub}, we can replace the matrix $\mathbf{R}(z; \chi, v)$ with the outer parametrix $\dot{\mathbf{R}}(z; \chi, v)$ to reveal the asymptotics of $\zeta$ when $\zeta\to\infty$,
\begin{multline}\label{eq:HaHb}
\begin{bmatrix}\ii\pmb{\sigma}_2&\mathbf{0}\\
\mathbf{0}&1
\end{bmatrix}^{-1}{\rm diag}\left(\ee^{\ii \chi^{1/2}\widetilde{\vartheta}(b; v)\hat{\pmb{\sigma}}_3}, 1\right)\dot{\mathbf{R}}^{\rm out}(z; \chi, v){\rm diag}\left(\ee^{\ii \chi^{1/2}\widetilde{\vartheta}(b; v)\hat{\pmb{\sigma}}_3}, 1\right)\begin{bmatrix}\ii \pmb{\sigma}_2&\mathbf{0}\\
\mathbf{0}&1
\end{bmatrix}\\=\begin{bmatrix}\ii \pmb{\sigma}_2&\mathbf{0}\\
\mathbf{0}&1
\end{bmatrix}^{-1}{\rm diag}\left(\chi^{-\ii \frac{p}{4}\hat{\pmb{\sigma}}_3}, 1\right)\mathbf{H}^{a}(z; v){\rm diag}\left(\zeta_a^{-\ii  p\hat{\pmb{\sigma}}_3}, 1\right),
\\\mathbf{H}^a(z; v):={\rm diag}\left(\left(z-b\right)^{-\ii  p\hat{\pmb{\sigma}}_3}, 1\right){\rm diag}
\left(\left(\frac{z-a}{f_{a}(z; v)}\right)^{\ii  p\hat{\pmb{\sigma}}_3}, 1\right)\begin{bmatrix}\ii \pmb{\sigma}_2&\mathbf{0}\\
\mathbf{0}&1
\end{bmatrix},\\
{\rm diag}\left(\ee^{\ii \chi^{1/2}\widetilde{\vartheta}(b; v)\hat{\pmb{\sigma}}_3}, 1\right)\dot{\mathbf{R}}^{\rm out}(z; \chi, v){\rm diag}\left(\ee^{-\ii \chi^{1/2}\widetilde{\vartheta}(b; v)\hat{\pmb{\sigma}}_3}, 1\right)={\rm diag}\left(\chi^{\ii \frac{p}{4}\hat{\pmb{\sigma}}_3}, 1\right)\mathbf{H}^{b}(z; v){\rm diag}\left(\zeta_b^{-\ii  p\hat{\pmb{\sigma}}_3}, 1\right), \\
\mathbf{H}^{b}(z; v):={\rm diag}\left(\left(z-a\right)^{\ii  p\hat{\pmb{\sigma}}_3}, 1\right){\rm diag}\left(\left(\frac{f_b(z; v)}{z-b}\right)^{\ii  p\hat{\pmb{\sigma}}_3}, 1\right).
\end{multline}
From the definition of $f_a$ and $f_b$, we know the matrices $\mathbf{H}^a(z; v)$ and $\mathbf{H}^b(z; v)$ are analytic at $z=a(v)$ and $z=b(v)$ respectively, and the singularities appear in the last factor $\zeta^{-\ii  p\hat{\pmb{\sigma}}_3}$, which is a critical factor to the boundary condition in the following parabolic cylinder function:
\begin{rhp}\label{RHP4}(Parabolic cylinder parametrix) Find a $3\times 3$ matrix $\mathbf{U}(\zeta)$ satisfying the following properties:
\begin{itemize}
\item \textbf{Analyticity}:$\mathbf{U}(\zeta)$ is analytic in five sectors as shown in Fig.\ref{contour-1}, $S_0: |\arg (\zeta)|<\frac{1}{4}\pi, S_1: \frac{1}{4}\pi<\arg(\zeta)<\frac{3}{4}\pi, S_{-1}: -\frac{3}{4}\pi<\arg(\zeta)<-\frac{1}{4}\pi, S_2: \frac{3}{4}\pi <\arg(\zeta)< \pi, S_{-2}: -\pi<\arg(\zeta)<-\frac{3}{4}\pi. $ Their orientation are given with the arrow.
\item\textbf{Jump Condition}: $\mathbf{U}_{+}(\zeta)=\mathbf{U}_-(\zeta)\mathbf{V}(\zeta)$, where $\mathbf{V}(\zeta)$ is shown in Fig.\ref{contour-1}.
\item \textbf{Normalization}: $\mathbf{U}(\zeta){\rm diag}\left(\zeta^{\ii  p\hat{\pmb{\sigma}}_3}, 1\right)\to \mathbb{I}$ as $\zeta\to \infty$.
\end{itemize}
\end{rhp}
The unique solution of RHP \ref{RHP4} can also be given by the classical parabolic cylinder $D_a(\zeta)$. Compared with the $2\times 2$ order problem, the solution of $3\times 3$ type in this case just has a little difference, we will show how to derive the second order ordinary equation through the following theorem.
\begin{theorem}\label{thm-painleve}
Suppose $\zeta\to\infty,$ the asymptotic expression of $\mathbf{U}(\zeta){\rm diag}\left(\zeta^{\ii  p\hat{\pmb{\sigma}}_3}, 1\right)$ is
\begin{equation}
\mathbf{U}(\zeta){\rm diag}\left(\zeta^{\ii  p\hat{\pmb{\sigma}}_3}, 1\right):=\mathbb{I}+\frac{1}{2\ii\zeta}\begin{bmatrix}0&r&0\\
-s&0&0\\
0&0&0
\end{bmatrix}+O(\zeta^{-2}),
\end{equation}
then we have
\begin{equation}
r=-s^*=2^{5/4}\ee^{\frac{\ii\ln(2)^2}{2\pi}}\sqrt{\pi}\ee^{\frac{\pi}{4}\ii}\left(\Gamma\left(\ii\frac{\ln(2)}{2\pi}\right)\right)^{-1}.
\end{equation}
\end{theorem}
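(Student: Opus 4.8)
The plan is to exploit the block structure of RHP \ref{RHP4}: the third row and column of every jump matrix $\mathbf{V}(\zeta)$ in Fig.\ref{contour-1}, and of the normalization, are trivial, so $\mathbf{U}(\zeta)={\rm diag}\left(\mathbf{U}^{(2)}(\zeta),1\right)$ with a genuine $2\times 2$ block $\mathbf{U}^{(2)}$, and the decoupled $(3,3)$ entry contributes nothing to $r$ and $s$. Thus the statement reduces to the classical $2\times 2$ parabolic cylinder parametrix of the Deift-Zhou steepest descent method, in the exact normalization used in \cite{Peter-Duke-2019} and \cite{Deniz-arXiv-2019}. First I would record that the jump on $I$ is ${\rm diag}\left(2,\tfrac{1}{2}\right)$ in the $2\times 2$ block, so the monodromy exponent is fixed by $\ee^{2\pi p}=2$, i.e. $p=\ln(2)/(2\pi)$, consistent with the definition of $p$.

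Next I would construct $\mathbf{U}^{(2)}$ explicitly from the Weber parabolic cylinder functions $D_a(\cdot)$ solving $D_a''+\left(a+\tfrac{1}{2}-\tfrac{\zeta^2}{4}\right)D_a=0$. Choosing the order $a=\ii p$ (and its conjugate on the anti-diagonal), one assembles a sector-wise matrix whose columns are arrangements of $D_a\left(\ee^{\pm\ii\pi/2}\zeta\right)$ and their derivatives; the five-ray jump structure is matched by invoking the connection and Stokes formulas for $D_a$ across $\arg\zeta=\pm\tfrac{\pi}{4},\pm\tfrac{3\pi}{4}$. I would verify sector by sector that $\mathbf{U}^{(2)}_{+}=\mathbf{U}^{(2)}_{-}\mathbf{V}^{(2)}$ holds with the prescribed off-diagonal entries; the only nontrivial content is that the two triangular jumps and the constant diagonal jump on $I$ are reproduced, which is exactly the algebra of the Weber-function connection matrices.

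Then, using the large-$\zeta$ asymptotics $D_a(\zeta)\sim\zeta^{a}\ee^{-\zeta^2/4}$ together with its subdominant term, I would expand $\mathbf{U}^{(2)}(\zeta)\zeta^{\ii p\hat{\pmb{\sigma}}_3}$ to order $\zeta^{-1}$. The off-diagonal $(1,2)$ entry of this expansion is precisely the connection coefficient between the two independent Weber solutions, which is governed by $\Gamma(a)=\Gamma(\ii p)$; collecting the prefactors $\sqrt{2\pi}$, the Stokes phase $\ee^{\pi\ii/4}$, and the branch factors producing $2^{5/4}$ and $\ee^{\ii\ln(2)^2/(2\pi)}$ (the latter arising from a constant raised to the power $\ii p$ evaluated with $\ln 2$) yields the claimed value of $r$. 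The relation $r=-s^*$ I would deduce, rather than compute, from the Schwarz-type symmetry of the parametrix: the jump data satisfy $\mathbf{V}(\zeta^*)^{\dagger}=\mathbf{V}(\zeta)^{-1}$ under the anti-diagonal reflection built into $\pmb{\sigma}_2$, forcing $\mathbf{U}(\zeta^*)^{\dagger}=\mathbf{U}(\zeta)^{-1}$ and hence $s=-r^*$.

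The hard part will be bookkeeping the constant: tracking all branch choices of $\zeta^{\ii p}$ and of the conformal factors $\mathbf{H}^{a},\mathbf{H}^{b}$ of Eq.\eqref{eq:HaHb} through the sector changes so that the composite phase condenses to exactly $\ee^{\ii\ln(2)^2/(2\pi)}\ee^{\pi\ii/4}$ and the power of two comes out as $2^{5/4}$, rather than obtaining the answer only up to an undetermined unimodular factor. A useful internal check is the modulus identity $|\Gamma(\ii p)|^2=\pi/\left(p\sinh(\pi p)\right)$, which pins $|r|=|s|$ independently of the delicate phase computation and so serves as a sanity test on the final result.
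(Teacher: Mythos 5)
Your proposal is correct in substance, but it takes a genuinely different route from the paper's proof. The paper does not assemble $\mathbf{U}$ out of parabolic cylinder functions a priori. Instead it (i) removes the $\ee^{\pm\ii\zeta^2}$ triangular factors sector by sector so that the five-ray jump collapses to a single jump on $\mathbb{R}$, (ii) multiplies by ${\rm diag}\left(\ee^{-\ii\zeta^2\hat{\pmb{\sigma}}_3/2},1\right)$ so that the jump becomes the constant matrix with nontrivial block $\begin{bmatrix}2&1\\1&1\end{bmatrix}$, (iii) concludes via Liouville's theorem that $\mathbf{W}_{\zeta}\mathbf{W}^{-1}$ is linear in $\zeta$ with off-diagonal entries $r,s$, which forces the diagonal entries of $\mathbf{W}$ to satisfy Weber's equation, and (iv) identifies the solutions by their sectorial asymptotics and then computes $r,s$ from the jump relations $w_{11}^-w_{21}^+-w_{11}^+w_{21}^-=1$, $w_{22}^-w_{12}^+-w_{12}^-w_{22}^+=1$ through the Wronskian identity $W\left[D_a(\zeta),D_a(-\zeta)\right]=\sqrt{2\pi}/\Gamma(-a)$, with $s=-r^*$ then read off the explicit formulas. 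You invert this logic: postulate the PCF assembly in each sector, verify the five jumps by the connection/Stokes formulas, read $r$ off the $\mathcal{O}(\zeta^{-1})$ term, and obtain $s=-r^*$ from the Schwarz symmetry $\mathbf{U}(\zeta^*)^{\dagger}=\mathbf{U}(\zeta)^{-1}$ together with uniqueness. Both routes are standard and terminate at the same Gamma-function evaluation; the paper's derivation explains why PCFs must occur and needs no connection formulas (only sectorial asymptotics plus one Wronskian), while yours is shorter if the parametrix is regarded as known but concentrates the entire difficulty in the five jump verifications, exactly the bookkeeping you flag. Your symmetry derivation of $r=-s^*$ and the modulus check via $\left|\Gamma(\ii p)\right|^2=\pi/\left(p\sinh(\pi p)\right)$, which indeed gives $|r|^2=2^{5/2}p\sinh(\pi p)=2p=\ln(2)/\pi$, are sound and slightly more economical than the paper's explicit computation of $s$.

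Three details need repair before your sketch becomes a proof. First, the arguments of the Weber functions: to reproduce the oscillatory factors $\ee^{\pm\ii\zeta^2}$ in the prescribed jumps one must take $t=\sqrt{2}\,\ee^{\mp3\pi\ii/4}\zeta$ and $t=\sqrt{2}\,\ee^{\mp\pi\ii/4}\zeta$, so that $\ee^{-t^2/4}=\ee^{\mp\ii\zeta^2/2}$, as in the paper; your choice $D_a\left(\ee^{\pm\ii\pi/2}\zeta\right)$ gives $t^2=-\zeta^2$ and hence real exponentials $\ee^{\zeta^2/4}$, which cannot match the jump phases. Second, the factors $\mathbf{H}^{a},\mathbf{H}^{b}$ of Eq.\eqref{eq:HaHb} are irrelevant to this theorem: they enter only when the model solution is installed as a local parametrix in the global problem, whereas $r,s$ are universal constants of RHP \ref{RHP4}, so no tracking of them is needed (or permitted) here. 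Third, watch the sign of the order parameter: with the paper's normalization the relevant entry solves Weber's equation with order $-\ii p$, not $+\ii p$, and this sign decides whether $\Gamma(\ii p)$ or $\Gamma(-\ii p)$ appears in $r$; getting it backwards would swap the roles of $r$ and $-s^*$.
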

\begin{proof}
We first define a new matrix $\widetilde{\mathbf{U}}$ as
\begin{equation}
\begin{aligned}
\widetilde{\mathbf{U}}(\zeta):&=\mathbf{U}(\zeta), &\zeta\in S_1,\\
\widetilde{\mathbf{U}}(\zeta):&=\mathbf{U}(\zeta)\begin{bmatrix}1&\frac{1}{2}\ee^{-\ii\zeta^2}&0\\
0&1&0\\
0&0&1
\end{bmatrix}, &\zeta\in S_2,\\
\widetilde{\mathbf{U}}(\zeta):&=\mathbf{U}(\zeta)\begin{bmatrix}1&0&0\\
-\frac{1}{2}\ee^{\ii\zeta^2}&1&0\\
0&0&1
\end{bmatrix}, &\zeta\in S_{-2},\\
\widetilde{\mathbf{U}}(\zeta):&=\mathbf{U}(\zeta), &\zeta\in S_{-1},\\
\widetilde{\mathbf{U}}(\zeta):&=\mathbf{U}(\zeta)\begin{bmatrix}1&0&0\\
\ee^{\ii\zeta^2}&1&0\\
0&0&1
\end{bmatrix}, &\zeta\in S_{0}^{[2]},\\
\widetilde{\mathbf{U}}(\zeta):&=\mathbf{U}(\zeta)\begin{bmatrix}1&-\ee^{-\ii\zeta^2}&0\\
0&1&0\\
0&0&1
\end{bmatrix}, &\zeta\in S_{0}^{[-2]},\\
\end{aligned}
\end{equation}
where $S_0^{[2]}$ and $S_0^{[-2]}$ are in the upper and lower domain in $S_0$, which are symmetric to the $S_2$ and $S_{-2}$ respectively. Then the new jump matrices about $\widetilde{\mathbf{U}}(\zeta)$ transfer to the real axis, it changes into
\begin{equation}
\widetilde{\mathbf{U}}_{+}(\zeta)=\widetilde{\mathbf{U}}_{-}(\zeta)\begin{bmatrix}2&\ee^{-\ii\zeta^2}&0\\
\ee^{\ii\zeta^2}&1&0\\
0&0&1
\end{bmatrix}, \qquad \zeta\in \mathbb{R}.
\end{equation}
When $\zeta\to\infty$, $\mathbf{U}(\zeta)$ and $\widetilde{\mathbf{U}}(\zeta)$ has the same asymptotic expression, suppose the asymptotic expression is
\begin{equation}\label{eq:uasymptotic}
\widetilde{\mathbf{U}}(\zeta){\rm diag}\left(\zeta^{\ii  p\hat{\pmb{\sigma}}_3}, 1\right)=\mathbb{I}+\frac{1}{2\ii\zeta}\begin{bmatrix}0&r&0\\
-s&0&0\\0&0&0
\end{bmatrix}+O(\zeta^{-2}),\qquad \zeta\to\infty.
\end{equation}
Set $\mathbf{W}(\zeta):=\widetilde{\mathbf{U}}(\zeta){\rm diag}\left(\ee^{-\frac{\ii\zeta^2}{2}\hat{\pmb{\sigma}}_3}, 1\right), $ then we have
\begin{equation}\label{eq:w-jump}
\mathbf{W}_+(\zeta)=\mathbf{W}_{-}(\zeta)\begin{bmatrix}2&1&0\\1&1&0\\
0&0&1
\end{bmatrix},
\end{equation}
whose jump is independent with $\zeta$, thus $\mathbf{W}(\zeta)_\zeta\mathbf{W}(\zeta)^{-1}$ is an entire function in the whole $\zeta$ plane. With the ansatz when $\zeta\to \infty$, we have
\begin{equation}
\mathbf{W}(\zeta)_{\zeta}=\begin{bmatrix}-\ii\zeta&r&0\\s&\ii\zeta&0\\
0&0&0
\end{bmatrix}\mathbf{W}(\zeta),
\end{equation}
which indicates
\begin{equation}\label{eq:ODE}
\begin{aligned}
w^{\pm}_{11,\zeta\zeta}&=\left(-\zeta^2-\ii+rs\right)w^{\pm}_{11},\\
w^{\pm}_{22,\zeta\zeta}&=\left(-\zeta^2+\ii+rs\right)w^{\pm}_{22},\\
w^{\pm}_{33,\zeta}&=0.
\end{aligned}
\end{equation}
Next, we give the solutions of Eq.\eqref{eq:ODE} with the parabolic cylinder functions. Set $t=\sqrt{2}\ee^{-\frac{3}{4}\pi\ii}\zeta$, when ${\rm Im}(\zeta)>0$, the first equation in Eq.\eqref{eq:ODE} changes into
\begin{equation}
w^{+}_{11,tt}+\left(\frac{1}{2}-\frac{t^2}{4}+\frac{\ii rs}{2}\right)w_{11}=0,
\end{equation}
which is a standard parabolic cylinder function, whose solution is
\begin{equation}
w^{+}_{11}=C_1D_{a}\left(\sqrt{2}\ee^{-\frac{3}{4}\pi\ii}\zeta\right)+C_2D_a(-\sqrt{2}\ee^{-\frac{3}{4}\pi\ii}\zeta),
\end{equation}
where $a=\frac{\ii rs}{2}$. From \cite{Whittaker-book-1927,Zhou-Annals-1993}, when $\zeta\to\infty$, the asymptotic behavior of $D_a(\zeta)$ is
\begin{equation}
\begin{split}
D_a(\zeta)&=\zeta^a\ee^{-\frac{1}{4}\zeta^2}\left(1+O(\zeta^{-2})\right), \qquad |\arg \zeta|<\frac{3\pi}{4},\\
D_a(\zeta)&=\zeta^a\ee^{-\frac{1}{4}\zeta^2}\left(1+O(\zeta^{-2})\right)
-\frac{\sqrt{2\pi}}{\Gamma(-a)}\ee^{a\pi \ii}\zeta^{-a-1}\ee^{\frac{1}{4}\zeta^2}\left(1+O(\zeta^{-2})\right), \quad \frac{\pi}{4}<\arg \zeta<\frac{5\pi}{4},\\
D_a(\zeta)&=\zeta^a\ee^{-\frac{1}{4}\zeta^2}\left(1+O(\zeta^{-2})\right)
-\frac{\sqrt{2\pi}}{\Gamma(-a)}\ee^{-a\pi \ii}\zeta^{-a-1}\ee^{\frac{1}{4}\zeta^2}\left(1+O(\zeta^{-2})\right), \quad -\frac{5\pi}{4}<\arg \zeta<-\frac{\pi}{4}.
\end{split}
\end{equation}
With this asymptotic behavior, we know $a=-\ii p$ and $c_1=\sqrt{2}^{\ii p}\ee^{\frac{3\pi}{4}p}$, so $$w_{11}^+=\left(\sqrt{2}\right)^{\ii p}\ee^{\frac{3\pi}{4}p}D_{a}(\sqrt{2}\ee^{-\frac{3\pi}{4}\ii}\zeta).$$ Similarly, we can also calculate the other terms
\begin{equation}
\begin{split}
w_{22}^{+}&=\left(\sqrt{2}\right)^{-\ii p}\ee^{-\frac{\pi}{4}p}D_{-a}\left(\sqrt{2}\ee^{-\frac{\pi}{4}\ii}\zeta\right),\\
w_{11}^{-}&=\left(\sqrt{2}\right)^{\ii p}\ee^{-\frac{\pi}{4}p}D_{a}\left(\sqrt{2}\ee^{\frac{\pi}{4}\ii}\zeta\right),\\
w_{22}^-&=\left(\sqrt{2}\right)^{-\ii p}\ee^{\frac{3\pi}{4}p}D_{-a}\left(\sqrt{2}\ee^{\frac{3\pi}{4}\ii}\zeta\right),\\
w^{\pm}_{33}&=1,\qquad w^{\pm}_{13}=w^{\pm}_{31}=w^{\pm}_{23}=w^{\pm}_{32}=0.
\end{split}
\end{equation}
Next, we should determine $r$ and $s$. From the relation Eq.\eqref{eq:w-jump}, we know
\begin{equation}
w_{11}^-w_{21}^+-w_{11}^+w_{21}^-=1, \qquad w_{22}^-w_{12}^+-w_{12}^-w_{22}^+=1,
\end{equation}
which indicates
\begin{equation}
\begin{split}
r&=\left(\sqrt{2}\right)^{2\ii p}\ee^{\frac{\pi}{2}p}\left[D_{a}\left(\sqrt{2}\ee^{\frac{\pi }{4}\ii}\zeta\right)\frac{d}{d\zeta}D_{a}\left(\sqrt{2}\ee^{-\frac{3\pi}{4}\ii}\zeta\right)
-D_{a}\left(\sqrt{2}\ee^{-\frac{3\pi }{4}\ii}\zeta\right)\frac{d}{d\zeta}D_{a}\left(\sqrt{2}\ee^{\frac{\pi}{4}\ii}\zeta\right)\right]\\
s&=\left(\sqrt{2}\right)^{-2\ii p}\ee^{\frac{\pi}{2}p}\left[D_{-a}\left(\sqrt{2}\ee^{\frac{3\pi }{4}\ii}\zeta\right)\frac{d}{d\zeta}D_{-a}
\left(\sqrt{2}\ee^{-\frac{\pi}{4}\ii}\zeta\right)
-D_{-a}\left(\sqrt{2}\ee^{-\frac{\pi }{4}\ii}\zeta\right)\frac{d}{d\zeta}D_{a}
\left(\sqrt{2}\ee^{\frac{3\pi}{4}\ii}\zeta\right)\right],\\
\Rightarrow\\
r&=\left(\sqrt{2}\right)^{2\ii p}\ee^{\frac{\pi}{2}p}W\left[D_{a}\left(\sqrt{2}\ee^{\frac{\pi }{4}\ii}\zeta\right), D_{a}\left(-\sqrt{2}\ee^{\frac{\pi }{4}\ii}\zeta\right)\right],\\
s&=-\left(\sqrt{2}\right)^{-2\ii p}\ee^{\frac{\pi}{2}p}W\left[D_{-a}\left(\sqrt{2}\ee^{-\frac{\pi }{4}\ii}\zeta\right), D_{-a}\left(-\sqrt{2}\ee^{-\frac{\pi }{4}\ii}\zeta\right)\right],
\end{split}
\end{equation}
where $W[f, g]=fg'-f'g$ is the Wronskian determinant. Throughout the properties $W\left[D_a(\zeta), D_a(-\zeta)\right]=\frac{\sqrt{2\pi}}{\Gamma\left(-a\right)}, $ we have
\begin{equation}\label{eq:rs}
\begin{split}
r&=2^{5/4}\ee^{\frac{\ii\ln(2)^2}{2\pi}}\sqrt{\pi}\ee^{\frac{\pi}{4}\ii}\left(\Gamma\left(\ii\frac{\ln(2)}{2\pi}\right)\right)^{-1},\\
s&=-2^{5/4}\ee^{-\frac{\ii\ln(2)^2}{2\pi}}\sqrt{\pi}\ee^{-\frac{\pi}{4}\ii}\left(\Gamma\left(-\ii\frac{\ln(2)}{2\pi}\right)\right)^{-1}.\\
\end{split}
\end{equation}
\end{proof}
By the properties of complex $\Gamma$ function, we have $r=\sqrt{\frac{\ln(2)}{\pi}}\ee^{\frac{\ln (2)^2}{2\pi}\ii+\frac{\pi}{4}\ii-\arg(\Gamma(\ii\frac{\ln (2)}{2\pi}))\ii},$ $|r|=\sqrt{\frac{\ln(2)}{\pi}}, s=-r^*$. Based on the definition of $\mathbf{U}(\zeta)$, we can construct the inner parametrix at $z=a, b$. Set $D_{\zeta}(\delta)$ be a disk centered at $\zeta$ with the radius $\delta$, when $\delta$ is small, we can define the following inner paramerix
\begin{multline}
\dot{\mathbf{R}}^a(z; \chi, v):={\rm diag}\left(\ee^{-\ii \chi^{1/2}\widetilde{\vartheta}(a; v)\hat{\pmb{\sigma}}_3}, 1\right){\rm diag}\left(\chi^{-\ii \frac{p}{4}\hat{\pmb{\sigma}}_3}, 1\right)\mathbf{H}^{a}(z; v)\mathbf{U}(\chi^{1/4}f_a)\\\cdot\begin{bmatrix}\ii \pmb{\sigma}_2&0\\
\mathbf{0}&1
\end{bmatrix}^{-1}{\rm diag}\left(\ee^{\ii \chi^{1/2}\widetilde{\vartheta}(a; v)\hat{\pmb{\sigma}}_3}, 1\right),\quad z\in D_{a}(\delta),\\
\dot{\mathbf{R}}^b(z; \chi, v):=\ee^{-\ii \chi^{1/2}\widetilde{\vartheta}(b; v)\pmb{\sigma}}\chi^{\ii \frac{p}{4}\pmb{\sigma}}\mathbf{H}^{b}(z; v)\mathbf{U}(\chi^{1/4}f_b){\rm diag}\left(\ee^{\ii \chi^{1/2}\widetilde{\vartheta}(b; v)\hat{\pmb{\sigma}}_3}, 1\right),\quad z\in D_{b}(\delta),
\end{multline}
thus we have an estimate to the $\dot{\mathbf{R}}^a(z; \chi, v)\dot{\mathbf{R}}^{-1}(z; \chi, v)$ and $\dot{\mathbf{R}}^b(z; \chi, v)\dot{\mathbf{R}}^{-1}(z; \chi, v),$
\begin{multline}\label{eq:rarout}
\dot{\mathbf{R}}^a(z; \chi, v)\left(\dot{\mathbf{R}}^{\rm out}(z; \chi, v)\right)^{-1}={\rm diag}\left(\ee^{-\ii \chi^{1/2}\widetilde{\vartheta}(a; v)\hat{\pmb{\sigma}}_3}, 1\right){\rm diag}\left(\chi^{-\ii \frac{p}{4}\hat{\pmb{\sigma}}_3}, 1\right)\mathbf{H}^{a}(z; v)\mathbf{U}(\chi^{1/4}f_a)\\\cdot {\rm diag}\left(\zeta_a^{\ii  p\hat{\pmb{\sigma}}_3}, 1\right)\left(\mathbf{H}^{a}(z; v)\right)^{-1}{\rm diag}\left(\chi^{\ii \frac{p}{4}\hat{\pmb{\sigma}}_3}, 1\right){\rm diag}\left(\ee^{\ii \chi^{1/2}\widetilde{\vartheta}(a; v)\hat{\pmb{\sigma}}_3}, 1\right),\, \zeta\in\partial D_{a}(\delta)\\
\dot{\mathbf{R}}^b(z; \chi, v)\left(\dot{\mathbf{R}}^{\rm out}(z; \chi, v)\right)^{-1}={\rm diag}\left(\ee^{-\ii \chi^{1/2}\widetilde{\vartheta}(b; v)\hat{\pmb{\sigma}}_3}, 1\right){\rm diag}\left(\chi^{\ii \frac{p}{4}\hat{\pmb{\sigma}}_3}, 1\right)\mathbf{H}^{b}(z; v)\mathbf{U}(\chi^{1/4}f_b)\\\cdot {\rm diag}\left(\zeta_b^{\ii  p\hat{\pmb{\sigma}}_3}, 1\right)\left(\mathbf{H}^{b}(z; v)\right)^{-1}{\rm diag}\left(\chi^{-\ii \frac{p}{4}\hat{\pmb{\sigma}}_3}, 1\right){\rm diag}\left(\ee^{\ii \chi^{1/2}\widetilde{\vartheta}(b; v)\hat{\pmb{\sigma}}_3}, 1\right), \, \zeta\in\partial D_{b}(\delta).
\end{multline}
Thus the global parametrix for $\mathbf{R}(z;\chi,v)$ is given by
\begin{equation}
\dot{\mathbf{R}}(z; \chi, v):=\left\{\begin{aligned}&\dot{\mathbf{R}}^a(z; \chi, v),&\qquad z\in D_{a}(\delta)\\
&\dot{\mathbf{R}}^b(z; \chi, v),&\qquad z\in D_{b}(\delta)\\
&\dot{\mathbf{R}}^{\rm out}(z; \chi, v),&\qquad z\in \mathbb{C}\setminus\left(\mathbb{I}\cup \overline{D_{a}(\delta)}\cup \overline{D_{b}(\delta)}\right)
\end{aligned}\right.
\end{equation}
\subsection{Error Analysis}
\label{sec:error-analysis}
After constructing the global parametrix $\dot{\mathbf{R}}(z; \chi, v)$ to $\mathbf{\mathbf{R}}(z; \chi, v)$, we can analyze the error between these two matrices, which is defined as
\begin{equation}\label{eq:f-definition}
\mathbf{F}(z; \chi, v):=\mathbf{R}(z; \chi, v)\dot{\mathbf{R}}(z; \chi, v)^{-1}.
\end{equation}
Set the jump contour about $\mathbf{F}(z; \chi, v)$ to $\Sigma_F$. Due to $\dot{\mathbf{R}}(z; \chi, v)$ has three different definitions on three different domains, we should give the error analysis with this three cases. In the neighbourhood of $D_a(\delta)$ and $D_b(\delta)$, $\mathbf{R}(z; \chi, v)$ and $\dot{\mathbf{R}}(z; \chi, v)$ satisfy the same jump matrices, so $\mathbf{F}(z; \chi, v)$ has no jump in these two domains. In the outer of the domain, the jump becomes
\begin{equation}\label{eq:VF-1}
\mathbf{V}^F(z; \chi, v)=\mathbf{F}_-(z; \chi, v)^{-1}\mathbf{F}_{+}(z; \chi, v)=\dot{\mathbf{R}}(z; \chi, v)\mathbf{R}_-(z; \chi, v)\mathbf{R}_+(z; \chi, v)\dot{\mathbf{R}}(z; \chi, v)^{-1},
\end{equation}
from the jump matrices in Eq.\eqref{eq:R-jump}, we have
\begin{equation}
\|\mathbf{V}^{F}(z; \chi, v)-\mathbb{I}\|_{z\in \Sigma_{F}\setminus \left(\partial D_a\cup \partial D_b\right)}=\mathcal{O}(\ee^{-\chi^{1/2}K(v)}).
\end{equation}
Additionally, $\mathbf{F}(z; \chi, v)$ is nonanalytic in the boundary of $D_{a}(\delta)$ and $D_{b}(\delta)$, the jump condition becomes
\begin{equation}
\begin{split}
\mathbf{V}^{F}(z; \chi, v)=\dot{\mathbf{R}}^a(z; \chi, v)\dot{\mathbf{R}}^{\rm out}(z; \chi, v)^{-1},\qquad z\in\partial D_a,\\
\mathbf{V}^{F}(z; \chi, v)=\dot{\mathbf{R}}^b(z; \chi, v)\dot{\mathbf{R}}^{\rm out}(z; \chi, v)^{-1},\qquad z\in\partial D_b,
\end{split}
\end{equation}
which is the Eq.\eqref{eq:rarout}. With the asymptotic expression in Eq.\eqref{eq:uasymptotic}, we know
\begin{equation}
\|\mathbf{V}^{F}(z; \chi, v)-\mathbb{I}\|_{z\in \partial D_a\cup \partial D_b}=\mathcal{O}(\chi^{-1/4}).
\end{equation}
Lastly, we will give the asymptotic expression to $\mathbf{q}(\chi, \tau)$. The starting point is analyzing jump matrix in Eq.\eqref{eq:f-definition}. As we have given the estimation to $\mathbf{V}^{F}(z; \chi, v)-\mathbb{I}$ in the jump contour, which is useful to the later calculation. From Eq.\eqref{eq:f-definition}, we have
\begin{equation}\label{eq:f-jump}
\mathbf{F}_+(z; \chi, v)=\mathbf{F}_-(z; \chi, v)\mathbf{V}^{F}(z; \chi, v), \qquad z\in\Sigma_F.
\end{equation}
As to this matrix-type RHP, we use the general Plemelj formula to give its solution. So we should rewrite Eq.\eqref{eq:f-jump} to another form:
\begin{equation}
\mathbf{F}_+(z; \chi, v)-\mathbf{F}_-(z; \chi, v)=\mathbf{F}_-(z; \chi, v)\left(\mathbf{V}^{F}(z; \chi, v)-\mathbb{I}\right),
\end{equation}
so the solution of $\mathbf{F}(z; \chi, v)$ is given by
\begin{multline}\label{eq:fsolution}
\mathbf{F}(z; \chi, v)=\mathbb{I}+\frac{1}{2\pi \ii}\int_{\Sigma_F}\frac{\mathbf{F}_-(z'; \chi, v)\left(\mathbf{V}^{F}(z'; \chi, v)-\mathbb{I}\right)}{z'-z}dz'\\
{=}\mathbb{I}{+}\frac{1}{2\pi \ii}\int_{\partial D_a\cup \partial D_b}\frac{\mathbf{F}_-(z'; \chi, v)\left(\mathbf{V}^{F}(z'; \chi, v){-}\mathbb{I}\right)}{z'-z}dz'\\+\frac{1}{2\pi \ii}\int_{\Sigma^F\setminus\left(\partial D_a\cup \partial D_b\right)}\frac{\mathbf{F}_-(z'; \chi, v)\left(\mathbf{V}^{F}(z'; \chi, v)-\mathbb{I}\right)}{z'-z}dz'.
\end{multline}
To get the potential $\mathbf{q}(\chi, \tau)$, we should make a series expansion to $\mathbf{F}(z; \chi, v)$ when $z\to\infty$, thus Eq.\eqref{eq:fsolution} changes into
\begin{multline}\label{eq:fsolution1}
\mathbf{F}(z; \chi, v)=\mathbb{I}-\sum\limits_{l=1}^{\infty}\frac{1}{2\pi \ii}z^{-l}\int_{\partial D_a\cup \partial D_b}\mathbf{F}_-(z'; \chi, v)\left(\mathbf{V}^{F}(z'; \chi, v)-\mathbb{I}\right)\left(z'\right)^{l-1}dz'\\-\sum_{l=1}^{\infty}\frac{1}{2\pi \ii}z^{-l}\int_{\Sigma_F\setminus\left(\partial D_a\cup \partial D_b\right)}\mathbf{F}_-(z'; \chi, v)\left(\mathbf{V}^{F}(z'; \chi, v)-\mathbb{I}\right)\left(z'\right)^{l-1}dz'.
\end{multline}
By the expansion of series, we will establish the relation between the potential $\mathbf{q}(\chi, \tau)$ and the matrix $\mathbf{F}(z; \chi, v)$. Combining with several transformations, we get
\begin{equation}
\begin{split}
\mathbf{q}(\chi, \tau)&=2\lim\limits_{z\to\infty}z\chi^{-1/2}\left(\mathbf{Q}_{H}\mathbf{S}(z; \chi, v)\mathbf{Q}_{H}^{-1}\right)_{12}(z; \chi, v)\\
&=2\lim\limits_{z\to\infty}z\chi^{-1/2}
\left(\mathbf{Q}_{H}\mathbf{R}(z; \chi, v)\mathbf{Q}_{H}^{-1}\right)_{12}\\
&=2\lim\limits_{z\to\infty}z\chi^{-1/2}
\left(\mathbf{Q}_{H}\mathbf{F}(z; \chi, v)\dot{\mathbf{R}}(z; \chi, v)\mathbf{Q}_{H}^{-1}\right)_{12}.
\end{split}
\end{equation}
From Eq.\eqref{eq:fsolution1}, we know
\begin{equation}\label{eq:q1q2}
\begin{split}
q_1(\chi, \chi^{3/2}v)&=-\frac{1}{\pi\ii \chi^{1/2}}c_1^*\int_{\partial D_a\cup\partial D_b}\mathbf{V}^{F}_{12}(z'; \chi, v)dz'+O(\chi^{-1})\\
=&-\frac{1}{\pi\ii \chi^{1/2}}c_1^*\int_{\partial D_a}-\frac{\ii}{2}s\ee^{-2\ii \chi^{1/2}\widetilde{\vartheta}(a; v)}\chi^{-\frac{\ii}{2}p}(z'-b(v))^{-2\ii p}\left(\frac{z'-a(v)}{f_a(z')}\right)^{2\ii p}\frac{1}{\zeta_a(z')}dz'\\
&-\frac{1}{\pi\ii \chi^{1/2}}c_1^*\int_{\partial D_b}-\frac{\ii}{2}r\ee^{-2\ii \chi^{1/2}\widetilde{\vartheta}(b; v)}\chi^{\frac{\ii}{2}p}(z'-a(v))^{2\ii p}\left(\frac{f_b(z')}{z'-b(v)}\right)^{2\ii p}\frac{1}{\zeta_b(z')}dz'+O(\chi^{-1}).
\end{split}
\end{equation}
Through the definition of $f_a(z; v)$ and $f_b(z; v)$, we have
\begin{equation}
f_a'(a(v); v)=-\sqrt{-\widetilde{\vartheta}''(a(v); v)}, \qquad f_b'(b(v); v)=\sqrt{\widetilde{\vartheta}''(b(v); v)},
\end{equation}
then Eq.\eqref{eq:q1q2} changes into Eq.\eqref{eq:q-near}.
\section{\appendixname : The asymptotic analysis to $\widehat{\mathbf{Q}}_{f, n}(\lambda; X, T)$ and $\widehat{\mathbf{Q}}_n(z; \tau, w)$}
\label{app:no}
From the jump matrix about the $\widehat{\mathbf{Q}}_{f, n}(\lambda; X, T)$ and $\widehat{\mathbf{Q}}_{n}(z; \tau, w)$, we can see that they satisfy the same type jump condition. Thus we only give the parametrix construction to the $\widehat{\mathbf{Q}}_{f, n}(\lambda; X, T)$, the other one can be derived similarly. Based on the framework in \cite{Peter-Duke-2019}, from the jump condition Eq.\eqref{eq:jump-hatQ}, the outer parametrix $\widehat{\mathbf{Q}}^{\rm out}_{f, n}$ can be set as
\begin{equation}
\widehat{\mathbf{Q}}_{f,n}^{\rm out}(\lambda; X, T)=\mathbf{K}(\lambda; X, T){\rm diag}\left(\left(\frac{\lambda-\lambda_c}{\lambda-\lambda_d}\right)^{\ii p\hat{\pmb{\sigma}}_3}, 1\right), \quad p=\frac{\log(2)}{2\pi}.
\end{equation}
Suppose $\mathbf{K}(\lambda; X, T)$ is bounded in the neighbourhood of $\lambda=\lambda_c, \lambda_d$, and it is analytic for $\lambda\in\mathbb{C}\setminus \left(\Sigma_{\rm u}\cup \Sigma_{\rm d}\right)$. To match the jump condition on the contour $\lambda\in \Sigma_{\rm u}\cup \Sigma_{\rm d}$, then $\mathbf{K}(\lambda; X, T)$ satisfies
\begin{equation}
\begin{split}
\mathbf{K}_{+}(\lambda; X, T)&=\mathbf{K}_{-}(\lambda; X, T){\rm diag}\left(\left(\frac{\lambda-\lambda_c}{\lambda-\lambda_d}\right)^{\ii p\hat{\pmb{\sigma}}_3}, 1\right)\begin{bmatrix}0&\ee^{N\Omega_{n}}&0\\
-\ee^{-N\Omega_{n}}&0&0\\0&0&1
\end{bmatrix}\\
\cdot &{\rm diag}\left(\left(\frac{\lambda-\lambda_c}{\lambda-\lambda_d}\right)^{-\ii p\hat{\pmb{\sigma}}_3}, 1\right),\quad \lambda\in \Sigma_{\rm u}\cup \Sigma_{\rm d}.
\end{split}
\end{equation}
To solve $\mathbf{K}(\lambda; X, T)$, we introduce a new function $k(\lambda; X, T)$, which satisfies the following condition when $\lambda\in \Sigma_{\rm u}\cup \Sigma_{\rm d}$:
\begin{equation}
k_{+}(\lambda; X, T)+k_{-}(\lambda; X, T)=2\ii p\log\left(\frac{\lambda-\lambda_c}{\lambda-\lambda_d}\right)+\ii \mu(X, T), \quad \mu(X, T)=2p\int_{\lambda_c}^{\lambda_d}\frac{1}{R(s)}ds,
\end{equation}
where $R(\lambda)$ is defined in Eq.\eqref{eq:R-no}. With a generalized residue theorem, we know $k(\lambda; X, T)$ equals to
\begin{equation}
k=\ii p\log\left(\frac{\lambda-\lambda_c}{\lambda-\lambda_d}\right)+\ii p R(\lambda)\int_{\lambda_c}^{\lambda_d}\frac{ds}{R(s)(s-\lambda)}+\frac{\ii \mu(X, T)}{2}.
\end{equation}
Set $\mathbf{K}(\lambda; X, T)=\mathbf{J}(\lambda; X, T){\rm diag}\left(\ee^{-k(\lambda; X, T)\hat{\pmb{\sigma}}_3}, 1\right)$, then the jump of $\mathbf{J}(\lambda; X, T)$ changes into
\begin{equation}
\mathbf{J}_{+}(\lambda; X, T)=\mathbf{J}_{-}(\lambda; X, T)\begin{bmatrix}
0&\ee^{N\Omega_n-\ii \mu(X, T)}&0\\
-\ee^{-N\Omega_{n}+\ii \mu(X, T)}&0&0\\
0&0&1
\end{bmatrix},\quad \lambda\in \Sigma_{\rm u}\cup \Sigma_{\rm d}..
\end{equation}

By the linear algebra, the above Riemann-Hilbert problem $\mathbf{J}(\lambda; X, T)$ can be solved as
\begin{multline}
\mathbf{J}(\lambda; X, T)={\rm diag}\left(\ee^{\frac{N\Omega_n-\ii\mu}{2}\hat{\pmb{\sigma}}_3}, 1\right)\mathbf{C}{\rm diag}\left(\left(\frac{\lambda-a_n}{\lambda-a_n^*}\right)^{\hat{\pmb{\sigma}}_3/4},1\right)\mathbf{C}^{-1}{\rm diag}\left(\ee^{\frac{-N\Omega_n+\ii\mu}{2}\hat{\pmb{\sigma}}_3}, 1\right)\\\mathbf{C}:=\frac{1}{\sqrt{2}}\begin{bmatrix}1&1&0\\
\ii&-\ii&0\\
0&0&1
\end{bmatrix}.
\end{multline}
Therefore, the outer parametrix $\widehat{\mathbf{Q}}^{\rm out}_{f, n}(\lambda; X, T)$ converts into
\begin{equation}\label{eq:hatQout}
\widehat{\mathbf{Q}}^{\rm out}_{f, n}=\mathbf{J}(\lambda; X, T){\rm diag}\left(\ee^{-k(\lambda; X, T)\hat{\pmb{\sigma}}_3}, 1\right){\rm diag}\left(\left(\frac{\lambda-\lambda_c}{\lambda-\lambda_d}\right)^{\ii p\hat{\pmb{\sigma}}_3}, 1\right).
\end{equation}
Obviously, $\widehat{\mathbf{Q}}^{\rm out}_{f, n}$ satisfies the same jump condition as $\widehat{\mathbf{Q}}_{f, n}$ when $\lambda\in \Sigma_{\rm u}\cup \Sigma_{\rm d}\cup I$. But in the end points, $\lambda=\lambda_c, \lambda_d, a_n, a_n^*$, the outer parametrix has singularities, so we should give an inner parametrix on a small neighborhood of these points, which are shown in the next subsection.
\subsection{Inner parametrix construction near $\lambda_c$ and $\lambda_d$}
\lmn{During the calculation}, we know $\lambda_c$ and $\lambda_d$ are two critical points of $h(\lambda; X, T)$. Similar to the inner parametrix in Eq.\eqref{eq:con-map}, we also set a conformal mapping between $f_{\lambda_c}, f_{\lambda_d}$ and $h(\lambda; X, T)$. To simplify the calculation, we set $\widehat{h}(\lambda; X, T)=\ii h(\lambda; X, T)/2$, then the corresponding mapping is
\begin{equation}
\begin{split}
f_{\lambda_c}(\lambda; X, T)^2=2\left(\widehat{h}_{\lambda_c}(X, T)-\widehat{h}_-(\lambda; X, T)\right),\\
f_{\lambda_d}(\lambda; X, T)^2=2\left(\widehat{h}(\lambda; X, T)-\widehat{h}_{\lambda_d}(X, T)\right),
\end{split}
\end{equation}
where $\widehat{h}_{\lambda_c}, \widehat{h}_{\lambda_d}$ indicates the value at $\lambda=\lambda_c, \lambda=\lambda_d$. Set $\zeta_{\lambda_c}=N^{1/2}f_{\lambda_c}, \zeta_{\lambda_d}=N^{1/2}f_{\lambda_d}$, then we also have
\begin{equation}
f_{\lambda_c}'(\lambda_c; X, T)^2=-\widehat{h}''_{-}(\lambda_c, X, T)>0,\quad f_{\lambda_d}'(\lambda_d; X, T)^2=\widehat{h}''(\lambda_d; X, T)>0.
\end{equation}
Due to jump condition when $\lambda\in\Sigma_{\rm u}\cup \Sigma_{\rm d}$ in the neighborhood of $\lambda_c$, the piecewise matrix will be a little difference with Eq.\eqref{eq:uaub}, which changes into
\begin{equation}\label{eq:uaub-1}
\begin{aligned}
\mathbf{U}_{\lambda_c}(\zeta)&=\left\{\begin{split}&\begin{bmatrix}\ii \pmb{\sigma}_2&0\\
\mathbf{0}&1
\end{bmatrix}^{-1}{\rm diag}\left(\ee^{\ii N\widehat{h}_{\lambda_c}\hat{\pmb{\sigma}}_3}, 1\right){\rm diag}\left(\ii, -\ii, 1\right)\widehat{\mathbf{Q}}_{f,n}(\lambda; X, T){\rm diag}\left(-\ii, \ii, 1\right)\\\cdot&{\rm diag}\left(\ee^{-\ii N\widehat{h}_{\lambda_c}\hat{\pmb{\sigma}}_3}, 1\right)\begin{bmatrix}\ii\pmb{\sigma}_2&0\\
\mathbf{0}&1
\end{bmatrix}, \qquad \qquad \qquad \quad \lambda\in D_{\lambda_c,-}(\delta),\\
&\begin{bmatrix}\ii \pmb{\sigma}_2&0\\
\mathbf{0}&1
\end{bmatrix}^{-1}{\rm diag}\left(\ee^{\ii N\widehat{h}_{\lambda_c}\hat{\pmb{\sigma}}_3}, 1\right){\rm diag}\left(\ii, -\ii, 1\right)\widehat{\mathbf{Q}}_{f, n}(\lambda; X, T)\begin{bmatrix}0&\ee^{N\Omega_n}&0\\
-\ee^{-N\Omega_n}&0&0\\0&0&1
\end{bmatrix}\\\cdot &{\rm diag}\left(-\ii, \ii, 1\right){\rm diag}\left(\ee^{-\ii N\widehat{h}_{\lambda_c}\hat{\pmb{\sigma}}_3}, 1\right)\begin{bmatrix}\ii\pmb{\sigma}_2&0\\
\mathbf{0}&1
\end{bmatrix}, \quad \lambda\in D_{\lambda_c,+}(\delta)
\end{split}\right.\\
\mathbf{U}_{\lambda_d}(\zeta)&={\rm diag}\left(\ee^{\ii N\widehat{h}_{\lambda_d}\hat{\pmb{\sigma}}_3}\right)\widehat{\mathbf{Q}}_{f,n}(\lambda; X, T){\rm diag}\left(\ee^{-\ii N\widehat{h}_{\lambda_d}\hat{\pmb{\sigma}}_3}, 1\right), \qquad \qquad \quad \lambda\in D_{\lambda_d}(\delta).\\
\end{aligned}
\end{equation}
In view of this definition, we can set the inner parametrix in the neighborhood of $\lambda_c$ and $\lambda_d$ to match the original jump condition. That is
\begin{equation}
\begin{aligned}
\widehat{\mathbf{Q}}_{f, n}^{\lambda_c}:&=\left\{\begin{split}&\mathbf{K}_{-}(\lambda; X, T){\rm diag}\left(\ee^{-\ii N\widehat{h}_{\lambda_c}\hat{\pmb{\sigma}}_3}, 1\right){\rm diag}\left(N^{-\ii \frac{p}{2}\hat{\pmb{\sigma}}_3}, 1\right){\rm diag}\left(\ii, -\ii, 1\right)\mathbf{H}^{\lambda_c}(\lambda)\mathbf{U}(N^{1/2}f_{\lambda_{c}})\\
\cdot &\begin{bmatrix}\ii \pmb{\sigma}_2&0\\
\mathbf{0}&1
\end{bmatrix}^{-1}{\rm diag}\left(\ee^{\ii N\widehat{h}_{\lambda_c}\hat{\pmb{\sigma}}_3}\right){\rm diag}\left(-\ii, \ii, 1\right),\quad \lambda\in D_{\lambda_c, -}(\delta),\\
&\mathbf{K}_{-}(\lambda; X, T){\rm diag}\left(\ee^{-\ii N\widehat{h}_{\lambda_c}\hat{\pmb{\sigma}}_3}, 1\right){\rm diag}\left(N^{-\ii \frac{p}{2}\hat{\pmb{\sigma}}_3}, 1\right){\rm diag}\left(\ii, -\ii, 1\right)\mathbf{H}^{\lambda_c}(\lambda)\mathbf{U}(N^{1/2}f_{\lambda_{c}})\\
\cdot &\begin{bmatrix}\ii \pmb{\sigma}_2&0\\
\mathbf{0}&1
\end{bmatrix}^{-1}{\rm diag}\left(\ee^{\ii N\widehat{h}_{\lambda_c}\hat{\pmb{\sigma}}_3}, 1\right){\rm diag}\left(-\ii, \ii, 1\right)\begin{bmatrix}0&\ee^{N\Omega_n}&0\\-\ee^{-N\Omega_n}&0&0\\0&0&1
\end{bmatrix},\quad \lambda\in D_{\lambda_c, +}(\delta),
\end{split}\right.\\
\widehat{\mathbf{Q}}_{f, n}^{\lambda_d}:&=\mathbf{K}(\lambda; X, T){\rm diag}\left(\ee^{-\ii N\widehat{h}_{\lambda_d}\hat{\pmb{\sigma}}_3}, 1\right){\rm diag}\left(N^{\ii \frac{p}{2}\hat{\pmb{\sigma}}_3}, 1\right)\mathbf{H}^{\lambda_d}(\lambda)\mathbf{U}(N^{1/2}f_{\lambda_d}){\rm diag}\left(\ee^{\ii N\widehat{h}_{\lambda_d}\hat{\pmb{\sigma}}_3}, 1\right),\\
&\qquad \qquad \qquad \qquad \qquad \qquad \qquad \qquad \qquad \qquad \lambda\in D_{\lambda_d}(\delta),
\end{aligned}
\end{equation}
where $\mathbf{H}^{\lambda_c}(\lambda)$ and $\mathbf{H}^{\lambda_d}(\lambda)$ is replacing the parameters $a, b, f_a $ and $f_b$in Eq.\eqref{eq:HaHb} to $\lambda_c, \lambda_d, f_{\lambda_c}, f_{\lambda_d}$ respectively. From the result in the Appendix \eqref{App:large-chi}, we know the \lmn{dominant} error comes from the error $\widehat{\mathbf{Q}}^{\lambda_c}_{f, n}(\lambda; X, T)\left(\widehat{\mathbf{Q}}^{\rm out}_{f, n}(\lambda; X, T)\right)^{-1}. $ Thus we give this formula in this case
\begin{equation}\label{eq:hatQhatQi}
\begin{aligned}
\widehat{\mathbf{Q}}^{\lambda_c}_{f, n}(\lambda; X, T)\left(\widehat{\mathbf{Q}}_{f, n}^{\rm out}(\lambda; X, T)\right)^{-1}&=\mathbf{Y}^{\lambda_c}(\lambda; X, T)\mathbf{U}(\zeta_{\lambda_c}){\rm diag}\left(\zeta_{\lambda_c}^{\ii p\hat{\pmb{\sigma}}_3}, 1\right)\mathbf{Y}^{\lambda_c}(\lambda; X, T)^{-1}, \\
\widehat{\mathbf{Q}}^{\lambda_d}_{f, n}(\lambda; X, T)\left(\widehat{\mathbf{Q}}_{f, n}^{\rm out}(\lambda; X, T)\right)^{-1}&=\mathbf{Y}^{\lambda_d}(\lambda; X, T)\mathbf{U}(\zeta_{\lambda_d})\zeta_{\lambda_d}^{\ii p\pmb{\sigma}}\mathbf{Y}^{\lambda_d}(\lambda; X, T)^{-1},
\end{aligned}
\end{equation}
where
\begin{equation}
\begin{split}
\mathbf{Y}^{\lambda_c}(\lambda; X, T):&=\mathbf{K}_{-}(\lambda; X, T) {\rm diag}\left(N^{-\frac{1}{2}\ii p\hat{\pmb{\sigma}}_3}, 1\right){\rm diag}\left(\ee^{-\ii N \widehat{h}_{\lambda_c}\hat{\pmb{\sigma}}_3}, 1\right){\rm diag}\left(\ii, -\ii, 1\right)\mathbf{H}^{\lambda_c}(\lambda; X, T),\\
\mathbf{Y}^{\lambda_d}(\lambda; X, T):&=\mathbf{K}(\lambda; X, T) {\rm diag}\left(N^{\frac{1}{2}\ii p\hat{\pmb{\sigma}}_3}, 1\right){\rm diag}\left(\ee^{-\ii N \widehat{h}_{\lambda_d}\hat{\pmb{\sigma}}_3}, 1\right)\mathbf{H}^{\lambda_d}(\lambda; X, T).
\end{split}
\end{equation}
\subsection{Inner parametrix in the neighborhood of $a_n$ and $a_n^*$}
Based on the idea in \cite{BilmanM-21}, the parametrices can be constructed with the Airy function when $\lambda$ is in the neighbourhood of $a_n$ and $a_n^*$. And the leading error lies in the boundary $\partial D_{a_n}$ and $\partial D_{a_n^*}$, and they satisfy the following condition
\begin{equation}
\begin{split}
&\sup\limits_{\lambda\in\partial D_{a_n}}\|\widehat{\mathbf{Q}}^{a_n}(\lambda; X, T)\widehat{\mathbf{Q}}^{\rm out}_{f, n}(\lambda; X, T)-\mathbb{I}\|=\mathcal{O}(N^{-1}), \\
&\sup\limits_{\lambda\in\partial D_{a_n^*}}\|\widehat{\mathbf{Q}}^{a_n^*}(\lambda; X, T)\widehat{\mathbf{Q}}^{\rm out}_{f, n}(\lambda; X, T)-\mathbb{I}\|=\mathcal{O}(N^{-1}).
\end{split}
\end{equation}
Then the global parametrix $\widetilde{\mathbf{Q}}_{f,n}(\lambda; X, T)$ can be given by
\begin{equation}
\begin{split}
\widetilde{\mathbf{Q}}_{f,n}(\lambda; X, T):=\left\{\begin{split}
&\widehat{\mathbf{Q}}^{\lambda_c}_{f,n}(\lambda; X, T), \quad \lambda\in D_{\lambda_c}(\delta),\\
&\widehat{\mathbf{Q}}^{\lambda_d}_{f,n}(\lambda; X, T), \quad \lambda\in D_{\lambda_d}(\delta),\\
&\widehat{\mathbf{Q}}^{a_n}_{f,n}(\lambda; X, T), \quad \lambda\in D_{a_n}(\delta),\\
&\widehat{\mathbf{Q}}^{a_n^*}_{f,n}(\lambda; X, T), \quad \lambda\in D_{a_n^*}(\delta),\\
&\widehat{\mathbf{Q}}^{\rm out}_{f,n}(\lambda; X, T), \quad \lambda\in \mathbb{C}\setminus \left(\Sigma_{\rm u}\cup \Sigma_{\rm d}\cup I\cup \overline{D_{\lambda_c}(\delta)\cup D_{\lambda_d}(\delta)\cup D_{a_n}(\delta)\cup D_{a_n^*}(\delta)}\right).
\end{split}\right.
\end{split}
\end{equation}
\subsection{Error analysis}
To give the leading order in the non-oscillatory region, we begin to analyze the error between the global parametrix $\widetilde{\mathbf{Q}}_{f, n}^{\rm out}(\lambda; X, T)$ and the matrix $\widehat{\mathbf{Q}}_{f, n}(\lambda; X, T)$. Define the following error term
\begin{equation}
\mathbf{F}_{f, n}(\lambda; X, T):=\widehat{\mathbf{Q}}_{f, n}(\lambda; X, T)\widetilde{\mathbf{Q}}_{f, n}(\lambda; X, T)^{-1}.
\end{equation}
Similar to the analysis in Eq.\eqref{eq:VF-1}, in the outer of domain, the jump of $\mathbf{V}^{F}_{f,n}(\lambda; X, T)$ changes into
\begin{equation}
\mathbf{V}^{F}_{f, n}=\widetilde{\mathbf{Q}}_{f,n}(\lambda; X, T)\widehat{\mathbf{Q}}_{f,n, -}(\lambda; X, T)\widehat{\mathbf{Q}}_{f,n, +}(\lambda; X, T)\widetilde{\mathbf{Q}}_{f,n}(\lambda; X, T)^{-1}.
\end{equation}
Set the jump about $F_{f,n}(\lambda; X, T)$ as $\Sigma_{F_{f,n}}$. In the outer domain, there exist a positive constant $\nu$ such that
\begin{equation}
\|\mathbf{V}^{F}_{f,n}(\lambda; X, T)-\mathbb{I}\|_{\lambda\in\Sigma_{F_{f,n}}\setminus\left(\partial D_{\lambda_c}\cup \partial D_{\lambda_d}\right)}=\mathcal{O}\left(\ee^{-\nu N}\right).
\end{equation}
And in the boundary of $\partial D_{\lambda_c}$, $\partial D_{\lambda_d}, \partial D_{a_n}$ and $\partial D_{a_n^*}$, the jump condition becomes
\begin{equation}
\begin{split}
\mathbf{V}^{F}_{f,n}&=\widehat{\mathbf{Q}}^{\lambda_c,\lambda_d}_{f,n}(\lambda; X, T)\left(\widehat{\mathbf{Q}}^{\rm out}_{f,n}(\lambda; X, T)\right)^{-1},\\
\mathbf{V}^{F}_{f,n}&=\widehat{\mathbf{Q}}^{a_n,a_n^*}_{f,n}(\lambda; X, T)\left(\widehat{\mathbf{Q}}^{\rm out}_{f,n}(\lambda; X, T)\right)^{-1}.
\end{split}
\end{equation}
With the aid of idea in \cite{BilmanM-21}, when $\lambda\in\partial D_{a_n}(\delta)\cup D_{a_n^*}(\delta)$, $\mathbf{V}^{F}_{f,n}-\mathbb{I}=\mathcal{O}(N^{-1})$, and in the boundary of $D_{\lambda_c}(\delta)$ and $D_{\lambda_d}(\delta)$, we have $\mathbf{V}^{F}-\mathbb{I}=\mathcal{O}(N^{-1/2})$. Thus the leading order error appears in the boundary of $D_{\lambda_c}(\delta)$ and $D_{\lambda_d}(\delta)$, we only need to give this error, which is shown in the next subsection.
\subsection{Asymptotics in the non-oscillatory region}
For $(X, T)$ lie in non-oscillatory region, the leading order term changes into
\begin{equation}\label{eq:q-no-1}
\begin{split}
q_{i}(X, T)&=\lim\limits_{\lambda\to\infty}2c_i^*\lambda\left(\widetilde{\mathbf{N}}_{f,n}\right)_{12}\\
&=\lim\limits_{\lambda\to\infty}2c_i^*\lambda\left(\widehat{\mathbf{Q}}_{f,n}\ee^{-Ng(\lambda)}\right)\\
&=\lim\limits_{\lambda\to\infty}2c_i^*\lambda\left(\mathbf{F}_{f,n}\widetilde{\mathbf{Q}}_{f,n}\right)_{12}\\
&=\lim\limits_{\lambda\to\infty}2c_i^*\lambda\left(F_{f,n,11}\widehat{Q}_{f,n,12}^{\rm out}+F_{f,n,12}\right), \quad (i=1,2).
\end{split}
\end{equation}
From the expression of Eq.\eqref{eq:hatQout}, the first term of Eq.\eqref{eq:q-no-1} can be given as
\begin{equation}\label{eq:lead-no}
\lim\limits_{\lambda\to\infty}2c_i^*\lambda F_{f,n,11}\widehat{Q}_{f,n,12}^{\rm out}=-c_i^*{\rm Im}(a_n)\ee^{N\Omega_n-\ii\mu}.
\end{equation}
The calculation to the second term is similar to the Appendix \ref{App:large-chi}, and the algebraic decay term is
\begin{equation}
\lim\limits_{\lambda\to\infty}2c_i^*\lambda F_{f,n,12}=-\frac{1}{\pi \ii}c_i^*\int_{\partial D_{\lambda_c}\cup \partial D_{\lambda_d}}V_{f,n,12}^{F}(\lambda')d\lambda'+\mathcal{O}(N^{-1}).
\end{equation}
From the definition of Eq.\eqref{eq:hatQhatQi}, we know
\begin{equation}
\begin{split}
V^{F}_{f,n,12}=&\frac{\ii}{2N^{1/2}f_{\lambda_c}}\ee^{2k_{-}+2\ii N \widehat{h}_{\lambda_c}}N^{\ii p}\left(\lambda-\lambda_d\right)^{2\ii p}\left(\frac{\lambda-\lambda_c}{f_{\lambda_c}}\right)^{-2\ii p}J_{-,12}^2r\\
+&\frac{\ii}{2N^{1/2}f_{\lambda_c}}\ee^{-2k_{-}-2\ii N \widehat{h}_{\lambda_c}}N^{-\ii p}\left(\lambda-\lambda_d\right)^{-2\ii p}\left(\frac{\lambda-\lambda_c}{f_{\lambda_c}}\right)^{2\ii p}J_{-,11}^2s,\quad \lambda\in \partial D_{\lambda_c}(\delta),\\
V^{F}_{f,n,12}=&\frac{-\ii}{2N^{1/2}f_{\lambda_d}}\ee^{-2k-2\ii N \widehat{h}_{\lambda_d}}N^{\ii p}\left(\lambda-\lambda_c\right)^{2\ii p}\left(\frac{\lambda-\lambda_d}{f_{\lambda_d}}\right)^{-2\ii p}J_{11}^2r\\
-&\frac{\ii}{2N^{1/2}f_{\lambda_d}}\ee^{2k+2\ii N \widehat{h}_{\lambda_d}}N^{-\ii p}\left(\lambda-\lambda_c\right)^{-2\ii p}\left(\frac{\lambda-\lambda_d}{f_{\lambda_d}}\right)^{2\ii p}J_{12}^2s,\quad \lambda\in\partial D_{\lambda_d}(\delta).\\
\end{split}
\end{equation}
Thus the last result is
\begin{equation}
\begin{split}
&-\frac{1}{\pi\ii}c_i^*\int_{\partial D_{\lambda_c}\cup\partial D_{\lambda_d}}V^{F}_{f,n,12}(\lambda')d\lambda'\\
=&-\ii c_i^*\left(\frac{\ee^{2k_{-}(\lambda_c)+2\ii N \widehat{h}_{\lambda_c}}}{N^{1/2}\sqrt{-\widehat{h}''_{-}(\lambda_c)}}N^{\ii p}(\lambda_d-\lambda_c)^{2\ii p}\left(\sqrt{-\widehat{h}_-''(\lambda_c)}\right)^{2\ii p}J_{-,12}^2(\lambda_c)r\right)\\
&-\ii c_i^*\left(\frac{\ee^{-2k_{-}(\lambda_c)-2\ii N \widehat{h}_{\lambda_c}}}{N^{1/2}\sqrt{-\widehat{h}''_{-}(\lambda_c)}}N^{-\ii p}(\lambda_d-\lambda_c)^{-2\ii p}\left(\sqrt{-\widehat{h}_-''(\lambda_c)}\right)^{-2\ii p}J_{-,11}^2(\lambda_c)s\right)\\
&-\ii c_i^*\left(\frac{\ee^{-2k(\lambda_d)-2\ii N \widehat{h}_{\lambda_d}}}{N^{1/2}\sqrt{\widehat{h}''(\lambda_d)}}N^{\ii p}(\lambda_d-\lambda_c)^{2\ii p}\left(\sqrt{\widehat{h}''(\lambda_d)}\right)^{2\ii p}J_{11}^2(\lambda_d)r\right)\\
&-\ii c_i^*\left(\frac{\ee^{2k(\lambda_d)+2\ii N \widehat{h}_{\lambda_d}}}{N^{1/2}\sqrt{\widehat{h}''(\lambda_d)}}N^{-\ii p}(\lambda_d-\lambda_c)^{-2\ii p}\left(\sqrt{\widehat{h}''(\lambda_d)}\right)^{-2\ii p}J_{12}^2(\lambda_d)s\right)+\mathcal{O}(N^{-1}).
\end{split}
\end{equation}
Moreover, together the above equation and the main leading order Eq.\eqref{eq:lead-no}, we know the asymptotics in the non-oscillatory region changes into Eq.\eqref{eq:q-no}, which completes this calculation.
\section*{Acknowledgements}
The authors sincerely thank Professor Miller for his guidance and help during the visiting on the University of Michigan and valuable suggestions on the finite order solitons to improve this project.

Liming Ling is supported by the National Natural Science Foundation of China (Grant No. 11771151), the Guangzhou Science and Technology Program of China (Grant No. 201904010362), the Fundamental Research Funds for the Central Universities of China (Grant No. 2019MS110); Xiaoen Zhang is supported by the China
Postdoctoral Science Foundation (Grant No. 2020M682692).

\end{document}